\title{The geometry of consumer preference aggregation\footnote{The paper has benefited from discussions with our colleagues (in alphabetic order) Michael Amior, Alexander W. Bloedel, Anna Bogomolnaia, Peter Caradonna, Tommaso Denti, Bram De Rock, Federico Echenique, Cecile Gaubert, David Genesove, Ben Golub, Aram Grigoryan, Yoram Halevy, Eran Hoffmann,  Matthew O. Jackson, Deniz Kattwinkel,
 Peter Klibanoff, Ilan Kremer, Alexey Kushnir, Arthur Lewbel, Kiminori Matsuyama, John Moore, Herv\'e Moulin, Ichiro Obara, Luciano Pomatto, Antonio Rangel, Joseph Root, Jean-Laurent Rosenthal, Ariel Rubinstein, Kota Saito, Itay Saporta-Eksten, Ilya Segal, Roberto Serrano, Denis Shishkin, Ali Shourideh, Joel Sobel, 
Ran Spiegler, Charles Sprenger, Ross Starr, Omer Tamuz, Ina A. Taneva, Laura Taylor, William Thomson, Alexis Akira Toda, Aleh Tsyvinski, Quitz\'e Valenzuela-Stookey, Richard Van Weelden, Shoshana Vasserman, Xavier Vives, Adam Wierman, Kai Hao Yang, Leeat Yariv, and William Zame, and comments of seminar audiences at Berkeley, Caltech, CMU, Edinburgh, HUJI, LA Theory Workshop, Northwestern, Pitt, Princeton, San~Diego, TAU, Technion, Toronto, and UCL.
}}
\author{
Fedor Sandomirskiy\thanks{Princeton University.
\ifdefined\JPE
Email:  \href{mailto:fsandomi@princeton.edu}{fsandomi@princeton.edu}. 
\fi
Fedor thanks the Linde Institute at Caltech, PIMCO Fellows Program, and National Science Foundation (grant
CNS 1518941) for their support.}
\\ 
Philip Ushchev\thanks{ECARES, Universit\'e libre de Bruxelles, and CEPR. 
\ifdefined\JPE
Email: \href{mailto:filipp.ushchev@ulb.be}{filipp.ushchev@ulb.be}. 
\fi
}
}
\date{}
\definecolor{ForestGreen}{rgb}{.13,.54,.13}
\definecolor{violet}{cmyk}{0.79,0.88,0,0}
\newcommand{\fed}[1]{{\color{ForestGreen}{(\textbf{Fedor:} #1)}}}
\newcommand{\ph}[1]{{\color{red}{(\textbf{Philip:} #1)}}}
\newcommand{\ed}[1]{{{\color{red} {#1}}}}
\newcommand{\fed}[1]{}
\newcommand{\ph}[1]{}
\newcommand{\ed}[1]{#1}
\newtheorem{theorem}{Theorem}
\newtheorem*{theorem*}{Theorem}
\newtheorem{lemma}{Lemma}
\newtheorem{corollary}{Corollary}
\newtheorem{proposition}{Proposition}
\newtheorem{definition}{Definition}
\theoremstyle{definition}
\theoremstyle{remark}
\newtheorem{example}{Example}
\renewcommand{\vec}[1]{\mathbf{#1}}
\newcommand{\E}{\mathbb{E}}
\newcommand{\R}{\mathbb{R}}
\newcommand{\cav}{\mathrm{cav}}
\newcommand{\vex}{\mathrm{vex}}
\newcommand{\const}{\mathrm{const}}
\newcommand{\ind}{\mathrm{indec}}
\newcommand{\ext}{\mathrm{extrem}}
\newcommand{\D}{\mathcal{D}}
\newcommand{\agr}{\mathrm{aggregate}}
\renewcommand{\L}{\mathcal{L}}
\newcommand{\inv}{\mathrm{complete}}
\newcommand{\C}{\mathbb{C}}
\newcommand{\mrs}{\mathrm{MRS}}
\newcommand{\conv}{\mathrm{conv}}
\renewcommand{\E}{E}
\newcommand{\T}{T}
\newcommand{\one}{\mathds{1}}
\newcommand{\cv}{\mathrm{CV}}
\newcommand{\ev}{\mathrm{EV}}
\newcommand{\av}{\mathrm{AV}}
\def\dd{\mathrm{d}}
\def\cC{\mathcal{C}}
\newcommand{\argmax}{\operatornamewithlimits{argmax}}
\newcommand{\argmin}{\operatornamewithlimits{argmin}}
\begin{document}

\maketitle

\begin{abstract}
We revisit a classical question of how individual consumer preferences and incomes shape aggregate behavior. We develop a method that applies to populations with homothetic preferences and reduces the hard problem of aggregation to simply computing a weighted average in the space of logarithmic expenditure functions. We apply the method to identify aggregation-invariant preference domains, characterize aggregate preferences from common domains like linear or Leontief, and describe indecomposable preferences that do not correspond to the aggregate behavior of any non-trivial population. Applications include robust welfare analysis, information design, discrete choice models, pseudo-market mechanisms, and preference identification.

\end{abstract}


\ph{TBD for me: $(ii)$ the explicit representation of CES with $\sigma>1$ as a convex combination of linear preferences}

\fed{Philip, add acks to those whom you discussed the paper with.}

\newpage 

\section{Introduction}

\medskip

We revisit a classical question in economics: how do individual consumer preferences and
incomes shape aggregate behavior?
Aggregate demand of consumers determines market outcomes and is often
used as a primitive in economic models.
However, aggregate behavior is always the result of individual choices.
We aim to understand how assumptions about individuals---about their preferences and incomes---impose restrictions on aggregate demand, and also to determine what can be inferred about individual characteristics based on observed aggregate behavior. 
Since \cite*{sonnenschein1973walras}, the profession has been quite pessimistic regarding progress one can possibly make in that direction. \cite*{kreps2020course} provides an accurate summary of the ``anything goes’’ consensus:


\begin{quote}\emph{``So what can we say about aggregate demand based on the hypothesis that individuals are preference/utility maximizers? Unless we are able to make strong assumptions about the distribution of preferences or income throughout the economy (e.g., everyone has the same preferences) there is little we can say.’’}
\end{quote}
\cite{jackson2019non} have recently rekindled interest in the aggregate consumer agenda by reaffirming this negative statement in a new context.

In this paper, we show that there is a middle ground, a rich setting where the aggregation problem is neither trivial nor intractable. Namely, we develop a method for aggregate demand analysis 
which works for any population of consumers with homothetic preferences and delivers precise and exhaustive answers to a number of otherwise hard questions. For example, what can one say about aggregate demand if all goods are perfect substitutes at the individual level, so that preferences are linear but differ across consumers? Or, can a CES aggregate demand with complements be obtained by aggregating Leontief preferences? Furthermore, our method has immediate applications to robust welfare analysis and the algorithmic complexity of markets.

\medskip 

The key idea of our method is to do aggregation in the space of logarithmic expenditure functions.
In this space, the allegedly hard problem of demand aggregation boils down to simply computing a weighted average
of logarithmic expenditure functions with weights equal to relative incomes. This weighted average corresponds to a homothetic preference of the aggregate consumer whose demand coincides with the demand of the original population for all prices.\footnote{In contrast to the textbook definition of a representative consumer by \cite*{gorman1961class}, the preference of the aggregate consumer is allowed to depend on income distribution. As a result, the aggregate consumer is well-defined for all populations with homothetic preferences \citep*{eisenberg1961aggregation} while Gorman's representative fails to exist for such populations unless all preferences are identical.}

The demand aggregation problem reduces to the problem of preference aggregation: understanding how the preference of the aggregate consumer depends on preferences and incomes of the underlying population. Our method provides insights into the economics and geometry of this problem:

   \begin{enumerate}[label=(\roman*)]    
\item \emph{Aggregation-invariant classes of preferences.}
Consider a domain (which is the term we use in what follows for ``a class'' or ``a family'') of preferences.\footnote{ E.g., linear, Leontief, or any other subset of homothetic preferences} We call a domain  invariant with respect to aggregation if the aggregate preference always belongs to this domain. For example, the  domain of all homothetic preferences, a domain containing just one preference, and the domain of Cobb-Douglas preferences are invariant. Invariant domains are important as in such domains, the aggregate behavior of a population has the same structure as that of an individual. 
We characterize all invariant domains by the convexity property of the set of associated logarithmic expenditure functions.  This characterization allows us to construct simple parametric invariant domains and describe the minimal invariant domains containing popular ones.
    
\item \emph{Characterization of feasible aggregate behaviors for a given class of preferences.}
Suppose we know a domain to which individual consumers' preferences belong. What aggregate behaviors can be micro-founded? This question boils down to 
understanding what preferences can be obtained by aggregation of individual preferences from this domain. We call the set of all such aggregate preferences the domain completion.
The notion of completion is closely related to invariance:  the completion of a domain is the minimal invariant collection of preferences containing the domain. The characterization of the invariance implies that the completion of a domain can be found by computing the convex hull of the set of logarithmic expenditure functions.
We describe the completion explicitly for the domains of linear and Leontief preferences.\footnote{Surprisingly, this problem happens to be connected to several branches of economics and mathematics such as additive random utility models (ARUM), completely monotone functions, the Stieltjes transform, and even complex analysis.}
A viable conjecture would be that the completion of linear preferences gives all preferences exhibiting substitutability among goods. We show that this holds true in the case of two goods only. The completion of Leontief domain does not encompass all preferences with complementarity even for two goods.

\item \emph{Decomposition of preferences.} Consider the inverse to the problem of aggregation: given a preference from a particular domain, represent it as an aggregation of preferences from the same domain.
There is always a trivial representation since we can take a population where each agent has the same given preference. We call those preferences that can only be represented by themselves indecomposable. Geometrically,  indecomposable preferences correspond to extreme points in the space of logarithmic expenditure functions. As 
any point of a convex set can be represented as a convex combination of extreme points, indecomposable preferences play the role of elementary building blocks: any preference can be represented as an aggregation of indecomposable ones.
For example, linear and Leontief preferences are indecomposable in the domain of all homothetic preferences.
 We show that the set of indecomposable preferences is much bigger  and contains all Leontief preferences on linear composite goods. In particular, aggregation of linear and Leontief preferences together does not give the whole domain of homothetic preferences. We also explore indecomposable preferences in the domains with substitutability or  complementarity.

\end{enumerate}

\medskip

We illustrate how our approach can be applied to several economic environments.

\paragraph{Robust welfare analysis.} An analyst observing  aggregate behavior aims to estimate a quantity depending  on the structure of the population on the micro level, e.g., the change in welfare induced by a change in prices. As the same aggregate behavior can be compatible with different populations, it can also be compatible with a range of values for welfare. This range is non-trivial even for standard welfare measures such as the equivalent variation. 
We show how one can compute this range using information-design tools. As a corollary, we
obtain that the utility of a representative consumer---the standard proxy for the population's welfare---gives the most pessimistic estimate on the actual welfare gains. This conclusion suggests a possible explanation for unexpectedly low values of gains from trade obtained in recent quantitative literature.

\paragraph{Domain complexity, Fisher markets, and bidding languages for pseudo-market mechanisms.}
 For invariant domains, the aggregate behavior is as simple as that of a single agent. Since the completion of a domain is the minimal invariant domain containing it, the completion reflects the complexity of possible aggregate behavior. 
 We formalize this intuition in application to Fisher markets: simple exchange economies where consumers with fixed incomes face a fixed supply of goods. Such markets are essential for the pseudo-market (or competitive) approach to fair allocation of resources \citep*{moulin2019fair,pycia2022pseudo} and serve as a benchmark model for equilibrium computation in algorithmic economics \citep*{nisanalgorithmic}. Computing an equilibrium of a Fisher market turns out to be a challenging problem even in a seemingly innocent case of linear preferences, thus limiting the applicability of pseudo-market mechanisms. We explore the origin of the complexity and demonstrate that computing equilibria can be hard, even in small parametric domains, if their completion is large. We show how to construct domains with small completion and describe an algorithm making use of this smallness.  The choice of a domain is interpreted as bidding language design.

\paragraph{Preference identification.} Given a domain of individual preferences, we ask whether observing the  market demand is enough to identify the distribution of preferences and income over the population. We relate the possibility of identification to the geometric simplex property of the domain, meaning that there is a unique way to represent each preference  as an aggregation of indecomposable ones.
 Examples of domains where identification is possible include Leontief and linear preferences over two goods. 

\bigskip

The methodological importance of the link
 between aggregation and weighted averages is 
that it expands the machinery of consumer demand analysis by delivering new tools, the most important of which  
are convexification and extreme-point techniques of Choquet theory. Our paper demonstrates the power of these tools, which are increasingly popular in other branches of economic theory, such as information economics and mechanism design \citep*{kleiner2021extreme}, for  aggregate demand analysis. We also uncover a connection between aggregation and modern literature in convex geometry on the geometric mean of convex sets \citep*{milman2017non,boroczky2012log}. This connection not only enables the use of geometric tools for studying our economic problem but also leads to new insights about the geometric mean of convex sets suggested by the economic interpretation.


\subsection{Related literature}

As suggested by the quote from~\citep*{kreps2020course},
the existing results on demand aggregation have fallen into one of the two extremes. One extreme stems from the classical general equilibrium literature dealing with economies where
agents have general convex preferences and earn money by trading  their endowments. This literature  concludes that the aggregate demand inherits no properties of individual behavior \citep*{sonnenschein1973walras, mantel1974characterization, debreu1974excess,  chiappori1999aggregation}. 
The opposite extreme is given by the representative agent literature aiming to replace the population with a single rational agent whose preferences are independent of the income distribution \citep*{gorman1953community,gorman1961class};  
 see also earlier results by Antonelli and Nataf discussed by \citep*{shafer1982market}.
The independence requirement is so restrictive that it is fair to say that Gorman's representative almost never exists. Exceptions are very special cases, e.g., when the whole population has identical homothetic preferences.
 The profession has been divided on how seriously the non-existence should be taken.
Applied researchers often postulate the existence of a representative consumer \cite[e.g.,][]{chamley1986optimal,roooff1990equilibrium} but this approach is criticized as lacking micro-foundations \citep[e.g.,][]{kirman1992repr,carroll2000requiem}. 

As demonstrated by \cite*{jackson2019non}, tweaking Gorman's notion of the representative consumer while maintaining an analog of income independence does not  substantially alter  the non-existence conclusion. 
We escape this conclusion by allowing the aggregate consumer to depend on the income distribution. The existence of such an aggregate consumer was pointed out by \cite*{eisenberg1959consensus} for populations with linear preferences and, by \cite*{eisenberg1961aggregation} and \cite*{chipman1979social}, in the whole domain of homothetic preferences (see a survey by \cite*{shafer1982market}).\footnote{\cite*{eisenberg1959consensus} were motivated by the question of probabilistic forecast aggregation and introduced an auxiliary exchange economy of bets, a ``prediction market'' in modern terms. A closely related idea for belief aggregation is known in the financial-market literature as the Negishi approach \citep*{jouini2007consensus}.}
In more recent literature, this insight has gone largely unnoticed, except for algorithmic economics and fair allocation mechanisms; see Section~\ref{sec_Fisher}. The converse statement to Eisenberg's result was obtained by \cite*{jerison1984aggregation}, who showed that homotheticity is necessary for the existence provided that incomes are fixed.

Demand aggregation has been studied in the context of household behavior, where heterogeneous agents redistribute
their incomes so that the resulting individual consumption maximizes the household's welfare. Under mild assumptions, such households behave like a single representative agent  \citep*{samuelson1956social,varian1984social,jerison1994optimal} confirming the common wisdom that the unrestricted endogeneity of  incomes is more important for the negative  Sonnenschein-Mantel-Debreu results than the generality of preferences \citep*{mantel1976homothetic,hildenbrand2014market}.
An active empirical literature aims to link household consumption with individual characteristics of its members; see
\citep*{browning1998efficient,browning2013estimating} and references therein.

The relation between individual preferences and the representative preference of a welfare-maximizing household has been analyzed by \cite*{chambers2018structure} for egalitarian welfare functionals. Their analysis suggests that the role played by the geometric mean of convex sets in our setting with independent consumers (Section~\ref{sec_geom_mean}) is played by the Minkowski sum in egalitarian households.

Our application to robust welfare analysis is close in spirit to that of two concurrent papers  \cite*{kang2022robust} and \cite*{kocourek2022demand}. The three approaches address different aspects of robustness and are complementary. We are interested in the setting where the aggregate demand is not a sufficient statistic for welfare.
\cite*{kang2022robust} assume that aggregate demand is a sufficient statistic, but the number of distinct observations of the demand is not enough to pin it down.
In the paper by \cite*{kocourek2022demand}, the aggregate demand is perfectly observed and is a sufficient statistic, but consumers may not be fully aware of prices.
Curiously, all three papers rely on auxiliary Bayesian persuasion problems of different forms and origins. If the aggregate behavior is not a sufficient statistic for welfare, an alternative to our robust approach is using extra information about the population's structure \citep[e.g,][]{blundell2003nonparametric, schlee2007measuring, bhattacharya2015nonparametric,maes2022price}. For example, \cite*{maes2022price} discuss how to correct welfare estimates based on the representative agent approach if the analyst additionally knows moments of the demand distribution over the population.

The notion of aggregation-invariant classes of preferences that we use in our paper is stronger than the requirement of aggregation-invariance of parametric demand systems used in the empirical consumer demand literature. The almost ideal demand system (AIDS) developed by \cite*{deaton1980almost} and its extensions (e.g., \cite*{banks1997quadratic,lewbel2009tricks}) have a feature that the aggregate demand of a heterogeneous population can be captured by the same functional form. The nuance is that such functional forms may correspond to a preference of a rational individual only locally, i.e., for a choice set that is a proper subset of the positive orthant; see the discussion of this issue for the translog preference (a member of AIDS family) in Example~\ref{ex_translog} and Footnote~\ref{footnote_translog} therein. Similarly, 
 the Stone–Geary linear expenditure system (LES)---also known to be convenient for empirical demand aggregation---defines a valid preference only for high-income individuals.   To summarize, AIDS and related demand systems do not give rise to globally aggregation-invariant preferences. This phenomenon has a fundamental origin: by the result of \cite*{jerison1984aggregation},  the aggregate demand of heterogeneous consumers with non-homothetic preferences may even fail to correspond to a preference of a rational consumer as we vary incomes.

Our results on the identification of preference distributions contribute to broad econometric literature on non-parametric identification of stochastic choice models; see surveys  \citep*{matzkin2007nonparametric, matzkin2013nonparametric}. This  literature  has mostly focused on identifying an unknown deterministic part of the decision maker's utility, whereas our problem can be interpreted as identifying the noise distribution when the deterministic component is known; see Section~\ref{sec_ARUM} for a formal relation between additive random utility models and market demand for populations with linear preferences. Preference identification has also been studied in the literature on  household behavior, and identification has been obtained either for small populations, e.g., two-agent households \citep*{chiappori1988rational}
or under the assumption of preferences ``orthogonality''  \citep*{chiappori2009microeconomics}. Our results do not restrict the size of populations and allow agents to have closely aligned preferences. 
\fed{Mention \cite*{chiappori1999aggregation}}




\section{Preliminaries}\label{sec_preliminaries}
In this section, which mainly serves the purpose of consistency, we introduce the notation and recap the concepts of consumer demand theory, which we need in what follows.
\paragraph{Notation.} We use $\R$ for the set of all real numbers, $\R_+$ and $\R_-$ for non-negative/non-positive ones, and $\R_{++}$ and $\R_{--}$ for strictly positive/negative ones. Ratios of the form  ${t}/{0}$ with $t\geq 0$ are assumed to be equal to $+\infty$.

Bold font is used for vectors, e.g., $\vec{x}=(x_1,\ldots, x_n)\in\R^n$. For a pair of vectors of the same dimension, we write $\vec{x}\geq \vec{y}$ if the inequality holds component-wise, i.e., $x_i\geq y_i$ for all $i$. The scalar product of $\vec{x},\vec{y}\in \R^n$ is denoted by   $\langle\vec{x}, \vec{y}\rangle=\sum_{i=1}^n x_i\cdot y_i$.
For subsets of $\R^n$, multiplication by a scalar and summation are defined element-wise: $\alpha\cdot X=\{\alpha\cdot  \vec{x}\,\colon \, \vec{x}\in X\}$ and $X+Y=\{\vec{x}+\vec{y}\,\colon\, \vec{x}\in X,\,\vec{y}\in Y\}$; the latter is known as the Minkowski sum of sets. 
 The standard $(n-1)$-dimensional simplex  is denoted by $\Delta_{n-1}=\{\vec{x}\in \R_+^n\,\colon\, x_1+\ldots+x_n=1\}$.

  The gradient of a function $f=f(\vec{x})$ is the vector of its partial derivatives $\nabla f=\left(\frac{\partial f}{\partial x_1},\ldots, \frac{\partial f}{\partial x_n}\right).$ 

\paragraph{Preferences and demand.} 
Consider a consumer endowed with a budget $b\in \R_{++}$ and a convex, monotone, continuous and homothetic preference $\succsim$ over bundles $\vec{x}\in {\R}_{+}^{n}$ of $n\geq 1$ divisible goods. We assume that the preference is non-degenerate, i.e., the consumer is not indifferent across all bundles. Such a preference can be represented by a non-decreasing, continuous, concave, homogeneous utility function $u\colon \R_+^n\to \R_+$,  not identically equal to zero. The homogeneity property $u(\alpha\cdot \vec{x})=\alpha\cdot  u(\vec{x})$ with $\alpha\geq 0$ pins down $u$ uniquely up to multiplication by a positive constant. In what follows, we refer to such preferences as homothetic and utilities as homogeneous. 

The consumer demand consists of her most preferred affordable bundles:
$$D(\vec{p},b)=\argmax_{\vec{x}\in \R_+^n\,\colon\, \langle\vec p, \vec{x}\rangle\leq b} u_{\succsim}(\mathbf{x}).$$
The demand is a non-empty closed convex subset of the budget set. The demand 
is
 a singleton (one-element set) for almost all~$\vec{p}$, which allows us to think  
of the demand 
as a single-valued function of $\vec{p}$ defined almost everywhere; see Appendix~\ref{sec_convex_basics}.

\paragraph{Expenditure functions.} 
One can represent a homothetic preference by its expenditure function
\begin{equation}\label{eq_price_index}
\E(\vec{p}) = \min _{\vec{x}\in \R_+^n\,\colon \, u(\vec{x})\geq 1} \langle\vec{p}, \vec{x}\rangle.
\end{equation}
This dual representation will be key for our analysis. 
The function $\E\colon \R_{+}^n\to \R$ is continuous, non-decreasing, concave, homogeneous, non-negative, and not identically equal to zero. Conversely, any continuous concave, homogeneous non-negative function on $\R_+^n$ that is not identically zero is an expenditure function for some homothetic preference.\footnote{We do not need to assume that this function is non-decreasing as any concave non-negative function $f$ on $\R_+^n$ is automatically non-decreasing. Indeed, if $f(\vec{x})>f(\vec{x}')$ for some pair $\vec{x}\leq \vec{x'}$, then, by concavity, $f\big(\vec{x'}+t\cdot \delta\vec{x}\big)\leq f(\vec{x'})+t\cdot  \delta f$ where $t\geq 0$, $\delta\vec{x}=\vec{x'}-\vec{x}$, and $\delta f=f(\vec{x'})-f(\vec{x})<0$. Choosing $t$ large enough, we get a contradiction with non-negativity.}
 

 Expenditure functions are related to indirect utilities. The indirect utility is the maximal utility that a consumer can achieve as a function of prices and her budget: 
 \begin{equation}\label{eq_indirect}
v(\vec{p},b)=\max_{\vec{x}\in \R_+^n\colon \langle\vec{p},\vec{x}\rangle\leq b} u(\vec{x}).
\end{equation}
It can be expressed through the expenditure function by
\begin{equation}\label{eq_indirect_through_expenditure}
	v(\vec{p},b)=\frac{b}{\E(\vec{p})}.
\end{equation}
 
For homothetic preferences, Shephard's lemma implies the following identity:\footnote{In this form, the result can be found in \citep*{samuelson1972unification}; see Appendix~\ref{sec_convex_basics} for a derivation.}
\begin{equation}\label{eq_demand_price_index}
   D(\vec{p},b)=b\cdot 
   \nabla \ln \left(\E(\vec{p})\right),
\end{equation}
i.e., the demand is proportional to the gradient of the logarithm of the expenditure function (the \emph{logarithmic expenditure function} in what follows). 
The identity holds for all prices $\vec{p}\in \R_{++}^n$ where~$\E$ is differentiable. This set of prices has full measure; see Appendix~\ref{sec_convex_basics}.

Consider the expenditure share function $\vec{s}(\vec{p})$ whose $i$th component $s_{i}(\vec{p})$ is the fraction of the budget that the consumer spends on good $i=1,\ldots, n$ given the prices, i.e.,
\begin{equation}\label{eq_budget_share}
s_{i}(\vec{p})=
p_i\cdot \frac{D_{i}(\vec{p},b)}{b}= p_i\cdot D_{i}(\vec{p},1).
\end{equation}
We treat $\vec{s}$ as a single-valued vector function taking values in the standard simplex $\Delta_{n-1}$ and defined on the set of $\vec{p}\in \R_{++}^n$ of full measure where the demand is a singleton. 
By~\eqref{eq_demand_price_index}, 
expenditure shares can be computed as the   elasticities of the expenditure function with respect to prices
\begin{equation}\label{eq_budget_shares_as_elasticities}
s_{i}(\vec{p})=p_i\cdot \frac{\partial \ln \left(\E(\vec{p})\right)}{\partial p_i}=\frac{\partial \ln \left(\E(\vec{p})\right)}{\partial \ln(p_i)}.
\end{equation}

For two goods, preferences can be represented via expenditure share functions using the following characterization.\footnote{To the best of our knowledge, this characterization has not appeared in the literature.} For any homothetic preference  $\succsim$ over $\mathbb{R}^2_+$, the expenditure share of the first good takes the form
\begin{equation}\label{eq_budget_share_characterization_for_two}
s_{1}(p_1,p_2)= \frac{1}{1+ Q\left(\frac{p_1}{p_2}\right)/\frac{p_1}{p_2}}
\end{equation}
for some non-decreasing non-negative function $Q\colon \R_{++}\to \R_+\cup\{+\infty\}$. Moreover, for any such function $Q$, there is a unique homothetic preference; see Lemma~\ref{lm_budget_share_two_goods} in Appendix~\ref{app_feasible shares}. 
\medskip

By plugging a function $Q$ with an infinite number of jumps in~\eqref{eq_budget_share_characterization_for_two}, we see that, rather counter-intuitively, $s_{1}(p_1,p_2)$ may change monotonicity infinitely many times as $p_2$ increases, i.e., the consumer starts spending more on the first good as the price of the second one goes up, then less, then more again, and so on.\footnote{For example, expenditure shares change monotonicity infinitely many times for Leontief preferences over an infinite number of linear  composite goods discussed in
Section~\ref{sec_indecomposable_full_domain}.}

\paragraph{Substitutes and complements.} The two important subdomains of homothetic preferences are free from the non-monotone behavior of expenditure shares described above. 

A preference $\succsim$ exhibits substitutability among the goods if the expenditure share $s_{i}(\vec{p})$ is a non-decreasing function of $p_j$ for each pair of goods $i\ne j$. For differentiable expenditure shares,
\begin{equation}
\label{eq:subst}
\frac{\partial s_i(\mathbf{p})}{\partial p_j}>0\quad\text{for all }i\neq j,
\end{equation}
i.e., when the price of a good increases, the consumer starts allocating a higher budget share on other goods. In the opposite case, when all the inequalities in \eqref{eq:subst} are reversed, $\succsim$ exhibits complementarity among the goods.

\medskip

\paragraph{Examples.} The canonical example of substitutes is given by linear preferences represented by
$$u(\vec{x})=\langle \vec{v},\vec{x}\rangle$$
for some vector of values $\vec{v}\in \R_+^n\setminus \{0\}$. An elementary computation gives the expenditure function, and formula~\eqref{eq_demand_price_index} provides expenditure shares
\begin{equation}\label{eq_price_index_and_shares_for_linear}
    \E(\vec{p}) = \min_{i=1,\ldots, n}\frac{p_i}{v_i} \qquad\mbox{and}\qquad  s_{i}(\vec{p}) =
    \begin{cases}
    1,& \text{if } \frac{v_i}{p_i}>\frac{v_j}{p_j}\ \mbox{for all $j\ne i$},\\
    0,&\text{otherwise}
    \end{cases}.
\end{equation}
From \eqref{eq_price_index_and_shares_for_linear}, the consumer spends her whole budget on the good with the highest value-to-price ratio.

\medskip
 
The standard example of complements is given by the Leontief preferences, which correspond to the following utility function
$$u(\vec{x})= \min_{i=1,\ldots, n}\frac{x_i}{v_i}$$
for some vector of values $\vec{v}\in \R_+^n\setminus \{0\}$. Note that the Leontief utility has the same functional form as the expenditure function \eqref{eq_price_index_and_shares_for_linear} for linear preferences. By duality, the expenditure function for Leontief preferences is linear
\begin{equation}\label{eq_price_index_and_shares_for_Leontief}
    \E(\vec{p}) = \langle\vec{v},\vec{p}\rangle  \qquad\mbox{and}\qquad  s_{i}(\vec{p}) =\frac{v_i\cdot p_i}{\langle\vec{v},\vec{p}\rangle}.
\end{equation}

The intersection of the domains of preferences exhibiting substitutability and complementarity consists of those preferences $\succsim$ for which expenditure shares are constant, i.e., there is a fixed vector $\vec{a}\in \Delta_{n-1}$ such that $\vec{s} (\vec{p})=\vec{a}$ for any $\vec{p}$. From \eqref{eq_demand_price_index}, the expenditure function is
\begin{equation}\label{eq_Cobb_Douglas_pindex}
\E(\vec{p})=\prod_{i=1}^{n} p_{i}^{a_i}
\end{equation}
and the corresponding preference is given by the Cobb-Douglas utility function 
\begin{equation}\label{eq_Cobb_Douglas_utility}
u(\vec{x})=\prod_{i=1}^{n} x_{i}^{a_i}.
\end{equation}

Leontief, Cobb-Douglas, and linear preferences are contained as limit cases in a widely used parametric family of preferences with constant elasticity of substitution (CES). A preference $\succsim$ is a CES preference with elasticity of substitution $\sigma>0$ if the corresponding utility function has the form
\begin{equation}\label{eq_CES}
u(\vec{x})=\left(\sum_{i=1}^n \left(a_i\cdot x_i\right)^{\frac{\sigma-1}{\sigma}}\right)^{\frac{\sigma}{\sigma-1}}
\end{equation}
for some vector $\vec{a}\in \Delta_{n-1}$. The expenditure functions and expenditure shares are given by 
\begin{equation}\label{eq_price_index_and_shares_for_CES}
    \E(\vec{p}) =\left(\sum_{i=1}^n \left(\frac{p_i}{a_i}\right)^{1-\sigma}\right)^{\frac{1}{1-\sigma}}  \qquad\mbox{and}\qquad  s_{i}(\vec{p}) =\frac{ \left(\frac{p_i}{a_i}\right)^{1-\sigma}}{\sum_{j=1}^n \left(\frac{p_j}{a_j}\right)^{1-\sigma}}.
\end{equation}
CES preferences exhibit substitutability for $\sigma>1$ and complementarity for $\sigma\in (0,1)$. Leontief, Cobb-Douglas, and linear preferences are the limiting cases as $\sigma$ goes, respectively, to $0$, $1$, and $+\infty$. The limits are taken with respect to the topology that we discuss next.

\paragraph{Topology on preferences.}
Convergence of preferences, closed and open sets, and the Borel structure are understood with respect to the following metric. We define the distance between preferences $\succsim$ and $\succsim'$
with expenditure functions $\E$ and $\E'$ by
\begin{equation}\label{eq_distance_body}
d(\succsim,\succsim')=\sup_{\vec{p}\in \Delta_{n-1}\cap \R_{++}^n} \left|\frac{\left(\ln \E(\vec{p})-\ln \E(\vec{e})\right)-\left(\ln \E'(\vec{p})-\ln \E'(\vec{e})\right)}{\left(1+\max_i |\ln p_i|\right)^2}\right|,
\end{equation}
where $\vec{e}=(1,\ldots,1)$.
The main advantage of this way of introducing the distance is that it makes the set of all homothetic preferences a compact metric space.\footnote{Economic literature has considered compact topologies on the set of preferences, e.g., the closed convergence topology of upper contour sets \citep*{hildenbrand2015core,bridges2013representations}. To the best of our knowledge, an explicit metric structure giving compactness has not appeared in the literature.}  In particular, the distance between any pair of preferences is finite and bounded by~$2$.
See Appendix~\ref{app_topology} for the intuition behind the definition.

%


\section{Preference aggregation}\label{sec_representative}

Consider $m\geq 1$ consumers $k=1,\ldots, m$.\footnote{It is straightforward to extend our analysis to the case of a continuum of consumers or, more generally, to an abstract measure space of consumers.} Consumer $k$ has a positive budget $b_k\in \R_{++}$ and a homothetic preference $\succsim_k$ over bundles of $n\geq 1$ divisible goods, which generates the individual demand $D_k(\mathbf{p},b_k)$. For any vector of prices $\vec{p}$, this population generates the market demand equal to the sum of individual demands.
%
Denote by $B$ the total budget of the population,
$
    B = \sum_{k=1}^{m} b_{k}
$.
\begin{definition}\label{def_aggregate}
A preference 
$\succsim_\agr$ is referred to as the aggregate preference for 
a population of consumers with preferences $\succsim_1,\ldots,\succsim_m $ and budgets $b_1,\ldots, b_m$ if  
\begin{equation}\label{eq_aggregation}
D_{\agr}\left(\vec{p}, \ B\right)=\sum_{k=1}^m D_{k}(\vec{p},b_k)
\end{equation}
for any price vector $\vec{p}\in \R_{++}^n$.
A consumer with preference $\succsim_\agr$ and budget $B$ is referred to as the aggregate consumer.
\end{definition}
We stress that the aggregate consumer is selected for a given collection of budgets $b_1,\ldots, b_m$ of individual consumers, and so, for a different distribution of incomes over the population, we may end up with a different aggregate consumer. This is an important distinction between Definition~\ref{def_aggregate} and the approach of  \cite*{gorman1961class} 
who insists on the independence of the aggregate preference on the income distribution which can only be achieved under very restrictive conditions.

\begin{example}\label{ex_CobbDouglas_as_aggregation_of_linear}
Consider $m=n$ single-minded consumers: $u_{i}(\vec{x})=x_i$, i.e., consumer $i$ only cares about good~$i$. Hence, no matter what the prices are, consumer~$i$ spends her total budget $b_i$ on good $i$. This observation helps to guess the aggregate consumer without any computations. Indeed, the aggregate consumer spends the amount $b_i$ out of her total budget $B$ on good $i$ independently of prices. In other words, the expenditure share of each good $i$ for the aggregate consumer is price-independent and equal to ${s}_{\agr,i}(\vec{p})={b_i}/{B}$. Hence, the aggregate consumer must have the Cobb-Douglas preferences~\eqref{eq_Cobb_Douglas_utility} with  $a_i={b_i}/{B}$. One can verify this guess directly by checking that the demand identity~\eqref{eq_aggregation} holds.
Alternatively, the result can be deduced immediately from Theorem~\ref{th_representative_index} below and explicit formulas for expenditure functions of Cobb-Douglas and linear preferences. 
\end{example}

The existence of an aggregate preference was established by \cite*{eisenberg1961aggregation} for any population of consumers with homothetic preferences. Denote  by~\(\beta_{k}\) the relative fraction of  consumer~\(k\)'s budget
\begin{equation}\label{eq_relative_budgets}
\beta_{k} = \frac{b_{k}}{B}.
\end{equation}
\cite*{eisenberg1961aggregation} showed that the aggregate preference corresponds to the utility function obtained as the solution to the following optimization problem
\begin{equation}\label{eq_EisenbergGale_utility}
u_{\agr}(\vec{x})=\max\left\{\prod_{k=1}^m \left(\frac{u_{k}(\vec{x}_k)}{\beta_k}\right)^{\beta_k}\quad:\quad \vec{x}_k\in \R_+^n,\ \ k=1,\ldots,m,\quad \sum_{k=1}^m \vec{x}_k=\vec{x}\right\}.
\end{equation}
In other words, the utility for the aggregate preference at a bundle $\vec{x}$ is equal to the maximal weighted Nash social welfare where the maximum is taken over all possible allocations of $\vec{x}$ over the consumers and consumer's weight is equal to her relative budget.\footnote{The welfare function equal to the product of consumer's utilities is dubbed the Nash social welfare or the Nash product as this welfare function naturally arises in the context of axiomatic bargaining studied by \cite*{nash1950bargaining}.}
The optimization problem~\eqref{eq_EisenbergGale_utility} is called the Eisenberg-Gale problem since it is similar to a problem studied by
\cite*{eisenberg1959consensus} in the context of probabilistic forecast aggregation.

To determine the utility of an aggregate consumer, one needs to solve the Eisenberg-Gale problem~\eqref{eq_EisenbergGale_utility} for each $\vec{x}\in\R_+^n$. Except for special cases such as Cobb-Douglas preferences, it does not admit an explicit solution and is not easy to work with both analytically and computationally; see Section~\ref{sec_Fisher}. 

We observe that the question of describing the aggregate consumer substantially simplifies if we use the dual representation of preferences via expenditure functions.
\begin{theorem}\label{th_representative_index}
Consider a population of consumers with homothetic preferences $\succsim_1,\ldots,\succsim_m $ and budgets $b_1,\ldots, b_m$.
The  preference of the aggregate consumer is described by the  expenditure function \({\E}_{\agr}\) satisfying
\begin{equation}\label{eq:pr}
\ln {\E}_{{\agr}}(\vec{p})=\sum_{k=1}^{m} \beta_{k}\cdot  \ln \E_{k}(\vec{p}),
\end{equation}
where the weights $\beta_k$ are given by~\eqref{eq_relative_budgets}.
\end{theorem}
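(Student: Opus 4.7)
The plan is to exhibit the candidate aggregate expenditure function
\begin{equation*}
\E_{\agr}(\vec{p}) := \prod_{k=1}^m \E_k(\vec{p})^{\beta_k},
\end{equation*}
i.e., the exponential of the right-hand side of~\eqref{eq:pr}, verify that it is the expenditure function of some homothetic preference, and then confirm the demand identity~\eqref{eq_aggregation} via Shephard's lemma~\eqref{eq_demand_price_index}.

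First I would verify that $\E_{\agr}$ satisfies the characterization of expenditure functions recalled after~\eqref{eq_price_index}. Continuity, monotonicity, non-negativity, and homogeneity of degree one (the last using $\sum_k \beta_k = 1$) are inherited directly from the corresponding properties of each $\E_k$. Non-triviality is inherited too: since each $\succsim_k$ is non-degenerate, $\E_k$ is strictly positive on $\R_{++}^n$, so $\E_{\agr} > 0$ there as well. The delicate property is concavity on $\R_+^n$. I would deduce it by writing $\E_{\agr} = G \circ (\E_1, \ldots, \E_m)$, where $G(a_1, \ldots, a_m) = \prod_k a_k^{\beta_k}$ is the weighted geometric mean on $\R_+^m$. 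The map $G$ is concave and coordinate-wise non-decreasing (a classical consequence of AM-GM), and each $\E_k$ is concave, so the composition is concave by the standard rule for composing a concave non-decreasing outer function with concave inner functions.

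Once $\E_{\agr}$ is recognized as a valid expenditure function, the induced preference $\succsim_{\agr}$ is well defined. To match demands, I would apply~\eqref{eq_demand_price_index} at any $\vec{p}\in\R_{++}^n$ at which all $\E_k$ (hence $\E_{\agr}$) are differentiable:
\begin{equation*}
D_{\agr}(\vec{p}, B) = B\cdot \nabla \ln \E_{\agr}(\vec{p}) = B\sum_{k=1}^m \beta_k\nabla \ln \E_k(\vec{p}) = \sum_{k=1}^m b_k\nabla \ln \E_k(\vec{p}) = \sum_{k=1}^m D_k(\vec{p}, b_k),
\end{equation*}
using linearity of $\nabla$ together with $\beta_k B = b_k$. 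The set of $\vec{p}$ where this fails has Lebesgue measure zero, which is the sense in which demand is defined in the preliminaries, so~\eqref{eq_aggregation} holds in the same sense.

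I expect the only non-trivial analytic ingredient to be the concavity of the weighted geometric mean, which is the main potential obstacle; everything else amounts to bookkeeping combining the duality recap in Section~\ref{sec_preliminaries} with Shephard's lemma. An alternative route would be to derive~\eqref{eq:pr} from Lagrangian duality applied to the Eisenberg--Gale program~\eqref{eq_EisenbergGale_utility}, but the direct verification above is considerably shorter given the primitives already in place.
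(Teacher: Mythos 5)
Your proof is correct, but it runs in the opposite logical direction from the paper's and is in that sense a genuinely different argument. The paper takes the existence of the aggregate consumer as given---imported from \cite*{eisenberg1961aggregation}, cited immediately before the theorem---and then derives the formula: Definition~\ref{def_aggregate} together with Shephard's lemma~\eqref{eq_demand_price_index} gives $B\cdot \nabla \ln \E_{\agr}(\vec{p})=\sum_{k=1}^m b_k\cdot \nabla \ln \E_{k}(\vec{p})$ at almost every $\vec{p}$, and integrating this a.e.\ gradient identity recovers~\eqref{eq:pr} up to an additive constant absorbed by the normalization of expenditure functions. You instead posit $\E_{\agr}=\prod_k \E_k^{\beta_k}$, verify that it is a legitimate expenditure function (the concavity of the weighted geometric mean composed with the concave $\E_k$ is exactly the right analytic step), and then run the same Shephard computation forward to check the demand identity; the two proofs are converses of each other built around the same key identity. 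What your route buys: it is self-contained, proving the \emph{existence} of the aggregate consumer rather than relying on Eisenberg's theorem, so it effectively delivers the duality that the paper relegates to Appendix~\ref{sec_th1_proof} but without the Eisenberg--Gale machinery. What the paper's route buys: because it derives~\eqref{eq:pr} \emph{from} Definition~\ref{def_aggregate}, it shows that any aggregate preference must satisfy the formula (uniqueness), whereas you show only that the preference you construct is \emph{an} aggregate preference; to conclude that it describes \emph{the} aggregate consumer you should add the one-line remark that demand pins down a homothetic preference, since knowing $\nabla\ln\E$ a.e.\ determines $\ln\E$ up to a constant. A second loose end: Definition~\ref{def_aggregate} requires the demand identity at \emph{every} $\vec{p}\in\R_{++}^n$, as an identity of sets, while your verification covers only points of differentiability; this null set is handled by the sum rule for superdifferentials of finite concave functions on an open set (Moreau--Rockafellar), which yields $\partial \ln\E_{\agr}(\vec{p})=\sum_k\beta_k\,\partial\ln\E_{k}(\vec{p})$ for all $\vec{p}$. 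This supplement is standard but genuinely needed in your direction, whereas the paper's direction avoids it because there the a.e.\ identity is integrated back into an everywhere identity of functions.
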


\fed{[Move somewhere or just delete?] It is worth noting that the source of convexification in our setting is very different from, say,
\cite*{jackson2019non}, who have in mind macroeconomic settings with intertemporal utility maximization, where additive separability across outcomes matters.}

\medskip


\begin{proof} The identity~\eqref{eq:pr} follows almost immediately from ~\eqref{eq_demand_price_index}. The definition of the aggregate consumer implies the following equality 
\begin{equation}\label{eq_aggregation_gradients}
B\cdot \nabla \ln {\E}_{{\agr}}(\vec{p})=\sum_{k=1}^{m} b_{k}\cdot \nabla  \ln \E_{k}(\vec{p}),
\end{equation}
which must hold at all points of differentiability of the expenditure functions. As any concave function is differentiable almost everywhere with respect to the Lebesgue measure (Section~\ref{sec_convex_basics}), \eqref{eq_aggregation_gradients} holds on the set of full measure and can be integrated resulting in the identity~\eqref{eq:pr}. Integration constants get absorbed by the expenditure functions defined up to multiplicative constants. \end{proof}

One can see from \eqref{eq:pr} that preference aggregation is equivalent to taking convex combinations of individual logarithmic expenditure functions. The simplicity of this operation will allow us to describe domains invariant with respect to aggregation (Section~\ref{sec_invariant})
and to study the decomposition of a given preference as an aggregation of elementary ones (Section~\ref{sec_indecomposable}). 

In Appendix~\ref{sec_th1_proof}, we provide an alternative proof of Theorem~\ref{th_representative_index}, similar to the one used by~\cite*{eisenberg1961aggregation} and not relying on formula~\eqref{eq_demand_price_index}. This alternative proof clarifies that Theorem~\ref{th_representative_index} is dual to Eisenberg's result.\footnote{The connection between Eisenberg's theorem and Theorem~\ref{th_representative_index}  can be seen as a version of duality for the $\inf$-convolution \citep[Chapter 1.H]{rockafellar2009variational}.}

\fed{Comment why we do not use demand instead of LEF}

\subsection{Connection to the geometric mean of convex sets}\label{sec_geom_mean} Theorem~\ref{th_representative_index} links preferences aggregation and recent attempts to define the geometric mean of convex sets; see a survey by \cite*{milman2017non}. 
Recall that the support function of a convex set $X\subset \R^n$ is defined by 
$$h_X(\vec{p})=\sup_{x\in X}\  \langle\vec{p},\vec{x}\rangle.$$
 \cite*{boroczky2012log} define
 the weighted geometric mean of convex sets by taking the usual weighted geometric mean in the space of support functions.\footnote{Defining algebraic operations on convex sets through the usual algebraic operations on their support functions is a standard approach. For example, the Minkowski addition of convex sets corresponds to the pointwise summation of their support functions. \fed{Add reference}}
 Formally, the weighted geometric mean  of convex sets $X$ and $Y$ with weights $(\lambda,\,1-\lambda)$, $\lambda\in[0,1]$,  is the convex set $Z$ denoted by $X^{\lambda}\otimes Y^{1-\lambda}$ such that 
\begin{equation}\label{eq_geom_mean_def}
\left|{h}_{X^\lambda\otimes Y^{1-\lambda}}\right|= \left|{h}_{X}\right|^\lambda\cdot \left| {h}_{Y}\right|^{1-\lambda}.
\end{equation}
 The weighted geometric mean extends to any number of convex sets straightforwardly.\footnote{\cite*{boroczky2012log} refer to $X^\lambda\otimes Y^{1-\lambda}$ as the logarithmic sum of convex sets to distinguish it from other definitions of the geometric mean. Since we do not consider other definitions, we call $X^\lambda\otimes Y^{1-\lambda}$  the weighted geometric mean.}

 \medskip
To see the connection between preference aggregation and the geometric mean, note that the expenditure function~$\E$ is equal to the support function of the upper contour set up to a sign: 
$$\E(\vec{p})=-h_{X}(\vec{-p})\qquad\mbox{where}\quad X=\{{u}(\vec{x})\geq 1\}.$$
 We obtain the following equivalent version of  Theorem~\ref{th_representative_index}.
\begin{corollary}\label{cor_geometric_mean}
 An  upper contour set of the aggregate consumer's preferences $\big\{u_\agr(\vec{x})\geq 1\big\}$ is the weighted geometric mean of individual upper contour sets with budget-proportional weights:
 \begin{equation}\label{eq_aggregation_as_geom_mean}
\big\{u_\agr(\vec{x})\geq 1\big\}=\big\{u_1(\vec{x})\geq 1\big\}^{\beta_1}\otimes\big\{u_2(\vec{x})\geq 1\big\}^{\beta_2}\otimes\ldots\ldots\otimes \big\{u_m(\vec{x})\geq 1\big\}^{\beta_k}.
\end{equation}
\end{corollary}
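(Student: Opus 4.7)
The plan is to read off the corollary directly from Theorem~\ref{th_representative_index} by translating the additive identity on logarithmic expenditure functions into the multiplicative identity that defines the weighted geometric mean of upper contour sets.

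First, I would verify the stated identification $\E(\vec{p})=-h_{X}(-\vec{p})$ where $X=\{u(\vec{x})\geq 1\}$. The set $X$ is closed and convex (it is the upper contour set of a concave function) and, because $u$ is non-decreasing, it satisfies $X+\R_+^n\subseteq X$. This monotonicity ensures that for $\vec{p}\in \R_{++}^n$ the quantity $\sup_{\vec{x}\in X}\langle -\vec{p},\vec{x}\rangle$ is attained on the boundary of $X$ and coincides with $-\min_{\vec{x}\in X}\langle \vec{p},\vec{x}\rangle=-\E(\vec{p})$. Hence $|h_X(-\vec{p})|=\E(\vec{p})\geq 0$ on the positive orthant.

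Second, I would apply Theorem~\ref{th_representative_index}: exponentiating \eqref{eq:pr} yields
\begin{equation*}
\E_{\agr}(\vec{p})=\prod_{k=1}^m \E_k(\vec{p})^{\beta_k}.
\end{equation*}
Combined with the preceding step, this reads
\begin{equation*}
\bigl|h_{X_{\agr}}(-\vec{p})\bigr|=\prod_{k=1}^m \bigl|h_{X_k}(-\vec{p})\bigr|^{\beta_k},
\end{equation*}
where $X_k=\{u_k(\vec{x})\geq 1\}$ and $X_{\agr}=\{u_{\agr}(\vec{x})\geq 1\}$. Since support functions are homogeneous of degree one, this identity on $-\vec{p}$ with $\vec{p}\in\R_{++}^n$ extends by homogeneity (and continuity) to the full half-space where the support functions are finite, so it is an identity of support functions in the sense of \eqref{eq_geom_mean_def}.

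Finally, I would invoke the standard fact that a closed convex set is uniquely recovered from its support function. Since the right-hand side of \eqref{eq_geom_mean_def} defines the support function of the geometric mean $X_1^{\beta_1}\otimes\cdots\otimes X_m^{\beta_m}$ and we have shown $X_{\agr}$ has the same support function, the two sets coincide, which is precisely \eqref{eq_aggregation_as_geom_mean}. The only subtlety I anticipate is the sign/domain bookkeeping---checking that the equality $\E=-h_X(-\,\cdot\,)$ pins down enough values of $h_X$ for the argument in \cite*{boroczky2012log} to apply---but because $X$ lies in $\R_+^n$ its support function is determined by its values on $-\R_{++}^n$ together with its homogeneity, so no additional work is needed.
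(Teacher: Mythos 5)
Your proof is correct and follows essentially the same route as the paper: there, Corollary~\ref{cor_geometric_mean} is presented as a direct reformulation of Theorem~\ref{th_representative_index} via the identification $\E(\vec{p})=-h_{X}(-\vec{p})$ for the upper contour set $X$, followed by exponentiating~\eqref{eq:pr}, which is exactly your argument. The sign/domain bookkeeping you add (that $h_X$ is nonpositive and finite precisely on $-\R_{+}^n$, so its values on $-\R_{++}^n$ pin down the set) is the implicit content of the paper's remark that the weighted geometric mean is well-defined on the class of upward-closed convex subsets of $\R_+^n$ not containing zero.
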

In Example~\ref{ex_CobbDouglas_as_aggregation_of_linear}, we saw that Cobb-Douglas preferences over $n$ goods originate as an aggregation of~$n$ extreme linear preferences. Figure~\ref{fig_cobb_doug_as_geom_mean_of_linear} illustrates the corresponding identity of convex sets for~$n=2$ and equal budgets. 

\begin{figure}
	\begin{center}
			\begin{tikzpicture}[scale=0.275, line width=1pt]				
				\filldraw[scale=1, gray] (6,0.35) -- (11.25,0.35) -- (11.25,11.25) --  (6,11.25);
				\draw[scale=1, black] (6,0.35) --  (6,11.25);				
				\draw [->,>=stealth,black] (0,0) -- (11,0);
				\draw [->,>=stealth,black] (0,0) -- (0,11);
				\node[below] at (6,0) {$1$};
				\node[below] at (0,-0.2) {$0$};
				\node[below right] at (11,0) {\large $x_1$};
				\node[above left] at (0,11) {\large $x_2$}; 
				\node[scale=1] at (13,12) {\Huge $\frac{1}{2}$}; 						
				\node[scale=1] at (15,5) {\Huge $\otimes$}; 						
			\end{tikzpicture}
			\begin{tikzpicture}[scale=0.275, line width=1pt]				
				\filldraw[scale=1, gray] (0.35,6) -- (0.35,11.25) -- (11.25,11.25) --  (11.25,6);
				\draw[scale=1, black] (0.35,6) --  (11.25,6);				
				\draw [->,>=stealth,black] (0,0) -- (11,0);
				\draw [->,>=stealth,black] (0,0) -- (0,11);
				\node[left] at (0,6) {$1$};
				\node[below] at (0,-0.2) {$0$};
				\node[below right] at (11,0) {\large $x_1$};
				\node[above left] at (0,11) {\large $x_2$};
				\node[scale=1] at (13,12) {\Huge $\frac{1}{2}$};
				\node[scale=1] at (15,5) {\Huge $=$}; 						 						 			
			\end{tikzpicture}
		\begin{tikzpicture}[scale=0.275, line width=1pt]
			
			\draw [->,>=stealth,black] (0,0) -- (11,0);
			\draw [->,>=stealth,black] (0,0) -- (0,11);
			%
			
			\filldraw[scale=1, gray, domain=0.8:11.25,samples=1000,variable=\t] plot ({\t},{9/\t}) -- (10.5,10.5);
			\draw[scale=1,domain=0.8:11.25,samples=1000,variable=\t] plot ({\t},{9/\t});
			\draw [dashed,gray] (6,0) -- (0,6);
			\node[below] at (6,0) {$1$};
			
			\node[below] at (0,-0.2) {$0$};
			\node[below right] at (11,0) {\large $x_1$};
			\node[above left] at (0,11) {\large $x_2$};

		\end{tikzpicture}
	\end{center}
	\caption{ \textbf{Geometry:} the set bounded by the hyperbola is the geometric mean of the two orthogonal halfspaces. \textbf{Economics:} an aggregation of two extreme linear preferences where each consumer cares only about her own good  gives a Cobb-Douglas preference.\label{fig_cobb_doug_as_geom_mean_of_linear} \fed{add parentheses}
	}
\end{figure}
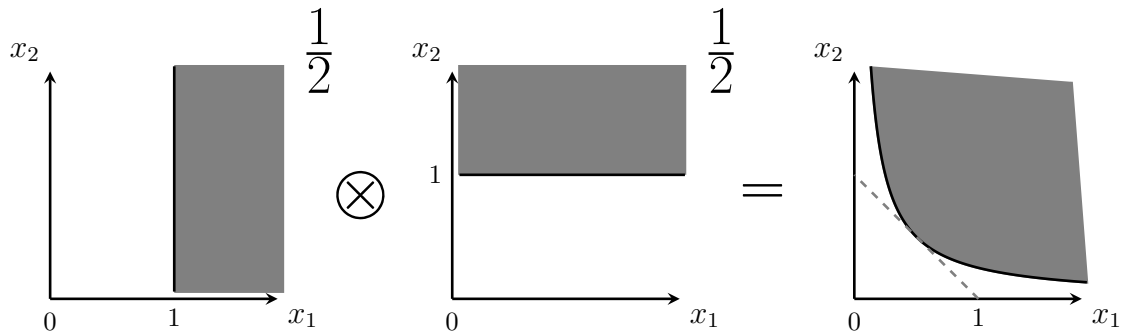

\fed{Figure~???????? depicts aggregation of two generic linear preferences: the boundary of the upper contour set has a hyperbolic segment patched between the two linear ones; computations can be found in Appendix~??????.
TODO:complete. Figure and analysis in appendix}

\medskip
Corollary~\ref{cor_geometric_mean} highlights a peculiar property of the class of convex sets that can be obtained as upper contour sets of homothetic preferences.
From formula~\eqref{eq_geom_mean_def}, it is not  evident that the geometric mean is well-defined, i.e., that we can always find a convex set whose support function is equal to ${h}_{X^\lambda\otimes Y^{1-\lambda}}$. A byproduct of Corollary~\ref{cor_geometric_mean} is that the weighted geometric mean is well-defined within the class of all  convex subsets of $\R_+^n$ that do not contain zero and are upward-closed. Indeed, any such set is an upper contour set of some homothetic preference, and the weighted geometric mean is an upper contour set of the aggregate consumer's preference.
Contrast this observation with the case of bounded convex sets which mathematical literature has mostly focused on. The weighted geometric mean is well-defined for bounded convex sets containing zero; however, sets that do not contain zero are problematic as the support function can be negative and the definition of the weighted geometric mean requires ad~hoc modifications.

\subsection{Application: robust welfare analysis as information design}\label{sec_robust}

Understanding how the population's welfare changes as a function of prices is crucial for economic policy evaluation. 
Consider an analyst who observes market demand as a function of prices and, based on this  information, aims to estimate a certain aggregate measure of individual well-being, e.g., the change in the population's welfare before and after introducing a policy 
affecting prices.

 The standard approach is postulating a representative agent and then using that agent's utility as a proxy for the population's welfare. Implicitly, this approach assumes that the market demand is a sufficient statistic for welfare.\footnote{For example, the recent quantitative literature on gains from trade uses representative consumer's utility as an aggregate welfare measure; see \citep*{costinot2014trade} for a survey. Hence, there is a one-to-one mapping between market demand and welfare. As pointed out by \cite*{arkolakis2012new} and, more recently, by \cite*{arkolakis2019elusive}, this approach leads to surprisingly low gains from trade.} However, the same market demand---hence, the same aggregate preference---can be generated by different populations of consumers. As a result, the same aggregate behavior may be compatible with a range of welfare levels. 
 
 The aggregate behavior may not be a sufficient statistic for welfare, even for standard welfare measures and preference domains. Suppose prices change from $\vec{p^0}$ to $\vec{p^1}$. The standard way of quantifying the corresponding change in individual well-being is by the change in income that makes the consumer indifferent between the two prices. Depending on whether the indifference is considered with respect to the old prices or the new ones, we get welfare measures known as the equivalent variation $\ev$ and the compensating variation $\cv$, respectively \citep[Chapter 3.I]{mas1995microeconomic}. For a consumer with income $b$ and homothetic  preference $\succsim$ represented by an expenditure function $\E$, the two measures take the form
 \begin{equation}\label{eq_variations}
 \ev_{\vec{p^0}\to\vec{p^1}}(\succsim,b)=b\cdot\left(\frac{\E(\vec{p^0})}{\E(\vec{p^1})}-1\right)\qquad \mbox{and} \qquad \cv_{\vec{p^0}\to\vec{p^1}}(\succsim,b)=b\cdot \left(1-\frac{\E(\vec{p^1})}{\E(\vec{p^0})}\right).
 \end{equation}
 For a population of consumers, the change in welfare is defined as the sum of individual changes, e.g., for the equivalent variation $\ev$, we get
 \begin{equation}\label{eq_welfare_equivalent_variation}
 W_\ev\big[(\succsim_k,b_k)_{k=1,2,\ldots}\big]=\sum_k \ev_{\vec{p^0}\to\vec{p^1}}(\succsim_k,b_k).
 \end{equation}
 The following toy example illustrates that the aggregate behavior may not only be compatible with a range of welfare levels but may even fail to determine the direction of the welfare change.
 \begin{example}\label{ex_direction_of_welfare} In a two-good economy, a population of Cobb-Douglas consumers generates market demand equal to $1/p_1$ for the first good and $2/p_2$ for the second. In other words, this population behaves like a single Cobb-Douglas consumer with utility $u_\agr(\vec{x})=x_1^{{1}/{3}}\cdot x_2^{{2}/{3}}$ and budget $B=3$.
The government considers introducing new tariffs changing prices from $\vec{p^0}=(1,64)$ to $\vec{p^1}=(32,32)$. Will this policy be beneficial for the population?

Assume that the well-being of an individual is measured by the equivalent variation, and so the population's welfare is given by~\eqref{eq_welfare_equivalent_variation}. Recall that the expenditure function for a Cobb-Douglas consumer with $u(\vec{x})=x_1^{\alpha}\cdot x_2^{1-\alpha}$ equals $\E(\vec{p})=p_1^{\alpha}\cdot p_2^{1-\alpha}$.

Consider a population of consumers with identical preferences $\succsim_k=\succsim_\agr$ for any $k$ and total budget $B$. For such a population,~$W_\ev$ is equal to the equivalent variation for the aggregate consumer:
$$
W_\ev= \ev_{\vec{p^0}\to\vec{p^1}}(\succsim_\agr, B)=-\frac{3}{2}<0.$$
On the other hand, the same aggregate demand can be generated by a population of two single-minded consumers: one with utility $u_1(\vec{x})=x_2$ and budget $b_1=2$ and the other with $u_2(\vec{x})=x_1$ and $b_2=1$; see Example~\ref{ex_CobbDouglas_as_aggregation_of_linear}. For this heterogeneous population, we get, using \eqref{eq_welfare_equivalent_variation}:
$$
 W_\ev=\frac{63}{32}>0.$$
We conclude that the same aggregate behavior can be compatible with a range of values for $W_\ev$ and, moreover, this range can contain zero, i.e., the aggregate consumer may not even predict whether the change in prices is beneficial for the underlying population or not. 
\end{example}
 \medskip

 We propose a robust approach to welfare analysis. Since the aggregate behavior may not pin down the exact value of the population's welfare (or any other functional of interest depending on the population's structure on the micro level), we aim to compute the range of possible values being fully agnostic about the specific decomposition of the market demand into individual demands.  For this purpose, we combine  Theorem~\ref{th_representative_index} with insights from information design. 

Let us describe the problem formally. Since market demand determines the aggregate consumer, we assume that the aggregate preference $\succsim_\agr$ and the total income $B$ in the economy are given.
In addition, we assume that the analyst knows that individual consumers have preferences from some subset $\D$ of homothetic preferences; we will refer to $\D$ as a domain.\footnote{Examples of domains include all homothetic preferences, linear preferences, preferences exhibiting complementarity, any finite collection of preferences, or any other subset of homothetic preferences.}
The goal is to find the range $[\underline{W},\,\overline{W}]$ of values of a  functional
\begin{equation}\label{eq_welfare_general}
	W=W\big[(\succsim_k,b_k)_{k=1,2,\ldots}\big]  
\end{equation}
over all finite populations of consumers $k=1,2,\ldots$ with  preferences $\succsim_1,\succsim_2,\ldots$ and  incomes $b_1,b_2,\ldots$ such that the individual preferences  aggregate up to $\succsim_\agr$, incomes sum up to $B=\sum_k b_k$, and individual preferences belong to $\D$.

With a domain $\D$, we associate the set $\L_\D$ of all logarithmic expenditure functions of preferences from $\D$:
$$\L_\D=\Big\{f\ :\  \R_{++}^n\to\R\quad:\quad f=\ln  \E_{\succsim},\quad \succsim\ \in\ \D\Big\}.$$
The set $\L_\D$ inherits the freedom in the choice of expenditure functions: if $f\in \L_\D$, then  $f+\const$ is also in $\L_\D$ and corresponds to the same preference; we will call such elements of $\L_\D$ equivalent.

We can think of preferences as classes of equivalent points in $\L_\D$. 
The key observation that makes the problem tractable is that, in this space, populations with given aggregate behavior $\succsim_\agr$ correspond to all the different ways of representing  $\ln E_\agr$ as a weighted average of logarithmic expenditure functions $\ln \E_{\agr}=\sum_k \beta_k \ln \E_k$; see Figure~\ref{fig_persuasion}. 
 Hence, to find the range of values of $W$,
 we need to minimize or maximize it over all such representations.  A 
 similar optimization problem arises in the context of Bayesian persuasion \citep*{aumann1995repeated,kamenica2011bayesian} and, indeed, there is a formal connection between the two problems (see Appendix~\ref{app_connection_to_persuasion}).
 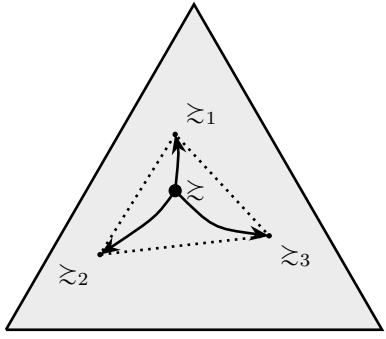
\begin{figure}
	\begin{center}
		\begin{tikzpicture}[scale=0.5, line width=1pt]	
			\usetikzlibrary {arrows.meta}
			\filldraw[scale=1, lightgray!30] (0,0) -- (10,0) --(5,8.66)--(0,0);
			\draw  (0,0) -- (10,0) -- (5,8.66)--(0,0);
			\coordinate (A) at (4.5,3.7);
			\coordinate (B) at (4.5,5.2);
			\coordinate (C) at (2.5,2);
			\coordinate (D) at (7,2.5);
			
			\filldraw[black] (A) circle (4pt) node [anchor=west]  {$\succsim$}; 
			\draw [-Stealth] (A) .. controls (4.6,4.7) .. (B);
			\filldraw[black] (B) circle (1pt) node[anchor=west]  {}; 
			\draw [-{Stealth[]}] (A) .. controls (4.0,3)  .. (C);
			\filldraw[black] (C) circle (1pt) node[anchor=west]  {};
			\draw [-Stealth] (A) .. controls (5.5,2.7) .. (D);
			\filldraw[black] (D) circle (1pt) node[anchor=west]  {};
			\draw [dotted] (B) -- (C) --  (D) -- (B);
			\node [above right] at (B) {$\succsim_1$};
			\node [below left] at (C) {$\succsim_2$};
			\node [below right] at (D) {$\succsim_3$};
		\end{tikzpicture}
	\end{center}
	\caption{In the space of logarithmic expenditure functions, all populations with 
		aggregate behavior captured by $\succsim$ 
		correspond to all possible ways to represent  $\succsim$ as a weighted average of other points.
		\label{fig_persuasion} 
	}
\end{figure}

To obtain an explicit formula for the range of values, we assume a particular functional form of~$W$: 
\begin{equation}\label{eq_welfare_additive}
	W=\sum_{k} b_k\cdot w(\succsim_k).
\end{equation}
This functional form captures both the equivalent and the compensating variations~\eqref{eq_welfare_equivalent_variation}. In particular, the individual  measure of well-being $w$
can also depend on other parameters---e.g., prices before and after a market intervention---but such dependence does not affect our analysis and is therefore omitted. 
Functionals of the form~\eqref{eq_welfare_additive}  can also capture distributional objectives with no welfare interpretations, e.g., the total income of agents having preferences from a certain subset $\D'\subset \D$, which corresponds to $w=\one_{\D'}$.


For a function $f$ defined on a subset $X$ of a linear space, its concavification $\cav_X[f]$ is the smallest concave function larger than~$f$ on~$X$. It can be computed as follows:
\begin{equation}\label{eq_concavification}
\cav_X[f](x)=\sup\left\{\sum_k \beta_k\cdot  f(x_k)\quad : \quad x=\sum_k \beta_k\cdot  x_k,\quad x_k\in X, \quad \beta_k\geq 0, \quad \sum_k \beta_k=1\right\},
\end{equation}
where the supremum is taken over all finite convex combinations of points in $X$ that average to~$x$. 
 Note that the set $X$ may not be convex, but the concavification is well-defined on its convex hull. 
\fed{Mention 
\cite*{doval2018constrained}} 
One can similarly define convexification $\vex_X[f]$ as the biggest convex function smaller than~$f$ on~$X$, i.e., 
$\vex_X[f]=-\cav_X[-f]$. 
\begin{proposition}\label{prop_welfare_as_persuasion}
	For $W$ of the form~\eqref{eq_welfare_additive}, the range of values $[\underline{W},\,\overline{W}]$ compatible with an aggregate preference $\succsim_\agr$, total income $B$, and individual domain of preferences $\D$ is given by  
\begin{equation}\label{eq_welfare_range}
	[\underline{W},\,\overline{W}] = \Big[B\cdot \vex_{\L_\D}\big[w\big]\big(\succsim_\agr\big),\ \ B\cdot \cav_{\L_\D}\big[w\big]\big(\succsim_\agr\big)\Big],
\end{equation}
where $\L_\D$ denotes the set of logarithmic expenditure functions corresponding to the domain $\D$.\footnote{Note that the function $w(\succsim)$ in~\eqref{eq_welfare_range} is treated as a functional on the space of logarithmic expenditure functions and is concavified and convexified over this space.}
\end{proposition}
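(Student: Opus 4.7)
The plan is to translate the aggregation constraint into a convex-combination constraint via Theorem~\ref{th_representative_index} and then recognize the resulting optimization problems as the concavification and convexification of $w$ over $\L_\D$.

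First I would take a population $(\succsim_k, b_k)_{k=1,\ldots,m}$ of consumers from $\D$ with total budget $B$ that aggregates to $\succsim_\agr$. By Theorem~\ref{th_representative_index},
\begin{equation*}
    \ln\E_{\agr}(\vec{p}) \;=\; \sum_{k=1}^m \beta_k \cdot \ln\E_k(\vec{p}), \qquad \beta_k = \tfrac{b_k}{B}, \quad \beta_k\geq 0, \quad \sum_{k=1}^m \beta_k = 1,
\end{equation*}
with each $\ln\E_k \in \L_\D$. Conversely, every such finite convex combination of points of $\L_\D$ averaging (in the space of logarithmic expenditure functions) to $\ln\E_{\agr}$ corresponds, by setting $b_k = B\beta_k$, to an admissible population. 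Thus the set of populations compatible with $(\succsim_\agr, B, \D)$ is in one-to-one correspondence with the set of finite convex representations of $\succsim_\agr$ by elements of $\L_\D$.

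Next I would rewrite the welfare functional~\eqref{eq_welfare_additive} as
\begin{equation*}
    W \;=\; \sum_{k=1}^m b_k \cdot w(\succsim_k) \;=\; B \cdot \sum_{k=1}^m \beta_k \cdot w(\succsim_k),
\end{equation*}
so that maximizing $W$ over admissible populations is equivalent to maximizing $\sum_k \beta_k\, w(\succsim_k)$ over finite convex combinations of points in $\L_\D$ that average to $\succsim_\agr$. By the definition of concavification~\eqref{eq_concavification} applied to $w$ viewed as a functional on $\L_\D$, this supremum is precisely $\cav_{\L_\D}[w](\succsim_\agr)$, yielding $\overline{W} = B\cdot \cav_{\L_\D}[w](\succsim_\agr)$. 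The identity $\vex_{\L_\D}[w] = -\cav_{\L_\D}[-w]$ then gives the corresponding formula for $\underline{W}$.

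The only delicate point is that~\eqref{eq_welfare_range} asserts the range is an \emph{interval} with these two endpoints, not merely that these are the sup and inf. For this I would use a convexity/connectedness argument: given two admissible populations attaining welfare levels $W'$ and $W''$, concatenating them with relative weights $(t, 1-t)$ (i.e., rescaling budgets by $t$ and $1-t$ respectively, so that the aggregate preference is preserved because the convex combinations in $\L_\D$ combine linearly) produces an admissible population with welfare $tW' + (1-t)W''$, so the set of achievable values of $W$ is convex. Since the achievable set is nonempty (the trivial decomposition shows $\succsim_\agr \in \L_\D$ gives $B\cdot w(\succsim_\agr)$, and in general we are simply assuming $\succsim_\agr$ is indeed achievable), it is an interval, and by the sup/inf computation its closure coincides with $[\underline{W},\overline{W}]$. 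I expect the main conceptual hurdle to be exactly this step—verifying that the range is an interval rather than just bounding its extreme values—whereas the reduction to (con)vexification itself is essentially a direct translation of Theorem~\ref{th_representative_index} into the language of~\eqref{eq_concavification}.
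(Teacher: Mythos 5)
Your proposal is correct and follows essentially the same route as the paper's proof: Theorem~\ref{th_representative_index} identifies admissible populations with finite convex representations of $\ln\E_\agr$ in $\L_\D$, the definition~\eqref{eq_concavification} then identifies the supremum and infimum with $B\cdot\cav_{\L_\D}[w]$ and $B\cdot\vex_{\L_\D}[w]$, and intermediate values are attained by mixing two populations with rescaled budgets $(t,1-t)$ exactly as the paper does with its ``weighted union'' construction. If anything, your remark that the sup/inf computation pins down the \emph{closure} of the achievable set is slightly more careful than the paper's statement, which tacitly treats the endpoints as attained.
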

 
We prove Proposition \ref{prop_welfare_as_persuasion} in Appendix~\ref{app_persuasion}. The following example illustrates how the proposition applies to the setting from Example~\ref{ex_direction_of_welfare}. 

\medskip

\begin{example}\label{ex_direction_welfare_revisited}
Let  $\D$ be the domain of Cobb-Douglas preferences over two goods. The corresponding logarithmic expenditure functions are given by
$\ln E(\vec{p})=\alpha \ln p_1+(1-\alpha)\ln p_2$.
As in Example~\ref{ex_direction_of_welfare},
suppose that the aggregate behavior corresponds to $\alpha=1/3$ and total budget $B=3$, and prices change from $\vec{p^0}=(1,64)$ to $\vec{p^1}=(32,32)$.
The goal is to estimate the range of the welfare change measured by the equivalent variation~\eqref{eq_welfare_equivalent_variation}. The corresponding functional $w(\succsim)$ takes the form
$$w_\ev(\succsim)=\frac{\E(\vec{p^0})}{\E(\vec{p^1})}-1=2^{1-6\alpha}-1.$$
Since the set $\L_D$ of logarithmic expenditure functions can be identified with the interval $\alpha\in [0,1]$ via a linear transformation, convexification and concavification over $\L_D$ boil down to the same operations with respect to $\alpha$. Abusing the notation, we denote $2^{1-6\alpha}-1$ by $w_\ev(\alpha)$. The function $w_\ev(\alpha)$ is convex and so $\vex_{[0,1]}[w_\ev](\alpha)=w_\ev(\alpha)$ and 
$\cav_{[0,1]}[w_\ev](\alpha)=\alpha \cdot w_\ev(0)+(1-\alpha)w_\ev(1)$; see Figure~\ref{fig_example_range}.
\begin{figure}
	\begin{center}
		\includegraphics[width=5.5cm]{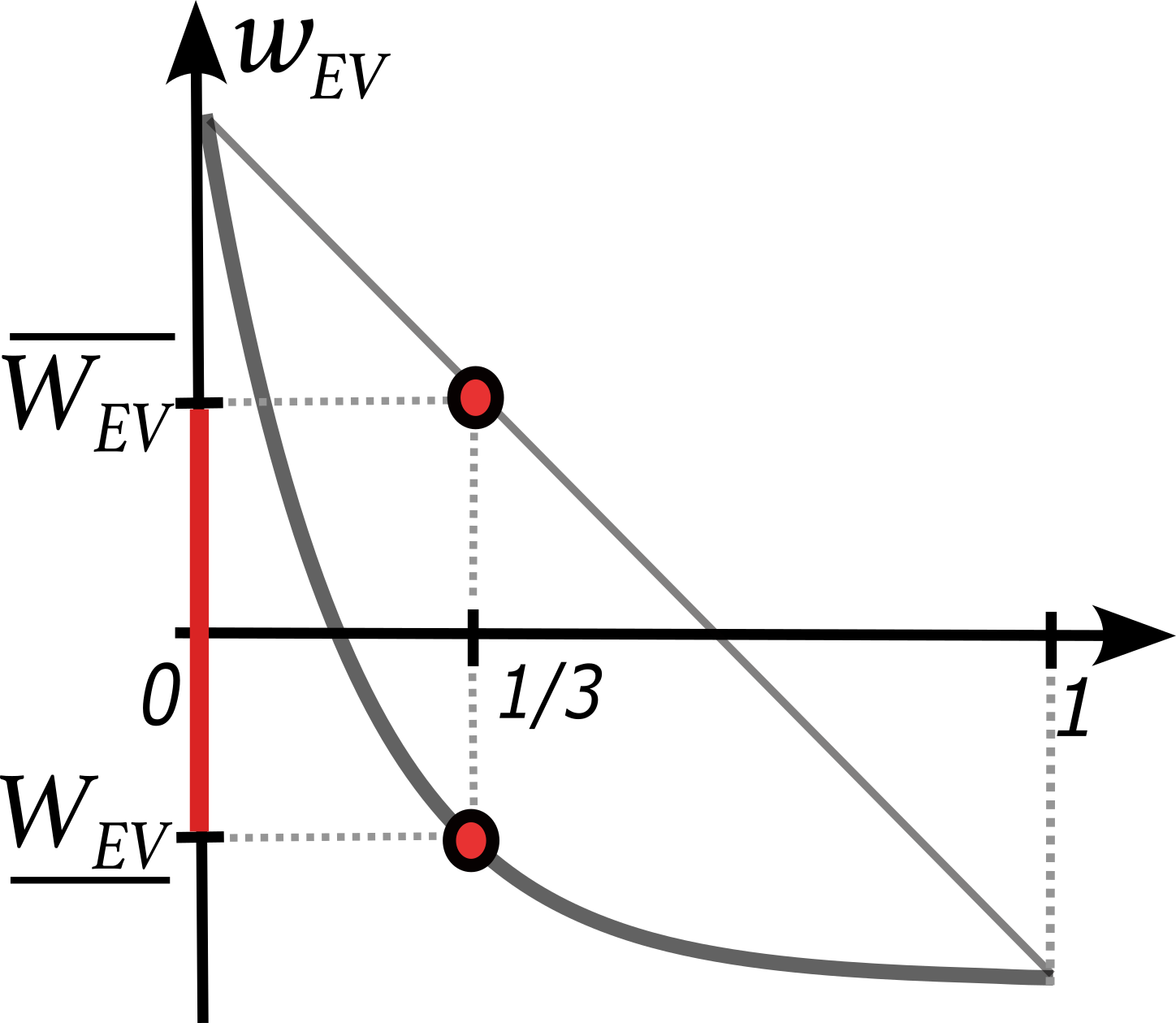}
	\end{center}
	\caption{The range of values $[\underline{W_\ev},\, \overline{W_\ev}]$ for the equivalent variation in Example~\ref{ex_direction_welfare_revisited}. \label{fig_example_range}
	}
\end{figure}
We obtain the following answer for the range of values of $W_\ev$: 
$$\big[\underline{W_\ev},\,\overline{W_\ev}\big]= \Big[3\cdot \vex_{[0,1]} [w_\ev](1/3),\ \ 3\cdot \cav_{[0,1]} [w_\ev](1/3) \Big]= \left[-\frac{3}{2},\ \ \ \frac{63}{32} \right].$$
In particular, the two populations described in Example~\ref{ex_direction_of_welfare} give the lowest and the highest possible values for the welfare change.
\end{example}	

Let us now discuss the general implications of Proposition~\ref{prop_welfare_as_persuasion}.
In Example~\ref{ex_direction_welfare_revisited}, the functional $w$ corresponding to the equivalent variation turned out to be a convex function of the logarithmic expenditure function. This is true in general. Indeed, the ratio
$$\frac{\E(\vec{p^0})}{\E(\vec{p^1})}=\exp\left( \ln\E(\vec{p^0})-\ln\E(\vec{p^1})\right) $$
 is a convex function over the space of logarithmic expenditure functions as the composition of a convex function  $\exp(t)$ and a linear functional evaluating the difference between the values of $\ln E$ at two different points $\vec{p}$. 
 Therefore, $\vex_{\L_\D}[w_\ev]=w_\ev$ for the equivalent variation. Similarly, one concludes that the compensated variation is concave, and we obtain the following corollary of Proposition~\ref{prop_welfare_as_persuasion}.
\begin{corollary}\label{cor_pessimistic}
If the change in welfare is measured by the equivalent variation, then $\underline{W_\ev}$ equals $\ev(\succsim_\agr,B)$, i.e.,  
the representative agent approach gives the most pessimistic prediction for the welfare change of the population. On the other hand, if the compensating variation is used, $\overline{W_\cv}$ is equal to $\cv(\succsim_\agr,B)$, and so the representative consumer provides the most optimistic estimate.
\end{corollary}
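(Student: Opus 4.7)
The plan is to derive the corollary directly from Proposition~\ref{prop_welfare_as_persuasion} by exploiting the specific functional form of $w_\ev$ and $w_\cv$ on the space of logarithmic expenditure functions. By~\eqref{eq_welfare_range} we have $\underline{W_\ev}=B\cdot \vex_{\L_\D}[w_\ev](\succsim_\agr)$ and $\overline{W_\cv}=B\cdot \cav_{\L_\D}[w_\cv](\succsim_\agr)$, so it suffices to show that $\vex_{\L_\D}[w_\ev]=w_\ev$ and $\cav_{\L_\D}[w_\cv]=w_\cv$ on the relevant convex hull. The values at $\succsim_\agr$ then reduce to $\ev(\succsim_\agr,B)/B$ and $\cv(\succsim_\agr,B)/B$ by~\eqref{eq_variations}, which is exactly the representative-agent prediction.

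The first step is to verify convexity of $w_\ev$ on $\L_\D$. Writing a preference via its logarithmic expenditure function $f=\ln \E$, I rewrite
\begin{equation*}
w_\ev(f)=\exp\bigl(f(\vec{p^0})-f(\vec{p^1})\bigr)-1.
\end{equation*}
The map $f\mapsto f(\vec{p^0})-f(\vec{p^1})$ is a linear functional on the ambient space of logarithmic expenditure functions, and the exponential is convex and non-decreasing; composing yields convexity of $w_\ev$ as a functional on $\L_\D$ (and in fact on the whole ambient linear space). Note that this convexity is a genuine property of the functional, not of its representatives: any change $f\mapsto f+\const$ affects $f(\vec{p^0})$ and $f(\vec{p^1})$ equally and so leaves $w_\ev$ unchanged, so the definition descends to equivalence classes. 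An entirely symmetric computation gives
\begin{equation*}
w_\cv(f)=1-\exp\bigl(f(\vec{p^1})-f(\vec{p^0})\bigr),
\end{equation*}
which is the negative of a convex functional, hence concave on $\L_\D$.

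The second step is a standard fact about envelopes: if a function is already convex on a set $X$, then its convex envelope $\vex_X$ on the convex hull of $X$ coincides with the restriction of this function (i.e., Jensen's inequality prevents any convex combination from strictly lowering the value). Applied to $w_\ev$, this gives $\vex_{\L_\D}[w_\ev](\succsim_\agr)=w_\ev(\succsim_\agr)$. Symmetrically, $\cav_{\L_\D}[w_\cv](\succsim_\agr)=w_\cv(\succsim_\agr)$. Combining with Proposition~\ref{prop_welfare_as_persuasion} and the identity $\ev_{\vec{p^0}\to\vec{p^1}}(\succsim_\agr,B)=B\cdot w_\ev(\succsim_\agr)$ from~\eqref{eq_variations} yields $\underline{W_\ev}=\ev(\succsim_\agr,B)$, and analogously $\overline{W_\cv}=\cv(\succsim_\agr,B)$.

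There is no real obstacle here since both claims reduce to convexity/concavity of $\exp$ of a linear functional; the only subtlety to mention is that the argument works regardless of the domain $\D$, because the convexity/concavity of $w_\ev$ and $w_\cv$ is a global property of the ambient linear space of logarithmic expenditure functions and thus survives restriction to any $\L_\D$.
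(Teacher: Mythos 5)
Your proof is correct and follows essentially the same route as the paper: both write $w_\ev$ (resp.\ $w_\cv$) as $\exp$ of a linear functional of the logarithmic expenditure function (minus a constant), deduce convexity (resp.\ concavity), conclude that the convexification (resp.\ concavification) over $\L_\D$ coincides with the functional itself, and plug this into Proposition~\ref{prop_welfare_as_persuasion}. The one caveat---shared with the paper---is that the final evaluation step tacitly assumes the one-agent representation is admissible, i.e.\ $\succsim_\agr\in\D$ (otherwise Jensen only yields $\underline{W_\ev}\geq \ev(\succsim_\agr,B)$, with possible strict inequality, e.g.\ for $\D$ consisting of two single-minded linear preferences and a Cobb--Douglas aggregate), so your closing claim that the argument works regardless of the domain $\D$ should be read with that qualification.
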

The trade literature relies on the equivalent variation, and hence, the representative agent approach used by this literature gives the lower bound on the actual welfare change. Therefore, the result of Corollary \ref{cor_pessimistic} can serve as a possible explanation of surprisingly low gains from trade found in quantitative trade literature
\citep*{arkolakis2012new}.
\smallskip

A common assumption in industrial organization and trade literature is that a population behaves like a single CES consumer. The following example illustrates how to compute the range of values for the equivalent variation for CES aggregate behavior, provided that individual preferences exhibit substitutability.

\begin{example}\label{ex_CES_welfare}
Let the domain of individual preferences $\D$ be all the preferences over two goods exhibiting substitutability. Assume that the aggregate consumer has a CES preference corresponding to
\begin{equation}\label{eq_CES_two_goods}
u_\agr(\vec{x})=\left(\left(a_1\cdot x_1\right)^{\frac{\sigma-1}{\sigma}}+\left(a_2\cdot x_2\right)^{\frac{\sigma-1}{\sigma}}\right)^{\frac{\sigma}{\sigma-1}}\qquad \mbox{with}\qquad \sigma>1.
\end{equation}
Our goal is to find the range of values $[\underline{W_\ev},\overline{W_\ev}]$ for welfare change measured by the equivalent variation~\eqref{eq_welfare_equivalent_variation}. By Corollary~\ref{cor_pessimistic}, the pessimistic bound $\underline{W_\ev}$ is given by $\ev_{\vec{p^0}\to\vec{p^1}}(\succsim_\agr,B)$ and so we focus on computing $\overline{W_\ev}$. 

To concavify a convex function $f$ over a convex set $X$, it is enough to maximize in~\eqref{eq_concavification} over extreme points of $X$. We will need the following facts established in subsequent sections: the set $\L_{\D}$ for two substitutes is convex (Section~\ref{sec_invariant}), and linear preferences correspond to its extreme points (Section~\ref{sec_S_indecomposable}).  Formula~\eqref{eq_CES_as_aggregation_of_linear} gives a unique way to represent a CES preference as an aggregation of linear ones: the marginal rate of substitution $\mrs=v_1/v_2$ of a linear preference $u(\vec{x})=v_1\cdot x_1+v_2\cdot x_2$ has to be distributed over the population according to the distribution $\mu$ with the following cumulative distribution function:
\begin{equation}\label{eq_CES_CDF}
\mu\big([0,\,t)\big)=\frac{(a_2\cdot t)^{\sigma-1}}{(a_1)^{\sigma-1}+ (a_2\cdot t)^{\sigma-1}}.
\end{equation}

Since $\ev$ is a convex functional, we conclude that its concavification corresponds to representing $\ln E_\agr$ as a convex combination of extreme points of $\L_{\D}$ which correspond to linear preferences. Using formula~\eqref{eq_price_index_and_shares_for_linear}
for the expenditure function of a linear preference~$\succsim$ with $\mrs=t$, we get
\begin{equation}\label{eq_EV_linear}
\ev_{\vec{p^0}\to\vec{p^1}}(\succsim,b)=b\left( \frac{p_2^0}{p_2^1}\cdot \frac{\min\{p_1^0/p_2^0,\  t\}}{\min\{p_1^1/p_2^1,\ t\}}-1\right).
\end{equation}
Hence, the maximal value of the welfare change compatible with the CES aggregate preference $\succsim_\agr$ is given by averaging~\eqref{eq_EV_linear} with respect to $\mu$:
$$\overline{W_\ev}=B\cdot \int_0^\infty \left( \frac{p_2^0}{p_2^1}\cdot \frac{\min\{p_1^0/p_2^0,\  t\}}{\min\{p_1^1/p_2^1,\ t\}}-1\right) \dd\mu(t).$$
Except for particular values of $t$ such as $t=1$, this integral cannot be computed in elementary functions because of the contribution of the interval $t\in \big[\min_i p_1^i/p_2^i, \ \max_i p_1^i/p_2^i\big]$ leading to the hypergeometric function. 
This integral can be computed approximately, provided that the change in prices is small. Instead of performing this computation, we explain below the general technique to approximate $\overline{W_\ev}$. For this purpose, we first discuss functionals for which the aggregate behavior is a sufficient statistic.
\end{example} 

For a general objective $W$ of the form~\eqref{eq_welfare_additive}, the use of the representative agent approach  is justified if the interval $[\underline{W},\overline{W}]$ is, in fact, a singleton, i.e.,
the convexification of $w$ coincides with the concavification. The two coincide only for affine functionals.
\begin{corollary}
Market demand is a sufficient statistic for the value of $W$ if the measure of individual well-being $w(\succsim)$ is an affine functional of the logarithmic expenditure function $\ln(\E_\succsim)$. If it is not affine, there is an aggregate preference $\succsim_\agr$ and two populations with the same total budget whose preferences aggregate up to $\succsim_\agr$ but levels of $W$ are different. 
\end{corollary}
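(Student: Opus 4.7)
The plan is to deduce both directions straight from Proposition~\ref{prop_welfare_as_persuasion}: market demand pins down $W$ uniquely if and only if $\vex_{\L_\D}[w](\succsim_\agr)=\cav_{\L_\D}[w](\succsim_\agr)$, and by the standard extreme-point description of affine functions on convex sets, this equality holds at every achievable $\succsim_\agr$ precisely when $w$ is the restriction to $\L_\D$ of an affine functional on the space of logarithmic expenditure functions.

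For the forward implication, suppose $w(\succsim)=c+L(\ln\E_\succsim)$ for some constant $c$ and linear functional $L$. Combining linearity of $L$ with the aggregation identity~\eqref{eq:pr} of Theorem~\ref{th_representative_index}, a one-line computation gives
$$W=\sum_k b_k\cdot w(\succsim_k)=Bc+B\cdot L\!\Big(\textstyle\sum_k \beta_k\ln\E_k\Big)=Bc+B\cdot L(\ln\E_\agr)=B\cdot w(\succsim_\agr),$$
so $W$ depends on the population only through $\succsim_\agr$ and $B$, both of which are recoverable from market demand.

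For the converse I would argue by contraposition. If $w$ is not affine on $\L_\D$, then directly from the definition of convexification in~\eqref{eq_concavification} there exist probability weights $\{\beta_k\}$, $\{\gamma_j\}$ and elements $f_k,g_j\in\L_\D$ such that
$$\sum_k \beta_k f_k=\sum_j \gamma_j g_j\quad\text{and yet}\quad \sum_k \beta_k\, w(f_k)\neq \sum_j \gamma_j\, w(g_j).$$
Fix any $B>0$ and let population~1 consist of consumers whose logarithmic expenditure functions are $f_k$ with budgets $B\beta_k$, and population~2 of consumers with logarithmic expenditure functions $g_j$ and budgets $B\gamma_j$. By Theorem~\ref{th_representative_index}, both populations share the same total budget $B$ and the same aggregate preference $\succsim_\agr$, corresponding to the common convex combination $\sum_k \beta_k f_k$, but they produce different values of $W=\sum b_k w(\succsim_k)$.

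The only conceptual subtlety is the correct reading of ``affine on $\L_\D$'' when this set is not itself convex; the natural and correct formulation, used implicitly above, is that $w$ extends to an affine functional on the affine span of $\L_\D$, equivalently that any two probability combinations hitting the same point assign $w$ the same expected value. With this notion in hand, both implications are immediate corollaries of Proposition~\ref{prop_welfare_as_persuasion} together with Theorem~\ref{th_representative_index}, and no additional work is required.
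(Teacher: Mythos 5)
Your proof is correct and takes essentially the same route as the paper: both directions reduce, via Theorem~\ref{th_representative_index} (equivalently Proposition~\ref{prop_welfare_as_persuasion}), to the observation that $W$ is pinned down by aggregate behavior precisely when every convex decomposition of $\ln \E_{\agr}$ over $\L_\D$ yields the same weighted sum of $w$-values, which is exactly affineness of $w$, i.e., the collapse $\vex_{\L_\D}[w]=\cav_{\L_\D}[w]$. Your explicit treatment of the case where $\L_\D$ is not convex (reading ``affine'' as agreement of $w$ across all probability combinations with the same barycenter) is a sensible sharpening of a point the paper leaves implicit, but it does not change the underlying argument.
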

The equivalent and compensating variations, as well as the total income of agents with preferences from $\D'\subset \D$, are not affine.
An example of an affine functional is given by 
the area variation
$$  \av_{\vec{p^0}\to\vec{p^1}}(\succsim,b)=  \int_{\vec{p^1}}^{\vec{p^0}}\label{eq_consume_surplus} D\big(\vec{p},\,b\big)\,\dd \vec{p},
$$
where the integration is over a curve connecting $\vec{p^1}$ and $\vec{p^0}$ in the space of prices.\footnote{The area variation is also known as the consumer surplus. It does not have a straightforward welfare interpretation except for the case of the quasilinear domain where $\av=\ev=\cv$. Beyond the quasilinear domain, $\av$ is often used as an approximation to the equivalent or compensating variation since $\av$ is easier to compute in practice thanks to the direct observability of the Marshallian  demand~$D$ \citep*{willig1976consumer}. Note that $\cv\leq \av\leq \ev$ for homothetic preferences which follows from the chain of inequalities $1-\frac{1}{t}\leq \ln t\leq t+1$ with $t=E(\vec{p^0})/E(\vec{p^1})$.}
Since the demand is proportional to the gradient of the logarithmic expenditure function~\eqref{eq_demand_price_index}, we get
\begin{equation}\label{eq_welfare_consume_surplus}
	\av_{\vec{p^0}\to\vec{p^1}}(\succsim,b)= b\cdot\big(\ln \E(\vec{p^1})-\ln \E(\vec{p^0}) \big),
\end{equation}
and thus $\av$ is indeed an affine functional of the logarithmic expenditure function.
\smallskip 

Let us discuss the behavior of the welfare change provided that the change in prices $\delta\vec{p}=\vec{p^1}-\vec{p^0}$ is small. We focus on the case where the welfare change is measured by the equivalent variation~\eqref{eq_welfare_equivalent_variation}. Let us first derive an approximate formula for the equivalent variation $\ev(\succsim,b)$ of a single agent. Denote $\ln E(\vec{p}^1)-\ln E(\vec{p}^0)$ by $\delta\ln E$.
Using the Taylor formula, we get
\begin{equation}\label{eq_EV_approximation}
\ev(\succsim,b)=b\cdot (\exp(-\delta\ln E)-1)\simeq b\cdot\left(-\delta \ln E+\frac{1}{2}(\delta \ln E)^2\right)=\av(\succsim,b)+\frac{b}{2}(\delta \ln E)^2,
\end{equation}
where $\simeq$ means equality up to the third order of magnitude in $\delta\vec{p}.$\footnote{Formally, $\ev-\av- \frac{b}{2}(\delta \ln E)^2=O(||\delta\vec{p}||^3)$ as $||\delta\vec{p}||\to 0$.} Approximating the small change in the argument of $\ln E$ via the gradient and expressing the gradient through the demand~\eqref{eq_demand_price_index}, we rewrite
$$b\cdot \big(\delta\ln E\big)^2\simeq b\cdot \left\langle\delta\vec{p},\, \nabla\ln E(\vec{p^0}) \right\rangle^2= b\cdot \left\langle\delta\vec{p},\, D\left(\vec{p^0},\, 1\right) \right\rangle^2= \left\langle\delta\vec{p},\, D\left(\vec{p^0},\, \sqrt{b}\right) \right\rangle^2.$$
 Taking into account that the area variation is an affine functional, we get the following corollary of Proposition~\ref{prop_welfare_as_persuasion}.

\begin{corollary}\label{cor_approximation}
If individuals have preferences from a domain~$\D$,
the welfare change is measured by the equivalent variation, and the change in prices $\delta\vec{p}$ is small, then 
$$\overline{W_\ev}\simeq \av_{\vec{p^0}\to\vec{p^1}}(\succsim_\agr,B)+\frac{1}{2} \cdot\cav_{\L_D}\left[\left\langle\delta\vec{p},\, D\left(\vec{p^0},\, \sqrt{B}\right) \right\rangle^2\right](\succsim_\agr).$$
\end{corollary}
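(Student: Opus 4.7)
The natural starting point is Proposition~\ref{prop_welfare_as_persuasion}, which reduces the task to showing that
$$B\cdot \cav_{\L_\D}\big[w_\ev\big]\big(\succsim_\agr\big)\ \simeq\ \av_{\vec{p^0}\to\vec{p^1}}(\succsim_\agr,B)+\tfrac{1}{2}\cdot \cav_{\L_\D}\Big[\big\langle\delta\vec{p},\,D(\vec{p^0},\sqrt{B})\big\rangle^{2}\Big]\big(\succsim_\agr\big),$$
where $w_\ev(\succsim)=\E(\vec{p^0})/\E(\vec{p^1})-1$. I would proceed in two moves: decompose $w_\ev$ into an affine and a quadratic part in $\ln \E$, and then rewrite the quadratic part through the demand.

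\textbf{Affine/quadratic split.} The pointwise Taylor expansion already derived in~\eqref{eq_EV_approximation} gives
$$w_\ev(\succsim)\ \simeq\ -\delta\ln \E + \tfrac{1}{2}(\delta\ln \E)^2,$$
where $\delta\ln \E=\ln \E(\vec{p^1})-\ln \E(\vec{p^0})$ is an \emph{affine} functional of $\ln \E$ when viewed as a map $\L_\D\to\R$. Concavification interacts trivially with affine summands: for any affine $\phi$ on $\L_\D$ and any $g$, $\cav_{\L_\D}[g+\phi]=\cav_{\L_\D}[g]+\phi$, because the linear constraint $\sum_k\beta_k\ln \E_k=\ln \E_\agr$ fixes the value of $\phi$ across all representations in~\eqref{eq_concavification}. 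Applying this with $\phi=-\delta\ln \E$ and multiplying by $B$, the affine part contributes exactly $B\cdot(-\delta\ln \E_\agr)$, which by~\eqref{eq_welfare_consume_surplus} is the area variation $\av_{\vec{p^0}\to\vec{p^1}}(\succsim_\agr,B)$.

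\textbf{From quadratic in $\ln \E$ to a square of demand.} For the remaining piece $\tfrac{B}{2}(\delta\ln \E)^2$, the plan is to Taylor-expand once more: $\delta\ln \E=\langle\delta\vec{p},\nabla\ln \E(\vec{p^0})\rangle+O(\|\delta\vec{p}\|^2)$, and substitute $\nabla\ln \E(\vec{p^0})=D(\vec{p^0},1)$ from~\eqref{eq_demand_price_index}. Squaring and absorbing $B$ through the homogeneity $D(\vec{p},\lambda b)=\lambda D(\vec{p},b)$ yields
$$\tfrac{B}{2}(\delta\ln \E)^2\ \simeq\ \tfrac{B}{2}\langle\delta\vec{p},D(\vec{p^0},1)\rangle^{2}=\tfrac{1}{2}\langle\delta\vec{p},D(\vec{p^0},\sqrt{B})\rangle^{2},$$
uniformly up to $O(\|\delta\vec{p}\|^3)$. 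Concavifying both sides over $\L_\D$ and evaluating at $\succsim_\agr$ gives the second term in the claimed formula, completing the combination.

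\textbf{Main obstacle.} The delicate point is that all approximations so far are pointwise in $\succsim$, whereas the concavification in~\eqref{eq_concavification} is a supremum over all finite representations of $\succsim_\agr$ by elements of $\L_\D$, which could a priori involve preferences with arbitrarily steep demand at $\vec{p^0}$ and thus a diverging quadratic error. To legitimize the symbol $\simeq$ in the corollary one must control the Taylor remainder \emph{uniformly} over the representations that (almost) attain the concavification. The natural route is to exploit compactness of the space of preferences in the metric~\eqref{eq_distance_body}, restrict attention to representations supported on a compact subset of $\L_\D$ (which is justified because preferences whose demand at $\vec{p^0}$ is large contribute correspondingly large quadratic terms and hence cannot improve the supremum beyond a controlled margin), and then apply a uniform second-order Taylor bound on that compact subset. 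This uniformization of the remainder is the only nontrivial step; once it is in place, the term-by-term concavification above yields the stated asymptotic.
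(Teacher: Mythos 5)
Correct, and essentially the paper's own argument: the paper derives this corollary by exactly your route---the pointwise expansion \eqref{eq_EV_approximation}, the substitution $b\,(\delta\ln \E)^2\simeq \left\langle\delta\vec{p},\, D\left(\vec{p^0},\,\sqrt{b}\right)\right\rangle^2$ via \eqref{eq_demand_price_index}, and the observation that the affine functional $\av$ passes through the concavification of Proposition~\ref{prop_welfare_as_persuasion}. Your closing ``obstacle'' paragraph is more caution than the paper itself exercises (it leaves the uniformity behind $\simeq$ implicit); note only that your diagnosis is slightly off---demand at $\vec{p^0}$ can never be ``arbitrarily steep'' since expenditure shares lie in $[0,1]$, which gives the uniform Lipschitz bound $|\delta\ln\E|\le\max_i|\ln p_i^1-\ln p_i^0|$ for every homothetic preference, so the genuine delicacy in the second-order step is kinks of $\ln\E$ at $\vec{p^0}$ (e.g.\ linear preferences), which compactness in the metric \eqref{eq_distance_body} does not remove.
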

Formula~\eqref{eq_EV_approximation} allows one to express $\av$ through $\ev$. Hence, Corollary~\ref{cor_approximation} implies
\begin{equation}\label{eq_welfare_range_small_change}
    \overline{W_\ev}-\underline{W_\ev}\simeq\frac{1}{2}\cdot\Big(\cav_{\L_D}\left[\left\langle\delta\vec{p},\, D\left(\vec{p^0},\, \sqrt{B}\right) \right\rangle^2\right](\succsim_\agr)-\left\langle\delta\vec{p},\, D_\agr\left(\vec{p^0},\, \sqrt{B}\right) \right\rangle^2\Big). 
\end{equation}
One can check that formula~\eqref{eq_welfare_range_small_change} also remains valid if the welfare change is measured by the compensating variation.

Note that the concavification $\cav_X[f](x)$  can be seen as the maximal expectation of $\mathbb{E}_{y}[f(y)]$ with $y\sim \mu$ over all distributions $\mu$ on $X$ such that $\mathbb{E}_{y}[y]=x$ with a finite number of atoms. Based on that, formula~\eqref{eq_welfare_range_small_change} admits a simple probabilistic interpretation summarized by the next corollary.
\begin{corollary}\label{cor_range}
Assume that individuals have preferences from~$\D$,
the welfare change is measured by either $\ev$ or $\cv$, and the change in prices $\delta\vec{p}$ is small. The range of possible values for the change in welfare can be expressed through the variance of demand as follows:
\begin{equation}\label{eq_range_as_variance}
    \overline{W}-\underline{W}\simeq\frac{1}{2}\cdot \sup_{(\succsim_k,b_k)_{k=1,\ldots}}\mathrm{Var}_\succsim\, \Big[\left\langle\delta\vec{p},\, D_\succsim\left(\vec{p^0},\, \sqrt{B}\right) \right\rangle\Big],
  \end{equation}  
    where the supremum is taken over all populations with preferences from $\D$ and given aggregate behavior $\succsim_\agr$, and $\succsim$ is a random preference equal to $\succsim_k$ with probability $\beta_k$.
\end{corollary}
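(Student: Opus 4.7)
The plan is to start from the bound derived in Corollary~\ref{cor_approximation} and formula~\eqref{eq_welfare_range_small_change}, and to rewrite the concavification via its probabilistic definition. Recall that, by~\eqref{eq_concavification}, the value $\cav_{\L_\D}[g](\succsim_\agr)$ is, for any functional $g$ on $\L_\D$, the supremum of $\sum_k \beta_k g(\succsim_k)$ over all finite convex combinations with $\sum_k \beta_k \ln \E_k = \ln \E_\agr$ and $\succsim_k\in\D$. By Theorem~\ref{th_representative_index}, such convex combinations are in bijection with finite populations with preferences $\succsim_k\in\D$ and relative budgets $\beta_k=b_k/B$ aggregating to $\succsim_\agr$. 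Hence, interpreting $\succsim$ as the random preference drawn from $\{\succsim_k\}$ with probabilities $\beta_k$, we may write
\begin{equation*}
\cav_{\L_\D}\!\Big[\big\langle\delta\vec{p},\,D(\vec{p^0},\sqrt{B})\big\rangle^{2}\Big](\succsim_\agr)
=\sup_{(\succsim_k,b_k)_{k}}\mathbb{E}_{\succsim}\!\Big[\big\langle\delta\vec{p},\,D_\succsim(\vec{p^0},\sqrt{B})\big\rangle^{2}\Big].
\end{equation*}

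Next, I would compute the mean of the linear statistic $\langle\delta\vec{p},\,D_\succsim(\vec{p^0},\sqrt{B})\rangle$ under the population distribution. Using the demand identity~\eqref{eq_demand_price_index}, $D_k(\vec{p^0},\sqrt{B})=\sqrt{B}\cdot\nabla\ln\E_k(\vec{p^0})$, and Theorem~\ref{th_representative_index} gives $\sum_k \beta_k \nabla\ln\E_k(\vec{p^0})=\nabla\ln\E_\agr(\vec{p^0})$; consequently
\begin{equation*}
\mathbb{E}_\succsim\!\Big[\big\langle\delta\vec{p},\,D_\succsim(\vec{p^0},\sqrt{B})\big\rangle\Big]
=\big\langle\delta\vec{p},\,D_\agr(\vec{p^0},\sqrt{B})\big\rangle,
\end{equation*}
which is a constant (independent of the specific population) pinned down by the aggregate preference. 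Therefore
\begin{equation*}
\mathrm{Var}_\succsim\!\Big[\big\langle\delta\vec{p},\,D_\succsim(\vec{p^0},\sqrt{B})\big\rangle\Big]
=\mathbb{E}_\succsim\!\Big[\big\langle\delta\vec{p},\,D_\succsim(\vec{p^0},\sqrt{B})\big\rangle^{2}\Big]
-\big\langle\delta\vec{p},\,D_\agr(\vec{p^0},\sqrt{B})\big\rangle^{2}.
\end{equation*}

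Taking the supremum of both sides over populations compatible with $\succsim_\agr$ and $B$, the subtracted term is invariant across populations and can be pulled outside the supremum. Substituting the resulting expression into the right-hand side of~\eqref{eq_welfare_range_small_change} yields the claimed formula for the equivalent variation. The identical argument works for $\cv$ because formula~\eqref{eq_welfare_range_small_change} was noted to remain valid in that case. The main, albeit minor, technical point is checking that the bijection between populations and finite-support distributions with mean $\ln \E_\agr$ really makes the supremum in~\eqref{eq_range_as_variance} coincide with the concavification supremum; once that identification is made the rest is algebra plus the mean computation above, which I expect to be routine.
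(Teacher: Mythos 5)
Your proposal is correct and follows essentially the same route as the paper: it starts from formula~\eqref{eq_welfare_range_small_change}, identifies the concavification with the supremum of $\mathbb{E}_\succsim\big[\langle\delta\vec{p},\,D_\succsim(\vec{p^0},\sqrt{B})\rangle^2\big]$ over populations compatible with $\succsim_\agr$ (via Theorem~\ref{th_representative_index}), and recognizes the subtracted term as the squared mean. The only difference is that you spell out explicitly the mean computation $\mathbb{E}_\succsim\big[\langle\delta\vec{p},\,D_\succsim(\vec{p^0},\sqrt{B})\rangle\big]=\langle\delta\vec{p},\,D_\agr(\vec{p^0},\sqrt{B})\rangle$, which the paper leaves implicit.
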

An agent with budget $\sqrt{B}$ cannot buy more than $\sqrt{B}/p_i^0$ units of a good $i$ at prices $\vec{p^0}$. Hence, the range for small $\delta\vec{p}$ is bounded from above by
$$\overline{W}-\underline{W}\lesssim\frac{B}{2}\cdot \max_i \left(\frac{\delta p_i}{p_i^0}\right)^2.$$
We conclude that both for the equivalent and compensating variations, the range of values $\underline{W}-\overline{W}$ has the second order of magnitude in $\vec{p^1}-\vec{p^0}$ when the change in prices is small.
\begin{example}\label{ex_CES_welfare_approx}
Consider the setting from Example~\ref{ex_CES_welfare}: a CES aggregate consumer~\eqref{eq_CES_two_goods} and individuals having preferences from the domain~$\D$ of substitutes. Our goal is to compute the range $\overline{W}-\underline{W}$ explicitly provided that $\delta\vec{p}=\vec{p^1}-\vec{p^0}$ is small. 

Note that $\left\langle\delta\vec{p},\, D_\succsim\left(\vec{p^0},\, \sqrt{B}\right) \right\rangle^2$ is a convex functional on logarithmic expenditure functions. Hence, as in Example~\ref{ex_CES_welfare}, maximal value in~\eqref{eq_welfare_range_small_change} and~\eqref{eq_range_as_variance} is attained when $\succsim_\agr$ is generated by a population of linear consumers. The distribution of $\mrs=v_1/v_2$ over the population is given by~\eqref{eq_CES_CDF}. A linear agent with $\mrs=t$ spends all her money on the first good if $\frac{p_1^0}{p_2^0}<t$ and on the second if the inequality is reversed. 
Hence,
$$\left\langle\delta\vec{p},\, D_\succsim\left(\vec{p^0},\, \sqrt{B}\right) \right\rangle=\sqrt{B}\left(\frac{\delta p_1}{p_1^0}\cdot \one_{\{{p_1^0}/{p_2^0}<t\}}+\frac{\delta p_1}{p_1^0}\one_{\{{p_1^0}/{p_2^0}\geq t\}}\right).$$
Thus the variance is equal to the variance of a random variable that equals ${\sqrt{B}\delta p_1}/{p_1^0}$ or ${\sqrt{B}\delta p_2}/{p_2^0}$ with probabilities
$$\gamma_1=\mu\big((p_1^0/p_2^0,\, \infty)\big)=\frac{(a_1\cdot p_2^0)^{\sigma-1}}{(a_1\cdot p_2^0)^{\sigma-1}+ (a_2\cdot p_1^0)^{\sigma-1}}\qquad \mbox{or}\qquad \gamma_2=1-\gamma_1.$$
We obtain the following formula
$$\overline{W}-\underline{W}\simeq \frac{B}{2}\left(\left(\frac{\delta p_1}{p_1^0}\right)^2\gamma_1+\left(\frac{\delta p_2}{p_2^0}\right)^2\gamma_2-\left(\frac{\delta p_1}{p_1^0}\gamma_1+\frac{\delta p_2}{p_2^0}\gamma_2\right)^2   \right) $$
for the range of welfare change. It is applicable both to the equivalent and compensating variations. 
\end{example}



\section{Invariant domains}\label{sec_invariant}

In this section, we study domains of homothetic preferences invariant with respect to aggregation: if each consumer's preference belongs to the domain, so does the aggregate preference. Tools developed in the previous section
reduce invariance to the convexity of the set of logarithmic expenditure functions and yield a flexible procedure for constructing invariant domains.

\begin{definition}
A domain $\D$ of homothetic preferences over $\R_+^n$ is invariant with respect to aggregation if for any $m\geq 2$ and any population of $m$ consumers with preferences $\succsim_k\ \in\,\D$ and budgets $b_k\in\R_{++}$, $k=1,\ldots,m$, the aggregate preference $\succsim_\agr$ also belongs to $\D$. 
\end{definition}
The set of all homothetic preferences and a domain containing just one preference $\D=\{\succsim\}$ are elementary examples of invariant domains. 

Note that it is enough to check the condition of invariance for populations of $m=2$ consumers. Indeed, aggregation for a population of $m>2$ consumers reduces to aggregation for pairs by adding consumers one by one sequentially. Hence, if the outcome of aggregation belongs to the domain for any pair, the outcome will belong to this domain for any population.


\smallskip

Recall that $\L_\D$ is  the set of all logarithmic expenditure functions of preferences from $\D$: 
$\L_\D=\Big\{f\ :\  \R_{++}^n\to\R\quad:\ f=\ln  \E_{\succsim},\ \succsim\ \in\ \D\Big\}.$\footnote{To be precise, each preference $\succsim\in \D$ corresponds to a family of functions from $\L_D$ that differ by an additive constant. }
The following result is a direct corollary 
of Theorem~\ref{th_representative_index}.
\begin{corollary}\label{cor_invariant}
A domain $\D$ is invariant with respect to aggregation if and only if the set of logarithmic expenditure functions $\L_\D$ is a convex set of functions on $\R_{++}^n$. 
\end{corollary}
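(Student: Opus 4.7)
The plan is to derive both directions directly from Theorem~\ref{th_representative_index}, which exhibits the logarithmic expenditure function of the aggregate consumer as a convex combination (with relative-budget weights) of the individuals' logarithmic expenditure functions. Because aggregation of $m>2$ consumers can be performed sequentially by pairs, and because relative budgets $(\beta_1,\ldots,\beta_m)$ range over the entire simplex $\Delta_{m-1}$ as budgets $(b_1,\ldots,b_m)$ range over $\R_{++}^m$, the economic operation of aggregation and the geometric operation of taking convex combinations in $\L_\D$ are one and the same. The corollary will then follow essentially by matching definitions.

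For the sufficiency direction (convexity of $\L_\D$ implies invariance), I would take an arbitrary population with preferences $\succsim_1,\ldots,\succsim_m\in\D$ and budgets $b_1,\ldots,b_m>0$. Pick logarithmic expenditure representatives $f_k=\ln\E_k\in\L_\D$. By Theorem~\ref{th_representative_index}, a logarithmic expenditure function of $\succsim_\agr$ is $\sum_{k=1}^m\beta_k f_k$ with $\beta_k=b_k/B\in[0,1]$ summing to one. Convexity of $\L_\D$ (applied iteratively, or noting that the simplex is the convex hull of its vertices) yields $\sum_k\beta_k f_k\in\L_\D$, so $\succsim_\agr\in\D$.

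For the necessity direction, I would show that invariance implies closedness of $\L_\D$ under binary convex combinations, which is enough for convexity. Given $f_1,f_2\in\L_\D$ corresponding to $\succsim_1,\succsim_2\in\D$ and $\lambda\in(0,1)$, set budgets $b_1=\lambda$, $b_2=1-\lambda$, so $\beta_1=\lambda$ and $\beta_2=1-\lambda$. Theorem~\ref{th_representative_index} then produces a representative $\ln\E_\agr=\lambda f_1+(1-\lambda)f_2$, and invariance forces $\succsim_\agr\in\D$, hence $\lambda f_1+(1-\lambda)f_2\in\L_\D$. The boundary cases $\lambda\in\{0,1\}$ are immediate.

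The only point worth double-checking is the mild bookkeeping caused by the fact that $\L_\D$ consists of expenditure functions defined only up to an additive constant, so strictly speaking a preference corresponds to an equivalence class $[f]=\{f+c:c\in\R\}$. This is not a genuine obstacle: a convex combination of two equivalence classes is well-defined as an equivalence class because shifting $f_k$ by $c_k$ shifts $\sum_k\beta_k f_k$ by $\sum_k\beta_k c_k$, a constant. Accordingly, ``$\L_\D$ is convex'' should be read as convexity in the quotient, or equivalently as the statement that for every choice of representatives and weights the combination admits a representative in $\L_\D$; both directions of the argument above respect this equivalence, so no obstacle arises.
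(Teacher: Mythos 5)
Your proof is correct and follows essentially the same route as the paper: it reads Corollary~\ref{cor_invariant} off Theorem~\ref{th_representative_index} by identifying aggregation with budget-weighted convex combinations in $\L_\D$, using the reduction to pairs (with budgets $\lambda$, $1-\lambda$ realizing arbitrary weights) for necessity and closure of convex sets under finite convex combinations for sufficiency. Your remark on the additive-constant equivalence classes matches the paper's own footnote on this point, so nothing is missing.
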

In other words, $\D$ is invariant whenever, for any pair of preferences $\succsim',\ \succsim''\ \in \  \D$ with expenditure functions $\E'$ and $\E''$ and $\lambda\in(0,1)$,  the preference~$\succsim$ whose expenditure function $\E$ is given by 
\begin{equation}\label{eq_requirement_invartiance}
\ln \E=\lambda\cdot \ln \E'+ (1-\lambda)\cdot \ln \E''
\end{equation}
also belongs to $\D$. 
For example, the domain of Cobb-Douglas preferences~\eqref{eq_Cobb_Douglas_utility} satisfies the requirement~\eqref{eq_requirement_invartiance} and, hence, is invariant. The domains of preferences exhibiting substitutability or complementarity are also invariant. Indeed, expenditure shares can be obtained by differentiating logarithmic expenditure functions~\eqref{eq_budget_shares_as_elasticities} and so the monotonicity conditions defining these domains are preserved under convex combinations.

\medskip 

Corollary~\ref{cor_invariant} more than just characterizes invariant domains in geometric terms; it also provides a handy tool to construct invariant domains containing a given domain.
Suppose $\D$ is not invariant. How can we complete it to an invariant domain? Of course, $\D$ is contained in the domain of all homothetic preferences, which is invariant. To exclude such a trivial answer, we require the completion to be minimal with respect to set inclusion. 
\begin{definition}\label{def_invariant_completion}
For a domain $\D$, its completion $\D^\inv$ is the minimal closed domain that is invariant with respect to aggregation and contains $\D$.
\end{definition}
The closure is defined with respect to the metric structure~\eqref{eq_distance_body} on preferences. The closedness assumption helps to get a tractable answer for infinite domains. As we will see, taking closure is equivalent to enriching $\D^\inv$ by aggregate preferences of non-atomic populations with preferences from $\D$. 

The completion $\D^\inv$ exists since it can be obtained as the intersection of all closed invariant domains containing $\D$, and there is at least one such domain, namely, the domain of all homothetic preferences. Corollary~\ref{cor_invariant} implies the following geometric characterization of $\D^\inv$.
\begin{corollary}\label{cor_convex_hull}
For any domain $\D$ of homothetic preferences, its  completion $\D^{\inv}$ is equal to the set of all preferences corresponding to logarithmic expenditure functions from the closed convex hull of $\mathcal{L}_{\D}$:
$$\D^{\inv}=\left\{ \ \succsim \ \  : \ \ \ln (\E_\succsim)\in \conv\Big[\mathcal{L}_{\D}\Big]\right\},$$
where $\conv[X]$ denotes the minimal closed convex set containing $X$.
\end{corollary}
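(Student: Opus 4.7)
The plan is to deduce the corollary from Corollary~\ref{cor_invariant} by checking the three defining properties of the completion for the candidate set
\[
\D^{\star}=\left\{ \succsim\ :\ \ln \E_{\succsim}\in \conv\big[\mathcal{L}_{\D}\big]\right\}.
\]
Concretely, I will verify (a) $\D\subseteq \D^{\star}$, (b) $\D^{\star}$ is closed and invariant with respect to aggregation, and (c) $\D^{\star}$ is contained in every closed invariant domain that contains $\D$. Put together, these three facts say exactly that $\D^{\star}$ is the minimal closed invariant domain containing $\D$, i.e., $\D^{\star}=\D^{\inv}$.

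Step~(a) is immediate because $\mathcal{L}_{\D}\subseteq \conv[\mathcal{L}_{\D}]$. For step~(b), the convexity side is a direct application of Corollary~\ref{cor_invariant}: by construction $\mathcal{L}_{\D^{\star}}=\conv[\mathcal{L}_{\D}]$, which is convex, so $\D^{\star}$ is invariant. Closedness of $\D^{\star}$ follows from the fact that the metric~\eqref{eq_distance_body} on preferences was tailored to correspond to a weighted sup distance on centered log expenditure functions, so $\succsim_{j}\to \succsim$ is equivalent to the corresponding (suitably centered) log expenditure functions converging in that norm. Hence the preimage under $\succsim\mapsto \ln\E_{\succsim}$ of a closed set of log expenditure functions is closed. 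For step~(c), suppose $\D'$ is any closed invariant domain with $\D\subseteq \D'$. By Corollary~\ref{cor_invariant}, $\mathcal{L}_{\D'}$ is convex; by closedness of $\D'$ and the same topological correspondence as in step~(b), $\mathcal{L}_{\D'}$ is also closed (modulo the additive-constant ambiguity, which is harmless since both $\mathcal{L}_{\D'}$ and $\conv[\mathcal{L}_\D]$ are invariant under adding constants). Since $\mathcal{L}_{\D'}\supseteq \mathcal{L}_{\D}$, we obtain $\mathcal{L}_{\D'}\supseteq \conv[\mathcal{L}_{\D}]=\mathcal{L}_{\D^{\star}}$, so $\D'\supseteq \D^{\star}$.

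One point that needs a small but genuine argument is that every element of $\conv[\mathcal{L}_{\D}]$ is actually the logarithm of some valid expenditure function, so that $\D^{\star}$ really is a well-defined domain of homothetic preferences. This is where I would invoke the global invariance of the full domain of homothetic preferences (trivially invariant) together with its compactness noted after~\eqref{eq_distance_body}: the corresponding set $\mathcal{L}_{\mathrm{all}}$ is convex (by Corollary~\ref{cor_invariant}) and closed modulo constants (by compactness and the same metric correspondence), so it contains $\conv[\mathcal{L}_{\D}]$. Thus each $f\in \conv[\mathcal{L}_{\D}]$ is of the form $\ln \E_{\succsim}$ for some homothetic $\succsim$, and $\D^{\star}$ is a legitimate domain.

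I expect step~(b), specifically the interplay between the metric~\eqref{eq_distance_body} and the additive-constant equivalence on log expenditure functions, to be the main bookkeeping obstacle; everything else reduces to the one-line combination of Theorem~\ref{th_representative_index} and Corollary~\ref{cor_invariant}. A clean way to handle the bookkeeping is to work throughout with the canonical representatives $f-f(\vec{e})$ used implicitly in~\eqref{eq_distance_body}; then $\mathcal{L}_{\D}$ sits inside a single Banach space, and ``closed convex hull'' has its standard meaning. With that normalization fixed, the three steps above close the argument.
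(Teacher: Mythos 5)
Your proof is correct and takes essentially the same route as the paper, which presents this corollary as a direct consequence of Corollary~\ref{cor_invariant} together with the alignment of the topology on preferences with that on logarithmic expenditure functions (Appendix~\ref{app_topology}): invariance of the candidate set from convexity of $\conv[\mathcal{L}_\D]$, closedness and well-definedness from the metric correspondence and compactness of the full homothetic domain, and minimality by containment in any closed invariant superdomain. Your bookkeeping device of working with the centered representatives $f-f(\vec{e})$ is precisely what the paper's embedding $\T$ in Lemma~\ref{lemma_embedding} formalizes, so the two arguments coincide in substance.
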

This corollary assumes that the choice of the topology on preferences is aligned with that on logarithmic expenditure functions. This requirement is satisfied by the topology from  Appendix~\ref{app_topology}.

Note that $\conv[X]$ can be obtained as the closure of the set of all convex combinations of finite collections of elements from~$X$.
For a finite domains $\D=\{\succsim_1,\ldots,\succsim_q\}$, looking at combinations of at most $q=|\D|$ elements is enough and, hence, Corollary~\ref{cor_convex_hull} is especially easy to apply. For such $\D$, the  completion $\D^\inv$  consists of all preferences $\succsim$ with expenditure functions of the form
$\ln \E(\vec{p})=\sum_{k=1}^q t_k\cdot \ln \E_{k}(\vec{p})$
with $\vec{t}\in \Delta_{q-1}$. Reinterpreting Example~\ref{ex_CobbDouglas_as_aggregation_of_linear}, we conclude that  Cobb-Douglas preferences over $n$ goods is the  completion of  $\D=\big\{\succsim_1,\ldots,\succsim_n\big\}$ where $\succsim_i$ corresponds to the utility function $u_{i}(\vec{x})=x_i$.
\smallskip

To compute the completion for infinite domains $\D$, 
we need to take the closure of the set of preferences $\succsim$ corresponding to all finite convex combinations of logarithmic expenditure functions 
$$\ln \E(\vec{p})=\sum_{k=1}^q t_k\cdot \ln \E_k(\vec{p}), $$
where $q\geq 1$, a vector $\vec{t}\in \Delta_q$, and $\E_k$ represents some preference $\succsim_k$ from $\D$.
It is convenient to think about this convex combination as a result of integration with respect to the atomic probability measure $\mu$, such that $\mu(\succsim_k)= t_k$:
\begin{equation}\label{eq_continuous_aggregation}
    \ln \E(\vec{p})=\int_\D
\ln \E_{\succsim'}(\vec{p})\,\dd\mu(\succsim').
\end{equation}
It turns out that taking closure is equivalent to allowing arbitrary probability measures $\mu$ in~\eqref{eq_continuous_aggregation}, not necessarily atomic. For parametric domains such as linear or Leontief preferences discussed below, the integral in~\eqref{eq_continuous_aggregation} can be seen as the integral over the space of parameters and, hence, passing to an arbitrary $\mu$ is straightforward. In Appendix~\ref{app_topology}, we explain how to define~\eqref{eq_continuous_aggregation} for any  domain $\D$ and measure $\mu$.
\begin{theorem}\label{th_continuous_aggregation}
The  completion of a domain $\D$ consists of all preferences $\succsim$ such that their expenditure function $\E$ can be represented as
\begin{equation}\label{eq_continuous_aggregation_proposition}
    \ln \E(\vec{p})=\int_{\overline{\D}}
\ln \E_{\succsim'}(\vec{p})\,\dd\mu(\succsim')
\end{equation}
with some Borel probability measure $\mu$ supported on the closure $\overline{\D}$ of $\D$.
\end{theorem}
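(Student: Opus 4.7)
The plan is to use Corollary~\ref{cor_convex_hull} to reduce the problem to showing that the closed convex hull $\conv[\L_\D]$ consists precisely of functions representable as $\int_{\overline{\D}} \ln \E_{\succsim'}(\vec{p}) \dd\mu(\succsim')$ for some Borel probability measure $\mu$ on $\overline{\D}$. I would fix a normalization $\ln \E_{\succsim}(\vec{e}) = 0$ so that each preference corresponds to a unique logarithmic expenditure function, and use the fact that the space of homothetic preferences on $\R_+^n$ is compact in the metric~\eqref{eq_distance_body}. This makes $\overline{\D}$ a compact metric space, and the evaluation map $\succsim' \mapsto \ln \E_{\succsim'}(\vec{p})$ continuous on $\overline{\D}$ for each fixed $\vec{p} \in \R_{++}^n$; continuity here is essentially built into the definition~\eqref{eq_distance_body}, which controls differences of $\ln \E$ values pointwise.

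For the direction \emph{integral representation implies membership in the completion}, given a Borel probability measure $\mu$ on $\overline{\D}$, I would approximate $\mu$ in the weak-$*$ topology by finitely supported measures $\mu_n = \sum_k \alpha_k^{(n)} \delta_{\succsim_k^{(n)}}$ with $\succsim_k^{(n)} \in \overline{\D}$ (possible since $\overline{\D}$ is a separable metric space). For each $\mu_n$ the function $\int_{\overline{\D}} \ln \E_{\succsim'} \dd\mu_n = \sum_k \alpha_k^{(n)} \ln \E_{\succsim_k^{(n)}}$ is a finite convex combination of elements of $\L_{\overline{\D}}$; since $\L_\D$ is dense in $\L_{\overline{\D}}$ and $\conv[\L_\D]$ is closed, these combinations all lie in $\conv[\L_\D]$. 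Weak convergence $\mu_n \to \mu$ combined with pointwise continuity of the integrand yields convergence $\int \ln \E_{\succsim'}(\vec{p}) \dd\mu_n \to \int \ln \E_{\succsim'}(\vec{p}) \dd\mu$ for each $\vec{p}$, and closedness of $\conv[\L_\D]$ then places the limit in $\D^\inv$.

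For the converse, take any $\ln \E \in \conv[\L_\D]$. By definition, there exists a sequence of finite convex combinations $\ln \E_n = \sum_k \alpha_k^{(n)} \ln \E_{\succsim_k^{(n)}}$ with $\succsim_k^{(n)} \in \D$ converging to $\ln \E$ in the topology compatible with~\eqref{eq_distance_body}. Each combination corresponds to an atomic probability measure $\mu_n = \sum_k \alpha_k^{(n)} \delta_{\succsim_k^{(n)}}$ on $\overline{\D}$. Since $\overline{\D}$ is compact metric, Prokhorov's theorem gives weak-$*$ compactness of Borel probability measures on it, so a subsequence $\mu_{n_j}$ converges weakly to some $\mu$ supported on $\overline{\D}$. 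Continuity of $\succsim' \mapsto \ln \E_{\succsim'}(\vec{p})$ then delivers
$$\ln \E(\vec{p}) = \lim_j \ln \E_{n_j}(\vec{p}) = \lim_j \int_{\overline{\D}} \ln \E_{\succsim'}(\vec{p}) \dd\mu_{n_j}(\succsim') = \int_{\overline{\D}} \ln \E_{\succsim'}(\vec{p}) \dd\mu(\succsim'),$$
which is the desired representation.

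The main obstacle is reconciling the weak-$*$ topology on probability measures with the topology on logarithmic expenditure functions induced by~\eqref{eq_distance_body}. Specifically, pointwise convergence of $\ln \E_n(\vec{p})$ at all $\vec{p}$ needs to be promoted to convergence in the metric~\eqref{eq_distance_body}; this follows from equicontinuity of log-expenditure functions on compact subsets of $\R_{++}^n$, itself a consequence of concavity and homogeneity of $\E$. A further subtlety is that the statement allows $\mu$ to be supported on $\overline{\D}$ rather than $\D$: this is exactly what makes the forward direction work without additional hypotheses on $\D$, and it is consistent with the closure in Definition~\ref{def_invariant_completion}, since any preference obtained as an integral over $\overline{\D}$ can be approximated by integrals over $\D$.
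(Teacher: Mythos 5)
Your proof is correct, but it takes a genuinely different route from the paper's. The paper does not prove this theorem directly: it derives it as a corollary of a stronger result (Theorem~\ref{th_general_appendix} in Appendix~\ref{sec_th_continuous_aggregation_proof}), whose proof runs through Choquet theory --- the isometric embedding $\T$ of Lemma~\ref{lemma_embedding}, Choquet's theorem applied to the compact convex set $\T[\L_{\D^\inv}]$, and Milman's converse to the Krein--Milman theorem (Proposition~\ref{prop_milman}) to show $(\D^\inv)^\ind\subset\overline{\D}^\ind$. That machinery yields representing measures supported on the smaller set $\overline{\D}^\ind$ of indecomposable preferences, and the statement with measures on $\overline{\D}$ falls out because $\overline{\D}^\ind\subset\overline{\D}$. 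You instead prove exactly the stated equivalence by elementary means: weak-$*$ density of finitely supported measures for one direction, and Prokhorov compactness of $\Delta(\overline{\D})$ applied to the atomic measures of approximating finite aggregations for the other, with both directions resting on continuity of the evaluation maps $\succsim'\mapsto\ln\E_{\succsim'}(\vec{p})$ and on promoting pointwise convergence to convergence in the metric~\eqref{eq_distance_body} via the uniform Lipschitz bound~\eqref{eq_lipshitz_price_index} (your appeal to equicontinuity is exactly the Arzel\`a--Ascoli argument in Lemma~\ref{lemma_embedding}; note the finite combinations are themselves logarithmic expenditure functions of aggregate preferences, so the bound applies to them). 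What each approach buys: yours is self-contained, avoids extreme-point theory entirely, and is calibrated precisely to the statement at hand; the paper's heavier route is not wasted, however, since the Choquet apparatus is what delivers Theorem~\ref{th_choquet} (representation over indecomposable preferences), which is needed for Section~\ref{sec_indecomposable} and cannot be obtained by your compactness argument alone. You share with the paper the reduction via Corollary~\ref{cor_convex_hull} and the reliance on compactness of the preference space, but the measure-theoretic core of the argument is different.
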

With general $\mu$, representation~\eqref{eq_continuous_aggregation_proposition} can be interpreted as the result of preference aggregation where non-atomic populations are allowed and $\mu$ plays the role of preference distribution over the population. 
 In what follows, we refer to~\eqref{eq_continuous_aggregation_proposition} as continuous aggregation.

We prove Theorem~\ref{th_continuous_aggregation} (in a more general form) in Appendix~\ref{sec_th_continuous_aggregation_proof}, together with Theorem~\ref{th_choquet} formulated in the next section. Both results rely on Choquet theory, which
studies extreme points of compact convex sets in topological vector spaces \citep*{phelps2001lectures}. Using this theory requires careful choice of a topology and a measurable structure. For the proof to work, it is crucial that the sets of preferences and logarithmic expenditure functions endowed with the distance~\eqref{eq_distance_body} are compact and admit an isometric embedding into a Banach space.

\subsection{Completion of linear preferences and random utility models}\label{sec_ARUM}

Consider the domain $\D$ of all linear preferences. Our goal is to characterize its  completion~$\D^\inv$.

\medskip

By Theorem~\ref{th_continuous_aggregation}, the completion must contain all preferences $\succsim$ whose expenditure functions correspond to all finite convex combinations of logarithmic expenditure functions for linear preferences.
This convex combination can be seen as a result of integration with respect to an atomic probability measure $\mu$, such that  $\mu\big(\vec{v}_k\big)=\beta_k$:
\begin{equation}\label{eq_price_index_linear_aggregate_appendix}
\ln \E(\vec{p})=\sum_{k=1}^m \beta_k \ln\left(\min_{i=1,\ldots, n}\frac{p_i}{v_{k,i}} \right)=\int_{\R_+^n} \ln\left(\min_{i=1,\ldots, n}\frac{p_i}{v_{i}} \right)\,\dd\mu(\vec{v}).
\end{equation}

To better understand the meaning of \eqref{eq_price_index_linear_aggregate_appendix}, we now elaborate on the link between heterogeneous linear preferences and random utility models of discrete choice.\footnote{
The intuition of this link is as follows. By Theorem~\ref{th_continuous_aggregation}, finding the completion of a class of preferences is essentially taking the average of logarithmic expenditure functions with respect to some measure~$\mu$. One can interpret this average as the expectation over random preferences of a single decision maker, and expenditure shares can be interpreted as probabilities of choosing one of $n$ possible alternatives.} 

In the additive random utility model (ARUM), there is a single decision maker choosing between one of $n$ alternatives. Her utility for alternative $i$ is equal to $w_i+\varepsilon_i$, where $w_i$ is a deterministic component and $\varepsilon_i$ is a stochastic shock. The vector $\vec{w}=(w_1,\ldots, w_n)\in \R^n$ and the joint distribution of shocks $\boldsymbol{\varepsilon}=(\varepsilon_1,\ldots, \varepsilon_n)\in \R^n$ are given. For each realization of the shocks, the agent selects the alternative with the highest utility. Hence, the expected utility $U(\vec{w})$ of the decision maker and the probability $S_i(\vec{w})$ that she chooses alternative~$i$ are equal to\footnote{The formula for the choice probabilities holds for $\vec{w}$ such that the probability of a tie  $w_i+\varepsilon_i=w_j+\varepsilon_j$ is zero. This requirement is satisfied for Lebesgue almost all $\vec{w}$ no matter what the distribution of the shocks is.} 
\begin{equation}
\label{eq: ARUM}U(\vec{w})=\mathbb{E}\left[\max_{i=1,\ldots n} (w_i+\varepsilon_i)\right]
\quad\mbox{and}\quad 
S_i(\vec{w})= \mathbb{P}\{w_i+\varepsilon_i> w_j+\varepsilon_j\ \forall j\ne i\}, 
\end{equation}
where $\mathbb{E}$ and $\mathbb{P}$ denote the expectation and the probability w.r.t. the shock distribution. 
\begin{proposition}
\label{prop_linear_invariant_hull}
A preference $\succsim$ with an expenditure function $\E$ belongs to the  completion of the domain of all linear preferences over $n$ goods if and only if there is a joint distribution of the shocks $(\varepsilon_1,\ldots, \varepsilon_n)$ in ARUM \eqref{eq: ARUM} such that
\begin{equation}\label{eq_ARUM_utility_and_price_index}
U(\vec{w})=-\ln \big(\E(e^{-w_1},\ldots,e^{-w_n}\big)
\end{equation}
is the expected utility in ARUM with deterministic utilities $\vec{w}\in  \R^n$.
\end{proposition}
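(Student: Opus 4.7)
The plan is to apply Theorem~\ref{th_continuous_aggregation} to rewrite the membership condition as an integral representation of $\ln \E$, and then perform a logarithmic change of variables that converts the Leontief-type expression $\ln(\min_i p_i/v_i)$ into the max-form that appears in ARUM. The identification of distributions on parameters of linear preferences with ARUM shock distributions then drops out.

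First I would use Theorem~\ref{th_continuous_aggregation} applied to the domain $\D$ of linear preferences: $\succsim$ belongs to the completion if and only if there exists a Borel probability measure $\nu$ on (the closure of) $\{\vec{v}\in \R_+^n\setminus\{0\}\}$, viewed as the parameter space of linear preferences, such that
\[
\ln \E(\vec{p})=\int \ln\left(\min_{i=1,\ldots,n}\frac{p_i}{v_i}\right)\dd \nu(\vec{v}).
\]
Here I would invoke the explicit form~\eqref{eq_price_index_and_shares_for_linear} of expenditure functions of linear preferences.

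The key step is the algebraic identity
\[
\ln\left(\min_{i=1,\ldots,n}\frac{p_i}{v_i}\right)=-\max_{i=1,\ldots,n}\bigl(\ln v_i -\ln p_i\bigr),
\]
which holds for any $\vec{p}\in\R_{++}^n$ and $\vec{v}$ with at least one positive component. Setting $w_i=-\ln p_i$ and $\varepsilon_i=\ln v_i$ (with the convention $\ln 0=-\infty$, so that a coordinate $v_i=0$ contributes a shock of $-\infty$ and never attains the max when the other shocks are finite), the right-hand side becomes $-\max_i(w_i+\varepsilon_i)$. Defining the shock distribution as the pushforward of $\nu$ under the componentwise logarithm gives
\[
-\ln \E(e^{-w_1},\ldots,e^{-w_n})=\E_{\boldsymbol{\varepsilon}}\Bigl[\max_{i=1,\ldots,n}(w_i+\varepsilon_i)\Bigr]=U(\vec{w}),
\]
which is exactly the desired representation~\eqref{eq_ARUM_utility_and_price_index}. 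The converse direction runs the same argument backwards: given a shock distribution, taking the pushforward under $\vec{v}=e^{\boldsymbol{\varepsilon}}$ produces a probability measure $\nu$ witnessing membership of $\succsim$ in the completion via Theorem~\ref{th_continuous_aggregation}.

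The main obstacle is handling the boundary of the parameter space cleanly: the closure of linear preferences inside the metric space of homothetic preferences contains degenerate preferences corresponding to $\vec{v}$ with some zero coordinates (which translate to shocks taking the value $-\infty$ with positive probability). I would argue that such shocks never affect the ARUM maximum whenever at least one other coordinate of $\boldsymbol{\varepsilon}$ is finite, so the integral $U(\vec{w})$ remains well-defined and matches the integral representation. Integrability of $\max_i(w_i+\varepsilon_i)$ against the shock distribution follows from the finiteness of $\ln \E(\vec{p})$ for $\vec{p}\in\R_{++}^n$, using that the logarithmic expenditure function is finite on its domain. Beyond this bookkeeping, the equivalence is essentially a change of variables, so no deeper argument is required once the continuous-aggregation representation is in hand.
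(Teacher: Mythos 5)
Your proof follows essentially the same route as the paper's: first the integral representation of $\ln \E$ over linear preferences (which the paper obtains from the density of atomic measures, i.e.\ Theorem~\ref{th_continuous_aggregation} applied with the explicit expenditure functions~\eqref{eq_price_index_and_shares_for_linear}), then the change of variables $w_i=-\ln p_i$, $\varepsilon_i=\ln v_i$ turning $\ln\left(\min_i p_i/v_i\right)$ into $-\max_i(w_i+\varepsilon_i)$ and identifying the mixing measure with the ARUM shock distribution via pushforward. Your extra bookkeeping about value vectors with zero coordinates (shocks equal to $-\infty$) addresses precisely the closure-and-support subtlety that the paper's own proof flags as unfinished, so this is the same argument, carried out slightly more carefully at the boundary.
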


We prove Proposition~\ref{prop_linear_invariant_hull} in Appendix \ref{app_proof_linear_hull}.

Taking the gradient on both sides of~\eqref{eq_ARUM_utility_and_price_index} leads to a reformulation of Proposition~\ref{prop_linear_invariant_hull} in terms of expenditure shares:\footnote{The fact that the choice probabilities $S_i(\vec{w})$ can be obtained as partial derivatives of decision maker's utility is known as the Williams–Daly–Zachary theorem and its classical version requires regularity assumptions on the distribution of shocks \citep*{mcfadden1981econometric}. The possibility to drop all the assumptions and get the conclusion for Lebesgue almost all $\vec{w}$ is a recent result \citep*{sorensen2022mcfadden}. The connection between ARUM and aggregation of linear preferences makes this result a corollary of general formula~\eqref{eq_budget_shares_as_elasticities} expressing expenditure shares as the gradient of logarithmic expenditure functions for almost all prices.} $\succsim$ is in the   completion of linear preferences whenever $\vec{s}(e^{-w_1},\ldots,e^{-w_n}\big) $ is the vector of choice probabilities for some additive random utility model, i.e., there exists a distribution of shocks such that
\begin{equation}\label{eq_budget_shares_and_choice_probabilities_linear}
s_{i}(e^{-w_1},\ldots,e^{-w_n}\big)=\mathbb{P}\{w_i+\varepsilon_i> w_j+\varepsilon_j\ \forall j\ne i\}    
\end{equation} for all $i=1,\ldots, n$ and Lebesgue almost all $\vec{w}\in\R^n$.
\medskip

Proposition \ref{prop_linear_invariant_hull} establishes the equivalence between the completion of the linear preference domain and the class of demand systems generated by discrete-choice models. Combining this with Theorem~\ref{th_representative_index} implies the result obtained independently (and using different methods) by \cite*{dube2022discrete}: every demand system generated by an ARUM $v_i = -\ln p_i +\varepsilon_i$ allows a representative consumer.\footnote{See Theorem 1 in \citep*{dube2022discrete}.}
\medskip

The class of vector functions that can arise as choice probabilities $\vec{S}(\vec{w})$ for some ARUM is well-studied in the discrete choice theory. We need the following necessary condition applicable to smooth vector functions. \fed{Reference?} For any ARUM with $n$ alternatives and any subset of distinct alternatives $i,j_1,j_2,\ldots,j_q$ with $q\leq n-1$, the following inequality holds 
$$
\frac{\partial^q {S}_i(\vec{w})}{\partial w_{j_1}\partial w_{j_2}\ldots\partial w_{j_q}}\cdot(-1)^q\leq0$$
at any $\vec{w}$ where $\vec{S}$ is $q$ times differentiable \citep*{anderson1992discrete}.\fed{Philip, please, add a reference \& CHECK correctness: in the old version we had $q\leq n$.} Taking into account the connection between expenditure shares and choice probabilities~\eqref{eq_budget_shares_and_choice_probabilities_linear} and the identity $\frac{\partial}{\partial w_i}=-\frac{\partial}{\partial \ln p_i}$ for $p_i=e^{-w_i}$, we obtain the following corollary of Proposition~\ref{prop_linear_invariant_hull}.
\begin{corollary}\label{corollary_ARUM}
If a preference $\succsim$ belongs to the completion of the domain of all linear preferences, then its expenditure shares  satisfy the following inequalities
\begin{equation}\label{eq_budget_shares_necessary_linear}
\frac{ \partial s_i(\vec{p})}{\partial \ln p_{j_1}\partial \ln p_{j_2}\ldots \partial \ln p_{j_q}} \geq 0
\end{equation}
for any distinct goods $i,j_1,j_2,\ldots,j_q$ with $q\leq n-1$ at any price vector $\vec{p}\in \R_{++}^n$ where $\vec{s}$ is differentiable~$q$ times.
\end{corollary}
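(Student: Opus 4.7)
The plan is to transfer the classical sign conditions for ARUM choice probabilities to expenditure shares through the dictionary established by Proposition~\ref{prop_linear_invariant_hull}, and then to change variables via $p_k=e^{-w_k}$.

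\textbf{Step 1 (identification of $S_i$ with $s_i$).} Assume $\succsim$ lies in the completion of all linear preferences. By Proposition~\ref{prop_linear_invariant_hull}, there exists an ARUM with shock distribution $\boldsymbol{\varepsilon}$ whose expected utility satisfies $U(\vec{w}) = -\ln E(e^{-w_1},\ldots,e^{-w_n})$. Differentiating with respect to $w_j$ by the chain rule yields
\begin{equation*}
\frac{\partial U}{\partial w_j}\ =\ \frac{p_j}{E}\cdot \frac{\partial E}{\partial p_j}\ =\ \frac{\partial \ln E}{\partial \ln p_j}\ =\ s_j(\vec{p}),
\end{equation*}
with $p_k=e^{-w_k}$. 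The Williams--Daly--Zachary identity (equivalent here to~\eqref{eq_budget_shares_and_choice_probabilities_linear}) also gives $\partial U/\partial w_j = S_j(\vec{w})$. Hence $S_j(\vec{w})=s_j(e^{-w_1},\ldots,e^{-w_n})$ almost everywhere, and since $\vec{w}\mapsto(e^{-w_1},\ldots,e^{-w_n})$ is a smooth diffeomorphism, differentiability of $\vec{s}$ of any order at $\vec{p}$ transfers to differentiability of $\vec{S}$ at the corresponding $\vec{w}$, together with the identification of all higher mixed partials.

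\textbf{Step 2 (ARUM inequality and chain rule).} Invoke the sign condition for ARUM choice probabilities cited from \cite*{anderson1992discrete}: for distinct $i,j_1,\ldots,j_q$ with $q\leq n-1$, at every $\vec{w}$ where $\vec{S}$ is $q$ times differentiable,
\begin{equation*}
(-1)^{q}\cdot \frac{\partial^{q} S_i(\vec{w})}{\partial w_{j_1}\cdots \partial w_{j_q}}\ \geq\ 0.
\end{equation*}
Since $p_k=e^{-w_k}$ gives $\partial/\partial w_{j_\ell} = -\partial/\partial \ln p_{j_\ell}$, iterating $q$ times produces
\begin{equation*}
\frac{\partial^{q} S_i}{\partial w_{j_1}\cdots \partial w_{j_q}}\ =\ (-1)^{q}\,\frac{\partial^{q} s_i(\vec{p})}{\partial \ln p_{j_1}\cdots \partial \ln p_{j_q}}.
\end{equation*}
Substituting into the ARUM inequality, the two factors $(-1)^{q}$ cancel and yield the claim.

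\textbf{Where the work is.} The proof is essentially sign bookkeeping: Proposition~\ref{prop_linear_invariant_hull} does the heavy lifting by supplying the ARUM representation, and the classical inequality for ARUM choice probabilities delivers the structural content. The only point requiring care is that the factor $(-1)^{q}$ from the logarithmic change of variables exactly cancels the $(-1)^{q}$ in the ARUM sign condition, so that the alternating monotonicity of $S_i$ in $\vec{w}$ becomes the plain nonnegativity of the $q$-th cross derivative of $s_i$ with respect to $\ln p_{j_1},\ldots,\ln p_{j_q}$.
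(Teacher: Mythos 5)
Your proof is correct and follows essentially the same route as the paper's: Proposition~\ref{prop_linear_invariant_hull} supplies the ARUM representation, the Anderson--de~Palma--Thisse alternating-sign condition on choice probabilities is invoked, and the substitution $p_k=e^{-w_k}$ converts $w$-derivatives into $\ln p$-derivatives, with the two factors of $(-1)^q$ cancelling. One remark: your statement of the ARUM condition, $(-1)^q\,\partial^q S_i(\vec{w})/\partial w_{j_1}\cdots\partial w_{j_q}\geq 0$, is the correct form (for $q=1$ it gives $\partial S_i/\partial w_j\leq 0$, consistent with substitutability and with the corollary's conclusion), whereas the inequality quoted in the paper's text with ``$\leq 0$'' is a sign typo that would, if taken literally, yield the reverse of~\eqref{eq_budget_shares_necessary_linear}.
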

For $q=1$, the condition~\eqref{eq_budget_shares_necessary_linear} becomes the substitutability condition~\eqref{eq:subst}. In other words,  any preference $\succsim$ from the completion of linear preferences exhibits substitutability among goods. This conclusion is not surprising as linear preferences exhibit substitutability, and aggregation respects this property. 

An important special case of the ARUM \eqref{eq: ARUM} is the multinomial logit (MNL), where the stochastic shocks $\boldsymbol{\varepsilon}$ follow an i.i.d.  type-1 extreme-value distribution (aka Gumbel distribution):

\begin{equation}
\label{eq: Gumbel}
\mathbb{P}\left\{\varepsilon_i<x\right\}=e^{-e^{-x/\gamma}},\quad\gamma>0.
\end{equation}

In this case, the ARUM \eqref{eq: ARUM} takes the following form:

\begin{equation}
\label{eq: MNL}U(\vec{w})=\frac{1}{\gamma}\ln\left(\sum_{i=1}^n e^{w_i/\gamma}\right)
\quad\mbox{and}\quad 
S_i(\vec{w})= \frac{e^{w_i/\gamma}}{\sum_{j=1}^ne^{w_j/\gamma}}\quad \forall i=1,2,\ldots,n.
\end{equation}

Combining \eqref{eq: MNL} with \eqref{eq_ARUM_utility_and_price_index}, and setting $\sigma=1+\frac{1}{\gamma}$, we obtain a subclass of CES preferences~\eqref{eq_CES}: symmetric CES preferences with substitutes (i.e., satisfying $a_1 = a_2 = \ldots =a_n$ and $\sigma>1$). Hence, any symmetric CES with substitutes belongs to the completion of linear preferences.\footnote{The connection between CES and MNL has been pointed out by~\cite*{anderson1987ces}. The result on CES as an aggregation of linear preferences is ours.} For asymmetric CES with substitutes, the same conclusion can be obtained by introducing shifts into~\eqref{eq: Gumbel}. In contrast, from Corollary \ref{corollary_ARUM}, the CES preferences with complements ($\sigma<1$) clearly do not belong to the completion of linear preferences, because they violate the necessary conditions~\eqref{eq_budget_shares_necessary_linear}.

One might be tempted to think that the completion of the linear preference domain encompasses all homothetic preferences with substitutability. This fails to be true for $n\geq 3$ goods since the condition~\eqref{eq_budget_shares_necessary_linear}  imposes more restrictions than just substitutability. To show this, we provide an example of a preference over $n=3$ substitutable goods which cannot be obtained by aggregating linear preferences.\footnote{This example is a special case of Example~4 in \citep*{matsuyama2017beyond}.}
\begin{example}Consider the following expenditure function over $n=3$ goods:
\begin{equation}
\label{eq:E3}
\E(\vec{p}) = \left(p_1+p_2+p_3\right)^{1-\alpha}\left(p_1^{1/3}\cdot p_2^{1/3}\cdot p_3^{1/3}\right)^\alpha,
\end{equation}
with $1<\alpha<3/2$. We prove in Appendix \ref{app_exp} that \eqref{eq:E3} indeed defines an expenditure function. The corresponding expenditure share of good $i=1,2,3$ is given by
\begin{equation}
\label{eq:s3}
s_i(\vec{p})=\frac{\partial\ln \E(\vec{p})}{\partial\ln p_i}=\frac{\alpha}{3} + (1-\alpha)\frac{p_i}{p_1+p_2+p_3}.
\end{equation}

The restriction $\alpha<3/2$ guarantees that $s_i(\vec{p})>0$ for all price vectors, while the restriction $\alpha>1$ guarantees substitutability. Yet, \eqref{eq:E3} is not in the completion of the linear preference domain, as it violates \eqref{eq_budget_shares_necessary_linear}:
$$
\frac{\partial s_1(\vec{p})}{\partial\ln p_2\, \partial\ln p_3}=-2(\alpha-1)\frac{p_{1}\cdot p_{2}\cdot p_{3}}{(p_{1}+p_{2}+p_{3})^{3}}<0.
$$

\end{example}

\begin{corollary}\label{cor_linear_give_proper_subset_of_S}
For $n\geq 3$ goods, the completion of the domain of all linear preferences is a proper subset of the domain of preferences exhibiting substitutability.
\end{corollary}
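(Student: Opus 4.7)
The plan is to combine two ingredients that have already been assembled in the excerpt. First, I would observe that the inclusion of the completion of linear preferences into the domain of substitutable preferences is nothing but the $q=1$ case of Corollary~\ref{corollary_ARUM}: condition~\eqref{eq_budget_shares_necessary_linear} with $q=1$ reads $\partial s_i/\partial\ln p_j\geq 0$ for all $i\neq j$, which is precisely the substitutability condition~\eqref{eq:subst}. Because the metric~\eqref{eq_distance_body} on preferences corresponds to uniform convergence of logarithmic expenditure functions on compacts in $\R_{++}^n$, these monotonicity inequalities pass to limits, so every preference in the closure of the aggregated linear preferences inherits substitutability.

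Second, to upgrade the inclusion to a proper one, I would take the preference~\eqref{eq:E3} constructed earlier. For $1<\alpha<3/2$ the shares~\eqref{eq:s3} are strictly positive and strictly increasing in each $p_j$ for $j\neq i$ (all three goods are substitutes), but the second-order mixed log-derivative
$$\frac{\partial^2 s_1(\vec{p})}{\partial\ln p_2\,\partial\ln p_3}=-2(\alpha-1)\frac{p_1\cdot p_2\cdot p_3}{(p_1+p_2+p_3)^3}<0$$
violates the $q=2$ case of Corollary~\ref{corollary_ARUM}. Hence this preference lies in the substitutes domain but outside the completion of linear preferences, proving strict inclusion for $n=3$. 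For any $n\geq 3$ I would use the analogous construction
$$\E(\vec{p})=\Big(\sum_{i=1}^{n}p_i\Big)^{1-\alpha}\cdot\Big(\prod_{i=1}^{n}p_i\Big)^{\alpha/n},\qquad \alpha\in\big(1,\tfrac{n}{n-1}\big),$$
for which the same calculation shows that all shares are positive and substitutable while the mixed log-derivative with respect to any three distinct indices is again strictly negative.

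The only non-trivial step is verifying that~\eqref{eq:E3} (and its $n$-good extension) is a bona fide expenditure function, i.e.\ concave, homogeneous, monotone, and non-negative on $\R_+^n$. This is deferred to Appendix~\ref{app_exp} by the authors and I would do the same; concavity is the only property that is not immediate from inspection, and it can be argued via a weighted AM--GM-type bound on the two factors or by directly checking the Hessian of $\ln \E$. Everything else in the argument is routine differentiation, so this verification is the main, though minor, obstacle.
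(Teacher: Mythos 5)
Your proposal is correct and is essentially the paper's own proof: the inclusion follows from the $q=1$ case of Corollary~\ref{corollary_ARUM}, and strictness follows from the counterexample~\eqref{eq:E3}, whose validity as an expenditure function is checked (via the Hessian of $\ln\E$) exactly as in Appendix~\ref{app_exp}. Your one genuine addition is the explicit family $\E(\vec{p})=\left(\sum_{i=1}^{n}p_i\right)^{1-\alpha}\left(\prod_{i=1}^{n}p_i\right)^{\alpha/n}$ with $\alpha\in\left(1,\tfrac{n}{n-1}\right)$: the paper exhibits only the $n=3$ example while asserting the corollary for all $n\geq 3$, so your construction (shares $s_i=\tfrac{\alpha}{n}+(1-\alpha)\tfrac{p_i}{\sum_j p_j}$, positive and substitutable in that parameter range, with the same strictly negative mixed log-derivative, and concavity of $\ln\E$ verified by the same Hessian bound) makes the general case fully explicit rather than implicit.
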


The corollary tells nothing about the case of $n=2$ goods, which turns out to be an exception. 
\begin{proposition}\label{prop_linear_gwnerate_S_for_2_goods}
For $n=2$ goods, the completion of the domain of all linear preferences coincides with the set of all preferences exhibiting substitutability.
\end{proposition}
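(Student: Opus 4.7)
The plan is to prove the non-trivial inclusion: every homothetic two-good preference with substitutability lies in the completion of the linear-preference domain $\D$. The reverse inclusion is provided by Corollary~\ref{corollary_ARUM}, since every linear preference exhibits substitutability and the domain of substitutes is aggregation-invariant (as noted after Corollary~\ref{cor_invariant}).

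\textbf{Reduction to a non-increasing survival function.} By homogeneity, the expenditure share $s_1(p_1,p_2)$ depends only on $w:=\ln(p_1/p_2)$; write $s_1=\sigma(w)$ with $\sigma\colon\R\to[0,1]$. Substitutability is equivalent to $\sigma$ being non-increasing, and I may pass to the right-continuous version of $\sigma$ without altering the underlying preference (modifying $\sigma$ on a measure-zero set does not change $\ln\E$). For a linear preference with $\vec v\in\R_+^2\setminus\{0\}$ parameterized by $u:=\ln(v_1/v_2)\in[-\infty,+\infty]$, formula~\eqref{eq_price_index_and_shares_for_linear} gives $\sigma_u(w)=\one[w<u]$ a.e.

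\textbf{Mixing measure and integration back.} Define a Borel probability measure $\mu$ on $[-\infty,+\infty]$ by $\mu\bigl((w,+\infty]\bigr)=\sigma(w)$ for $w\in\R$, with atoms $\mu(\{-\infty\})=1-\sigma(-\infty)$ and $\mu(\{+\infty\})=\sigma(+\infty)$ at the endpoints; existence and uniqueness follow because $\sigma$ is a non-increasing, right-continuous, $[0,1]$-valued ``survival function.'' Then
\[
\int_{[-\infty,+\infty]}\sigma_u(w)\,\dd\mu(u)=\mu\bigl((w,+\infty]\bigr)=\sigma(w).
\]
By~\eqref{eq_budget_shares_as_elasticities}, $\ln\E(\vec p)$ and $\int\ln\E_u(\vec p)\,\dd\mu(u)$ therefore share the same $\ln p_1$-partial a.e., and since $s_1+s_2=1$ holds for every homothetic two-good preference, their $\ln p_2$-partials coincide as well. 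The two log-expenditure functions consequently differ by an additive constant and represent the same preference, so Theorem~\ref{th_continuous_aggregation} yields $\succsim\in\D^\inv$.

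\textbf{Main obstacle.} The principal subtlety lies at the boundary $u=\pm\infty$, which corresponds to the degenerate linear preferences $u(\vec x)=x_1$ and $u(\vec x)=x_2$; these are genuine members of the linear domain (we allow $\vec v\in\R_+^2\setminus\{0\}$), so atoms of $\mu$ at the endpoints are legitimate and Theorem~\ref{th_continuous_aggregation} accommodates them directly. A minor technical issue is selecting a normalization of $\vec v$ (e.g., $\max(v_1,v_2)=1$) under which $\ln\E_u(\vec p)$ is jointly continuous and bounded in $u\in[-\infty,+\infty]$, so that the integral $\int\ln\E_u(\vec p)\,\dd\mu(u)$ is unambiguously defined and the differentiation-under-the-integral step in the previous paragraph is justified.
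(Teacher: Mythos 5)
Your proof is correct and takes essentially the same approach as the paper's: the measure you build from the survival function $\mu\bigl((w,+\infty]\bigr)=\sigma(w)$, with atoms at $\pm\infty$, is exactly the paper's distribution whose CDF of $\mrs=v_1/v_2$ is $1-s_{1}(\,\cdot\,,1)$ together with atoms at the single-minded preferences $(0,1)$ and $(1,0)$. Your additional verifications (right-continuity, differentiation under the integral, and recovering the preference from a.e.\ equality of gradients of concave functions) merely spell out steps the paper leaves implicit.
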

\begin{proof} 

This result follows from an explicit construction. 
Given $\succsim$ such that the expenditure share of the first good $s_{1}(p_1,p_2)= \frac{\partial \ln \E(p_1,p_2)}{\partial \ln p_1}$ is non-decreasing in $p_2$, we need to
find a distribution $\mu$ of value vectors $\vec{v}$ so that the continuous aggregation~\eqref{eq_price_index_linear_aggregate_appendix} of linear preferences leads to~$\succsim$.

To guess an explicit formula for such $\mu$, take the partial derivative $\frac{\partial }{\partial \ln p_1}$ on both sides of~\eqref{eq_price_index_linear_aggregate_appendix}:
\begin{equation}\label{eq_mu_and_budget_shares}
s_{1}(p_1,p_2)=\mu\left(\left\{\frac{v_1}{v_2}\geq \frac{p_1}{p_2} \right\}\right).
\end{equation}
The derivative exists and the identity holds for Lebesgue almost all $(p_1,p_2)$. 
The ratio $\mrs=v_1/v_2$ is the marginal rate of substitution for the corresponding linear preference. We conclude that $1-s_{1}(\,\cdot\, ,1)$ must be the cumulative distribution function of $\mrs$ and the monotonicity of $s_{1}$ makes this possible. Choosing  any such distribution $\mu$ and adding atoms of the weight $1-\lim_{p_1\to +0} s_{1}(p_1,1)$ at $\vec{v}=(0,1)$ and of the weight $\lim_{p_1\to \infty} s_{1}(p_1,1)$ at $\vec{v}=(1,0)$ completes the construction.

\end{proof}

Observe that 
the valuations $\alpha\cdot (v_1, v_2)$ define the same linear preference for any scaling factor $\alpha>0$, i.e., the preference is determined by the ratio $v_1/v_2$. 
Hence, the distribution of linear preferences reproducing a given preference over two substitutes is unique and pinned down by~\eqref{eq_mu_and_budget_shares}. We obtain the following corollary of Proposition \ref{prop_linear_gwnerate_S_for_2_goods}.

\begin{corollary}\label{cor_MRS_reconstruction}
Any preference over two goods exhibiting substitutability can be represented as a continuous aggregation of linear preferences~\eqref{eq_price_index_linear_aggregate_appendix}. The distribution of linear preferences over the population corresponding to $\succsim$ is pinned down uniquely and admits an explicit formula: the cumulative distribution function of the marginal rate of substitution $\mrs={v_1}/{v_2}$ equals $1-s_{1}(\,\cdot\, ,1)$.
\end{corollary}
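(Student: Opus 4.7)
My plan is to split the corollary into its two assertions: existence of an aggregating distribution, and its uniqueness (via the explicit formula for the CDF of the MRS). Existence is already contained in the proof of Proposition~\ref{prop_linear_gwnerate_S_for_2_goods}: the distribution $\mu$ constructed there, with tail function $1-s_{1}(\,\cdot\,,1)$ and possible atoms at $\vec{v}=(1,0)$ and $\vec{v}=(0,1)$, is a valid aggregating measure. So the bulk of the proof is uniqueness of this representation, interpreted in the right space.

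First I would reduce the parameter space. A linear preference over two goods is determined up to scaling by the value vector, hence by its marginal rate of substitution $t=v_1/v_2\in[0,+\infty]$ (with $t=0$ and $t=+\infty$ corresponding to single-minded consumers caring only about good~2 or good~1). Thus any Borel probability measure $\mu$ on $\R_+^2\setminus\{0\}$ that represents $\succsim$ via continuous aggregation~\eqref{eq_price_index_linear_aggregate_appendix} induces, by push-forward through $\vec{v}\mapsto v_1/v_2$, a Borel probability measure $\tilde\mu$ on $[0,+\infty]$, and aggregation depends on $\mu$ only through $\tilde\mu$. Uniqueness of $\mu$ in Corollary~\ref{cor_MRS_reconstruction} should therefore be read as uniqueness of $\tilde\mu$.

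Next I would rederive the key identity~\eqref{eq_mu_and_budget_shares} rigorously in the push-forward formulation. Writing the logarithmic expenditure function of a linear preference with $\mrs=t$ as $\ln\min\{p_1,\,p_2 t\}-\ln t$ (after an irrelevant normalization) and differentiating~\eqref{eq_price_index_linear_aggregate_appendix} with respect to $\ln p_1$, I obtain
\begin{equation}\label{eq_uniqueness_key}
s_{1}(p_1,p_2)=\tilde\mu\bigl(\{t\in[0,+\infty]\,\colon\, t\geq p_1/p_2\}\bigr)
\end{equation}
for Lebesgue-a.e.\ $(p_1,p_2)$, since the integrand's derivative equals $\one\{t\geq p_1/p_2\}$ off a null set. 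Interchange of $\partial/\partial\ln p_1$ with the integral is legitimate because both sides are concave in $\ln p_1$, so pointwise a.e.\ derivatives of the integral equal integrals of the derivatives (via monotone limits of difference quotients and the dominated convergence theorem applied to uniformly bounded expenditure shares). The identity~\eqref{eq_uniqueness_key} then fixes the complementary CDF of $\tilde\mu$ at every continuity point, hence everywhere by right-continuity; boundary atoms $\tilde\mu(\{0\})$ and $\tilde\mu(\{+\infty\})$ are recovered from the limits $\lim_{p_1/p_2\to 0}s_{1}$ and $\lim_{p_1/p_2\to+\infty}s_{1}$, proving $\tilde\mu$ is unique and giving the explicit CDF formula stated.

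The main obstacle I anticipate is the careful bookkeeping at the boundary of the MRS space: the a.e.\ identity~\eqref{eq_uniqueness_key} only pins $\tilde\mu$ down on open subsets, so one must argue separately that the masses at $0$ and $+\infty$, which correspond to single-minded linear consumers, are determined by the limiting behavior of $s_{1}$. Once this is handled, the rest is essentially Fubini plus the monotonicity of the tail function.
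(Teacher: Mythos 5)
Your proposal is correct and follows essentially the same route as the paper: existence via the construction in Proposition~\ref{prop_linear_gwnerate_S_for_2_goods}, and uniqueness by differentiating the aggregation identity~\eqref{eq_price_index_linear_aggregate_appendix} to obtain the tail identity~\eqref{eq_mu_and_budget_shares} and observing that a linear preference is determined by its MRS. The only difference is that you make explicit the measure-theoretic bookkeeping (interchange of differentiation and integration, the a.e.\ identity determining the measure, and the boundary atoms at $\mrs=0$ and $\mrs=+\infty$) that the paper treats implicitly.
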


\begin{example}[translog preferences]
\label{ex_translog}
To illustrate Corollary~\ref{cor_MRS_reconstruction}, let us show how a family of consumers with linear preferences over two goods can generate translog preferences, a popular class of homothetic preferences obtained as a perturbation of Cobb-Douglas preferences in the space of expenditure functions \citep[p.139]{Diewert1974translog}.  A preference $\succsim$ is translog  
if its logarithmic expenditure function has the following form\footnote{\label{footnote_translog}It is important to distinguish a translog preference and a translog demand system defined by $D(\vec{p},B)=B\cdot \nabla f(\vec{p})$, where $f(\vec{p})=\alpha \ln p_1 + (1-\alpha)\ln p_2 - \frac{\beta}{2}\left(\ln\frac{p_1}{p_2}\right)^2$. This parametric family of demands enjoys an invariance property: the aggregate demand of a heterogeneous population has the same functional form. However, the translog demand system does not correspond to a preference of any rational individual: indeed, a simple computation shows that one of the demand components becomes negative  
if $\ln(p_1/p_2)$ is outside of the interval $\left[-\frac{1-\alpha}{\beta},\,\frac{\alpha}{\beta}\right]$.  To get a logarithmic expenditure function corresponding to a preference of a rational consumer, we need to use another formula for low and high price ratios as in~\eqref{eq_translog}. This adjustment comes at the cost of invariance: a weighted average of two distinct logarithmic expenditure functions of the form~\eqref{eq_translog} no longer has this form. In other words, translog preferences are not aggregation-invariant. 
Since translog preferences correspond to a translog demand system in a subset of the price space, one can talk about the ``local invariance'' of translog preferences. We leave a study of this weaker invariance notion relevant for the almost ideal demand system \citep*{deaton1980almost} and its extensions for future research.}
\begin{equation}\label{eq_translog}
\ln \E(p_1,p_2) = 
\begin{cases}
\ln p_1,& \ln \left(\frac{p_{1}}{p_{2}}\right)< -\frac{1-\alpha}{\beta}\\
\alpha \ln p_1 + (1-\alpha)\ln p_2 - \frac{\beta}{2}\left(\ln\frac{p_1}{p_2}\right)^2,&  -\frac{1-\alpha}{\beta}\leq\ln \left(\frac{p_{1}}{p_{2}}\right)\leq\frac{\alpha}{\beta}\\
\ln p_2,&\ln \left(\frac{p_{1}}{p_{2}}\right) > \frac{\alpha}{\beta}
\end{cases},
\end{equation}
where $\alpha\in(0,1)$ and $\beta>0$.


Using~\eqref{eq_price_index_and_shares_for_linear}, we obtain that a distribution of value vectors $\vec{v}=(v_1,v_2)$ aggregates up to the translog preference if and only if 
$$
\ln\left(\frac{v_1}{v_2}\right)\sim \mathbb{U}\left(\left[-\frac{1-\alpha}{\beta},\ \frac{\alpha}{\beta}\right]\right),
$$
where $\mathbb{U}\left(\left[c,d\right]\right)$ denotes the uniform distribution supported on $[c,d]$.\footnote{Curiously enough, this is equivalent to the marginal rate of substitution following the so-called Benford law of digit bias \citep*{benford1938law}.} 

Consider a particular case of a translog preference $\succsim$ with  $\alpha = \beta = 1/2$. We obtain that $\succsim$ can be represented as aggregation over the continuous population of consumers distributed uniformly on $[-1,1]$ so that consumer $\alpha\in[-1,1]$ has utility  $u(x_1,x_2)=e^\frac{\alpha}{2}\cdot  x_1 + e^{-\frac{\alpha}{2}} \cdot x_2$. The corresponding identity~\eqref{eq_price_index_linear_aggregate_appendix} takes the following form:
$$\ln \E(p_1,p_2)=\int_{-1}^1 \ln\left(\min\left\{p_1\cdot e^{-\frac{\alpha}{2}}, \ p_2\cdot e^{\frac{\alpha}{2}} \right\}\right)\,\dd\alpha.$$
\end{example}


\begin{example}[CES with substitutes as an aggregation of linear preferences]\label{ex_logit_as_linear}
 Consider a CES preference $\succsim$ over two substitutes: 
$$
u(\vec{x})=\left(\left(a_1\cdot x_1\right)^{\frac{\sigma-1}{\sigma}}+\left(a_2\cdot x_2\right)^{\frac{\sigma-1}{\sigma}}\right)^{\frac{\sigma}{\sigma-1}}\qquad \mbox{with}\qquad \sigma>1.
$$
Using~\eqref{eq_price_index_and_shares_for_linear} and ~\eqref{eq_price_index_and_shares_for_CES}, we conclude that $\ln(v_1/v_2)$ in a population of consumers with linear preferences generating $\succsim$ follows a logit-type distribution: 
\begin{equation}\label{eq_CES_as_aggregation_of_linear}
\mu\left(\left\{\ln\left(\frac{v_1}{v_2}\right)<t\right\}\right)=\frac{A\cdot e^{(\sigma-1)t}}{A\cdot e^{(\sigma-1)t}+1}\qquad\text{with}\qquad A=\left(\frac{a_1}{a_2}\right)^{1-\sigma}.
\end{equation}
\end{example}

\subsection{Complete monotonicity and the completion of Leontief preferences}

We saw that, for $n=2$ goods, the completion of linear preferences is the whole domain of  preferences exhibiting substitutability. By contrast, the completion of all Leontief preferences turns out to be substantially narrower than the domain of all preferences with complementarity, even for  $n=2$.

By Theorem~\ref{th_continuous_aggregation} and formula~\eqref{eq_price_index_and_shares_for_Leontief} for  expenditure functions of Leontief preferences, the completion of Leontief preferences over $n\geq 2$ goods is the set of all preferences $\succsim$ with expenditure functions $\E$ of the following form:
\begin{equation}
\label{eq_price_index_Leontief_aggregate}
\ln \E(\vec{p})=\int_{\R_+^n} \ln\langle\vec{v},\vec{p}\rangle\,\dd\mu(\vec{v})
\end{equation}
for some probability measure $\mu$ on $\R_+^n$. 

Note that $\ln\langle\vec{v},\vec{p}\rangle$ is an infinitely differentiable function of $\vec{p}\in \R_{++}^n$, \fed{In the appendix, we check that one can} hence, so is $E(\mathbf{p})$ in~\eqref{eq_price_index_Leontief_aggregate}.
Thus, any preference whose expenditure function is not infinitely differentiable is beyond the completion of Leontief preferences. This implies the following corollary.
\begin{corollary}\label{cor_Leontief_does_not_give_all_complements}
For any number $n\geq 2$ of goods, the completion of the domain of Leontief preferences does not encompass the whole domain of preferences exhibiting complementarity.
\end{corollary}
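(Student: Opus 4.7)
My plan is to combine Theorem~\ref{th_continuous_aggregation} with a standard differentiation-under-the-integral argument to show that every logarithmic expenditure function in the Leontief completion is $C^\infty$ on $\R_{++}^n$, and then to exhibit an explicit preference with complementarity whose $\ln E$ fails to be $C^2$ somewhere.

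For the smoothness step, any $\succsim$ in the Leontief completion satisfies $\ln E(\vec{p}) = \int \ln\langle \vec{v}, \vec{p}\rangle \, \dd\mu(\vec{v})$ for some Borel probability measure $\mu$ by Theorem~\ref{th_continuous_aggregation}. Since Leontief preferences are parametrized up to scaling by $\vec v\in \Delta_{n-1}$, I may take the support of $\mu$ inside $\Delta_{n-1}$. On any compact $K\subset \R_{++}^n$ the bound $\langle \vec v, \vec p\rangle \geq \min_{\vec p\in K}\min_i p_i > 0$ holds uniformly in $\vec v\in \Delta_{n-1}$, and every partial derivative of $\ln\langle \vec v, \vec p\rangle$ with respect to $\vec p$ is a rational function whose denominator is a power of $\langle \vec v, \vec p\rangle$. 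Hence the integrand and all its $\vec p$-derivatives are jointly continuous and uniformly bounded on $\Delta_{n-1}\times K$, so repeated differentiation under the integral is legitimate and $\ln E \in C^\infty(\R_{++}^n)$.

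For the counterexample, for $n=2$ I would set
\begin{equation*}
E(p_1,p_2) \;=\; \begin{cases} \sqrt{p_1 p_2}, & p_1 \leq p_2,\\[2pt] (p_1+p_2)/2, & p_1 > p_2. \end{cases}
\end{equation*}
Each piece is continuous, concave, homogeneous of degree one and non-decreasing; they agree at $p_1=p_2$ and share the gradient $(1/2, 1/2)$ there, so $E$ is globally $C^1$. Global concavity follows from the supporting-hyperplane test at diagonal points: the tangent $T(\vec q)=(q_1+q_2)/2$ equals $E$ on the right piece by construction and dominates $E$ on the left piece by AM--GM, $\sqrt{q_1 q_2}\leq (q_1+q_2)/2$. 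Hence $E$ is a valid expenditure function. Writing $r=p_1/p_2$, the share $s_1$ equals $1/2$ for $r\leq 1$ and $r/(r+1)$ for $r>1$: continuous and non-decreasing in $r$, hence non-increasing in $p_2$, so $\succsim$ exhibits complementarity. However, $\partial^2_{p_1}\ln E$ equals $-1/(2p_1^2)$ on the left piece and $-1/(p_1+p_2)^2$ on the right, so $\ln E$ is not $C^2$ along $p_1=p_2$ and $\succsim$ lies outside the Leontief completion. For $n\geq 3$, aggregating $\succsim$ (in the sense of Theorem~\ref{th_representative_index}) with Cobb--Douglas preferences on goods $3,\ldots,n$ by setting $\ln \tilde E(\vec p)=\tfrac{1}{2}\ln E(p_1,p_2)+\tfrac{1}{2(n-2)}\sum_{i=3}^n \ln p_i$ preserves complementarity (the added terms are separable and contribute zero off-diagonal entries to the Hessian of $\ln \tilde E$) and preserves the non-$C^2$ singularity at $p_1=p_2$.

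The main difficulties I anticipate are two. First, the global concavity of the piecewise $E$ across the seam $p_1=p_2$, resolved by the supporting-hyperplane argument above together with concavity within each piece. Second, checking that complementarity is preserved by the $n\geq 3$ extension, which reduces to the vanishing of off-diagonal entries of the Hessian of $\ln \tilde E$ introduced by the smooth separable summands.
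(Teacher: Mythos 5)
Your proof is correct and follows essentially the same route as the paper: show that every expenditure function in the Leontief completion is infinitely differentiable (via differentiation under the integral in \eqref{eq_price_index_Leontief_aggregate}), then exhibit a complementarity preference whose logarithmic expenditure function has a kink. Your counterexample is in fact the very preference the paper uses --- your $E$ is one half of the expenditure function of $u(x_1,x_2)=\min\{\sqrt{x_1 x_2},\,x_1\}$ --- the only differences being matters of completeness rather than of method: you verify concavity of the piecewise $E$ directly (the paper gets it for free by starting from the utility), you justify the interchange of differentiation and integration that the paper defers, and you make the $n\geq 3$ extension explicit where the paper leaves it implicit.
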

\fed{add a proof}

To illustrate, we provide a specific example of a preference  over two complements that is beyond the 
completion of Leontief preferences.
\begin{example}[A preference over two complements outside of the completion of Leontief]

It suffices to find a preference such that the expenditure share of the first good $s_{1}(p_1,p_2)=\frac{\partial \ln \E(p_1,p_2)}{\partial \ln p_1}$ has a discontinuous derivative. 
The existence of such preference $\succsim$ follows from the characterization of expenditure shares~\eqref{eq_budget_share_characterization_for_two}. To describe $\succsim$ explicitly,
consider the following utility function:
\begin{equation*}
    u(x_1,x_2)=\min\left\{\sqrt{x_1\cdot x_2},\ \  x_1\right\}.
\end{equation*}
A consumer whose preferences are like that behaves as if she switches between Cobb-Douglas and Leontief preferences at $p_1=p_2$. A simple computation gives the expenditure share:
$$s_{1}(p_1,p_2)=\begin{cases}
\frac{1}{2}, & \frac{p_1}{p_2}<1\\
\frac{p_1}{p_1+p_2},& 1\leq \frac{p_1}{p_2}
\end{cases}.$$
As we see, $s_{1}$ is decreasing in $p_2$ and, hence, $\succsim$ exhibits complementarity. However, this preference cannot be obtained by aggregating Leontief preferences~\eqref{eq_price_index_Leontief_aggregate} since the expenditure share has a discontinuous derivative.\footnote{For complements, expenditure shares can have discontinuous derivatives but are necessarily continuous themselves. This is a simple corollary of~\eqref{eq_budget_share_characterization_for_two}. \fed{add details} By contrast, expenditure shares for substitutes can be discontinuous, e.g., for linear preferences, expenditure shares are step functions.}
\end{example}

The condition that a preference $\succsim$ belongs to the completion of Leontief preferences is substantially more restrictive than the requirement of smoothness of the expenditure function. An infinitely smooth function $f=f(\lambda)$, $\lambda\in\R_{++}$, is called completely monotone if$^{\,}$\footnote{In economics, completely monotone functions arise as the dependence of decision maker's payoff on the discount factor $\lambda$. \fed{Add reference}  Indeed, by Bernstein's theorem, a function $f$ is completely monotone if and only if $f(\lambda)=\int_{\R_+} e^{-\lambda t} \dd\nu (t)$ for some positive measure $\nu$ \citep[Theorem 1.4]{schilling2012bernstein}, i.e., $f$ is the expected utility of a risk-neutral decision maker with geometric discounting for a stream of payoff $\nu$. 

Utility functions with mixed risk aversion ($u(x)=\int_{\R_+} (1-e^{-\lambda x})\dd\nu(x)$) provide another economic context for  completely monotone functions  \citep*{caballe1996mixed}.
}
$$(-1)^k\cdot \frac{\dd^k }{\dd \lambda^k} f\geq 0\quad \mbox{for all $k=0,1,2,\ldots$}.$$
Complete monotonicity provides a necessary condition for $\succsim$ to be in the completion of Leontief preferences.
\begin{proposition}\label{prop_complete_monotonicity}
If a preference $\succsim$ belongs to the completion of Leontief preferences, then the demand  $D_{i}(\vec{p}, b)$ for good~$i$ is a completely monotone function of $p_i$ for each $i=1,\ldots, n$ and $b>0$.
\end{proposition}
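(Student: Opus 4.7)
My plan is to exploit the integral representation of logarithmic expenditure functions in the Leontief completion provided by Theorem~\ref{th_continuous_aggregation} and formula~\eqref{eq_price_index_Leontief_aggregate}. Since expenditure functions are defined only up to a positive multiplicative constant and $\ln\langle\alpha\vec{v},\vec{p}\rangle-\ln\langle\vec{v},\vec{p}\rangle$ is a constant for any $\alpha>0$, I may normalize representatives of Leontief preferences to lie on the unit simplex, so that any $\succsim$ in the completion admits a Borel probability measure $\mu$ on $\Delta_{n-1}$ with
\begin{equation*}
\ln \E(\vec{p})=\int_{\Delta_{n-1}}\ln\langle\vec{v},\vec{p}\rangle\,\dd\mu(\vec{v}).
\end{equation*}

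The next step is to combine this representation with the Shephard identity~\eqref{eq_demand_price_index} to obtain
\begin{equation*}
D_i(\vec{p},b)=b\cdot\frac{\partial\ln \E(\vec{p})}{\partial p_i}=b\cdot\int_{\Delta_{n-1}}\frac{v_i}{\langle\vec{v},\vec{p}\rangle}\,\dd\mu(\vec{v}),
\end{equation*}
and then to differentiate $k$ further times in $p_i$ under the integral sign. Using the pointwise identity $\partial_{p_i}^{k}\bigl[v_i\bigl(v_i p_i+c\bigr)^{-1}\bigr]=(-1)^{k}k!\,v_i^{k+1}\bigl(v_i p_i+c\bigr)^{-(k+1)}$ with $c=\sum_{j\neq i} v_j p_j$, this yields
\begin{equation*}
\frac{\partial^{k} D_i(\vec{p},b)}{\partial p_i^{k}}=(-1)^{k}\cdot b\cdot k!\cdot\int_{\Delta_{n-1}}\frac{v_i^{k+1}}{\langle\vec{v},\vec{p}\rangle^{k+1}}\,\dd\mu(\vec{v}).
\end{equation*}
Since the integrand is non-negative, $(-1)^{k}\partial_{p_i}^{k}D_i(\vec{p},b)\geq 0$ for every $k\geq 0$, which is exactly complete monotonicity of $D_i$ as a function of $p_i$ with the remaining arguments held fixed.

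The only non-routine step is justifying differentiation under the integral sign, which I expect to be the main technical obstacle. Because $\vec{v}\in\Delta_{n-1}$ gives $v_i\leq 1$ and $\vec{p}\in\R_{++}^n$ gives $\langle\vec{v},\vec{p}\rangle\geq\min_j p_j>0$, the family of integrands $v_i^{k+1}/\langle\vec{v},\vec{p}\rangle^{k+1}$ is uniformly bounded by $\bigl(\min_j p_j\bigr)^{-(k+1)}$ on any compact $K\subset\R_{++}^n$. A parallel bound on $|\ln\langle\vec{v},\vec{p}\rangle|$ on $K$ ensures that $\ln\E(\vec{p})$ in~\eqref{eq_price_index_Leontief_aggregate} is itself well-defined and finite. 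Dominated convergence then legitimizes each interchange of derivative and integral, and the sign pattern stated above follows from the elementary calculus identity applied pointwise in $\vec{v}$.
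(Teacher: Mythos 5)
Your proof is correct and follows essentially the same route as the paper's: represent $\ln\E$ via the integral formula~\eqref{eq_price_index_Leontief_aggregate}, apply Shephard's identity~\eqref{eq_demand_price_index}, and differentiate under the integral sign to read off the alternating sign pattern from the explicit derivatives of $\ln\langle\vec{v},\vec{p}\rangle$. Your treatment is in fact slightly more careful than the paper's, which merely asserts the interchange of differentiation and integration, whereas your simplex normalization and dominated-convergence bound justify it explicitly (and your derivative formula with $v_i^{k+1}$ corrects a harmless typo, $v_i^k$, in the paper's displayed computation).
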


\begin{proof}The result follows from the integral representation~\eqref{eq_price_index_Leontief_aggregate} of the expenditure function $\E$ and the possibility to 
interchange   differentiation with respect to $p_i$ and integration. The derivatives of the integrand in~\eqref{eq_price_index_Leontief_aggregate} can be computed explicitly
$$\frac{\partial^{k+1}}{\partial^k p_i}\ln\langle\vec{v},\vec{p}\rangle=(-1)^k\frac{v_i^k\cdot  k!}{\left(\langle\vec{v},\vec{p}\rangle\right)^{k+1}}.$$
Hence,
$$(-1)^k\frac{\partial^{k+1}}{\partial^k p_i}\ln \E(\vec{p})= k!\int_{\R_+^n}\frac{v_i^k}{\left(\langle\vec{v},\vec{p}\rangle\right)^{k+1}}\,\dd\mu(\vec{v}) \geq 0.$$
Since $D_i(\vec{p},b)=b\cdot\frac{\partial}{\partial p_i}\ln \E(\vec{p})$ by formula~\eqref{eq_demand_price_index}, we conclude that $D_i(\vec{p},b)$ is a completely monotone functions of $p_i$.
\end{proof}

For $n=2$ goods, we are able to provide a simple criterion for a preference $\succsim$ to be in the completion of Leontief preferences. This criterion suggests that the necessary condition of complete monotonicity established in Proposition~\ref{prop_complete_monotonicity} is almost sufficient.

A function $f=f(\lambda)$, $\lambda\in  \R_{++}$, is called a Stieltjes function if it can be represented as
\begin{equation}\label{eq_Steltjes}
f(\lambda)=\int_{\R_+}\frac{1}{\lambda+z} \dd\nu(z)
\end{equation}
for some positive measure $\nu$ on $\R_+$.\footnote{These functions are omnipresent in various branches of mathematics such as probability theory, spectral theory, continued fractions, and potential theory \citep*{schilling2012bernstein}.} The Stieltjes functions are completely monotone functions that themselves can be obtained as the Laplace transform of a completely monotone \fed{add reference} density.\footnote{The integral operator on the right-hand side of~\eqref{eq_Steltjes} is known as the Stieltjes transform. It equals the Laplace transform applied to $\nu$ twice: $f(\lambda)=\int_{\R_+} e^{-\lambda t}\left(\int_{\R_+} e^{-tz}\dd \nu(z)\right)\dd t$. By Bernstein's theorem, the set of completely monotone functions is the set of all functions equal to the Laplace transform of positive measures \citep[Theorem 1.4]{schilling2012bernstein}.  Hence, Stieltjes functions are those completely monotone functions that are Laplace transforms of completely monotone ones.}
\begin{proposition}\label{prop_Stieltjes}
For $n=2$ goods, a preference $\succsim$ belongs to the completion of Leontief preferences if and only if the demand for the first good is given by ${D_{1}(\vec{p},b)}$ with $\vec{p}=(p_1,1)$ and $b=1$ is a Stieltjes function of the price $p_1$.
\end{proposition}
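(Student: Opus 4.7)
Both directions rest on matching the integral representation of Theorem~\ref{th_continuous_aggregation} to the Stieltjes representation~\eqref{eq_Steltjes}, combined with the observation that for $n=2$, the homogeneity of $E$ reduces the preference to a single-variable object: the map $p_1\mapsto\ln E(p_1,1)$, or equivalently (up to an additive constant) its derivative $p_1\mapsto D_{1}((p_1,1),1)$.

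For the ``only if'' direction, suppose $\succsim$ lies in the completion of Leontief preferences. By Theorem~\ref{th_continuous_aggregation} applied to the Leontief domain and formula~\eqref{eq_price_index_and_shares_for_Leontief}, there is a Borel probability measure $\mu$ on $\R_+^2\setminus\{0\}$ with
\begin{equation*}
\ln E(\vec{p})=\int \ln\langle\vec{v},\vec{p}\rangle\,\dd\mu(\vec{v}).
\end{equation*}
Differentiating under the integral sign with respect to $p_1$ (justified on any compact subset of $\R_{++}^2$ by the uniform bound $v_1/\langle\vec{v},\vec{p}\rangle\leq 1/p_1$) and evaluating at $\vec{p}=(p_1,1)$ yields, using~\eqref{eq_demand_price_index},
\begin{equation*}
D_{1}((p_1,1),1)=\int_{\{v_1>0\}}\frac{v_1}{v_1 p_1+v_2}\,\dd\mu(\vec{v})=\int_{[0,\infty)}\frac{1}{p_1+z}\,\dd\nu(z),
\end{equation*}
where $\nu$ is the pushforward of $\mu|_{\{v_1>0\}}$ under $(v_1,v_2)\mapsto v_2/v_1$. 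This is the Stieltjes form~\eqref{eq_Steltjes}, so $D_{1}((p_1,1),1)$ is a Stieltjes function of $p_1$.

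For the ``if'' direction, assume $D_{1}((p_1,1),1)=\int \dd\nu(z)/(p_1+z)$ for some positive measure $\nu$ on $[0,\infty)$. Euler's homogeneity gives $p_1 D_{1}((p_1,1),1)\leq 1$, and letting $p_1\to\infty$ yields $M\coloneqq\nu([0,\infty))\leq 1$. Define a Borel probability measure on the closure of the Leontief domain by $\mu=\mu_1+(1-M)\delta_{\vec{v}^\ast}$, where $\mu_1$ is the pushforward of $\nu$ under $z\mapsto(1,z)$ and $\vec{v}^\ast=(0,1)$ corresponds to the limiting Leontief preference caring only about good~2. Theorem~\ref{th_continuous_aggregation} produces a preference $\tilde\succsim$ in the completion whose logarithmic expenditure function is
\begin{equation*}
\ln \tilde E(\vec{p})=\int_{[0,\infty)}\ln(p_1+z\,p_2)\,\dd\nu(z)+(1-M)\ln p_2.
\end{equation*}
Differentiating at $p_2=1$ gives $\partial_{p_1}\ln\tilde E(p_1,1)=\int \dd\nu(z)/(p_1+z)=\partial_{p_1}\ln E_\succsim(p_1,1)$, so $\ln\tilde E(p_1,1)$ and $\ln E_\succsim(p_1,1)$ differ by an additive constant. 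Because both $\tilde E$ and $E_\succsim$ are homogeneous of degree one, they coincide up to a multiplicative constant on all of $\R_{++}^2$, and hence represent the same preference; so $\succsim\,=\,\tilde\succsim$ lies in the completion.

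The main technical care points are the interchange of differentiation and integration (uniform local domination) and the verification that the total mass constraint $\nu([0,\infty))\leq 1$ holds for the Stieltjes representation of a valid Marshallian demand---so that the complementary mass $1-M$ can be placed on the boundary Leontief preference $\vec{v}^\ast=(0,1)$. Everything else is a direct dictionary between $(v_1,v_2)\mapsto v_2/v_1$ and the Stieltjes parameter $z$.
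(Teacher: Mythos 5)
Your proof is correct and follows essentially the same route as the paper's: differentiate the integral representation~\eqref{eq_price_index_Leontief_aggregate} and push $\mu$ forward under $(v_1,v_2)\mapsto v_2/v_1$ to get the Stieltjes form, and conversely lift the Stieltjes measure $\nu$ back to a distribution over Leontief preferences. If anything, you are more careful than the paper, which leaves implicit the two points you spell out---placing the mass deficit $1-\nu([0,\infty))\leq 1$ on the degenerate preference $(0,1)$, and checking that equality of the demands forces the constructed preference to coincide with $\succsim$.
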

\begin{proof}
    By~\eqref{eq_price_index_Leontief_aggregate}, for any $\succsim$ in the completion,  
 we have
$$
\frac{\partial }{\partial p_1}\ln \E(p_1,1)=\int_{\R_+^2}\frac{v_1}{v_1p_1+v_2}\,\dd\mu(v_1,v_2)
=\int_{\R_+}\frac{1}{p_1+z}\dd\nu(z),$$
where $\nu$ is the distribution of ${v_2}/{v_1}$. Since ${D_{1}((p_1,1),1)}=\frac{\partial }{\partial p_1}\ln \E(p_1,1)$, we get 
\begin{equation}\label{eq_nu_and_mu_Stieltjes}
{D_{1}((\lambda,1),1)}=\int_{\R_+}\frac{1}{\lambda+z}\dd\nu(z)
\end{equation}
 and conclude that ${D_{1}((\lambda,1),1)}$ is a Stieltjes function for any $\succsim$ from the completion. In the opposite direction, if $f$ is a Stieltjes function, then $f(\lambda)={D_{1}((\lambda,1),1)}$ for a preference $\succsim$ from the completion of Leontief preferences corresponding to $\mu$ in~\eqref{eq_price_index_Leontief_aggregate}  such that the ratio $v_2/v_1$ is $\nu$-distributed.
\end{proof} 
\medskip

The right-hand side of~\eqref{eq_nu_and_mu_Stieltjes} is called the Stieltjes transform of $\nu$. The Stieltjes transform is invertible \citep[Chapter 2]{schilling2012bernstein}.
 Hence,  if $\succsim$ belongs to the completion of Leontief preferences, the demand determines the distribution $\nu$ satisfying~\eqref{eq_nu_and_mu_Stieltjes} uniquely.
As a result, the distribution of Leontief preferences over the population that leads to $\succsim$ is pinned down uniquely. Namely, the continuous aggregation of Leontief preferences~\eqref{eq_price_index_Leontief_aggregate} with distribution $\mu$ of $(v_1,v_2)$ such that the ratio ${v_2}/{v_1}$ is distributed according to $\nu$ gives $\succsim$.  Leontief preferences with the same ratio 
coincide and so the distribution of preferences corresponding to $\succsim$ is indeed unique.

The Stieltjes transform can be inverted explicitly using tools from complex analysis.
Before describing the tools, we give an example obtained with their help.
\begin{example}[CES with complements as an aggregation of Leontief preferences]\label{ex_CES_as_Leontief}
We show that any CES preference  $\succsim$ over $n=2$ complements~\eqref{eq_CES} belongs to the completion of Leontief preferences.

First, consider a particular case with the elasticity of substitution $\sigma=\frac{1}{2}$ and weights $\vec{a}=(1,1)$; the corresponding utility function is the harmonic mean. The utility and the demand for the first good are as follows: $$u(x_1,x_2)=\left(\frac{1}{x_1}+\frac{1}{x_2}\right)^{-1}\qquad \mbox{and}\qquad D_{1}((p_1,p_2),b)=\frac{b}{p_1+\sqrt{p_1\cdot p_2}}.$$
By Proposition~\ref{prop_Stieltjes}, finding a probability distribution $\nu$ on $\R_+$ such that the identity~\eqref{eq_nu_and_mu_Stieltjes} holds is enough to show that $\succsim$ is in the completion of Leontief preferences.
We end up with the following equation:
$$\frac{1}{\lambda+\sqrt{\lambda}}=\int_{\R_+}\frac{1}{\lambda+z}\dd\nu(z).$$
One can check that $\nu$ with a density $\varphi$ given by 
\begin{equation}\label{eq_density_CES_symmetric}
\varphi(z)=\frac{1}{\pi}\frac{1}{\sqrt{z}(1+z)}
\end{equation}
is a solution, hence, $\succsim$ is indeed in the completion. By taking any distribution $\mu$ of $\vec{v}=(v_1,v_2)$ such that $v_2/v_1$ is $\nu$-distributed (e.g., $v_1$ equals $1$ identically and $v_2$ has distribution $\nu$), we represent $\succsim$ via a continuous aggregation of Leontief preferences~\eqref{eq_price_index_Leontief_aggregate}.

The above analysis  extends to any CES preference over two complements. The utility function and the expenditure share have the form 
$$ u(x_1,x_2)=\left(\left(a_1\cdot x_1\right)^{\frac{\sigma-1}{\sigma}}+\left(a_2\cdot x_2\right)^{\frac{\sigma-1}{\sigma}}\right)^{\frac{\sigma}{\sigma-1}}\qquad\mbox{and}\qquad  D_{1}((p_1,p_2),b) =\frac{b}{p_1+  A\cdot p_1^{\sigma}\cdot p_2^{1-\sigma}},
$$
where $\sigma\in (0,1)$ and $A=(a_1/a_2)^{1-\sigma}$. The  distribution $\nu$ of ${v_2}/{v_1}$ has to solve the equation
$$\frac{1}{\lambda+A\cdot \lambda^{\sigma}}=\int_{\R_+}\frac{1}{\lambda+z}\dd\nu(z).$$
We find a solution using a guess-and-verify trick. One can show that $\nu$ with density 
\begin{equation}\label{eq_density_CES_general}
\varphi(z)=\frac{1}{\pi}\cdot \frac{\sin (\pi\sigma)}{A^{-1}\cdot z^{2-\sigma}+z\cdot \cos (\pi\sigma)+A\cdot z^\sigma}
\end{equation}
is a solution.  Formula~\eqref{eq_density_CES_symmetric} is a particular case of~\eqref{eq_density_CES_general} for $\sigma={1}/{2}$ and $a_1=a_2$.
 \end{example} 
One may wonder about the ``guess'' part of the guess-and-verify trick, which we used to derive formulas~\eqref{eq_density_CES_symmetric} and~\eqref{eq_density_CES_general}. It relies on the following observation from complex analysis. For any distribution $\nu$ over $\R_+$, its Stieltjes transform is defined not only for $\lambda\in\R_{++}$ but also for all complex values of $\lambda\in \C\setminus \R_-$, where $\C$ denotes the complex plane. Moreover, this extended Stieltjes transform of $\nu$ is an analytic function on $\C\setminus \R_-$. The values of this analytic continuation above and below the ``cut'' over the negative reals can be used to reconstruct $\nu$. The answer is given by the Stieltjes-Perron formula: if $f$ is the Stieltjes transform of a measure  $\nu$ with density $\varphi$, then
\begin{equation}\label{eq_StieltjesPerron}
\varphi(z)=\frac{1}{2\pi i }\cdot \lim_{\varepsilon\to 0} \left(f(-z+i\varepsilon)-f(-z-i\varepsilon)\right),
\end{equation}
where $i$ is the imaginary unit and $\varepsilon$ tends to zero from above.
\fed{Moreover, $\nu$ admits a density if and only if the limit in~\eqref{eq_StieltjesPerron} exists for all $z\in \R_+$. A general version of the Stieltjes-Perron formula applicable even if $\nu$ does not have a density can be found in. \fed{reference}} 

Combining the Stieltjes-Perron formula and Proposition~\ref{prop_Stieltjes}, we get the following corollary.
\begin{corollary}\label{cor_complex_analysis}
If a preference $\succsim$ over $n=2$ goods belongs to the completion of Leontief preferences, then the demand for the first good $D_{1}((p_1, 1),1)$ as a function of its price $p_1$ admits an analytic continuation to complex prices $p_1\in \C\setminus \R_+$. The function $\varphi$ given by~\eqref{eq_StieltjesPerron} for $$f(\lambda)=D_{1}((\lambda, 1),1)$$ is the density of the unique distribution of ${v_2}/{v_1}$ such that the continuous aggregation of Leontief preferences~\eqref{eq_price_index_Leontief_aggregate} gives $\succsim$.
\end{corollary}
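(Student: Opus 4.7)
My plan is to deduce Corollary~\ref{cor_complex_analysis} directly from Proposition~\ref{prop_Stieltjes} combined with the classical Stieltjes--Perron inversion. By Proposition~\ref{prop_Stieltjes} and identity~\eqref{eq_nu_and_mu_Stieltjes}, if $\succsim$ belongs to the completion of Leontief preferences then there exists a probability measure $\nu$ on $\R_+$ (namely, the distribution of $v_2/v_1$ for any population representing $\succsim$) such that $f(\lambda):=D_1((\lambda,1),1)=\int_{\R_+}(\lambda+z)^{-1}\,\dd\nu(z)$ for every $\lambda>0$.

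I would first establish the analytic continuation. Since $(\lambda+z)^{-1}$ is analytic in $\lambda$ except at $\lambda=-z$, the poles $\{-z:z\in\R_+\}$ lie on the non-positive real axis, so I would define $\tilde f(\lambda)=\int_{\R_+}(\lambda+z)^{-1}\,\dd\nu(z)$ for $\lambda$ in the complement of that axis. On any compact subset avoiding the cut, $|\lambda+z|^{-1}$ is uniformly bounded in $z$, and the integrand is holomorphic in $\lambda$; Morera's theorem combined with Fubini (to interchange a closed-contour integral with the $\nu$-integral) then yields that $\tilde f$ is holomorphic, and since it agrees with $f$ on $\R_{++}$ it is the desired analytic continuation.

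Next I would verify that formula~\eqref{eq_StieltjesPerron} recovers the density. A direct algebraic manipulation gives, for $z>0$ and $\varepsilon>0$,
\begin{equation*}
\frac{1}{2\pi i}\big(\tilde f(-z+i\varepsilon)-\tilde f(-z-i\varepsilon)\big)=\int_{\R_+}\frac{1}{\pi}\cdot\frac{\varepsilon}{(w-z)^2+\varepsilon^2}\,\dd\nu(w),
\end{equation*}
whose integrand is the Poisson kernel on the upper half-plane. The classical convergence theorem for Poisson integrals of measures (see Chapter~2 of \citep{schilling2012bernstein}) then guarantees that the right-hand side converges as $\varepsilon\to 0^+$ to the Radon--Nikodym derivative of $\nu$ at every Lebesgue point, so the limit $\varphi$ in~\eqref{eq_StieltjesPerron} equals the density of $\nu$ whenever $\nu$ is absolutely continuous. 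Uniqueness of the distribution over the population follows because two Leontief preferences coincide exactly when their ratios $v_2/v_1$ coincide, so the distribution of preferences equals $\nu$; and injectivity of the Stieltjes transform (a consequence of the inversion formula applied to the signed measure $\nu_1-\nu_2$ of two candidate laws) shows that $\nu$ is pinned down by $f$. The main subtlety is that the Perron limit returns only the absolutely continuous part of $\nu$, so the statement implicitly assumes $\nu$ admits a density---precisely the setting of Example~\ref{ex_CES_as_Leontief}.
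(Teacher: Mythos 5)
Your route is essentially the paper's own: the paper obtains this corollary by combining Proposition~\ref{prop_Stieltjes} with the classical Stieltjes--Perron inversion, citing \citep[Chapter 2]{schilling2012bernstein} for analyticity and invertibility of the Stieltjes transform, exactly the skeleton you follow; you merely supply the standard details (Morera plus Fubini for holomorphy off the cut, the Poisson-kernel computation for the inversion, injectivity for uniqueness). Two of your side remarks are well taken: the continuation domain is indeed the complement of the non-positive reals (the ``$\C\setminus\R_+$'' in the statement is a slip for $\C\setminus\R_-$, consistent with the paper's own surrounding text), and the Perron limit does recover only the absolutely continuous part of $\nu$, a caveat the paper itself acknowledges.

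There is, however, a concrete sign error in your verification step. For $\tilde f(\lambda)=\int_{\R_+}(\lambda+w)^{-1}\,\dd\nu(w)$ one has
\[
\frac{1}{(w-z)+i\varepsilon}-\frac{1}{(w-z)-i\varepsilon}=\frac{-2i\varepsilon}{(w-z)^2+\varepsilon^2},
\qquad\text{hence}\qquad
\frac{1}{2\pi i}\bigl(\tilde f(-z+i\varepsilon)-\tilde f(-z-i\varepsilon)\bigr)=-\int_{\R_+}\frac{1}{\pi}\cdot\frac{\varepsilon}{(w-z)^2+\varepsilon^2}\,\dd\nu(w),
\]
the \emph{negative} of the Poisson integral you wrote (test with $\nu=\delta_1$: the left-hand side is $-1/(\pi\varepsilon)$ at $z=1$, which cannot converge to a density). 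The correct inversion is
$\varphi(z)=\frac{1}{2\pi i}\lim_{\varepsilon\to 0^+}\bigl(f(-z-i\varepsilon)-f(-z+i\varepsilon)\bigr)=\frac{1}{\pi}\lim_{\varepsilon\to 0^+}\mathrm{Im}\,f(-z-i\varepsilon)$;
with this ordering, Example~\ref{ex_CES_as_Leontief} with $f(\lambda)=1/(\lambda+\sqrt{\lambda})$ returns $+\frac{1}{\pi\sqrt{z}(1+z)}$ as it should. In other words, your algebraic slip exactly offsets a sign typo in~\eqref{eq_StieltjesPerron} itself; once the formula's two boundary values are swapped, your argument goes through verbatim. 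A last, minor point: applying the density-level inversion to the signed measure $\nu_1-\nu_2$ shows only that the absolutely continuous parts of two candidate laws agree; full injectivity of the Stieltjes transform requires the measure-level inversion (recovering distribution functions at continuity points), which is what the paper's citation to \citep[Chapter 2]{schilling2012bernstein} is doing for it.
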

Note that the analytic continuation is unique if exists. Hence, Corollary~\ref{cor_complex_analysis} can be used to check whether a given preference belongs to the completion of Leontief preferences. First, we check whether the expenditure share admits an analytic continuation. If it does, we compute a candidate for the distribution $\nu$ via the Stieltjes-Perron formula. Finally, we check that what we got is a probability distribution, and the expenditure share can be obtained as its Stieltjes transform. A preference passes the test if and only if it is in the completion. Example~\ref{ex_CES_as_Leontief} illustrated this approach.

\subsection{Application: Fisher markets, fair division, complexity, and bidding languages}\label{sec_Fisher}


Consider a population of consumers with budgets $b_1,\ldots, b_m$ and homothetic preferences $\succsim_1,\ldots,\succsim_m$ over $n$ goods, and let 
$\vec{X}\in \R_{++}^n$ be a fixed supply vector. This economy is known in the algorithmic economics literature as the Fisher market\footnote{Named after Irving Fisher, who introduced a hydraulic method for equilibrium price computation; see \citep*{brainard2005compute}.}  and is by far the most studied economy from a computational perspective \citep[Chapters 5 and 6]{nisanalgorithmic}. Since the classical works of \cite*{varian1974equity} and \cite*{hylland1979efficient},
Fisher markets and their modifications are used for fair allocation of private goods without monetary transfers, giving rise to a famous mechanism known as the competitive equilibrium with equal incomes (CEEI) or the pseudo-market mechanism \citep*{moulin2019fair, pycia2022pseudo}. \fed{rephrase}


 A collection of bundles $\vec{x}_1,\ldots, \vec{x}_m$ and a price vector $\vec{p}$ form a competitive equilibrium  (CE) of the Fisher market with preferences $\succsim_1,\ldots,\succsim_m$, budgets $b_1,\ldots, b_m$, and total supply $\vec{X}$ if
\begin{equation}\label{eq_CEEI}
\vec{x}_k\in D_{k}(\vec{p},b_k)\quad  \mbox{for each consumer $k$}\quad  \mbox{ and } \quad \sum_{k=1}^m \vec{x}_k=\vec{X},
\end{equation}
i.e., each consumer buys her preferred feasible bundle, and the market clears. We pinpoint that money has no intrinsic value and the Fisher market is equivalent to an exchange economy where each agent $k$ is endowed with the fraction $\beta_k=b_k/B$ of $\vec{X}$ where $B$ is the total budget.

One can think of a CE as an allocation mechanism: agents report their preferences, and the mechanism computes an equilibrium and allocates each agent her bundle $\vec{x}_k$. In this interpretation, budgets $b_k$ represent agents' entitlement to the goods in the bundle $\vec{X}$. The case of equal entitlements $b_1=\ldots=b_m$  (CEEI)  
is especially important. In this case, each agent selects her best bundle from the same budget set and, hence, the resulting allocation is envy-free in the sense that $\vec{x}_k\succsim_k \vec{x}_l$ for any pair of agents $k$ and $l$. 
Since any CE is Pareto optimal by the first welfare theorem, CEEI gives a simple recipe to provide strong fairness and guarantee efficiency. CEEI and its various modifications have been applied to rent division \citep*{goldman2015spliddit},  chores allocation \citep*{bogomolnaia2017competitive}, course allocation \citep*{budish2017course, kornbluth2021undergraduate, soumalias2022machine}, cloud computing \citep*{devanur2018new}, school choice \citep*{ashlagi2016optimal,he2018pseudo}, and other problems \citep*{echenique2021constrained}.

Despite its attractive properties, the popularity of CEEI remains limited as computing its outcome is a challenging problem. It is known that an equilibrium allocation $\vec{x}_1,\ldots, \vec{x}_m$ can be obtained via maximizing the Nash social welfare
\begin{equation}\label{eq_EisenbergGale}
\prod_{k=1}^m \left(\frac{u_{k}(\vec{x}_k)}{\beta_k}\right)^{\beta_k}
\end{equation}
over all bundles $\vec{x}_1,\ldots, \vec{x}_m$ such that $\sum_{k=1}^m \vec{x}_k=\vec{X}$. This result tightly related to the existence of an aggregate consumer was established by \cite*{eisenberg1959consensus} for linear preferences but holds for all homothetic preferences; see \citep*{shafer1982market}. 
Although the Eisenberg-Gale problem is convex, computing its solutions is not an easy task unless $n $ or $ m$ are small. Even in the benchmark case of linear preferences, algorithms with good theoretical performance have required more than a decade of research and dozens of papers using cutting-edge 
techniques; see, e.g., \citep*{devanur2002market, orlin2010improved, vegh2012strongly}. Developing algorithms with good performance in practice is critical for large-scale applications of Fisher markets---e.g., to fair recommender systems \citep*{gao2022infinite} and Internet ad markets \citep*{conitzer2022pacing}---but despite the recent progress this problem is yet to be solved.
\smallskip

We examine the question of finding a CE from the preference aggregation perspective. This perspective sheds light on why  computing a CE can be challenging in seemingly innocent domains such as linear preferences and helps to identify domains where computing a CE is easy.

To find a CE, it is enough to compute the vector of equilibrium prices $\vec{p}$. Once we know~$\vec{p}$,
each agent is allocated her demanded bundle $\vec{x}_k$ at these prices.\footnote{If utilities are not strictly concave, $D_{k}(\vec{p},b)$ may not be a singleton, e.g., in
the domain of linear preferences. Even in this case, once equilibrium prices are known, choosing bundles  $\vec{x}_k$ from each agent's demand so that $\sum_k \vec{x}_k=\vec{x}$ is a simple problem, which boils down to a maximum flow computation \citep*{devanur2002market, branzei2019algorithms}.} Thus the essence of computing a CE is finding a vector of prices $\vec{p}$ such that the market demand matches the supply. In other words, we need to find $\vec{p} $ such that \emph{the aggregate consumer's demand contains $\vec{x}$}. This simple observation, combined with our insights about the structure of aggregate preferences, has many implications.

Finding a CE for a population of consumers boils down to finding a CE for one aggregate consumer, and we know that aggregation is easier to handle in the space of logarithmic expenditure functions. Recall that the demand is proportional to the gradient of the logarithmic expenditure function~\eqref{eq_demand_price_index} and, hence, $\vec{p}$ is an equilibrium price vector if and only if$^{\,}$\footnote{For expenditure functions that are not smooth the gradient is to be replaced with the superdifferential.}
$$B\cdot\nabla \ln \E_{\agr}(\vec{p})=\vec{X},$$
where $B$ is the total budget.
Interpreting this identity as the first order condition and taking into account the concavity of $\ln \E_{\agr}$, we conclude that $\vec{p}$ is a vector of equilibrium prices whenever 
\begin{equation}\label{eq_Shmurev_aggregate}
\vec{p} \quad  \mbox{ is the global maximum of } \qquad   \langle\vec{X},\vec{p}\rangle-B\cdot\ln \E_{\agr}(\vec{p}).
\end{equation}
Combining this result with   Theorem~\ref{th_representative_index}, we get the following proposition.
\begin{proposition}
A vector $\vec{p}$ is a vector of equilibrium prices for a population of consumers with homothetic preferences $\succsim_1,\ldots,\succsim_m$, budgets $b_1,\ldots, b_m$, and total supply $\vec{X}$ if and only if $\vec{p}$ is the global maximum of
\begin{equation}\label{eq_Shmurev}
\langle\vec{X},\vec{p}\rangle-\sum_{k=1}^m b_k \cdot\ln \E_{k}(\vec{p}).
\end{equation}
\end{proposition}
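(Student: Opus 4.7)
The plan is to deduce the proposition as a one-step substitution: take the single-agent characterization~\eqref{eq_Shmurev_aggregate} that has just been established and use Theorem~\ref{th_representative_index} to rewrite the term $B\cdot \ln \E_{\agr}(\vec{p})$ in terms of individual expenditure functions.

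First, I will invoke~\eqref{eq_Shmurev_aggregate}, which characterizes equilibrium price vectors $\vec{p}$ as the global optima of $\langle \vec{X}, \vec{p}\rangle - B \cdot \ln \E_{\agr}(\vec{p})$. The reasoning behind this characterization, spelled out in the paragraph above the proposition, is that aggregate market clearing $D_{\agr}(\vec{p}, B) = \vec{X}$ becomes, via Shephard's lemma~\eqref{eq_demand_price_index}, the first-order condition $B \cdot \nabla \ln \E_{\agr}(\vec{p}) = \vec{X}$; concavity of $\ln \E_{\agr}$ (which the reader gets from concavity of $\E_{\agr}$ together with monotonicity of $\ln$) then promotes this FOC to a characterization of a global optimum.

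Next, I will apply Theorem~\ref{th_representative_index} to eliminate $\E_{\agr}$ from the objective. The theorem states $\ln \E_{\agr}(\vec{p}) = \sum_{k=1}^{m} \beta_k \ln \E_k(\vec{p})$, and multiplying through by $B$ and using $b_k = \beta_k B$ gives
$$B\cdot \ln \E_{\agr}(\vec{p}) \;=\; \sum_{k=1}^{m} b_k \cdot \ln \E_k(\vec{p}).$$
Plugging this identity into the objective from~\eqref{eq_Shmurev_aggregate} turns it into~\eqref{eq_Shmurev} verbatim, so the set of optima coincides and the proposition follows.

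The main obstacle, if there is one, is handling non-differentiability of the individual expenditure functions $\E_k$ (e.g., for linear or Leontief preferences). In that case, the gradient in the FOC must be replaced by a superdifferential, the identity $D_{k}(\vec{p},b_k) = b_k \cdot \nabla \ln \E_k(\vec{p})$ becomes a set-valued equation, and one has to check that the superdifferential of $\sum_k b_k \ln \E_k$ is the (Minkowski) sum of the individual superdifferentials so that the market-clearing condition $\sum_k D_k(\vec{p},b_k) \ni \vec{X}$ corresponds exactly to $\vec{p}$ being a global optimum of~\eqref{eq_Shmurev}. This follows from standard subdifferential calculus for concave functions with coinciding domains of finiteness, and the equivalence with the characterization~\eqref{eq_Shmurev_aggregate} carries over without change.
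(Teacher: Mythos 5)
Your proposal follows exactly the paper's own proof: the paper obtains the proposition by combining the aggregate-consumer characterization \eqref{eq_Shmurev_aggregate} with Theorem~\ref{th_representative_index}, which is precisely your one-step substitution $B\cdot\ln\E_{\agr}=\sum_{k=1}^m b_k\cdot\ln\E_{k}$, and it too relegates non-smooth expenditure functions to a footnote replacing gradients with superdifferentials. In terms of route, there is nothing separating the two.

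However, one step fails as you state it --- and the flaw is inherited from the paper itself. You write that concavity of $\ln\E_{\agr}$ ``promotes this FOC to a characterization of a global optimum'' of $\langle\vec{X},\vec{p}\rangle-B\cdot\ln\E_{\agr}(\vec{p})$, where the statement being proved says \emph{maximum}. But concavity of $\ln\E_{\agr}$ makes $-B\cdot\ln\E_{\agr}$ convex, so the objective is convex (linear plus convex), and a stationary point of a convex function is a global \emph{minimum}. In fact the objective has no global maximum at all: along any ray it equals $\alpha\langle\vec{X},\vec{p}\rangle-B\ln\alpha-B\cdot\ln\E_{\agr}(\vec{p})$, which tends to $+\infty$ both as $\alpha\to\infty$ and as $\alpha\to0^{+}$; taken literally, the ``if and only if'' with ``maximum'' would assert that no equilibrium price vector exists. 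Both \eqref{eq_Shmurev_aggregate} and the proposition should read ``global minimum,'' which is also what the literature cited immediately after the proposition establishes for linear utilities: the program dual to Eisenberg--Gale is a convex \emph{minimization}. With that one-word repair, your substitution of Theorem~\ref{th_representative_index} and your superdifferential remark (the sum rule $\partial\big(\sum_k b_k\ln\E_{k}\big)=\sum_k b_k\,\partial\ln\E_{k}$ is valid here since each $\ln\E_{k}$ is finite and continuous on $\R_{++}^n$) go through verbatim.
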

This optimization problem is convex. 
Its particular case for linear preferences has been known and obtained as the Lagrange dual to the Eisenberg-Gale optimization problem \citep*{cole2017convex, devanur2016new,shmyrev2009algorithm}. Our approach explains the preference-aggregation origin of this dual and provides a generalization to all homothetic preferences almost without any computations.
\smallskip 

Optimization problem~\eqref{eq_Shmurev_aggregate} indicates that 
to find a CE for a population of consumers with preferences from a certain domain $\D$ we must be able to find the market equilibrium for any preference that can be obtained as an aggregation of preferences from $\D$. In other words, the complexity of finding a CE is determined not by the domain $\D$ of individual preferences itself but rather by its completion $\D^\inv$. We illustrate this point for the domain of linear preferences over two goods.

By Proposition~\ref{prop_linear_gwnerate_S_for_2_goods}, aggregation of linear preferences over $n=2$ goods (domain $\D$) gives all preferences with substitutability ($\D^\inv$).
We will show that any algorithm computing an approximate CE for preferences from $\D$ can be used to compute an approximate CE for $\D^\inv$. Hence, finding CE for $\D$ cannot be easy if it is hard for $\D^\inv$.
Let us call $\vec{p}$ an $\varepsilon$-equilibrium price vector if there are $\vec{x}_k\in D_{k}(\vec{p},b_k)$, $k=1,\ldots, m$ such that $$\langle \vec{p},\vec{e} \rangle\leq  \varepsilon\cdot B,\qquad\mbox{where}\quad  e_i=\left|x_i-\sum_{k=1}^m x_{k,i}\right|,$$
i.e., the excess demand is relatively small compared to the total budget.
\begin{proposition}\label{prop_approx_algorithm}
Let $\D$ be the domain of linear preferences over two goods and assume we have an algorithm computing an $\varepsilon$-equilibrium price vector for any population of agents with preferences from $\D$.
 Then a $3\varepsilon$-equilibrium price vector for a population of $m$ agents with preferences from $\D^\inv$ can be computed by applying the algorithm as a black box to an auxiliary population with preferences from $\D$ and the number of agents of the order of $m/\varepsilon$.
\end{proposition}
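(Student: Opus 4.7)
The plan is to reduce a population from $\D^\inv$ (preferences over two substitutes) to an auxiliary population from $\D$ (linear preferences) via the explicit ``MRS decomposition'' of Corollary~\ref{cor_MRS_reconstruction}, and then control the approximation error introduced by replacing each continuous decomposition with a discrete one of size $O(1/\varepsilon)$.

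\textbf{Step 1 (continuous replacement).} For each agent $k=1,\ldots,m$ with preference $\succsim_k\in\D^\inv$ and budget $b_k$, Corollary~\ref{cor_MRS_reconstruction} furnishes a unique distribution $\mu_k$ on $[0,+\infty]$ whose CDF is $F_k(t)=1-s_{k,1}(t,1)$. By construction and Theorem~\ref{th_representative_index}, the continuous aggregation of linear agents with MRS drawn from $\mu_k$ and total budget $b_k$ has the same logarithmic expenditure function as $\succsim_k$, so replacing $k$ by such a ``cloud'' preserves the aggregate behavior exactly and generates exactly the same individual expenditure shares at every price.

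\textbf{Step 2 (discretization).} Choose $N=\lceil 2/\varepsilon\rceil$ and approximate each $\mu_k$ by an atomic measure $\widehat\mu_k$ supported on the $j/N$ quantiles of $\mu_k$ for $j=1,\ldots,N-1$, together with atoms of the appropriate mass at $0$ and $+\infty$. Standard properties of the quantile construction give the uniform Kolmogorov bound $\|F_k-\widehat F_k\|_\infty\le 1/N\le \varepsilon/2$, and each atom carries mass at most $\varepsilon/2$. Since the expenditure share of good~$1$ in a linear aggregate equals (one minus) the CDF of the underlying MRS distribution, this yields, at every price ratio (including one coinciding with an atom, using any tie-breaking), $\bigl|s_{k,1}(\vec p)-\widehat s_{k,1}(\vec p)\bigr|\le\varepsilon$. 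Replacing each agent $k$ by the finite collection of linear agents encoded by $\widehat\mu_k$ (one per atom, with budget $b_k\cdot\widehat\mu_k(\{t\})$) produces an auxiliary population in $\D$ with at most $m\cdot N=O(m/\varepsilon)$ agents and total budget $B$.

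\textbf{Step 3 (running the black box and bounding the error).} Feed this auxiliary population to the black-box algorithm to obtain a price vector $\vec p^*$ and bundles $\vec x_k^{(j)}$ witnessing an $\varepsilon$-equilibrium, i.e.\ $\bigl\langle\vec p^*,\bigl|\sum_{k,j}\vec x_k^{(j)}-\vec X\bigr|\bigr\rangle\le\varepsilon B$. For the original population, pick any $\vec x_k\in D_k(\vec p^*,b_k)$. By the budget identity $p_i^*\cdot x_{k,i}=b_k\cdot s_{k,i}(\vec p^*)$ and the analogous identity for the discretization, the value discrepancy on each good is
\begin{equation*}
p_i^*\Bigl|x_{k,i}-\sum_j x_{k,i}^{(j)}\Bigr|=b_k\,\bigl|s_{k,i}(\vec p^*)-\widehat s_{k,i}(\vec p^*)\bigr|\le b_k\varepsilon,
\end{equation*}
so summing over $i\in\{1,2\}$ and over $k$ gives a total value discrepancy of at most $2\varepsilon B$ between $\sum_k\vec x_k$ and $\sum_{k,j}\vec x_k^{(j)}$. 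Combined with the $\varepsilon B$ from the $\varepsilon$-equilibrium guarantee and the triangle inequality, $\bigl\langle\vec p^*,\bigl|\sum_k\vec x_k-\vec X\bigr|\bigr\rangle\le 3\varepsilon B$, which is the desired $3\varepsilon$-equilibrium property.

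\textbf{Main obstacle.} The only delicate point is controlling the discretization error \emph{uniformly in prices}, in particular at the price ratios that coincide with atoms of $\widehat\mu_k$ where the linear demand is set-valued. Taking the atomic masses to be at most $\varepsilon/2$ ensures that any tie-breaking chosen by the algorithm still lies within $\varepsilon$ of $s_{k,1}(\vec p^*)$, which is what makes the factor $3$ (rather than a larger constant) achievable with $N=O(1/\varepsilon)$ atoms per agent.
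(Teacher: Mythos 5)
Your overall route is the same as the paper's: the paper likewise replaces each agent from $\D^\inv$ by an auxiliary population of order $1/\varepsilon$ linear agents whose aggregate expenditure shares are uniformly within $\varepsilon$ of the original's (it rounds the non-increasing share function $s_{k,1}(\,\cdot\,,1)$ to multiples of $\varepsilon$, which is your quantile discretization viewed on the other axis), and then invokes a transfer lemma (Lemma~\ref{lm_approximate_CE}): if the expenditure shares of two populations differ by at most $\delta$ at every price, then an $\varepsilon$-equilibrium price vector for one is an $(\varepsilon+n\delta)$-equilibrium price vector for the other; with $n=2$ and $\delta=\varepsilon$ this yields the factor $3$. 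Your Step~3 is exactly this lemma, proved inline.

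However, one step fails as written. In Step~3 you ``pick any $\vec{x}_k\in D_k(\vec{p}^*,b_k)$'' and assert $p_i^*\bigl|x_{k,i}-\sum_j x_{k,i}^{(j)}\bigr|=b_k\bigl|s_{k,i}(\vec{p}^*)-\widehat s_{k,i}(\vec{p}^*)\bigr|\le b_k\varepsilon$. This is false when the \emph{original} preference $\succsim_k$ has a kink at the ratio $z=p_1^*/p_2^*$, i.e.\ when $\mu_k$ itself has an atom there --- a case that cannot be excluded, since $\D\subset\D^\inv$ and $\succsim_k$ may itself be linear. Then $D_k(\vec{p}^*,b_k)$ is a segment, an arbitrary selection can have share at distance up to $1$ (not $\varepsilon$) from the cloud's realized share, and your ``Main obstacle'' argument does not cover this either: when $\mu_k$ has an atom of mass exceeding $1/N$, the coinciding quantiles make $\widehat\mu_k$ inherit that atom, so the claim that every atom of $\widehat\mu_k$ has mass at most $\varepsilon/2$ is also false. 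The repair is cheap, because the definition of an $\varepsilon$-equilibrium only requires the \emph{existence} of suitable selections: the Kolmogorov bound $\|F_k-\widehat F_k\|_\infty\le\varepsilon/2$ also holds for left limits, so the interval of shares achievable by $\succsim_k$ at $\vec{p}^*$, namely $[1-F_k(z),\,1-F_k(z^-)]$, is within Hausdorff distance $\varepsilon/2$ of the cloud's achievable interval $[1-\widehat F_k(z),\,1-\widehat F_k(z^-)]$. Hence, whatever tie-breaking the black box uses, there \emph{exists} a selection $\vec{x}_k\in D_k(\vec{p}^*,b_k)$ whose per-good value discrepancy is at most $b_k\varepsilon$, and the rest of your computation goes through verbatim. (The paper's own write-up glosses over the same set-valuedness issue, so this is a patch rather than a change of approach.)
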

We prove Proposition \ref{prop_approx_algorithm} in Appendix~\ref{app_approximate}. The idea is to approximate preferences from $\D^\inv$ by the aggregate preference of linear consumers so that the expenditure shares differ by at most $\varepsilon$. Such an approximation can be constructed via Corollary~\ref{cor_MRS_reconstruction} and requires a number of auxiliary linear consumers of the order of $1/\varepsilon$. As we show in Appendix~\ref{app_approximate}, if expenditure shares in two populations differ by at most $\varepsilon$, then $\varepsilon$-equilibrium price vector for one population is an $(1+n)\varepsilon$-equilibrium price vector for the other. Since $n=2$, an  $\varepsilon$-equilibrium price vector for the approximating population of linear consumers gives a $3\varepsilon$-equilibrium price vector for the population of consumers with preferences from $\D^\inv$.

\medskip 
The example of linear preferences demonstrates that even a simple parametric domain---if the choice of parameters is not aligned with aggregation---can have a large non-parametric completion. As a result, the simplicity of a parametric domain does not carry over to the aggregate behavior thus complicating the computation of a CE. 
To preserve the simplicity of a parametric domain, the choice of parameters is to be aligned with aggregation. Motivated by this concern, we consider computing a CE in parametric domains invariant with respect to aggregation.

Fix a finite family of ``elementary'' preferences $\succsim_1,\ldots, \succsim_q$ and consider the domain $\D=\D^\inv$ of all preferences that can be obtained by aggregating the elementary preferences.  We will call such invariant domains finitely-generated.\fed{Define finitely-generated domains earlier?} Cobb-Douglas preferences are an example of a finitely generated domain; see Example~\ref{ex_CobbDouglas_as_aggregation_of_linear}.
By Theorem~\ref{th_representative_index}, a finitely generated $\D$ consists of all preferences $\succsim$ whose expenditure function $\E$ can be represented as $\ln \E=\sum_{l=1}^q t_l \ln \E_{l}$ and, hence, the vector of coefficients $\vec{t}\in \Delta_{q-1}$ provides a parameterization of $\D$. 
\begin{proposition}
Consider a finitely-generated invariant domain $\D=\{\succsim_1,\ldots, \succsim_q\}^\inv$ and fix~$\varepsilon\geq 0$. Assume we have access to an algorithm finding an $\varepsilon$-equilibrium vector of prices for $m=1$ agent and using at most $T$ operations. Then, an $\varepsilon$-equilibrium price vector for a population of $m\geq 1$ agents can be computed in time of the order of $m\cdot q+T$.
\end{proposition}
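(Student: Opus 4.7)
The plan is to reduce the multi-agent problem to the single-agent problem by exploiting finite generation together with Theorem~\ref{th_representative_index}. The key point is that a finitely generated domain $\D=\{\succsim_1,\ldots,\succsim_q\}^{\inv}$ is parametrized by the simplex $\Delta_{q-1}$: every $\succsim\ \in\D$ admits a representation $\ln\E_{\succsim}=\sum_{l=1}^{q}t_{l}\cdot\ln\E_{l}$ with $\vec{t}\in\Delta_{q-1}$, and aggregation acts linearly on these parameter vectors. If the given population has agents $k=1,\ldots,m$ with parameter vectors $\vec{t}_k\in\Delta_{q-1}$ and relative budgets $\beta_k=b_k/B$, then applying Theorem~\ref{th_representative_index} gives
\begin{equation*}
\ln\E_{\agr}(\vec{p})=\sum_{k=1}^{m}\beta_k\cdot\ln\E_{\succsim_k}(\vec{p})=\sum_{l=1}^{q}s_l\cdot\ln\E_{l}(\vec{p}),\qquad s_l:=\sum_{k=1}^{m}\beta_k\cdot t_{k,l},
\end{equation*}
and $\vec{s}\in\Delta_{q-1}$ because it is a convex combination of points in the simplex. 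Thus the aggregate preference itself lies in $\D$ and is encoded by the explicit parameter vector $\vec{s}$.

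First, I would compute $B=\sum_k b_k$ and the relative budgets $\beta_k$ in $O(m)$ operations, then compute each coordinate $s_l=\sum_{k=1}^{m}\beta_k\cdot t_{k,l}$ in $O(m)$ time, for a total of $O(m\cdot q)$ arithmetic operations to assemble the aggregate consumer's parameter vector. Second, I would feed this single aggregate consumer (with preference $\succsim_{\agr}\in\D$ and budget $B$) to the given black-box algorithm, obtaining an $\varepsilon$-equilibrium price vector $\vec{p}$ in $T$ additional operations. The grand total is of the order $m\cdot q+T$, matching the claim.

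It remains to verify that a price vector which $\varepsilon$-clears the market for the single aggregate consumer also $\varepsilon$-clears it for the original population. By definition of $\varepsilon$-equilibrium for the aggregate, there exists $\vec{x}_{\agr}\in D_{\agr}(\vec{p},B)$ such that $\langle\vec{p},|\vec{X}-\vec{x}_{\agr}|\rangle\leq\varepsilon\cdot B$. Using the aggregation identity~\eqref{eq_aggregation}, $\vec{x}_{\agr}$ can be decomposed as a sum $\vec{x}_{\agr}=\sum_{k=1}^{m}\vec{x}_k$ with $\vec{x}_k\in D_k(\vec{p},b_k)$: this decomposition is a standard feasibility/max-flow computation once equilibrium prices are known (as noted in the footnote after~\eqref{eq_CEEI}), and at the level of the statement it suffices that such a decomposition exists. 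The excess demand vector $\vec{e}$ for the original population then coincides with that of the aggregate consumer, so the same bound $\langle\vec{p},\vec{e}\rangle\leq\varepsilon\cdot B$ holds.

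The only non-routine aspect is ensuring compatibility between the two notions of $\varepsilon$-equilibrium, which I would address by the short argument above based on~\eqref{eq_aggregation}; the rest is essentially bookkeeping of the barycentric coordinates. No approximation step of the sort needed in Proposition~\ref{prop_approx_algorithm} is required here because the aggregate preference belongs to $\D$ exactly, not just approximately---this exactness is precisely what invariance buys us.
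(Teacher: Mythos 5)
Your proof is correct and follows essentially the same route as the paper's: aggregate the parameter vectors $\vec{t}_k$ with budget weights in $O(m\cdot q)$ operations via Theorem~\ref{th_representative_index}, then apply the one-agent black box to the resulting aggregate consumer. The only difference is that you make explicit the final compatibility check---that an $\varepsilon$-equilibrium price vector for the aggregate consumer is one for the population, via the Minkowski-sum identity~\eqref{eq_aggregation}---which the paper leaves implicit; this is a welcome (and correct) addition, not a deviation.
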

\begin{proof}
    If preferences of individual agents are represented by $\vec{t}_1,\ldots, \vec{t}_m$ and $\beta_1,\ldots, \beta_m$ are relative incomes, then, by Theorem~\ref{th_representative_index}, the aggregate consumer corresponds to $\vec{t}=\sum_{k=1}^m  \beta_k\cdot \vec{t}_k$. Computing $\vec{t}$ requires a number of operations of the order of $m\cdot q$. Applying the one-agent algorithm to the aggregate agent, we get an $\varepsilon$-equilibrium vector of prices for the original population in~$T$ operations.
\end{proof}  

The linear growth of running time with the number of agents $m$ and the absence of large hidden constants suggests that finitely-generated domains can be a natural candidate for scalable fair division mechanisms. Note that the best running time for linear preferences achieved by \cite*{orlin2010improved} and \cite*{vegh2012strongly} grows as $m^4$. 

In economic design, the choice of a preference domain corresponds to the choice of a bidding language, i.e., the information about the true---possibly substantially more complicated---preferences that the participants can report to a mechanism. \fed{rephrase?}
Our results indicate the advantage of 
bidding languages corresponding to finitely-generated invariant domains. Using such domains offers considerable flexibility to the designer. For example, apart from Cobb-Douglas preferences, one can consider domains generated by a finite collection of linear preferences. By adding preferences exhibiting substitutability or complementarity among certain subsets---for example, pairs---of goods to the collection of elementary preferences, we can allow agents to express both substitutability and complementarity patterns while keeping the domain narrow. The use of such domains and bidding languages in practice requires additional experimental evaluation as in \citep*{budish2022can}.

\section{Indecomposable preferences}\label{sec_indecomposable}
In this section, we study indecomposable preferences, i.e., those that cannot be obtained by aggregating distinct preferences within a given domain. They play the role of elementary building blocks since any preference can be obtained by aggregating indecomposable preferences. 

We already saw an example of such a representation in Section~\ref{sec_invariant}, where we represented any preference over two substitutes as a continuous aggregation of linear preferences. In contrast to  Section~\ref{sec_invariant}, where we started by specifying ``elementary'' preferences and asked what can be obtained by aggregating them, now we start from a given domain and aim to identify these elementary preferences. 

\begin{definition}
For a given domain $\D$, a preference $\succsim$ from $\D$ is indecomposable if it cannot be represented as an aggregation of two distinct preferences $\succsim'$ and $\succsim''$ from $\D$. The set of all indecomposable preferences from $\D$ is denoted by $\D^\ind$.
\end{definition}

Recall that a point $x$ from a subset $X$ of a linear space is called an extreme point of $X$ if it cannot be represented as $\alpha x'+(1-\alpha)x''$ with $\alpha\in (0,1)$ and distinct\footnote{Usually, one assumes that $X$ is convex but we do not make this assumption.} $x',x''\in X$. The set of all extreme points of $X$ is denoted by  $X^\ext$. Theorem~\ref{th_representative_index} implies the following corollary.
\begin{corollary}\label{cor_indecomposable_as_extreme}
A preference $\succsim$ is indecomposable in a preference domain $\D$ if and only if its logarithmic expenditure function is an extreme point of the set of logarithmic expenditure functions $$\mathcal{L}_\D=\left\{f=\ln\left(\E_{\succsim'}\right)\colon \succsim'\in\D\right\}.$$
\end{corollary}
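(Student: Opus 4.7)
The plan is to show that the corollary follows essentially immediately from Theorem~\ref{th_representative_index}, once one is careful about the fact that logarithmic expenditure functions are defined only up to additive constants, so that $\mathcal{L}_\D$ is to be interpreted as a set of equivalence classes (equivalently, a set of representatives in which convex combinations are taken modulo constants). I would first state this normalization convention explicitly, so that each preference $\succsim'\in\D$ corresponds to a well-defined point $\ln\E_{\succsim'}\in\mathcal{L}_\D$ and distinct preferences correspond to distinct points.

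For the ``only if'' direction, suppose that $\ln\E_\succsim$ is not an extreme point of $\mathcal{L}_\D$. Then there exist distinct $f',f''\in\mathcal{L}_\D$ and $\lambda\in(0,1)$ with $\ln\E_\succsim=\lambda f'+(1-\lambda)f''$. Let $\succsim',\succsim''\in\D$ be the preferences whose logarithmic expenditure functions are $f'$ and $f''$; since the points are distinct modulo additive constants, $\succsim'\neq\succsim''$. Pick any total budget $B>0$ and set $b_1=\lambda B$, $b_2=(1-\lambda)B$, so that the relative budgets equal $(\lambda,1-\lambda)$. By Theorem~\ref{th_representative_index}, the aggregate preference of this two-consumer population has logarithmic expenditure function $\lambda f'+(1-\lambda)f''=\ln\E_\succsim$, hence coincides with $\succsim$. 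Thus $\succsim$ is decomposable in $\D$.

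For the ``if'' direction, suppose $\succsim$ is decomposable in $\D$, so that there exist distinct $\succsim',\succsim''\in\D$ and budgets $b_1,b_2>0$ such that $\succsim$ is the aggregate of the two-consumer population $\{(\succsim',b_1),(\succsim'',b_2)\}$. By Theorem~\ref{th_representative_index},
\[
\ln\E_\succsim=\beta_1\ln\E_{\succsim'}+\beta_2\ln\E_{\succsim''}, \qquad \beta_k=b_k/(b_1+b_2)\in(0,1).
\]
Because $\succsim'$ and $\succsim''$ are distinct, the functions $\ln\E_{\succsim'}$ and $\ln\E_{\succsim''}$ are not equivalent modulo additive constants, i.e.\ they are distinct elements of $\mathcal{L}_\D$. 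Hence $\ln\E_\succsim$ is a proper convex combination of two distinct points of $\mathcal{L}_\D$ and is not extreme. It is worth noting that one needs only two-agent decompositions here, which is justified by the same reduction used after the definition of invariant domains (aggregating $m>2$ agents reduces to sequential pairwise aggregation).

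The only mildly subtle point is the quotient by additive constants: one must check that the decomposition $\ln\E_\succsim=\lambda f'+(1-\lambda)f''$ does not become trivial after shifting representatives. This is immediate, however, since shifting $f'$ and $f''$ by constants $c',c''$ shifts the combination by $\lambda c'+(1-\lambda)c''$, which does not change the equivalence class of the combination, and distinctness of $f',f''$ modulo constants is preserved. I do not expect any further obstacle; the content of the corollary is essentially a restatement of Theorem~\ref{th_representative_index} in extreme-point language.
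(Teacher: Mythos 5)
Your proof is correct and takes essentially the same route as the paper, which presents the corollary as an immediate consequence of Theorem~\ref{th_representative_index}: since aggregation with budget shares $(\beta_1,\beta_2)$ is exactly a proper convex combination of logarithmic expenditure functions, indecomposability and extremality are the same notion once one works modulo additive constants. Your explicit handling of the equivalence-class subtlety is the right precaution and matches the paper's implicit treatment (in the appendix it phrases the condition as not being a convex combination of two \emph{non-equivalent} expenditure functions).
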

The Choquet theorem states that, if $X$ is a compact convex subset of a locally convex topological vector space, then any point  $x\in X$ can be obtained as the average of its extreme points $x'\in X^\ext$ with respect to some Borel probability measure $\mu$ supported on $X^\ext$:
\begin{equation}\label{eq_Choquet}
x=\int_{X^\ext} x'\, \dd
\mu(x');
\end{equation}
see \citep*{phelps2001lectures}.
Using the Choquet theorem, we obtain the following result demonstrating that indecomposable preferences can indeed be seen as elementary building blocks.

\begin{theorem}\label{th_choquet}
If $\D$ is a closed domain invariant with respect to aggregation, then any preference $\succsim$ from $\D$ can be obtained as a continuous aggregation of indecomposable preferences, i.e., there exists a Borel measure $\mu$ supported on $\D^\ind$ such that the expenditure function $\E=\E_\succsim$ can be represented as follows
\begin{equation}\label{eq_choquet_representation_of_indecomposable}
\ln \E(\vec{p})=\int_{\D^\ind} \ln \E_{\succsim'}(\vec{p})\, \dd \mu(\succsim')
\end{equation}
for any vector of prices $\vec{p}\in \R_{++}^n$.
\end{theorem}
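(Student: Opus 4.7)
The plan is to apply the Choquet representation theorem in the space of logarithmic expenditure functions, following the strategy already flagged by the authors. First, I would fix a canonical representative in each equivalence class of $\mathcal{L}_\D$, e.g.\ by normalizing $f(\vec{e})=0$ at $\vec{e}=(1,\ldots,1)$, so that preferences in $\D$ are in one-to-one correspondence with points of $\mathcal{L}_\D$. Using the metric from~\eqref{eq_distance_body}, which the authors note makes the set of all homothetic preferences a compact metric space and admits an isometric embedding into a Banach space, I would view $\mathcal{L}_\D$ as a subset of that Banach space. Under this identification, convex combinations of logarithmic expenditure functions correspond exactly to aggregation by Theorem~\ref{th_representative_index}, so invariance of $\D$ translates into convexity of $\mathcal{L}_\D$ by Corollary~\ref{cor_invariant}, and closedness of $\D$ together with the compactness of the ambient preference space yields compactness of $\mathcal{L}_\D$.

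Next, I would invoke the metrizable form of Choquet's theorem \citep*{phelps2001lectures}: any point of a metrizable compact convex subset $X$ of a locally convex topological vector space is the barycenter of a Borel probability measure $\mu$ supported on the extreme points $X^{\ext}$, meaning that $L(x)=\int_{X^{\ext}} L(x')\,\dd\mu(x')$ for every continuous linear functional $L$. Applying this to $X=\mathcal{L}_\D$ and $x=\ln \E_\succsim$ produces a Borel probability measure $\mu$ on $\mathcal{L}_\D^{\ext}$. By Corollary~\ref{cor_indecomposable_as_extreme}, the extreme points of $\mathcal{L}_\D$ are precisely the logarithmic expenditure functions of indecomposable preferences, so $\mu$ can be pushed forward to a measure on $\D^{\ind}$.

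To translate the abstract barycenter identity into the pointwise formula~\eqref{eq_choquet_representation_of_indecomposable}, I would observe that for each fixed $\vec{p}\in\R_{++}^n$ the evaluation map $f\mapsto f(\vec{p})$ is a continuous linear functional on the relevant Banach space (this can be read off from the definition of the metric~\eqref{eq_distance_body}, under which uniform control of $f(\vec{p})-f(\vec{e})$ on compact subsets of $\R_{++}^n$ follows from convergence in the metric). Taking $L$ to be evaluation at $\vec{p}$ in the Choquet barycenter identity and using the normalization $f(\vec{e})=0$ then gives exactly
\[
\ln \E(\vec{p})=\int_{\D^{\ind}}\ln \E_{\succsim'}(\vec{p})\,\dd\mu(\succsim'),
\]
which is the desired representation.

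The main obstacle I anticipate is the topological bookkeeping: verifying that $\mathcal{L}_\D$ with normalized representatives sits as a compact convex subset of a locally convex (even Banach) space compatibly with the preference topology, that $\D^{\ind}$ is a Borel (in fact $G_\delta$, by a standard Choquet-theoretic fact for metrizable compact convex sets) subset so that the support condition is meaningful, and that point evaluations are continuous linear functionals so that the abstract barycenter identity yields the concrete pointwise integral in~\eqref{eq_choquet_representation_of_indecomposable}. Once these ingredients are in place, the rest of the argument is a direct appeal to the metrizable Choquet theorem and a dictionary between $\mathcal{L}_\D^{\ext}$ and $\D^{\ind}$ provided by Corollary~\ref{cor_indecomposable_as_extreme}.
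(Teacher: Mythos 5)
Your proposal is correct and follows essentially the same route as the paper's own proof in Appendix~\ref{sec_th_continuous_aggregation_proof}: view $\L_\D$ as a compact convex subset of a Banach space via the metric~\eqref{eq_distance_body} and the embedding of Lemma~\ref{lemma_embedding}, apply the metrizable Choquet theorem, identify extreme points with indecomposable preferences through Corollary~\ref{cor_indecomposable_as_extreme}, and recover the pointwise identity~\eqref{eq_choquet_representation_of_indecomposable} by applying continuous evaluation functionals. The only cosmetic difference is that you fix representatives by the normalization $f(\vec{e})=0$, whereas the paper works with equivalence classes and absorbs the additive constant at the end after unwinding its weighted embedding $\T$ and extending from the simplex by homogeneity; the ``topological bookkeeping'' you flag is precisely what Lemma~\ref{lemma_embedding} supplies.
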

 As in Theorem~\ref{th_continuous_aggregation},  the integral~\eqref{eq_choquet_representation_of_indecomposable} is formally defined in  Appendix~\ref{app_topology}.  We prove Theorem \ref{th_choquet} in Appendix~\ref{sec_th_continuous_aggregation_proof}. The essence of the proof is 
 verifying that the topological assumptions of the Choquet theorem hold.

Representation~\eqref{eq_choquet_representation_of_indecomposable} is especially useful if the set of indecomposable preferences is small relative to the whole domain $\D$.  We will see that this is the case for substitutes, but it is not true for complements or the full domain of homothetic preferences. 
\subsection{Indecomposability in the full domain}\label{sec_indecomposable_full_domain}

Let $\D$ be the domain of all homothetic preferences. 
It is easy to guess some indecomposable preferences from $\D$: for example, linear and Leontief preferences are indecomposable. It turns out that there are many more.
Let us call $\succsim$ a Leontief preference over linear composite goods if it corresponds to a utility function of the form
\begin{equation}\label{eq_extreme_full_domain}
    u(\mathbf{x})=\min_{\vec{a}\in A} \left\{\chi_\vec{a}(\vec{x})\right\},
\end{equation}
where $A$ is a finite or countably infinite subset of $\R_+^n$ and each $\vec{a}\in A$ defines a linear composite good $\chi_\vec{a}(\vec{x})$ by
$$
\chi_\vec{a}(\vec{x})=\sum_{i=1}^n a_i\cdot  x_i.
$$
The interpretation is that an agent treats the collection of bundles $\vec{a}\in A$ as perfect complements. Leontief and linear preferences are particular cases. For Leontief preferences, the bundles are, in fact, single goods and so each $\vec{a}\in A$ has only one non-zero coordinate. Linear preferences correspond to a single bundle $\vec{a}$, i.e.,  $A=\{\vec{a}\}$. Geometrically, Leontief preferences over linear composite goods are exactly those preferences that have upper contour sets with piecewise linear boundary; see Figure~\ref{fig_extreme_full}.
\begin{figure}
	\begin{center}
		\includegraphics[width=5cm]{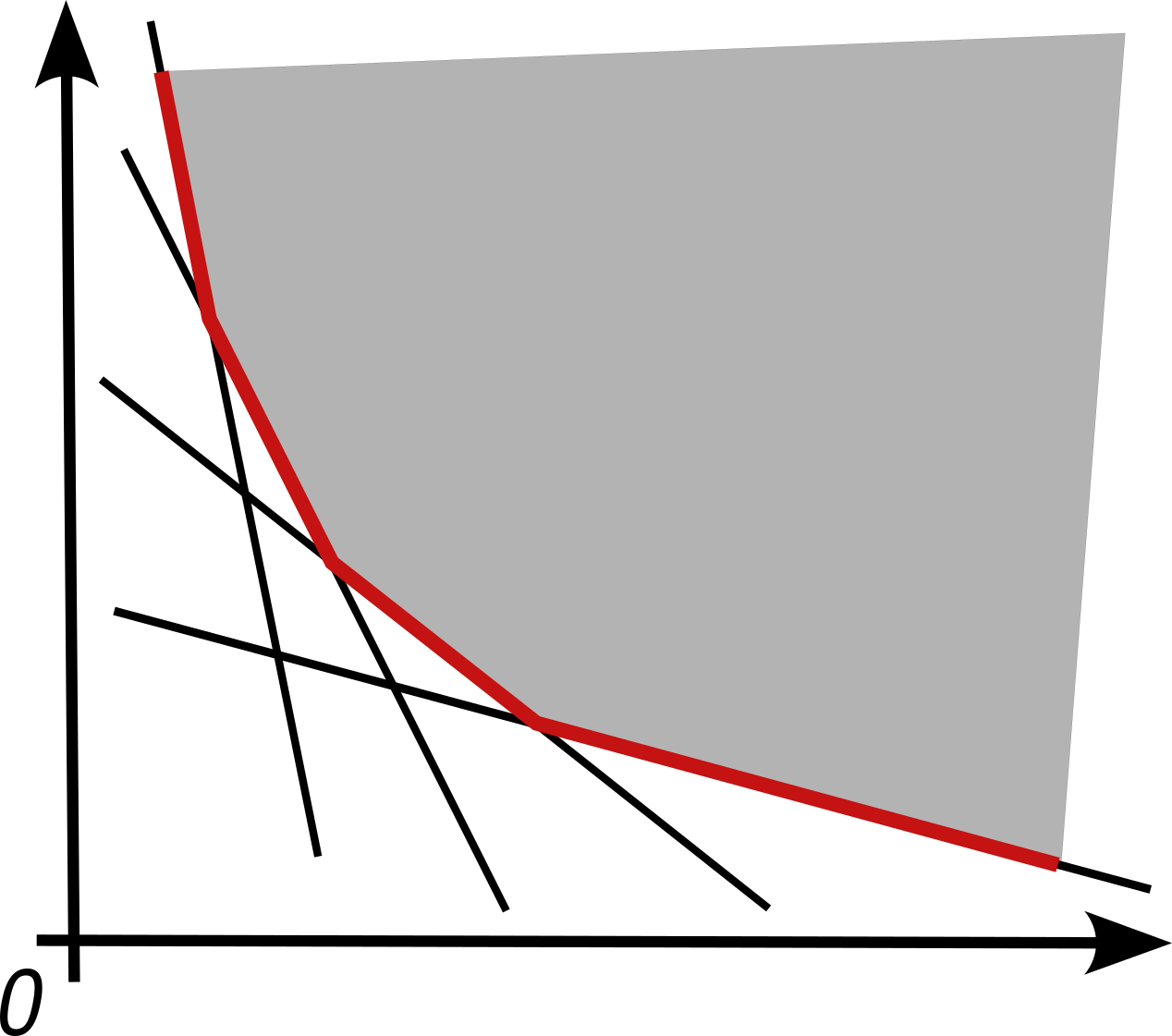}
	\end{center}
	\caption{A generic Leontief preference over linear composite goods. Such preferences have upper contour sets with piecewise linear boundary and
 are indecomposable in the domain of all homothetic preferences. \label{fig_extreme_full}
	}
\end{figure}

\begin{proposition}\label{prop_extreme_points_full_domain}
For any number of goods $n$,  Leontief preferences over linear composite goods~\eqref{eq_extreme_full_domain} are indecomposable in the domain of all homothetic preferences. 
\end{proposition}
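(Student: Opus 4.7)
\medskip\noindent\textbf{Proof plan.} By Corollary~\ref{cor_indecomposable_as_extreme}, it suffices to show that $\ln \E$ is an extreme point of $\mathcal{L}_\D$, where $\D$ is the domain of all homothetic preferences. So I will assume $\ln \E = \lambda \ln \E_1 + (1-\lambda)\ln \E_2$ for some $\lambda\in(0,1)$ and some homothetic expenditure functions $\E_1, \E_2$ (equivalently $\E = \E_1^\lambda \E_2^{1-\lambda}$ pointwise), and aim to deduce that $\E_1, \E_2$ are positive scalar multiples of $\E$, so that $\succsim_1 = \succsim_2 = \succsim$ and the decomposition is trivial.

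The core of the plan is a one-dimensional rigidity lemma: if $\phi_1, \phi_2\colon I \to \R_{++}$ are concave on an interval $I$ and $\phi_1^\lambda \phi_2^{1-\lambda}$ agrees with an affine function $\ell$ on $I$, then each $\phi_j$ is itself affine on $I$ and the ratio $\phi_1/\phi_2$ is constant. I would prove it by fixing $t_0 \ne t_1$ in $I$ and $\mu \in (0,1)$, setting $t = \mu t_0 + (1-\mu) t_1$, and chaining two inequalities in terms of $G(x,y) = x^\lambda y^{1-\lambda}$:
\begin{equation*}
\ell(t) = G\big(\phi_1(t), \phi_2(t)\big) \geq G\big(\mu\phi_1(t_0)+(1-\mu)\phi_1(t_1),\ \mu\phi_2(t_0)+(1-\mu)\phi_2(t_1)\big) \geq \mu \ell(t_0) + (1-\mu)\ell(t_1) = \ell(t).
\end{equation*}
The first inequality uses concavity of $\phi_1, \phi_2$ plus strict monotonicity of $G$ in each argument; the second uses joint concavity of $G$ on $\R_{++}^2$. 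The sandwich must collapse to equalities. Tightness in the first step forces Jensen-equality for each $\phi_j$, i.e.\ $\phi_j$ is affine on $[t_0,t_1]$; tightness in the second step, combined with the rank-one degeneracy of the Hessian of $G$ along radial directions, forces $(\phi_1(t_0),\phi_2(t_0))$ and $(\phi_1(t_1),\phi_2(t_1))$ to be proportional, i.e.\ $\phi_1(t_0)/\phi_2(t_0) = \phi_1(t_1)/\phi_2(t_1)$.

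Next, I would use the polyhedral structure of $\E$. By LP duality applied to $u(\vec{x})=\min_{\vec{a}\in A}\langle \vec{a},\vec{x}\rangle$, the expenditure function $\E$ is piecewise linear, concave, and homogeneous, so $\R_{++}^n$ decomposes into cones $C_\alpha$ on each of whose interiors $\E(\vec{p}) = \langle \vec{v}_\alpha, \vec{p}\rangle$ for some $\vec{v}_\alpha\in\R_+^n$. Applying the lemma to the restrictions of $\E,\E_1,\E_2$ to every line segment inside $\mathrm{int}(C_\alpha)$ shows that $\E_1$ and $\E_2$ are affine on each such segment, hence affine on the whole open convex set $\mathrm{int}(C_\alpha)$; combined with positive homogeneity of expenditure functions, they are linear on $\mathrm{int}(C_\alpha)$. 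Constancy of $\E_1/\E_2$ along each segment through a fixed point then gives $\E_1 = c_\alpha \E_2$ on $\mathrm{int}(C_\alpha)$, and $\E = \E_1^\lambda \E_2^{1-\lambda}$ upgrades this to $\E_j = c_{j,\alpha} \E$ on $\mathrm{int}(C_\alpha)$ with $c_{1,\alpha}^\lambda c_{2,\alpha}^{1-\lambda} = 1$.

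Finally, I would glue across cones by continuity: any two adjacent cones $C_\alpha, C_{\alpha'}$ share a common facet on which $\langle \vec{v}_\alpha, \vec{p}\rangle = \langle \vec{v}_{\alpha'}, \vec{p}\rangle > 0$, and continuity of $\E_j$ forces $c_{j,\alpha} = c_{j,\alpha'}$; since $\R_{++}^n$ is connected, the constants coincide across all cones, giving $\E_j = c_j \E$ globally. Because expenditure functions are only defined up to a positive scalar, this means $\succsim_1 = \succsim_2 = \succsim$, so no non-trivial decomposition exists and $\succsim$ is indecomposable. The main obstacle is the 1D rigidity lemma: the argument pivots on the exact rank-one degeneracy of the Hessian of $G$ along radial directions, which is precisely what makes ``$\ln$ of an affine function'' maximally rigid inside the convex set $\mathcal{L}_\D$.
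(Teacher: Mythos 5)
Your proposal is correct and is essentially the paper's own argument: your sandwich chain and its equality analysis are exactly the paper's Lemma~\ref{lm_Jensen} (strict concavity of $t_1^{\alpha}t_2^{1-\alpha}$ off rays through the origin) combined with concavity and monotonicity of $\E_1,\E_2$, applied to points of a common linearity region of the piecewise-linear expenditure function $\E$. The only difference is organizational—the paper runs the same two inequalities by contradiction at a pair of points in one linearity region where $\E_1/\E_2$ takes different values, whereas you package them as a one-dimensional rigidity lemma and then glue the cone-wise proportionality constants by continuity—so no genuinely new idea separates the two proofs.
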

 We provide a detailed proof of Proposition \ref{prop_extreme_points_full_domain} in~Appendix~\ref{app_proof_indecomposable_full_domain}, and sketch the proof later in this subsection. Proposition~\ref{prop_extreme_points_full_domain} implies that linear and Leontief preferences are indeed indecomposable.
Another immediate corollary is that aggregation of linear and Leontief preferences together is far from giving the full domain. Any preference of the form~\eqref{eq_extreme_full_domain} is indecomposable and, hence, cannot be represented as an aggregation of linear or Leontief preferences unless it is linear or Leontief itself.  For example, one can take a  preference $\succsim$ corresponding to  $$u(\vec{x})=\min\{x_1+2\cdot x_2,\ 2\cdot x_1+x_2\}.$$
The corollary can be strengthened. Expenditure shares for $\succsim$ are not monotone, i.e., $\succsim$ exhibits neither substitutability nor complementarity. Since $\succsim$ is indecomposable, we conclude that not every preference can be represented as 
an aggregation of preferences exhibiting substitutability and preferences exhibiting complementarity.

We see that the full domain has a lot of indecomposable preferences. To formalize this observation, note that piecewise linear concave functions are dense in the set of all concave functions. Accordingly, indecomposable preferences are dense in the full domain $\D$, and extreme points of the set $\mathcal{L}_\D$ of logarithmic expenditure functions  are dense in this set.\footnote{The existence of non-trivial convex sets with dense extreme points highlights that finite-dimensional intuition can be misleading in infinite-dimensional convex geometry \citep*{poulsen1959convex}. In economic literature, such sets have appeared in the context of the $n$-good monopolist problem with $n\geq 2$, where mechanisms can be identified with convex functions on $[0,1]^n$ such that their gradients also belong to $[0,1]^n$ \citep*{manelli2007multidimensional}.} 
\medskip

 The intuition of Proposition~\ref{prop_extreme_points_full_domain} is as follows: since indecomposable preferences are extreme points of the set of logarithmic expenditure functions $\mathcal{L}_\D$, finite-dimensional linear programming intuition suggests that natural candidate indecomposable preferences are those for which the maximal number of constraints defining the domain are active.\fed{reference} Leontief preferences over linear composite goods are those preferences $\succsim$ for which the concavity constraint on the expenditure function $\E$ is active almost everywhere!

\medskip
 
 Let us now sketch the proof of Proposition~\ref{prop_extreme_points_full_domain} here (see Appendix~\ref{app_proof_indecomposable_full_domain} for  a complete proof). A utility function $u$ is of the form~\eqref{eq_extreme_full_domain} if and only if the corresponding expenditure function is also piecewise linear: $\E=\min_{c\in C} \sum_{i=1}^n c_i \cdot p_i $ for a finite or countable $C\subset \R_+^n$. 
To demonstrate indecomposability, we need to show that if  $\E=\E_{1}^\alpha\E_{2}^{1-\alpha}$, then $\E_{1}$ and $\E_{1}$ are proportional to each other (and thus to $\E$). By strict concavity of the function $h(\vec{t})=t_1^{\alpha}\cdot t_2^{1-\alpha}$ on rays not passing through the origin, $\E$ cannot be linear in regions where $\E_{1}$ and $\E_{1}$ are not proportional. Hence, $\E_{1}$ and $\E_{2}$ must be proportional in each of the linearity regions of $\E$. As these regions cover the whole space, $\E_{1}$ and $\E_{2}$ are proportional  everywhere implying that $\succsim$ is indecomposable.
\smallskip

In Appendix~\ref{app_proof_indecomposable_full_domain}, we also explore how close Proposition~\ref{prop_extreme_points_full_domain} is to characterizing all indecomposable preferences. We show that 
if there is a neighborhood of a point where the concavity constraint on $\E$ is inactive, then the preference can be decomposed (Proposition~\ref{prop_strictly_concave_decomposable}). The idea is that we can find small perturbation $\psi=\psi(\vec{p})$
vanishing outside of this neighborhood and such that   $\E_1=\E\cdot (1+\psi)$ and 
$\E_2=\E/ (1+\psi)$ are valid logarithmic expenditure functions. 
Since $\ln \E=1/2 \cdot \ln \E_1+1/2\cdot \ln \E_2 $, the preference corresponding to $\E$ can indeed be decomposed.

Intuitively, a concave function is either piecewise linear or there is a neighborhood where it is strictly concave and, hence, Propositions~\ref{prop_extreme_points_full_domain} and~\ref{prop_strictly_concave_decomposable} seem to cover all possible cases. However, there is a family of pathological examples not captured by this intuition, e.g., concave functions whose second derivative is a non-atomic measure supported on a Cantor set. Proposition~\ref{prop_pathology} formulated and proved in the appendix shows that such pathological preferences are also indecomposable.


\medskip

Proposition~\ref{prop_extreme_points_full_domain} has implications for the geometric mean of convex sets. 
 Consider the collection $\mathcal{X}$ of all closed convex subsets $X$ of $\R_{+}^n$ that do not contain zero and are upward-closed, i.e., all those that can be obtained as upper contour sets of homothetic preferences.
We call a set $X\in \mathcal{X}$ indecomposable if it cannot be represented as the geometric mean $X_1^{\lambda}\otimes X_2^{1-\lambda}$ with distinct  $X_1$ and $X_2$ from $\mathcal{X}$ and $\lambda\in (0,1)$. Proposition~\ref{prop_extreme_points_full_domain} implies that convex polytopes (with a possibly infinite number of faces) are indecomposable.
There is mathematical literature inspired by \cite*{gale1954irreducible} and studying a similar concept of indecomposability where instead of taking weighted geometric means, one takes convex combinations with respect to the Minkowski addition.\footnote{\cite*{gale1954irreducible} calls such sets irreducible.} 
In contrast to our setting, planar sets that are indecomposable in the sense of Gale form a simple parametric family \citep*{gale1954irreducible,silverman1973decomposition}. However, in the dimension $3$ and higher, Gale's indecomposability behaves similarly to ours: indecomposable sets are dense in all convex sets, and one can derive some necessary and some sufficient conditions of indecomposability that almost match each other but yet no criterion is known; see, e.g.,  \citep*{sallee1972minkowski}. \fed{check}

\subsection{The domain of  substitutes and the simplex property}\label{sec_S_indecomposable}

Consider the domain $\mathcal{D}_{S}$ 
 of all homothetic preferences over $n$  substitutes. Linear preferences belong to $\mathcal{D}_{S}$  and are indecomposable since they are indecomposable even in the larger domain of all homothetic preferences by Proposition~\ref{prop_extreme_points_full_domain}. For two goods, there are no other indecomposable preferences in $\mathcal{D}_{S}$.
\begin{proposition}\label{prop_linear_are_all_indecomposable_for_S}
For $n=2$ goods, a preference $\succsim$ is indecomposable in the domain $\mathcal{D}_{S}$ of homothetic preferences with substitutability if and only if $\succsim$ is linear. 
\end{proposition}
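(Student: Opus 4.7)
The plan is to prove both directions by leveraging the unique representation of any preference in $\mathcal{D}_{S}$ as a continuous aggregation of linear preferences, established in Corollary~\ref{cor_MRS_reconstruction}.

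For the ``if'' direction (linear $\Rightarrow$ indecomposable in $\mathcal{D}_{S}$), I would observe that a linear preference is the special case of a Leontief preference over linear composite goods with $|A|=1$ in~\eqref{eq_extreme_full_domain}, and hence indecomposable already in the full homothetic domain by Proposition~\ref{prop_extreme_points_full_domain}. Indecomposability transfers to any subdomain: any decomposition within $\mathcal{D}_{S}$ would also be a decomposition in the full homothetic domain, contradicting indecomposability there.

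For the ``only if'' direction (indecomposable in $\mathcal{D}_{S}$ $\Rightarrow$ linear), I would argue the contrapositive. Suppose $\succsim\in\mathcal{D}_{S}$ is not linear. By Corollary~\ref{cor_MRS_reconstruction}, $\succsim$ arises as a continuous aggregation of linear preferences indexed by a unique probability measure $\mu$ on the space of marginal rates of substitution $\mrs=v_1/v_2\in[0,+\infty]$. Because $\succsim$ is not linear, $\mu$ is not a Dirac mass, so its support contains at least two distinct points. Pick any Borel set $A$ with $0<\mu(A)<1$, set $\alpha=\mu(A)$, and define the conditional measures $\mu_1=\mu(\,\cdot\mid A)$ and $\mu_2=\mu(\,\cdot\mid A^c)$. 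These are distinct probability measures (their supports are disjoint) with $\mu=\alpha\mu_1+(1-\alpha)\mu_2$. By Proposition~\ref{prop_linear_gwnerate_S_for_2_goods}, each $\mu_i$ defines a preference $\succsim_i\in\mathcal{D}_{S}$ via continuous aggregation of linear preferences, and the uniqueness part of Corollary~\ref{cor_MRS_reconstruction} forces $\succsim_1\neq\succsim_2$. Linearity of the integral~\eqref{eq_price_index_linear_aggregate_appendix} in $\mu$ then yields
$$\ln \E_{\succsim}=\alpha\,\ln \E_{\succsim_1}+(1-\alpha)\,\ln \E_{\succsim_2},$$
which by Theorem~\ref{th_representative_index} exhibits $\succsim$ as the aggregation of $\succsim_1$ and $\succsim_2$ with relative budgets $\alpha$ and $1-\alpha$, contradicting indecomposability of $\succsim$ in $\mathcal{D}_{S}$.

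The main obstacle I anticipate is handling boundary atoms: Corollary~\ref{cor_MRS_reconstruction} allows $\mu$ to carry mass at $\mrs=0$ or $\mrs=+\infty$ (corresponding to the two single-good linear preferences), and one must check that a non-Dirac $\mu$ can always be split into two distinct probability measures even when some of its mass sits on these boundary points. This is dispatched by a short case analysis: if the support of $\mu$ contains at least one interior point together with any other point (interior or boundary), take $A$ to be a small open interval around the interior point with $0<\mu(A)<1$; if $\mu$ is supported only on $\{0,+\infty\}$, then $\mu$ is already a nontrivial convex combination of two boundary Dirac measures, each corresponding to a single-good linear preference, so the split is immediate.
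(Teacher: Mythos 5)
Your proposal is correct and follows essentially the same route as the paper: the ``if'' direction via indecomposability of linear preferences in the full domain (Proposition~\ref{prop_extreme_points_full_domain}), and the ``only if'' direction via the representation of any non-linear preference in $\mathcal{D}_{S}$ as an aggregation of linear preferences from Corollary~\ref{cor_MRS_reconstruction}. The paper's proof simply states that ``any non-linear preference can be decomposed''; your explicit splitting of the representing measure $\mu$ into conditionals $\mu(\,\cdot\mid A)$ and $\mu(\,\cdot\mid A^c)$, with distinctness of the resulting preferences guaranteed by the uniqueness clause of Corollary~\ref{cor_MRS_reconstruction}, is exactly the detail being left implicit there.
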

\begin{figure}
	\begin{center}
		\includegraphics[width=5cm]{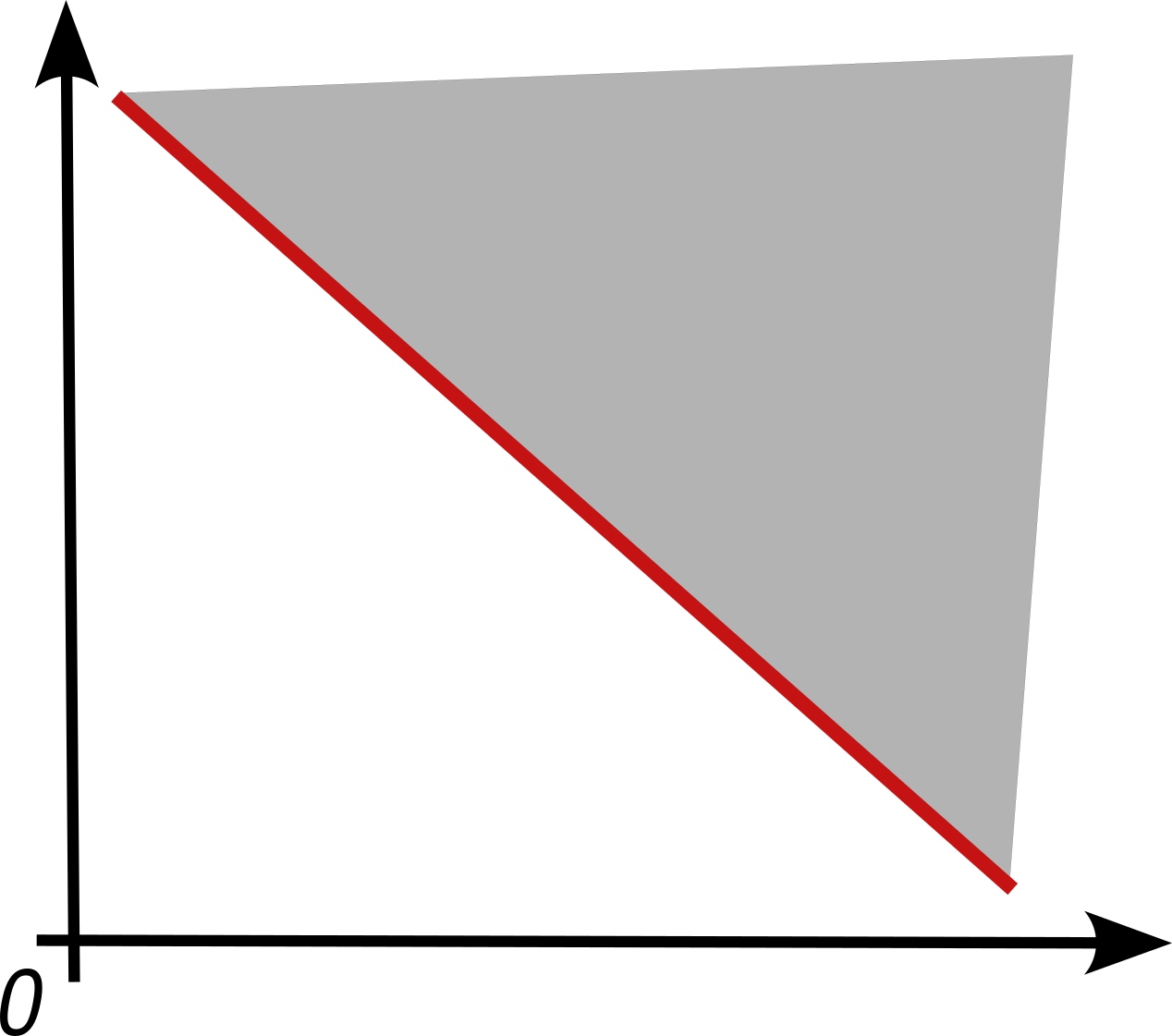}
	\end{center}
	\caption{For $n=2$ substitutes, linear  preferences exhaust all indecomposable ones. \label{fig_extreme_substitutes}
	}
\end{figure}

\begin{proof} From Corollary~\ref{cor_MRS_reconstruction}, we know that any preference over $n=2$ goods exhibiting substitutability can be obtained by aggregating linear preferences. Hence, any non-linear preference can be decomposed, and we get Proposition~\ref{prop_linear_are_all_indecomposable_for_S}. \end{proof}

Corollary~\ref{cor_MRS_reconstruction} 
 provides an explicit Choquet decomposition~\eqref{eq_choquet_representation_of_indecomposable} for $\mathcal{D}_{S}$.
Moreover, the corollary states that the decomposition is unique in the sense that the distribution of linear preferences over the population is pinned down uniquely. This phenomenon has the following geometric interpretation. \fed{rephrase this paragraph}

 Consider a collection of $d$ points in a finite-dimensional linear space
 such that no subset of $k\leq d$ points belongs to a $(k-2)$-dimensional linear subspace. The convex hull of such a collection is called a simplex.  A 
 simplex has the property that any point has a unique representation as a convex combination of extreme points.
The uniqueness of the decomposition characterizes simplices among all other closed convex subsets of a finite-dimensional space.
In the infinite-dimensional space, this property can be used to define a simplex, namely, a compact convex set is called a simplex if each point can be uniquely represented as the average of the extreme points, i.e., the measure in the Choquet integral~\eqref{eq_Choquet} is uniquely defined \citep*{phelps2001lectures}.
Accordingly, we say that a closed domain of preferences is a simplex domain if there is a unique way to represent any preference as an aggregation of indecomposable preferences, i.e., the measure  $\mu$ in~\eqref{eq_choquet_representation_of_indecomposable} is unique. Using this vocabulary, one can rephrase Proposition \ref{prop_linear_gwnerate_S_for_2_goods}  as follows. 
\begin{corollary}\label{cor_simplex_S}
For two goods, the domain of homothetic preferences exhibiting substitutability is a simplex domain. 
\end{corollary}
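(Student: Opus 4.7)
The plan is to derive the uniqueness of the Choquet representation from the explicit reconstruction formula in Corollary~\ref{cor_MRS_reconstruction}. By Proposition~\ref{prop_linear_are_all_indecomposable_for_S}, the set of indecomposable preferences in $\mathcal{D}_S$ for $n=2$ coincides with the set of linear preferences, so a Choquet decomposition~\eqref{eq_choquet_representation_of_indecomposable} of $\succsim\in\mathcal{D}_S$ amounts to a probability measure $\mu$ on the space of linear preferences producing $\succsim$ via the continuous aggregation~\eqref{eq_price_index_linear_aggregate_appendix}. Since a linear preference is determined by the marginal rate of substitution $\mrs=v_1/v_2\in[0,+\infty]$, such a measure $\mu$ can equivalently be viewed as a probability measure on the compact interval $[0,+\infty]$.

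First, I would take the partial derivative $\partial/\partial\ln p_1$ of both sides of~\eqref{eq_price_index_linear_aggregate_appendix} along the lines of the proof of Proposition~\ref{prop_linear_gwnerate_S_for_2_goods}, and use the explicit expenditure shares for linear preferences to obtain the key identity
\begin{equation*}
s_{1}(p_1,p_2)=\mu\!\left(\left\{\tfrac{v_1}{v_2}\geq \tfrac{p_1}{p_2}\right\}\right)
\end{equation*}
for Lebesgue-almost all $(p_1,p_2)\in\R_{++}^2$. Setting $p_2=1$, this says that the function $t\mapsto 1-s_{1}(t,1)$ coincides almost everywhere with the cumulative distribution function of the $\mu$-distribution of $\mrs$ on $(0,+\infty)$.

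Next, I would argue that this determines $\mu$ uniquely. The right-hand side above is a right-continuous non-increasing function of $p_1/p_2$, so it is uniquely determined by its values away from a countable set of discontinuities, and consequently the cdf of $\mrs$ under $\mu$ restricted to $(0,+\infty)$ is pinned down completely. It remains to pin down the atoms of $\mu$ at the two endpoints $\mrs=0$ (i.e.\ $\vec v\propto(0,1)$) and $\mrs=+\infty$ (i.e.\ $\vec v\propto(1,0)$); these are recovered from the limits $\lim_{p_1\to 0^+}s_1(p_1,1)$ and $\lim_{p_1\to+\infty}s_1(p_1,1)$, exactly as in the construction at the end of the proof of Proposition~\ref{prop_linear_gwnerate_S_for_2_goods}. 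Combining these three pieces of information, the measure $\mu$ is determined by $\succsim$, which is precisely the simplex property.

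The only subtlety, and the step I would be most careful about, is handling the boundary of the parameter space and the countable set of points where $s_1(\cdot,1)$ may jump, so that ``almost-everywhere equality'' of monotone functions is upgraded to a genuine uniqueness statement for the underlying Borel probability measure on the compactified interval $[0,+\infty]$. Everything else is either Corollary~\ref{cor_MRS_reconstruction} or a direct translation of the Choquet-integral uniqueness into the language of distributions of $\mrs$.
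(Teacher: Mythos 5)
Your proof is correct and takes essentially the same approach as the paper: the paper obtains this corollary as a direct rephrasing of Proposition~\ref{prop_linear_gwnerate_S_for_2_goods} together with Corollary~\ref{cor_MRS_reconstruction} (the identity $s_1(p_1,p_2)=\mu\left(\left\{v_1/v_2\geq p_1/p_2\right\}\right)$ pins down the distribution of $\mrs$ uniquely, including the endpoint atoms) and Proposition~\ref{prop_linear_are_all_indecomposable_for_S} identifying indecomposables with linear preferences. Your added care about the almost-everywhere identity and the boundary atoms merely makes explicit what the paper leaves implicit; the only slip is that the survival function $t\mapsto\mu\left(\left\{v_1/v_2\geq t\right\}\right)$ is left-continuous rather than right-continuous, which does not affect the argument.
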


For $n\geq 3$ goods, Corollary~\ref{cor_linear_give_proper_subset_of_S} implies that the set of indecomposable preferences in $\mathcal{D}_{S}$ is richer than the linear preference domain. Describing that set explicitly and checking whether $\mathcal{D}_{S}$ is a simplex domain for $n\geq 3$ remains an open question.

\subsection{The domain of  complements}
Let us discuss the domain $\mathcal{D}_{C}$ of homothetic preferences exhibiting complementarity. Leontief preferences are indecomposable in $\mathcal{D}_{C}$ since they are indecomposable in the full domain. By Corollary~\ref{cor_Leontief_does_not_give_all_complements}, aggregation of Leontief preferences does not give the whole $\mathcal{D}_{C}$ even for $n=2$ goods and, hence, there must be other indecomposable preferences. It turns out that indecomposable preferences are dense in $\mathcal{D}_{C}$, and their structure resembles the one for the full domain. We call  a preference $\succsim$  a Leontief preference over Cobb-Douglas composite goods if it corresponds to a utility function
\begin{equation}\label{eq_extreme_C}
    u(\vec{x})=\min_{\vec{a}\in A} \left\{\chi_\vec{a}(\vec{x})\right\},
\end{equation}
where $A$ is finite or countably infinite subset of $\R_{++}\times \Delta_{n-1}$ and each $\vec{a}=(a_0,a_1,\ldots, a_n)\in A$ defines a Cobb-Douglas composite good $\chi_\vec{a}(\vec{x})$ by
$$
\chi_\vec{a}(\vec{x})= a_0\cdot\prod_{i=1}^n x_i^{a_i}.
$$
Cobb-Douglas and Leontief preferences are  particular cases of~\eqref{eq_extreme_C} corresponding, respectively, to a singleton $A=\{\vec{a}\}$ and to $A=\{(a_0^1,\vec{e}_1),\ldots, (a_0^n,\vec{e}_n)\}$ where $\vec{e}_i$ is the vector whose $i$th entry is $1$ and others are $0$. A generic preference~\eqref{eq_extreme_C} is depicted in Figure~\ref{fig_extreme_complements}. We call a Leontief preference $\succsim$ over Cobb-Douglas composite goods non-trivial if the set $A$ contains at least two vectors $\vec{a}$ and $\vec{a}'$ with  $(a_1,\ldots, a_n)\ne (a_1',\ldots, a_n')$. Equivalently, $\succsim$ is non-trivial if it is not a standard Cobb-Douglas preference.
\begin{figure}
	\begin{center}
		\includegraphics[width=5cm]{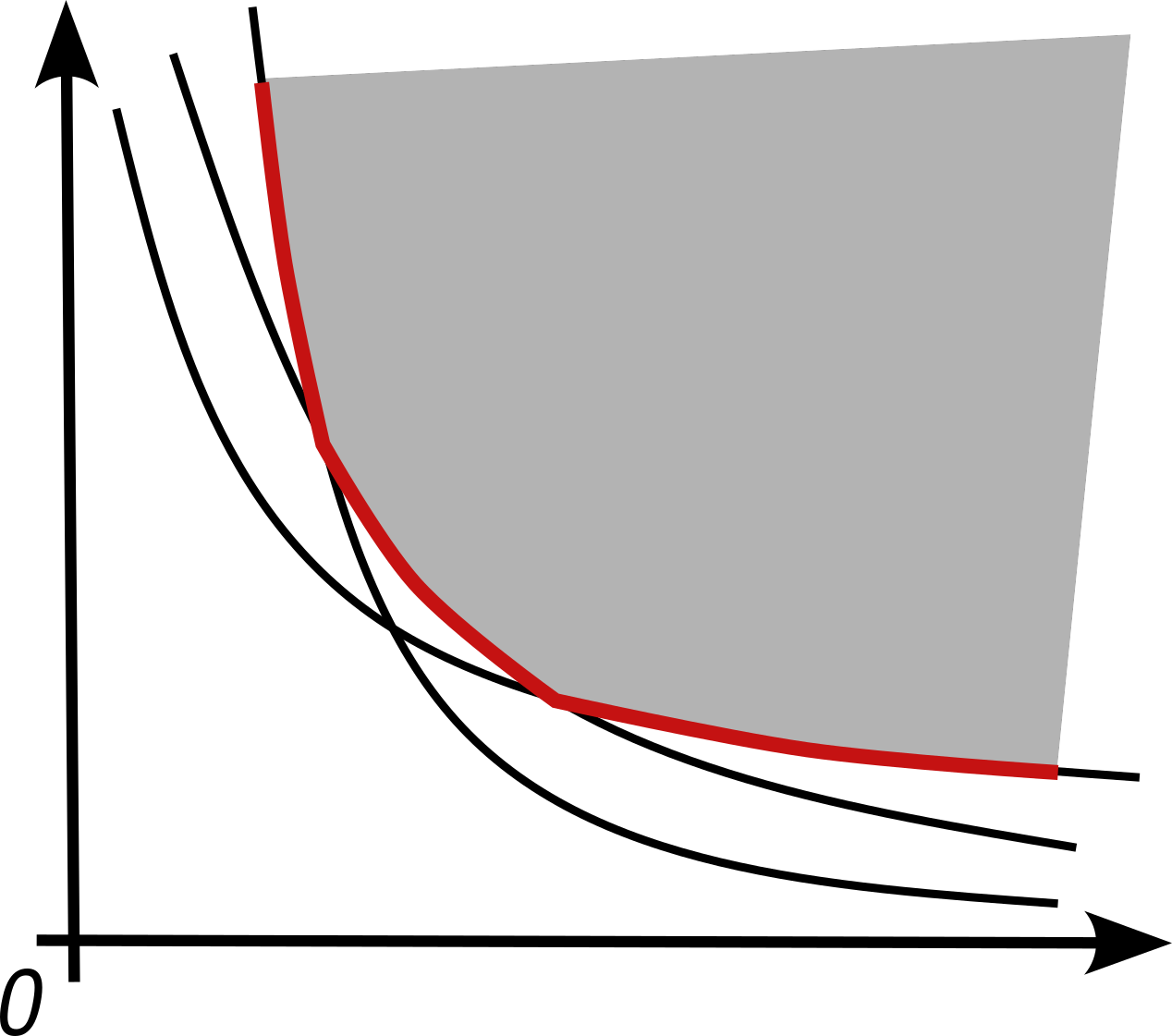}
	\end{center}
	\caption{A generic Leontief preference over Cobb-Douglas composite goods. Such preferences have upper contour sets with piecewise Cobb-Douglas boundary and
 are indecomposable in the domain of all preferences exhibiting complementarity. \label{fig_extreme_complements}
	}
\end{figure}
\begin{proposition}
\label{prop_indecomposable_for_C}
For $n=2$ goods, non-trivial Leontief preferences over Cobb-Douglas composite goods are indecomposable in the domain of homothetic preferences with complementarity.
\end{proposition}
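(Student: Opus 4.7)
The plan is to adapt the strategy of Proposition~\ref{prop_extreme_points_full_domain} to the domain $\mathcal{D}_C$, exploiting that the expenditure function of $u=\min_{\vec{a}\in A}\chi_{\vec{a}}$ is of two types in different regions of price space: Cobb-Douglas (when the demand lies on a single arc $\chi_{\vec{a}}=1$) and linear in $\vec{p}$ (when the demand sits at a kink $\vec{x}^*$ of the indifference curve). First I would switch to the coordinate $\xi=\ln(p_1/p_2)$ and represent each homothetic preference by $\phi(\xi):=\ln \E(p_1,p_2)-\ln p_2$, so that $\phi'=s_1$. A direct computation of $(e^{\phi(\ln p_1)})''$ shows that membership in $\mathcal{D}_C$ is equivalent to two inequalities for the non-negative measure $\phi''$: complementarity gives $\phi''\geq 0$, and concavity of $\E$ in prices gives the upper bound $\phi''\leq \phi'(1-\phi')$, with equality precisely when $\E$ is linear in $\vec{p}$.

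Next I would describe the partition of $\R$ induced by $u$. Any two distinct Cobb-Douglas arcs $\chi_{\vec{a}}=1$ and $\chi_{\vec{a}'}=1$ are transversal in $\R_{++}^2$ (equal slopes force equal exponents, contradicting non-triviality), so the indifference curve of $u$ is piecewise Cobb-Douglas with isolated kinks. Correspondingly, $\R$ splits into alternating intervals of two kinds: Cobb-Douglas intervals, where the demand lies on a single arc, $\E=C_{\vec{a}}p_1^{a_1}p_2^{a_2}$, and $\phi''\equiv 0$; and kink intervals of positive length, where the demand stays at a fixed $\vec{x}^*$, $\E=\langle\vec{x}^*,\vec{p}\rangle$ is linear in prices, and a direct computation yields $\phi''=\phi'(1-\phi')$, saturating the concavity bound. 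Non-triviality guarantees that at least one kink interval exists.

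The core rigidity step proceeds interval by interval. On a Cobb-Douglas interval, $\phi''=0$ and $\phi_k''\geq 0$ force $\phi_1''=\phi_2''=0$, so each $\phi_k'$ is constant there. On a kink interval, I would write the chain
\begin{equation*}
\phi'(1-\phi')=\phi''=\alpha\phi_1''+(1-\alpha)\phi_2''\leq \alpha s_{1,1}(1-s_{1,1})+(1-\alpha)s_{2,1}(1-s_{2,1})\leq \phi'(1-\phi'),
\end{equation*}
where the last inequality uses concavity of $t\mapsto t(1-t)$ applied to $s_1=\alpha s_{1,1}+(1-\alpha)s_{2,1}$. Equality throughout, combined with \emph{strict} concavity of $t(1-t)$, forces $s_{1,1}=s_{2,1}=\phi'$ pointwise on every kink interval.

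The hardest part will be gluing across the boundaries between intervals. I would handle it by invoking continuity of $s_1$ for preferences in $\mathcal{D}_C$, which follows from~\eqref{eq_budget_share_characterization_for_two}: complementarity implies $Q(r)/r$ is non-increasing, and combined with the monotonicity of $Q$ this rules out upward jumps of $Q$, hence of $s_1$. Thus each $\phi_k'$ is continuous; its constant value on a Cobb-Douglas interval must match $\phi'$ at the endpoint shared with an adjacent kink interval, so $\phi_k'\equiv\phi'$ on that Cobb-Douglas interval too. Propagating across the alternating partition yields $\phi_1'=\phi_2'=\phi'$ on all of $\R$, hence $\phi_1=\phi_2=\phi$ up to additive constants, i.e., $\succsim_1=\succsim_2=\succsim$. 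Therefore $\succsim$ admits no non-trivial decomposition in $\mathcal{D}_C$.
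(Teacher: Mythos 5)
Your proof is correct, and although it shares the paper's overall skeleton --- partition the price-ratio line into intervals where $\E$ is Cobb--Douglas and intervals where it is linear (kinks of the indifference curve), prove rigidity of the components on each type of interval, then glue across boundaries --- the key devices are genuinely different at each step. Where the paper handles the linear regions by re-running the strict-concavity-of-the-geometric-mean argument of Lemma~\ref{lm_Jensen} to force $\E_1\propto\E_2$ there, you encode membership in $\mathcal{D}_C$ by the two-sided differential bound $0\le\phi''\le\phi'(1-\phi')$, detect linearity of $\E$ as saturation of the upper bound, and obtain rigidity on kink intervals from strict concavity of $t\mapsto t(1-t)$ applied to $s_1=\alpha s_{1,1}+(1-\alpha)s_{2,1}$; on Cobb--Douglas intervals your argument ($\phi''=0$ plus $\phi_k''\ge 0$) is the paper's ``if one share rises the other must fall'' in second-derivative language. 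For the gluing, the paper rules out a jump of $s_{k,1}$ at a boundary by a monotonicity-violation argument, while you first prove continuity of expenditure shares on all of $\mathcal{D}_C$ from the $Q$-characterization~\eqref{eq_budget_share_characterization_for_two} (precisely the claim in the paper's footnote on complements) and then match constants by continuity. Your formulation buys a more systematic package --- both constraints defining $\mathcal{D}_C$ become a single inequality on $\phi''$ and both rigidity steps become one-line computations --- at the cost of measure-theoretic care you should make explicit: a priori $\phi_k''$ are non-negative measures, so the upper bound should be read as saying their singular parts vanish and their densities obey $\phi_k''\le\phi_k'(1-\phi_k')$ Lebesgue-a.e.; the equality chain on kink intervals then holds a.e., and $s_{1,1}=s_{2,1}=\phi'$ upgrades from a.e.\ to everywhere by the continuity you established, after which the propagation across (possibly accumulating) intervals goes through.
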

The requirement of non-triviality is needed since standard Cobb-Douglas preferences can be decomposed as an aggregation of $\succsim_i$ corresponding to $u_i(\vec{x})=x_i$; see Example~\ref{ex_CobbDouglas_as_aggregation_of_linear}. Note that $\succsim_i$---which can be seen as either degenerate linear or degenerate Cobb-Douglas preference---is indecomposable in $\D_C$ since it is indecomposable even in the full domain by Proposition~\ref{prop_extreme_points_full_domain}. 

Proposition~\ref{prop_indecomposable_for_C} is proved in Appendix~\ref{app_proof_indecomposable_complement}. The idea is similar to Proposition~\ref{prop_extreme_points_full_domain} dealing with indecomposability in the full domain: indecomposable preferences correspond to expenditure functions~$\E$ with the maximal number of active constraints. In contrast to Proposition~\ref{prop_extreme_points_full_domain} where the concavity of $\E$ was the only constraint that matters, now we have the new monotonicity constraint on expenditure shares. Leontief preferences over Cobb-Douglas composite goods are obtained if the space is partitioned into regions where the monotonicity constraint is active (expenditure shares are constant) or the concavity constraint is active (the expenditure function is linear). The former regions correspond to hyperbolic parts of the upper contour sets, and the latter regions, to cusps.
\fed{Discuss why complements are hard and substitutes are easy}

%

\subsection{Application: preference identification}\label{sec_identification}

Market demand reflects individual preferences, but information loss is unavoidable. For example, aggregate behavior does not distinguish populations where a pair of agents swapped their preferences and incomes or where a pair of agents with identical preferences is replaced with one agent with the joint income. We can still ask whether market demand determines the distribution of preferences over the population, i.e., whether, by looking at the aggregate behavior, it is possible to determine what fraction of the population's income corresponds to agents with preferences of a particular kind. 

Consider a population of consumers with homothetic preferences from some domain $\D$. An analyst observes market demand generated by this population for any vector of prices but has no information on the income distribution nor on the size of the population. For any subset of preferences $\D'\subset 
\D$, the analyst aims to identify what fraction of the total income corresponds to agents in $\D'$. \fed{Drop population size}

In general, identification is impossible. To illustrate, consider Example~\ref{ex_CobbDouglas_as_aggregation_of_linear} again. The aggregate demand corresponding to $u_{\agr}(x_1,x_2)=x_1^{1/3}\cdot x_2^{2/3}$ can be generated by a population where each agent has Cobb-Douglas preferences with budget shares $\left(1/3,2/3\right)$ or, alternatively, by the population where $1/3$ of the total income is earned by agents with preference $u_{1}(\vec{x})=x_1$ and the remaining $2/3$ is by those with  $u_{2}(\vec{x})=x_2$.

The domain $\D$ of linear preferences over $n=2$ goods is an exception. A linear preference over two goods is determined by its $\mrs$ captured by the ratio ${v_1}/{v_2}$. By Corollary~\ref{cor_MRS_reconstruction}, the fraction of income corresponding to consumers with  $\mrs$ above a certain threshold $\alpha$ is equal to the fraction of income spent by the population on the first good for prices $p_1=\alpha\cdot  p_2$, i.e.,
$$\mu\left(\left\{\frac{v_1}{v_2}\geq\frac{p_1}{p_2}\right\}\right)=s_{\agr,1}(\vec{p})=\frac{p_1\cdot D_{\agr,1}(\vec{p},B)}{p_1\cdot D_{\agr,1}(\vec{p},B)+ p_2\cdot D_{\agr,2}(\vec{p},B)}.$$
Hence, even a few observations of market demand $D_\agr$ at non-collinear price vectors can give a good understanding of the preference distribution over the population.

More generally, the distribution of preferences from a domain $\D$ can be identified if any preference $\succsim$ obtained by aggregation of preferences from $\D$
cannot be decomposed over $\D$ in a different way. The following corollary provides a geometric interpretation of this property relying on the notion of simplex domains from Section~\ref{sec_S_indecomposable}.
\begin{corollary}\label{cor_identification}
If the completion $\D^\inv$
of $\D$ is a simplex domain, and $\D$ consists of indecomposable preferences, then the distribution of preferences over the population can be identified from observations of the market demand.
\end{corollary}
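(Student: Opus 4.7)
The plan is to chain together three results already established in the paper: first, market demand pins down the aggregate preference; second, the aggregate preference admits a Choquet-type representation as a continuous aggregation of indecomposable preferences of $\D^\inv$; third, the simplex assumption makes such a representation unique.

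First, I would note that by formula~\eqref{eq_demand_price_index}, observing market demand at all prices determines the logarithmic expenditure function of the aggregate consumer up to an additive constant, hence determines $\succsim_\agr$. Since $\succsim_\agr$ is obtained by aggregating preferences from $\D$, it lies in the completion $\D^\inv$. If the analyst hypothesizes that the population's preferences are distributed according to a probability measure $\nu$ supported on $\D$ with weights equal to relative incomes $\beta_k$, then Theorem~\ref{th_representative_index} (or, more generally, the continuous version from Theorem~\ref{th_continuous_aggregation}) says
\begin{equation*}
\ln \E_{\succsim_\agr}(\vec{p}) \;=\; \int_{\D} \ln \E_{\succsim'}(\vec{p})\, \dd \nu(\succsim').
\end{equation*}
So identifying the income-weighted preference distribution over the population amounts to recovering $\nu$ from the aggregate logarithmic expenditure function.

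Next, by assumption every $\succsim'\in \D$ is indecomposable in $\D$. I would first argue that indecomposability in $\D$ implies indecomposability in $\D^\inv$: if $\succsim'\in \D$ could be written as a non-trivial aggregation of two preferences in $\D^\inv$, then applying Theorem~\ref{th_choquet} to each of those two preferences would yield a non-trivial decomposition of $\succsim'$ as a continuous aggregation of indecomposable preferences of $\D^\inv$, contradicting the simplex property applied at $\succsim'$ (which trivially decomposes as a point mass on itself). Hence $\D \subseteq (\D^\inv)^\ind$, and $\nu$ is a measure supported on the extreme points of (the logarithmic expenditure image of) $\D^\inv$.

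Finally, the simplex property of $\D^\inv$ says that any preference in $\D^\inv$ has a unique representing measure on $(\D^\inv)^\ind$ in the Choquet integral~\eqref{eq_choquet_representation_of_indecomposable}. Applied to $\succsim_\agr$, this uniqueness forces the hypothesized $\nu$ to equal the unique Choquet measure; in particular, for any subset $\D'\subset \D$ the income share $\nu(\D')$ is pinned down by the aggregate demand, giving identification. The only delicate step, which I would handle carefully, is the passage from ``indecomposable in $\D$'' to ``extreme in the logarithmic-expenditure image of $\D^\inv$''; this relies on invoking Corollary~\ref{cor_indecomposable_as_extreme} together with the fact that $\L_{\D^\inv}$ is the closed convex hull of $\L_\D$ (Corollary~\ref{cor_convex_hull}), so that an extreme point of $\L_{\D^\inv}$ that lies in $\L_\D$ is automatically extreme within $\L_\D$, and conversely every point of $\L_\D$ that is indecomposable in $\D$ is not in the interior of a segment with endpoints in $\L_{\D^\inv}$ by the uniqueness of the Choquet representation.
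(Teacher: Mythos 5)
Your first and third steps reproduce exactly the argument the paper has in mind (the corollary is stated there without a separate proof: market demand pins down $\succsim_\agr$ via \eqref{eq_demand_price_index}, the population's income-weighted distribution $\nu$ on $\D$ is a representing measure in \eqref{eq_choquet_representation_of_indecomposable}, and the simplex property delivers uniqueness of that measure). The genuine gap is your bridging claim that indecomposability in $\D$ implies extremality in $\L_{\D^\inv}$. The contradiction you invoke does not exist: a simplex domain asserts uniqueness of the representing measure \emph{among measures supported on $(\D^\inv)^\ind$}, and the point mass at $\succsim'$ qualifies as such a measure only if $\succsim'$ is itself indecomposable in $\D^\inv$ --- which is precisely what you are trying to prove, so the argument is circular. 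If $\succsim'$ is not extreme in $\L_{\D^\inv}$, the point mass at $\succsim'$ is simply not a competing Choquet measure, and the non-trivial measure you build from Theorem~\ref{th_choquet} contradicts nothing: in an ordinary finite-dimensional simplex, every non-vertex point has both a trivial self-representation and a unique barycentric one. The same circularity appears in your closing sentence about segments with endpoints in $\L_{\D^\inv}$.

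Moreover, the bridging claim is false, which shows that the corollary's hypothesis cannot be read as ``indecomposable in $\D$.'' Take $n=3$ goods and let $\D=\{\succsim_0,\succsim_1,\succsim_2,\succsim_3\}$ consist of Cobb-Douglas preferences with exponent vectors $\vec{a}_1,\vec{a}_2,\vec{a}_3\in\Delta_2$ affinely independent and $\vec{a}_0=\frac{1}{3}(\vec{a}_1+\vec{a}_2+\vec{a}_3)$. Since $\ln \E_{\vec{a}}(\vec{p})=\sum_i a_i\ln p_i$ is affine and injective in $\vec{a}$ (up to constants), no element of $\D$ lies on a segment joining two other elements, so every element of $\D$ is indecomposable \emph{in $\D$}; the completion $\D^\inv$ is the set of Cobb-Douglas preferences with exponents in $\conv\{\vec{a}_1,\vec{a}_2,\vec{a}_3\}$, which is a simplex domain with $(\D^\inv)^\ind=\{\succsim_1,\succsim_2,\succsim_3\}$. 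Yet a single consumer with preference $\succsim_0$ and a three-consumer equal-income population with preferences $\succsim_1,\succsim_2,\succsim_3$ generate identical market demand, so identification fails; this is a three-good variant of the paper's own non-identification example at the start of Section~\ref{sec_identification}, with the barycenter moved off all pairwise segments. The hypothesis must therefore be read as: every preference in $\D$ is indecomposable in the completion, i.e., $\L_\D\subseteq\left(\L_{\D^\inv}\right)^\ext$ --- which is what the paper's examples satisfy (finite collections ``none of which can be obtained as an aggregation of the others,'' linear or Leontief preferences, which are indecomposable even in the full domain). Under that reading your second step becomes unnecessary, and your first and third steps already constitute the paper's proof.
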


Apart from linear preferences over two goods, there are many other domains satisfying the requirements of Corollary~\ref{cor_identification}. One can take $\D$ given by any finite collection of preferences $\{\succsim_1,\dots, \succsim_q\}$, none of which can be obtained as an aggregation of the others. For example, if $q$ equals the number of goods $n$ and  each $\succsim_k$ corresponds to $u(\vec{x})=x_k$, then the income fraction of consumers with preference $\succsim_k$ is equal to the expenditure share $s_{\agr,k}(\vec{p})$ at any price $\vec{p}$; see also Example~\ref{ex_CobbDouglas_as_aggregation_of_linear}. We stress that just one observation of aggregate behavior at any particular vector of prices $\vec{p}$ turns out to be enough to determine the distribution of preferences. The origin of this phenomenon is not the orthogonality of preferences (which we do not require) but the fact that the dimension of the domain of preferences does not exceed the dimension of the consumption space. To illustrate this point, note that, if  $\succsim_1,\dots, \succsim_q$ are Cobb-Douglas preference with vectors of parameters  $\vec{a}_1,\ldots, \vec{a}_q$ that are linearly independent  (possible only if  $q\leq n$), then one observation of market demand also gives a linear system enough for identification.

Another domain $\D$ satisfying conditions of Corollary~\ref{cor_identification} is the domain of Leontief preferences over two goods. By Corollary~\ref{cor_complex_analysis}, any preference from its completion $\D^\inv$ can be uniquely decomposed over Leontief preferences. Hence, $\D^\inv$ is a simplex domain, and Leontief preferences are indecomposable in it. We conclude that, in theory, the distribution of Leontief preferences can be identified.\footnote{A peculiarity is that the Stieltjes-Perron inversion formula underlying Corollary~\ref{cor_complex_analysis} requires continuation of demand to complex prices. Therefore, it guarantees identification but gives no practical recipe for reconstructing the distribution of preferences for an analyst who observes demand for real prices only. Instead, the analyst can use real-inversion techniques for the Stieltjes transform, e.g., \citep*{widder1938stieltjes,love1980real}.}

\bibliography{main}

\appendix

\section{Convex analysis basics}\label{sec_convex_basics}
\paragraph{Superdifferentials.} Let $X$ be a convex subset of $\R^n$ and $f\colon X\to \R$ a concave function.
Any such function $f$ is continuous in the relative interior of $X$ but can have discontinuities on the boundary \cite[7.24 Theorem]{charalambos2013infinite}. The {superdifferential} of  $f$ is defined by
\begin{equation}\label{eq_superdifferential}
	\partial f(\vec{x}) = \{\vec{p}\in \R^n\ \colon \ f(\vec{y}) \leq f(\vec{x}) + \langle \vec{p},\, \vec{y} - \vec{x} \rangle, \ \ \ \forall \vec{y}\in X\}.
\end{equation}
The superdifferential is non-empty in the relative interior of $X$. Recall that the gradient $\nabla f(x)$ is the vector of partial derivatives
$$\nabla f(\vec{x})=\left(\frac{\partial f}{\partial x_1},\ldots, \frac{\partial f}{\partial x_n}\right).$$
The gradient is defined for any $\vec{x}$ where partial derivatives exist. 
At any such point,  the superdifferential $\partial f(\vec{x})$ consists of just one element: $\partial f(\vec{x})=\{\nabla f(\vec{x})\}$ \citep[Theorem 25.1]{rockafellar1970convex}. By the Alexandrov theorem, a concave function is twice differentiable except for a set of zero Lebesgue; see \cite[Theorem 25.5]{rockafellar1970convex}. In particular, the gradient $\nabla f(\vec{x})$ is defined almost everywhere and coincides with the superdifferential.


\paragraph{Application to demand and expenditure functions.}
Consider a consumer with budget $b>0$ and homothetic preferences $\succsim$ on $\R_+^n$ represented by a homogeneous utility function $u$. The demand $D(\vec{p},b)$, considered as a function of prices and the budget, is referred to as the Marshallian demand. The demand as a function of prices and the utility level $w>0$ is referred to as the Hicksian demand:
$$H(\vec{p},w)=\argmin_{\vec{x}:\, u(\vec{x})\geq w} \langle\vec{p},\vec{x}\rangle.$$

Sheppard's lemma for homothetic preferences gives the following identity:
$$H(\vec{p},w)= w\cdot \partial E(\vec{p}),
$$
where $\partial E(\vec{p})$ is the superdifferential of the expenditure function. This identity holds for all $\vec{p}\in \R_{++}^n$ including those points where $E$ is not differentiable \cite[][p.141]{mas1995microeconomic}. 

By the utility-maximization/cost-minimization duality \citep*{diewert1982duality}, $D(\vec{p},b)=H(\vec{p},w)$ if $w$ is such that $\langle\vec{p},\vec{x}\rangle=b$ for $\vec{x}\in H(\vec{p},w)$. Hence, $D(\vec{p},b)= w\cdot \partial E(\vec{p})$
for some $w=w(b)$. Choosing $w(b)$ so that each bundle from the right-hand side has the price of $b$, we get
$$D(\vec{p},b)= b\cdot \frac{\partial E(\vec{p})}{E(\vec{p})}=b\cdot \partial \ln E(\vec{p}),$$
where we used the Euler identity for $1$-homogeneous functions: $\langle \partial E(\vec{p}),\vec{p}\rangle=E(\vec{p})$. 

By the Alexandrov theorem, there exists a set $A\subset \R_{++}^n$ such that $ \R_{++}^n\setminus A$ has zero Lebesgue measure and $\E$ is differentiable for any $\vec{p}\in A$. By homogeneity of $\E$, the set $A$ can be selected so that $\vec{p}\in A$ $\Rightarrow $ $\alpha\cdot \vec{p}\in A$ for any $\alpha>0$.
For $\vec{p}\in A$, the superdifferential $\partial \E$ consists of just one element, the gradient  $\partial \E(\vec{p})=\{\nabla \E(\vec{p}) \}$. 
Consequently, the Marshallian demand $D(\vec{p},b)$ contains just one bundle on the set $\vec{p}\in A$ of full measure and can be thought of as a single-valued function
$$D(\vec{p},b)= b\cdot \nabla \ln E(\vec{p})
$$
defined almost everywhere.

For a bundle $\vec{x}\in D(\vec{p},b)$, the expenditure share of a good~$i$ is defined by $x_i\cdot p_i/ b$. For $\vec{p}\in A$, the demand is single-valued, and so the expenditure share
is a single-valued function of prices defined almost everywhere and satisfying the identity
$$\vec{s}(\vec{p},b)=\left(p_1\cdot \frac{\partial \ln \E}{\partial p_1},\ldots,p_n\cdot \frac{\partial \ln \E}{\partial p_n}\right)=\left(\frac{\partial \ln \E}{\partial \ln p_1},\ldots, \frac{\partial \ln \E}{\partial \ln p_n}\right).$$

\section{Topology on preferences and integration}\label{app_topology}

There are two high-level reasons why we need a topology on preferences. The topology is necessary to define the closure of preference domains in our discussion of completion, but, most importantly, the topology is needed to formalize integration over preferences and to apply Choquet theory \citep*{phelps2001lectures}. Recall that  Choquet theory deals with compact convex subsets of locally convex topological vector spaces. 
Our goal is to identify the domain of all homothetic preferences with 
a compact convex subset of a Banach space (a complete normed and, hence, locally convex vector space).

We represent a homothetic preference $\succsim$ by its logarithmic expenditure function $\ln \E$. We call two functions $f$ and $g$ equivalent if $f-g=\const$.
Since the expenditure function is defined up to a multiplicative factor, each preference corresponds to the class of equivalent logarithmic expenditure functions.

Let $\L$ be the set of classes of equivalent continuous functions $f$ on $\R_{++}^n$ that can be obtained as logarithmic expenditure functions of homothetic preferences. 
%
The set $\L$ is in one-to-one correspondence with the domain of homothetic preferences. Hence, to define a topology and integration for preferences, it is enough to define them for $\L$. We first introduce a metric structure. To motivate the definition of a distance, we need some estimates on the magnitude of expenditure functions.

\begin{lemma}
For any expenditure function $\E$, the following inequality holds
\begin{equation}\label{eq_lipshitz_price_index}
\left|\ln \E(\vec{p})-\ln \E(\vec{p}')\right|\leq \max_i \left|\ln p_i -\ln p_i'\right|
\end{equation}
for any pair of price vectors $\vec{p}$ and $\vec{p}'$ from $\R_{++}^n$.
\end{lemma}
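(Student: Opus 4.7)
The plan is to combine monotonicity and homogeneity of $\E$ to turn a multiplicative sandwich on prices into a multiplicative sandwich on $\E$, and then take logs.

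First I would set $\alpha = \max_i |\ln p_i - \ln p_i'|$. By definition of $\alpha$, for every $i$ we have $e^{-\alpha} \leq p_i/p_i' \leq e^{\alpha}$, equivalently $e^{-\alpha} p_i' \leq p_i \leq e^{\alpha} p_i'$. Hence $e^{-\alpha}\vec{p}' \leq \vec{p} \leq e^{\alpha}\vec{p}'$ componentwise.

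Next, since $\E$ is non-decreasing in prices (this is listed in the preliminaries, with the footnote explaining that non-negativity and concavity force monotonicity), monotonicity gives
$$\E(e^{-\alpha}\vec{p}') \;\leq\; \E(\vec{p}) \;\leq\; \E(e^{\alpha}\vec{p}').$$
Applying $1$-homogeneity of $\E$ to the outer terms, this becomes
$$e^{-\alpha}\E(\vec{p}') \;\leq\; \E(\vec{p}) \;\leq\; e^{\alpha}\E(\vec{p}').$$
Taking logarithms (the preference is non-degenerate, so we may assume $\E(\vec{p}'),\E(\vec{p})>0$ on $\R_{++}^n$) yields $-\alpha \leq \ln \E(\vec{p}) - \ln \E(\vec{p}') \leq \alpha$, which is exactly \eqref{eq_lipshitz_price_index}.

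There is no real obstacle here: the proof is just a two-line consequence of monotonicity and homogeneity. The only minor care needed is justifying $\E(\vec{p})>0$ for $\vec{p}\in \R_{++}^n$, which follows because $\E$ is homogeneous, non-negative, and not identically zero, so $\E$ is strictly positive on the interior of the positive orthant (otherwise homogeneity would force $\E\equiv 0$ on the whole orthant by monotonicity, contradicting non-degeneracy).
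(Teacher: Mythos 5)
Your proof is correct and is essentially the paper's own argument: both rest on sandwiching $\vec{p}$ between scaled copies of $\vec{p}'$ and then invoking monotonicity and $1$-homogeneity of $\E$ (the paper states monotonicity economically, as ``higher prices require at least as large a budget,'' and handles only the upper bound, getting the lower one by swapping $\vec{p}$ and $\vec{p}'$). Your additional remark that $\E>0$ on $\R_{++}^n$ is a harmless extra justification that the paper leaves implicit.
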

In other words, logarithmic expenditure functions are $1$-Lipshitz functions of logarithms of prices.
\begin{proof}
We need to show that 
$$\min_{i} \frac{p_i}{p_i'}\leq \frac{\E(p)}{\E(p')}\leq \max_{i} \frac{p_i}{p_i'}.$$
It is enough to demonstrate the upper bound, and the lower bound will follow by flipping the roles of $\vec{p}$ and $\vec{p}'$.

Recall that $\E(\vec{p})$ is the minimal budget that the agent needs to achieve the unit level of utility for prices $\vec{p}$. Given prices $\vec{p}$ and $\vec{p}'$, define $\vec{p}''=\max_{i} \frac{p_i}{p_i'}\cdot \vec{p}'$. The price of each good under $\vec{p}''$ is higher than for $\vec{p}$ and, hence, the agent needs at least as much money to achieve the same utility level. Thus
$$\E(\vec{p})\leq \E(\vec{p}'')= \max_{i} \frac{p_i}{p_i'}\cdot \E(\vec{p}'),$$
where we used the homogeneity of the expenditure function. Dividing both sides by $\E(\vec{p}')$, we obtain the desired inequality and complete the proof.
\end{proof}
Denote by $\vec{e}$ the vector of all ones $\vec{e}=(1,\ldots, 1)$.\fed{clash with excess demand} By the lemma, we see that any expenditure function satisfies the following estimate
\begin{equation}\label{eq_price_index_upper_bound}
    \left|\frac{\ln \E(\vec{p})-\ln \E(\vec{e})}{1+\max_i |\ln p_i|}\right|\leq 1
\end{equation}
for any vector of prices. The normalization in~\eqref{eq_price_index_upper_bound}  suggests how to define a distance so that the set of logarithmic expenditure functions has a bounded diameter. 

We define the distance  between preferences $\succsim$ and $\succsim'$ or, equivalently, between the corresponding logarithmic expenditure functions $f=\ln \E$ and $f'=\ln \E'$ as follows:
\begin{equation}\label{eq_distasnce}
d(\succsim,\succsim')=d(f,f')=\sup_{\vec{p}\in \Delta_{n-1}\cap \R_{++}^n} \left|\frac{\left(\ln \E(\vec{p})-\ln \E(\vec{e})\right)-\left(\ln \E'(\vec{p})-\ln \E'(\vec{e})\right)}{\left(1+\max_i |\ln p_i|\right)^2}\right|.
\end{equation}
 The denominator in~\eqref{eq_distasnce} is squared so that 
 \begin{equation}\label{eq_boundary_convergence}
 \frac{\ln \E(\vec p)-\ln \E(\vec{e})}{\left(1+\max_i |\ln p_i|\right)^2}\to 0 \quad \mbox {as $\vec{p}$ approaches the boundary of $\Delta_{n-1}$.}
\end{equation}
 Hence, the supremum is always attained at an interior point of the simplex and so can be replaced with the maximum. 
  
  Note that the ratio in~\eqref{eq_distasnce} does not depend on the choice of a logarithmic expenditure function from the class of equivalent ones.
  On the other hand, the distance between any two distinct preferences or, equivalently, between two non-equivalent logarithmic expenditure functions is non-zero as the values of logarithmic expenditure functions on $\R_{++}^n$ are determined by their values on the interior of the simplex $\Delta_{n-1}\cap \R_{++}^n$ since $\E(\alpha\cdot \vec{p})=\alpha\cdot \E(\vec{p})$.
  \medskip

The metric structure on preferences allows one to define convergence and closed sets. A closure of a domain $\D$ of preference consists of all limit points of $\D$, i.e., of all the  preferences  $\succsim$ such that there exists a sequence of preferences $\succsim^{(l)}\in \D$ with $d(\succsim,\succsim^{(l)})\to 0$ as $l\to \infty$. A closed domain is a domain that  coincides with its closure.

Open sets are complements of closed ones, and so the metric defines a topology. Once the topology is defined, one constructs the Borel measurable structure in the standard way \citep[Section 4.4]{charalambos2013infinite}. Hence, we can write  integrals of the form 
$$\int_\D G(\succsim)\, \dd \mu(\succsim) = \int_\L G(f)\, \dd \mu(f)$$
formally where $G$ is a Borel-measurable function and $\mu$ is a Borel measure (as usual, we identify functions and measures on preferences and on logarithmic expenditure functions). In all our examples, the integrated function $G$ is continuous and, hence, measurable. 

By~\eqref{eq_price_index_upper_bound}, the diameter of $\L$ does not exceed~$2$. Hence, $\L$ is a bounded convex set. To fit the assumptions of the Choquet theory, we need to show that $\L$ is compact and can be thought of as a subset of a Banach space. We achieve both goals by constructing an isometric compact embedding of $\L$ into a Banach space.

Consider the Banach space $\cC\left(\Delta_{n-1}\right)$ of all continuous functions on the simplex endowed with the standard sup-norm $\|h\|=\sup_{\vec{p}\in\Delta_{n-1}} |h(\vec{p})|$. 
\begin{lemma}\label{lemma_embedding}
Let  $\T$ be a map that maps a logarithmic expenditure function $f=\ln \E$ to a function $\T[f]$ on $\Delta_{n-1}$ given by \fed{rephrase}
\begin{equation}\label{eq_embedding}
\T[f](\vec{p})=\begin{cases}\displaystyle{\frac{\ln \E(\vec p)-\ln \E(\vec{e})}{\left(1+\max_i |\ln p_i|\right)^2}}, &\vec{p}\in \Delta_{n-1}\cap \R_{++}^n\\
0,& \mbox{otherwise}
\end{cases}.
\end{equation}
Then $E$ is an isometric embedding of the set $\L$ of logarithmic expenditure functions into the Banach space of continuous functions $\cC\left(\Delta_{n-1}\right)$ and the image $\T[\L]$ is a compact convex set.
\end{lemma}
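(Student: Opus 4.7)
The plan has four parts: (i) verify $T[f]$ is continuous on the whole simplex, so it lies in $\cC(\Delta_{n-1})$; (ii) check isometry and the fact that $T$ descends to equivalence classes; (iii) deduce convexity of $T[\L]$ from convexity of $\L$; and (iv) prove compactness via the Arzelà--Ascoli theorem together with closedness.

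First, I would verify continuity of $T[f]$ on $\Delta_{n-1}$. On the relative interior $\Delta_{n-1}\cap\R_{++}^n$ this is immediate from continuity of $\ln\E$. At the boundary, the Lipschitz bound~\eqref{eq_lipshitz_price_index} gives
\begin{equation*}
|T[f](\vec{p})|\ \leq\ \frac{\max_i|\ln p_i|}{\bigl(1+\max_i|\ln p_i|\bigr)^2}\ \leq\ \frac{1}{1+\max_i|\ln p_i|},
\end{equation*}
which tends to $0$ as some coordinate of $\vec{p}$ vanishes; this justifies the extension by $0$ used in~\eqref{eq_embedding}. The identity $\|T[f]-T[f']\|_\infty=d(f,f')$ is then immediate from comparing the two formulas, and the same identity makes $T$ well-defined on equivalence classes.

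Since $T$ is affine in $f$, convexity of $T[\L]$ reduces to convexity of $\L$. The latter holds because log-homogeneity ($f(\alpha\vec{p})=f(\vec{p})+\ln\alpha$) and concavity are both preserved under convex combinations, and Theorem~\ref{th_representative_index} guarantees that the result is itself a logarithmic expenditure function (of the corresponding aggregate consumer).

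Compactness comes from Arzelà--Ascoli together with closedness. Uniform boundedness is the estimate displayed above. For equicontinuity I would split into two regimes: near the boundary of $\Delta_{n-1}$, the bound $|T[f](\vec{p})|\leq 1/(1+\max_i|\ln p_i|)$ goes to $0$ uniformly over $f\in\L$; on any compact $K$ inside the relative interior, $\ln\E$ is $1$-Lipschitz in log-prices by~\eqref{eq_lipshitz_price_index} while the denominator is bounded above and away from $0$, giving a Lipschitz constant for $T[f]$ on $K$ that is independent of $f$. Closedness is the main obstacle: if $T[f_n]\to g$ uniformly, the candidate limit defined on $\Delta_{n-1}\cap\R_{++}^n$ by $\tilde f(\vec{p})=g(\vec{p})\bigl(1+\max_i|\ln p_i|\bigr)^2$ and extended by log-homogeneity to $\R_{++}^n$ inherits concavity and log-homogeneity, but one still has to verify that $e^{\tilde f}$ is non-degenerate and admits a continuous $1$-homogeneous extension to all of $\R_+^n$. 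I would handle this via the classical fact that a positive, concave, $1$-homogeneous function on $\R_{++}^n$ is bounded on the simplex and extends uniquely to a continuous concave $1$-homogeneous function on $\R_+^n$, so that $\tilde f$ is indeed a logarithmic expenditure function and $g=T[\tilde f]\in T[\L]$.
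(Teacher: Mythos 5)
Your proof is correct and follows essentially the same route as the paper's: continuity of $\T[f]$ and the isometry identity read off directly from the definition of the metric, convexity of $\T[\L]$ via affinity of $\T$ together with Theorem~\ref{th_representative_index}, and compactness via Arzel\`a--Ascoli plus a closedness argument that reconstructs a concave, log-homogeneous limit function from pointwise convergence. If anything, you are more careful than the paper at the two points it asserts tersely---the two-regime equicontinuity argument (uniform decay near the boundary of $\Delta_{n-1}$ versus a uniform Lipschitz bound on compact subsets of the interior) and the verification that the limit extends to a continuous, non-degenerate, $1$-homogeneous expenditure function on all of $\R_+^n$---both of which are genuine, if standard, details worth spelling out.
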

\begin{proof}
The function $\T[f]$ is continuous in the interior of the simplex by the continuity of expenditure functions, and it is continuous on the boundary by~\eqref{eq_boundary_convergence}. Hence, $\T[f]$ belongs to $\cC(\Delta_{n-1})$.
By the definition of the distance~\eqref{eq_distasnce} and the norm in $\cC(\Delta_{n-1})$, we get $d(f,f')=\left\|\T[f]-\T[f']\right\|$. Hence, $\T$ preserves the distance and, in particular, $f\ne f'$ implies $\T[f]\ne \T[f']$. Thus $\T$ is an isometric embedding of $\L$ in $\cC(\Delta_{n-1})$.

The diameter of the image $\T[\L]$ of $\L$ does not exceed~$2$ by~\eqref{eq_price_index_upper_bound}. Hence, $\T[\L]$ is a bounded subset of $\cC(\Delta_{n-1})$. By Lemma~\eqref{eq_lipshitz_price_index},  functions from $\T[\L]$ are uniformly equicontinuous. Applying the  Arcell\`{a}-Ascoli theorem, we conclude that the closure of $\T[\L]$ is compact.\footnote{Consider a subset $\mathcal{T}$ of the set $\cC(X)$ of continuous functions on a compact set $X$ with the $\sup$-norm. The
Arcell\`{a}-Ascoli theorem states that the closure $\overline{\mathcal{T}}$ of $\mathcal{T}$ is compact in $\cC(X)$ if $\mathcal{T}$ is bounded and functions from $\mathcal{T}$ are uniformly equicontinuous.} \fed{Add a reference}

It remains to show that $\T[\L]$ is closed and convex. The set $\L$ is convex by Theorem~\ref{th_representative_index} and~$\T$ maps convex combinations to convex combinations, hence $\T[\L]$ is convex. To show that it is closed, consider a sequence of functions $h^{(l)}\in \T[\L]$ converging to some $h$ and show that the limit belongs to $\T[\L]$. Convergence in $\|\cdot \|$ implies pointwise convergence, and hence $h$ is equal to zero at the boundary of the simplex. At any $\vec{p}$ from the interior, we obtain that the sequence of expenditure functions
${\E^{(l)}(\vec{p})}/{\E^{(l)}(\vec{e})}$  corresponding to $h^{(l)}$ converges to $g(\vec{p})=\exp\left((1+\max_i|\ln p_i|)^2\cdot h(\vec{p})\right)$. As concavity is preserved under pointwise limits, $g$ is a non-negative concave function on $\Delta_{n-1}\cap \R_{++}^n$ and, hence, there is a preference with an expenditure function $\E=g$. Therefore, $h=\T[\ln \E]$ and so $\T[\L]$ is closed. 
\end{proof}

By Lemma~\ref{lemma_embedding}, one can think of $\L$ and the set of all homothetic preferences as a closed convex subset of $\cC(\Delta_{n-1})$ and thus can use Choquet theory.

\section{Proofs}

\subsection{Proof of Theorem~\ref{th_representative_index}}\label{sec_th1_proof}
\fed{Delete redundant parentheses}
\begin{proof}
By the result of \cite*{eisenberg1961aggregation}, we know that the aggregate consumer exists and her preference corresponds to the following utility function
\begin{equation}\label{eq_EisenbergGale_utility_APPENDIX}
u_{\agr}(\vec{x})=\max\left\{\prod_{k=1}^m \left(\frac{u_{k}(\vec{x}_k)}{\beta_k}\right)^{\beta_k}\quad:\quad \vec{x}_k\in \R_+^n,\ \ k=1,\ldots,m,\quad \sum_{k=1}^m \vec{x}_k=\vec{x}\right\}.
\end{equation}
Our goal is to compute the corresponding expenditure function $\E_\agr(\vec{p})$ and check that it satisfies the identity 
\begin{equation}\label{eq:pr_APPENDIX}
\ln \left({\E}_{{\agr}}(\vec{p})\right)=\sum_{k=1}^{m} \beta_{k}\cdot  \ln \left(\E_{k}(\vec{p})\right).
\end{equation}
As an intermediate step, we compute the indirect utility of the aggregate consumer. Recall that the indirect utility of a consumer with a direct utility function $u$ is given by
\begin{equation}\label{eq_indirect_proof_of_TH1}
v(\vec{p},b)=\max\left\{ u(\vec{x})\,\colon\, \vec{x}\in \R_+^n, \ \langle\vec{x},\vec{p}\rangle\leq b  \right\}=\frac{b}{\E(\vec{p})}.
\end{equation}
For the aggregate consumer, we get 
\begin{equation}\label{eq_representative_indirect_util}
v_\agr(\vec{p},b)=\max\left\{\prod_{k=1}^m \left(\frac{u_k(\vec{x}_k)}{\beta_k}\right)^{\beta_k}\quad:\quad  \vec{x}_k\in \R_+^n,\ \ k=1,\ldots,m,\quad \left\langle\vec{p},\  \sum_{k=1}^m \vec{x}_k \right\rangle\leq b\right\}.
\end{equation}
Plug  in $b=1$ and consider an optimal collection of bundles $\vec{x}_k$, $k=1,\ldots,m$, in \eqref{eq_representative_indirect_util}.
 Denote their prices $\langle \vec{p},\vec{x}_k\rangle $ by $\alpha_k$. Our goal is to show that $\alpha_k=\beta_k$. The argument is along the lines of  \cite*{eisenberg1961aggregation}. 
 Rescale each bundle $\vec{x}_k$ to make its price equal to $\beta_k$. We obtain a new collection of bundles $\vec{x}_k'=\frac{\beta_k}{\alpha_k}\cdot \vec{x}_k$, which also satisfies the aggregate budget constraint $\left\langle\vec{p},\  \sum_{k=1}^m \vec{x'}_k \right\rangle\leq 1$. By the optimality of $\vec{x}_k$, the product of utilities $\prod_k \big(u_k(\vec{x}_k)\big)^{\beta_k}$ is at least as big as $\prod_k \big(u_k(\vec{x}_k')\big)^{\beta_k}$. By the homogeneity of utilities, this inequality of the products can be rewritten as follows:
$$1\leq \prod_{k=1}^m\left(\frac{\alpha_k}{\beta_k}\right)^{\beta_k}.$$
Taking logarithm, we get an equivalent inequality
\begin{equation}\label{eq_proof_dual_Eisenberg}
0\leq \sum_{k=1}^m \beta_k\cdot \ln \frac{\alpha_k}{\beta_k}.
\end{equation}
The concavity of the logarithm implies an upper bound on the right-hand side
\begin{equation}\label{eq_proof_dual_Eisenberg2}
\sum_{k=1}^m \beta_k\cdot \ln \frac{\alpha_k}{\beta_k}\leq \ln\left(\sum_k \beta_k\cdot \frac{\alpha_k}{\beta_k}\right)=\ln\left(\sum_k \alpha_k\right)\leq \ln(1)=0.
\end{equation}
Inequalities~\eqref{eq_proof_dual_Eisenberg} and~\eqref{eq_proof_dual_Eisenberg2} can only be compatible if they are, in fact, equalities. As the logarithm is strictly concave, the equality between the first two expressions in~\eqref{eq_proof_dual_Eisenberg2} implies that the ratio $\frac{\alpha_k}{\beta_k}$ is a constant independent of $k$. Since the average value of the logarithms is zero, this constant equals one.\fed{rephrase} We conclude that $\alpha_k=\beta_k$.

We proved that $\langle \vec{p},\vec{x}_k\rangle =\beta_k$ for any optimal collection of bundles $\vec{x}_k$, $k=1,\ldots, m$, from the optimization problem~\eqref{eq_representative_indirect_util}. In particular, the inequality $\langle \vec{p},\vec{x}_k\rangle\leq \beta_k$ always holds at the optimum. Therefore, we can replace the budget constraint of the aggregate consumer $\left\langle p,\,  \sum_{k=1}^m \vec{x}_k\right\rangle\leq 1$ with a stronger requirement of individual budget constraints $\langle \vec{p},\vec{x}_k\rangle\leq \beta_k$, $k=1,\ldots,m$, and this modification will not alter the value:
\begin{equation}\notag
v_\agr(\vec{p},1)=\max\left\{\prod_{k=1}^m \left(\frac{u_k(\vec{x}_k)}{\beta_k}\right)^{\beta_k}\quad:\quad  \vec{x}_k\in \R_+^n, \quad \langle \vec{p},\vec{x}_k\rangle\leq \beta_k, \ \   k=1,\ldots,m\right\}.
\end{equation}
The maximization of the product reduces to maximizing each term $u_k(\vec{x}_k)$ separately over the corresponding budget set $\langle \vec{p},\vec{x}_k\rangle\leq \beta_k$, which gives the indirect utility of consumer $k$:
\begin{equation}\notag
v_\agr(\vec{p},1)=\prod_{k=1}^m\left(\frac{\max\Big\{ u_k(\vec{x}_k)\quad:\quad  \vec{x}_k\in \R_+^n,\quad  \vec{p}\cdot \vec{x}_k \leq \beta_k\Big\}}{\beta_k}\right)^{\beta_k}= \prod_{k=1}^m\left(\frac{v_k(\vec{p},\beta_k)}{\beta_k}\right)^{\beta_k}.
\end{equation}
Expressing each indirect utility through expenditure functions via formula~\eqref{eq_indirect_proof_of_TH1}, we end up with the following equality:
$\E_\agr(\vec{p})=\prod_{k=1}^m \left(\E_k(\vec{p})\right)^{\beta_k}$.
Taking the logarithm of both sides, we obtain identity~\eqref{eq:pr_APPENDIX}, completing the proof.
\end{proof}

\subsection{Proof of Proposition~\ref{prop_welfare_as_persuasion}}\label{app_persuasion}

\begin{proof}
	Recall that $\underline{W}$ and $\overline{W}$ denote the infimum and the supremum of $W=\sum_{k} b_k\cdot w(\succsim_k)$ over all finite populations of agents with an aggregate preference $\succsim_\agr$, total income $B$, and individual domain of preferences $\D$.
	Our goal is to show that 
	$$\underline{W}=B\cdot \vex_{\L_\D}\big[w\big]\big(\succsim_\agr\big), \qquad \overline{W}=B\cdot \cav_{\L_\D}\big[w\big]\big(\succsim_\agr\big),$$ 
	and for any $W'$ such that $\underline{W}<W'<\overline{W}$, there is a finite population with~$W=W'$.
	
	Let us prove the formula for $\overline{W}$. By  Theorem~\ref{th_representative_index}, populations of agents with preferences from $\D$ compatible with the aggregate preference $\succsim_\agr$
	correspond to all the different ways 
	to represent $\ln \E_{\agr}$ as a finite convex combination $\ln \E_{\agr}=\sum_k \beta_k \ln \E_k$ with $\ln \E_k$ from $\L_\D$. Hence,
	$$\overline{W}= \sup\left\{ B\cdot \sum_k \beta_k\cdot w(\succsim_k)\quad:\quad \ln \E_{\agr}=\sum_k \beta_k \ln \E_k,\quad \ln \E_k\in \L_\D,\quad \beta_k\geq 0, \quad \sum_k \beta_k=1\right\}.$$
	Comparing this formula to the definition of concavification~\eqref{eq_concavification}, we conclude that
	$$\overline{W}=B\cdot \cav_{\L_\D}\big[w\big]\big(\succsim_\agr\big).$$
	A mirror argument for  $\underline{W}$ is omitted.
	
	It remains to construct a finite population with $W=W'$ for $\underline{W}<W'<\overline{W}$. We already know that there are populations with $W$ arbitrarily close to $\underline{W}$ and $\overline{W}$. Hence, we can find $W^l$, $l=1,2$, such that  there are populations $(\succsim_k^l,b_k^l)_{k=1,\ldots,K^l}$ with  $W=W^l$ and $W^1<W'<W^2$. Express $W'$ as a convex combination $\alpha W^1+(1-\alpha)W^2$
	and consider the weighted union of the populations corresponding to $W^1$ and $W^2$:
	$$\big(\succsim_k^1,\ \alpha\cdot b_k^1\big)_{k=1,\ldots,K^1} \bigcup \big(\succsim_k^2,\ (1-\alpha)\cdot b_k^2\big)_{k=1,\ldots,K^2}.$$
	The constructed population has  $W=W'$.
\end{proof}

\subsection{Generalizations and proofs of Theorems~\ref{th_continuous_aggregation} and \ref{th_choquet}}\label{sec_th_continuous_aggregation_proof}

We first prove Theorem~\ref{th_choquet} and then formulate and prove a general result containing both Theorem~\ref{th_continuous_aggregation} and Theorem~\ref{th_choquet} as particular cases.

Recall that $\D^\inv$ is the completion of a domain $\D$, i.e., the minimal closed domain invariant with respect to aggregation and containing $\D$. The set of indecomposable preferences in $\D$ is denoted by $\D^\ind$ and is composed of all preferences $\succsim\in \D$ that cannot be represented as an aggregation of two distinct preferences $\succsim',\succsim''\in \D$.

For the reader's convenience, we repeat the statement of Theorem~\ref{th_choquet}.

\begin{theorem*}
If $\D$ is a closed domain such that $\D=\D^\inv$, then a preference $\succsim$ belongs to $\D$ 
if and only if there exists a Borel probability measure $\mu$ supported on $\D^\ind$ such that the expenditure function $\E=\E_\succsim$ can be represented as follows
\begin{equation}\label{eq_choquet_representation_of_indecomposable_appendix}
\ln \E(\vec{p})=\int_{\D^\ind} \ln \E_{\succsim'}(\vec{p})\, \dd \mu(\succsim')
\end{equation}
for any vector of prices $\vec{p}\in \R_{++}^n$.
\end{theorem*}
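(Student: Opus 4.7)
The plan is to realize $\L_\D$ as a compact convex metrizable subset of the Banach space $\cC(\Delta_{n-1})$ via the embedding $\T$ from Lemma~\ref{lemma_embedding}, and then invoke Choquet's theorem to obtain the desired representing measure supported on extreme points.

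First I would verify that $\T[\L_\D]$ is a compact convex subset of $\cC(\Delta_{n-1})$. Convexity is immediate from aggregation-invariance ($\D=\D^\inv$) together with Corollary~\ref{cor_invariant}: convex combinations of logarithmic expenditure functions from $\L_\D$ stay in $\L_\D$, and $\T$ is affine. For compactness, recall from Lemma~\ref{lemma_embedding} that $\T[\L]$ is already compact in $\cC(\Delta_{n-1})$, so it suffices to show that $\T[\L_\D]$ is closed. Since $\D$ is closed in the metric~\eqref{eq_distance_body} and $\T$ is an isometry (Lemma~\ref{lemma_embedding}), $\T[\L_\D]$ is a closed subset of the compact set $\T[\L]$, hence compact. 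Because $\cC(\Delta_{n-1})$ is a separable Banach space, the compact set $\T[\L_\D]$ is metrizable in the norm topology, which puts us in the hypotheses of Choquet's theorem.

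Next I would identify the extreme points. By Corollary~\ref{cor_indecomposable_as_extreme}, $\ln \E_\succsim$ is an extreme point of $\L_\D$ iff $\succsim \in \D^\ind$, and affinity of $\T$ transports this equivalence to $\T[\L_\D]$: its extreme points are exactly $\T[\L_{\D^\ind}]$. Metrizability moreover guarantees that this extreme set is a Borel subset of $\T[\L_\D]$. Applying the Choquet theorem (see \citealt*{phelps2001lectures}) to the point $\T[\ln \E_\succsim]\in \T[\L_\D]$, I obtain a Borel probability measure $\tilde\mu$ concentrated on the extreme points of $\T[\L_\D]$ that represents $\T[\ln \E_\succsim]$ as its barycenter. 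Pushing $\tilde\mu$ forward through $\T^{-1}$ produces a Borel probability measure $\mu$ on $\D^\ind$.

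To convert the barycentric identity into the pointwise integral~\eqref{eq_choquet_representation_of_indecomposable_appendix}, I would apply the continuous linear functional $g\mapsto g(\vec{p})$ (evaluation at a fixed $\vec{p}\in \Delta_{n-1}\cap \R_{++}^n$) to both sides of the barycentric equation. After multiplying through by the positive factor $(1+\max_i|\ln p_i|)^2$ and choosing for each $\succsim'\in\D^\ind$ the canonical representative $\ln \E_{\succsim'}$ with $\ln \E_{\succsim'}(\vec{e})=0$ (the multiplicative ambiguity in expenditure functions), I recover $\ln \E_\succsim(\vec{p})=\int_{\D^\ind} \ln \E_{\succsim'}(\vec{p})\,\dd\mu(\succsim')$ at every interior price $\vec{p}$ of the simplex, and then at every $\vec{p}\in \R_{++}^n$ by homogeneity. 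The converse direction is the easier one: any $\mu$ supported on $\D^\ind\subset \D$ yields via~\eqref{eq_choquet_representation_of_indecomposable_appendix} a logarithmic expenditure function in the closed convex hull of $\L_\D$, which equals $\L_\D$ by invariance.

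The main obstacle I foresee is the measurability/support issue — guaranteeing that $\mu$ lives on $\D^\ind$ rather than merely on its closure. This is the reason for routing through the embedding $\T$ into the separable Banach space $\cC(\Delta_{n-1})$: metrizability of the compact convex set $\T[\L_\D]$ makes its extreme-point set Borel and lets us apply the sharper (Choquet, rather than Bishop–de Leeuw) statement. A minor subtlety is the ambiguity of $\ln \E$ up to additive constants, which is absorbed by working with the normalized representative $\T[f]$ throughout and only unfolding to honest logarithmic expenditure functions after the evaluation step.
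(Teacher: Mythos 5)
Your proposal is correct and follows essentially the same route as the paper's own proof: the isometric affine embedding $\T$ of Lemma~\ref{lemma_embedding} into $\cC(\Delta_{n-1})$, compactness and convexity of $\T[\L_\D]$ from closedness and invariance of $\D$, identification of extreme points with $\D^\ind$ via Corollary~\ref{cor_indecomposable_as_extreme}, Choquet's theorem in the metrizable setting, and recovery of the pointwise identity by applying evaluation functionals, clearing the weight $(1+\max_i|\ln p_i|)^2$, handling the additive-constant ambiguity, and extending by homogeneity. The only cosmetic differences are that you normalize representatives by $\ln\E_{\succsim'}(\vec{e})=0$ where the paper absorbs the constant into $\ln\E$, and you spell out the (easy) converse direction, which the paper folds into its ``if and only if'' statement of Choquet's theorem.
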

The requirement that $\mu$ is supported on $\D^\ind$ means that the complement of this set of preferences has $\mu$-measure zero. 
The topology on preferences and logarithmic expenditure functions is described in Appendix~\ref{app_topology}. The Borel structure is defined by this topology. 

We will need the Choquet theorem formulated below. Consider a (not necessarily convex) subset $X$  of a linear space. A point $x\in X$ is an extreme point of $X$ if it cannot be represented as $\alpha x'+(1-\alpha)x''$ with $\alpha\in (0,1)$ and distinct $x',x''\in X$. All extreme points of $X$ are denoted by~$X^\ext$.
\begin{theorem*}[Choquet's theorem; see \cite*{phelps2001lectures}, Section 3]\label{th_choquet_abstract}
If $X$ is a metrizable compact convex
subset of a locally convex space, then a point $x$ belongs to $X$ if and only if there is a Borel probability measure $\mu$ on $X$  supported on $X^\ext$ such that
\begin{equation}\label{eq_Choquet_hull}
x=\int_{X^\ext} x'\, \dd\mu(x').
\end{equation}
\end{theorem*}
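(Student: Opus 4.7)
The proof splits into an easy direction and a substantive direction. For the ``if'' direction, assume a Borel probability measure $\mu$ on $X$ supported on $X^{\ext}\subseteq X$ satisfies $x=\int x'\,\dd\mu(x')$. Since $X$ is convex and compact in a locally convex space, the barycenter of any Borel probability measure on $X$ lies in $X$: one approximates $\mu$ in the weak-$*$ topology by finitely supported probability measures---whose barycenters are finite convex combinations of points in $X$, hence in $X$---and passes to the weak-$*$ limit, which preserves $X$ by compactness. Thus $x\in X$.

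For the nontrivial direction, the plan is to produce, for any $x\in X$, a Borel probability measure on $X$ with barycenter $x$ concentrated on $X^{\ext}$, via a maximality argument in the Choquet order. Let $\cP(X)$ denote the space of Borel probability measures on $X$ with the weak-$*$ topology; $\cP(X)$ is a metrizable, compact, convex subset of $C(X)^{*}$ because $X$ is metrizable compact. The set $M_{x}\subset \cP(X)$ of measures with barycenter $x$ is nonempty (contains $\delta_{x}$), convex, and weak-$*$ closed, hence compact. Introduce the Choquet order: $\mu\prec\nu$ if $\int f\,\dd\mu\leq \int f\,\dd\nu$ for every continuous convex function $f\colon X\to\R$. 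Weak-$*$ upper semicontinuity of $\mu\mapsto\int f\,\dd\mu$ together with compactness of $M_{x}$ gives an upper bound for every $\prec$-chain by a standard argument; Zorn's lemma then produces a $\prec$-maximal element $\mu^{*}\in M_{x}$.

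The decisive step is identifying $\prec$-maximal measures with those concentrated on extreme points. For $g\in C(X)$ define the concave envelope $\hat g(y)=\inf\{h(y):h\text{ continuous concave on }X,\ h\geq g\}$. A Hahn--Banach and Riesz-representation calculation shows that $\mu\in \cP(X)$ is $\prec$-maximal if and only if $\int (\hat g-g)\,\dd\mu=0$ for every $g\in C(X)$. Because $X$ is metrizable, $C(X)$ is separable; choosing a countable dense family $\{g_{n}\}\subset C(X)$, this condition forces $\hat g_{n}=g_{n}$ $\mu^{*}$-almost everywhere for every $n$, hence simultaneously on a set of full $\mu^{*}$-measure. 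Bauer's characterization---$y\in X^{\ext}$ iff $g(y)=\hat g(y)$ for every $g\in C(X)$---then yields that $\mu^{*}$ is supported on $X^{\ext}$.

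The main obstacle, and the reason metrizability is essential, is the measurability of $X^{\ext}$. In a metrizable compact convex $X$, one writes $X\setminus X^{\ext}$ as the countable union of the continuous images $A_{n}=\{(y_{1}+y_{2})/2:y_{1},y_{2}\in X,\ d(y_{1},y_{2})\geq 1/n\}$, each of which is compact; so $X\setminus X^{\ext}$ is $F_{\sigma}$ and $X^{\ext}$ is a $G_{\delta}$, hence Borel, so that ``$\mu^{*}$ supported on $X^{\ext}$'' makes sense. Metrizability also secures separability of $C(X)$, which is what lets one collapse the uncountable family of conditions $\int (\hat g-g)\,\dd\mu=0$ onto a countable dense subfamily and deduce $\mu^{*}$-almost sure membership in $X^{\ext}$. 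Without metrizability $X^{\ext}$ need not even be Borel, and the statement must be reformulated using inner regularity or a Choquet-boundary version; this is the technical heart of the argument.
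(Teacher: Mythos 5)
Your proposal is correct in substance, but note that the paper itself does not prove this statement: it is imported verbatim from \citep*{phelps2001lectures} as a black box, and the paper's actual work (Appendices~B and~C.3) consists of verifying its hypotheses for the set of logarithmic expenditure functions --- compactness, convexity, and the isometric embedding into $\cC(\Delta_{n-1})$. So the relevant comparison is with the cited source. Phelps's proof of the metrizable case (his Section~3) takes a shorter route than yours: separability of the continuous affine functions yields a single \emph{strictly convex} continuous $f$; a Hahn--Banach extension of a functional dominated by the upper envelope $\hat f$ at $x$ produces, via Riesz representation, a measure $\mu$ with barycenter $x$ and $\mu(f)=\int \hat f\,\dd\mu$; and the set $\{f=\hat f\}$ is a $G_\delta$ contained in $X^\ext$ by strict convexity --- no Zorn's lemma and no maximality criterion. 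What you write is instead the Bishop--de~Leeuw machinery of Phelps's Section~4 specialized to the metrizable case (Zorn's lemma for a $\prec$-maximal $\mu^*$, Mokobodzki's characterization of maximal measures, separability of $C(X)$ to collapse to a countable family, and the $F_\sigma$ description of $X\setminus X^\ext$, which you state correctly). Your route is valid and buys generality --- it is the argument that survives when metrizability is dropped --- at the cost that the step you black-box, ``$\mu$ is $\prec$-maximal iff $\int(\hat g-g)\,\dd\mu=0$ for all $g$,'' is the genuine technical heart of this approach (it rests on the envelope duality $\hat g(y)=\sup\{\mu(g)\colon \mu \text{ has barycenter } y\}$, itself a Hahn--Banach argument), arguably more so than the Borel measurability of $X^\ext$ that you single out. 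Three small repairs: the map $\mu\mapsto\int f\,\dd\mu$ is weak-$*$ \emph{continuous} for $f\in C(X)$, not merely upper semicontinuous, and the chain bound should be a weak-$*$ cluster point $\nu$ of the chain together with monotonicity of $\alpha\mapsto\mu_\alpha(f)$ along the chain for convex $f$; the forward half of the ``Bauer'' equivalence you invoke (extreme $\Rightarrow$ $\hat g(y)=g(y)$ for \emph{all} $g\in C(X)$) silently uses that $\delta_y$ is the unique measure with barycenter an extreme point $y$, which deserves a sentence; and in the easy direction the limit of the barycenters of the finitely supported approximants must be identified with the barycenter of $\mu$ by testing against continuous linear functionals, which separate points since the ambient locally convex space is Hausdorff.
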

In our application, $X$ will be a subset of the Banach space of continuous functions with the $\sup$-norm. A Banach space is a complete separable normed space. Each such space is locally convex and metrizable via the metric induced by the norm. 

The identity~\eqref{eq_Choquet_hull} is to be understood in the weak sense, i.e., for any continuous linear functional~$F$,
$$F[x]=\int_{X^\ext} F[x']\, \dd\mu(x').$$

\begin{proof}[Proof of Theorem~\ref{th_choquet}]
A homothetic preference $\succsim$ is represented by a family of equivalent logarithmic expenditure functions which differ by a constant.  Let $\L$ be the set of all classes of equivalent logarithmic expenditure functions corresponding to homothetic preferences. Denote by $\L_\D$ the subset of $\L$ corresponding to the domain $\D$. 
The set $\L_\D$ is closed and convex. Indeed, convexity follows from invariance of~$\D$ by Corollary~\ref{cor_convex_hull}, and closedness of $\D$ is inherited by $\L_\D$ as the topologies on preferences and logarithmic expenditure functions are aligned.

By Lemma~\ref{lemma_embedding}, the set $\L$ admits an affine isometric compact embedding $\T$ into the Banach space $\cC\left(\Delta_{n-1}\right)$ of continuous functions on the simplex $\Delta_{n-1}$ with the $\sup$-norm. 
Since $\L_\D$ is a closed convex subset of $\L$, 
the embedding $\T[\L_{\D}]$ is a compact convex  subset of $\cC\left(\Delta_{n-1}\right)$.

Applying the Choquet theorem to $X=\T[\L_{\D}]$, we conclude that a logarithmic expenditure function $\ln \E$ belongs to $\L_{{\D}}$ if and only if there is a measure $\mu$ supported on $X^\ext$ such that $x=\T[\ln \E]$ is given by the integral of the form~\eqref{eq_Choquet_hull}. 

As there is a natural bijection between ${\D}$ and $X$, we can assume that $\mu$ is a measure on~$\D$. By Theorem~\ref{th_representative_index}, a preference $\succsim'$ is indecomposable in $\D^\ind$ if and only if its logarithmic expenditure function $\ln \E_{\succsim'}$ cannot be represented as a convex combination of two non-equivalent expenditure functions from $\D$ (Corollary~\ref{cor_indecomposable_as_extreme}). Hence, $\succsim'$ belongs to $\D^\ind$ if and only if $\T[\ln \E_{\succsim'}]$ is in $X^\ext$. We obtain that $\succsim$ with an expenditure function $\E$ is contained in the completion $\D^\inv$ if and only if
\begin{equation}\label{eq_almost_there}
\T[\ln \E]=\int_{\D^\ind} \T[\ln \E_{\succsim'} ]\, \dd\mu(\succsim')
\end{equation}
for some $\mu$ supported on ${\D}^\ind$.
To get the desired pointwise identity~\eqref{eq_choquet_representation_of_indecomposable_appendix}, it remains to apply an appropriate linear functional on both sides.

Let $F_\vec{p}$ be the functional on $\cC\left(\Delta_{n-1}\right)$ evaluating a function at some $\vec{p}\in \Delta_{n-1}$. This functional is continuous, and the family of such functionals with $\vec{p}\in \Delta_{n-1}\cap \R_{++}^n$ separates points, i.e., if two functions are not equal, there is a functional taking different values on them. Hence,~\eqref{eq_almost_there} is equivalent to the following identity
$$F_\vec{p}\Big[\T[\ln \E]\Big]=\int_{\D^\ind} F_\vec{p}\Big[\T[\ln \E_{\succsim'} ]\Big]\, \dd\mu(\succsim')$$
for all $\vec{p}\in \Delta_{n-1}\cap \R_{++}^n$.
Plugging in the explicit form~\eqref{eq_embedding} of the embedding $\T$, we conclude that~$\succsim\in \D$ if and only if its logarithmic expenditure function $\succsim$ can be represented as follows 
$$\frac{\ln \E(\vec p)-\ln \E(\vec{e})}{\left(1+\max_i |\ln p_i|\right)^2}=\int_{\D^\ind} \frac{\ln \E_{\succsim'}(\vec p)-\ln \E_{\succsim'}(\vec{e})}{\left(1+\max_i |\ln p_i|\right)^2} \, \dd\mu(\succsim')
$$
for all $\vec{p}\in \Delta_{n-1}\cap \R_{++}^n$. Multiplying both sides by the denominator, we get
\begin{equation}\label{eq_almost_almost_there}
\ln \E(\vec p)=\const + \int_{\D^\ind}\ln \E_{\succsim'}(\vec p) \, \dd\mu(\succsim')
\end{equation}
for $\vec{p}\in \Delta_{n-1}\cap \R_{++}^n$. Since $\E(\alpha\cdot \vec{p})=\alpha\cdot \E(\vec{p})$, the identity extends to $\R_{++}^n$. As expenditure functions that differ by a constant correspond to the same preference, the constant in~\eqref{eq_almost_almost_there} can be absorbed by~$\ln \E$. This completes the proof of Theorem~\ref{th_choquet}.
\end{proof}

With the help of Theorem~\ref{th_choquet}, we can prove that, without any assumptions on the domain $\D$,
the completion $\D^\inv$  is obtained by continuous aggregation of preferences from $\overline{\D}^\ind$, i.e., of indecomposable preferences from the closure of $\D$. This result extends both  Theorem~\ref{th_choquet} and Theorem~\ref{th_continuous_aggregation}.
\begin{theorem}\label{th_general_appendix}
 For any domain $\D$, a preference $\succsim$ belongs to its completion $\D^\inv$ if and only if the expenditure function $\E$ of $\succsim$  admits the following representation
\begin{equation}\label{eq_continuous_aggregation_proposition_appendix}
    \ln \E(\vec{p})=\int_{\overline{\D}^\ind}
\ln \E_{\succsim'}(\vec{p})\,\dd\mu(\succsim'), \qquad \vec{p}\in \R_{++}^n,
\end{equation}
for some Borel probability measure $\mu$ supported on $\overline{\D}^\ind$.
\end{theorem}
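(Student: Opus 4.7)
The plan is to deduce Theorem~\ref{th_general_appendix} from Theorem~\ref{th_choquet} applied to the closed invariant completion $\D^\inv$, combined with Milman's partial converse to the Krein--Milman theorem, which will allow us to localize the support of the Choquet measure on $\overline{\D}^\ind$.

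For the forward direction, I would first note that $\D^\inv$ satisfies the hypotheses of Theorem~\ref{th_choquet}, so every $\succsim \in \D^\inv$ admits the integral representation~\eqref{eq_continuous_aggregation_proposition_appendix} with $\mu$ supported on $(\D^\inv)^\ind$. It therefore suffices to prove the inclusion $(\D^\inv)^\ind \subseteq \overline{\D}^\ind$. Under the isometric embedding $\T$ of Lemma~\ref{lemma_embedding}, Corollary~\ref{cor_convex_hull} together with the fact that $\L_\D$ and $\L_{\overline{\D}}$ have the same closed convex hull identifies $\T[\L_{\D^\inv}]$ with the closed convex hull of the compact set $\T[\L_{\overline{\D}}]$ inside the Banach space $\cC(\Delta_{n-1})$. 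Milman's theorem (see \cite*{phelps2001lectures}) then asserts that every extreme point of the closed convex hull of a compact set $S$ in a locally convex space lies in $S$ itself; hence every $\succsim \in (\D^\inv)^\ind$ already belongs to $\overline{\D}$. Such $\succsim$ remains indecomposable in $\overline{\D}$, because any non-trivial decomposition $\ln \E_\succsim = \beta \ln \E' + (1-\beta) \ln \E''$ with distinct $\succsim', \succsim'' \in \overline{\D}$ would, via the inclusion $\overline{\D} \subseteq \D^\inv$, constitute a non-trivial decomposition within $\D^\inv$, contradicting $\succsim \in (\D^\inv)^\ind$.

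For the converse direction, suppose the integral representation~\eqref{eq_continuous_aggregation_proposition_appendix} holds with $\mu$ supported on $\overline{\D}^\ind \subseteq \D^\inv$. Translating through $\T$, the corresponding logarithmic expenditure function is the barycenter of a Borel probability measure on the compact convex set $\T[\L_{\D^\inv}] \subseteq \cC(\Delta_{n-1})$. Since closed convex subsets of a Banach space are closed under taking barycenters of Borel probability measures, the barycenter lies in $\T[\L_{\D^\inv}]$, giving $\succsim \in \D^\inv$.

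The main obstacle will be the topological bookkeeping needed to apply Milman's theorem within the function-space framework of Appendix~\ref{app_topology}: confirming that $\T[\L_{\overline{\D}}]$ is compact (as a closed subset of the compact $\T[\L]$), that continuous aggregation over $\D$ and over $\overline{\D}$ produce the same closed convex hull in $\cC(\Delta_{n-1})$, and that the abstract barycenter identity translates faithfully to the pointwise identity~\eqref{eq_continuous_aggregation_proposition_appendix} by testing against the continuous point-evaluation functionals $F_\vec{p}$ used in the proof of Theorem~\ref{th_choquet}. Once these are in hand, the argument is a clean combination of Choquet's and Milman's theorems with no additional computation, and Theorem~\ref{th_continuous_aggregation} follows as an immediate corollary by enlarging the support of $\mu$ from $\overline{\D}^\ind$ to $\overline{\D}$.
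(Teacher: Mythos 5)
Your proposal is correct and follows essentially the same route as the paper: apply Theorem~\ref{th_choquet} to the closed invariant domain $\D^\inv$, then use Milman's converse to the Krein--Milman theorem (the paper's Proposition~\ref{prop_milman}) together with the observation that an extreme point of the larger set lying in the smaller set is extreme there, to obtain the inclusion $(\D^\inv)^\ind \subseteq \overline{\D}^\ind$; this is exactly the paper's Corollary~\ref{cor_indecomposable_in_completion}. Your explicit barycenter argument for the converse direction is a slightly more careful spelling-out of a step the paper leaves implicit, but it is not a different method.
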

 Theorem~\ref{th_choquet} corresponds to closed invariant domains $\D$ and Theorem~\ref{th_continuous_aggregation} is a corollary since $\overline{\D}^\ind$ is a subset of $\overline{\D}$.

Our proof of Theorem~\ref{th_general_appendix} relies on already proved Theorem~\ref{th_choquet} and on Milman's converse to
the Krein-Milman theorem. Recall that $\conv[Z]$ denotes the closed convex hull of a set $Z$.
\begin{proposition}[Milman; see \cite*{phelps2001lectures}, Proposition 1.5]\label{prop_milman}
If $X$ is a compact convex
subset of a locally convex space and $X=\conv[Z]$, then extreme points $X^\ext$ are contained in the closure~$\overline{Z}$.
\end{proposition}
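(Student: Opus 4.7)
The plan is to argue by contradiction: suppose $x \in X^{\ext}$ but $x \notin \overline{Z}$, and derive a contradiction via a local-convexity + compactness construction. First, using local convexity of the ambient space, I pick a closed convex balanced neighborhood $V$ of the origin such that $(x+V)\cap \overline{Z}=\emptyset$; equivalently, $x \notin z+V$ for every $z\in\overline{Z}$. Since $\overline{Z}$ is closed in the compact set $X$, it is itself compact, so the cover by interiors of translates of $V$ admits a finite subcover: there exist $z_1,\ldots,z_n\in\overline{Z}$ with $\overline{Z}\subseteq\bigcup_{i=1}^n (z_i+V)$.

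Next, I define $K_i=\conv\!\big[(z_i+V)\cap\overline{Z}\big]$ (closed convex hull). Each $K_i$ is contained in the closed convex set $z_i+V$ as well as in the compact set $X$, so each $K_i$ is compact convex. The essential property is that $x\notin K_i$ for every $i$, because $K_i\subseteq z_i+V$ and $x\notin z_i+V$ by the choice of $V$.

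I then claim that $X=\conv(K_1\cup\cdots\cup K_n)$ (the ordinary, unclosed convex hull). The inclusion $\supseteq$ is immediate from convexity of $X$. For $\subseteq$, observe first that $\overline{Z}\subseteq \bigcup_i K_i\subseteq \conv(\bigcup_i K_i)$. It remains to verify that $\conv(\bigcup_i K_i)$ is closed, since then it must contain $\overline{\conv}[\overline{Z}]=\overline{\conv}[Z]=X$. For this, consider the map $\Phi\colon \Delta_{n-1}\times K_1\times\cdots\times K_n\to X$ defined by $\Phi(\vec{t},y_1,\ldots,y_n)=\sum_{i=1}^n t_i\, y_i$. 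The domain is a product of compact sets, hence compact; $\Phi$ is continuous; and its image is precisely $\conv(\bigcup_i K_i)$. Thus this set is compact, in particular closed, and the claim follows.

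Finally, writing $x=\sum_i t_i y_i$ with $y_i\in K_i$, $t_i\geq 0$, $\sum_i t_i=1$, I invoke extremality of $x$: grouping $x=t_1 y_1+(1-t_1)\left(\sum_{i\geq 2}\frac{t_i}{1-t_1}y_i\right)$ (valid whenever $0<t_1<1$, with the parenthetical term in $X$ by convexity) forces $y_1=x$, and the same argument applied to every index with positive weight gives $y_i=x$ whenever $t_i>0$. Thus $x\in K_i$ for some $i$, contradicting $x\notin K_i$ established above. The main obstacle is step three: in a general locally convex space the closed convex hull of a compact set need not be compact, so one cannot directly take $\overline{\conv}(\overline{Z})$ and hope it has a workable representation; the key point making the argument go through is the elementary but crucial fact that the convex hull of \emph{finitely many} compact convex sets is compact, obtained as the image of a simplex times a finite product under a continuous affine map.
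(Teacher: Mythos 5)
Your proof is correct and is essentially the classical argument this result rests on: the paper states Milman's theorem without proof, citing \cite*{phelps2001lectures} (Proposition~1.5), and your contradiction scheme --- a closed convex balanced neighborhood $V$ separating the extreme point from $\overline{Z}$, a finite subcover of the compact set $\overline{Z}$, compactness of $\conv(K_1\cup\cdots\cup K_n)$ as the continuous image of $\Delta_{n-1}\times K_1\times\cdots\times K_n$, and then extremality forcing $x\in K_i$ --- is exactly the standard proof given there (and in Rudin's \emph{Functional Analysis}, Theorem~3.25), with all the delicate points (balancedness of $V$ to get $x\notin z_i+V$, $K_i\subseteq X$ compact, closedness of the finite convex hull) handled properly. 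The only implicit hypothesis worth flagging is that the space be Hausdorff, which you use when concluding that a compact set is closed; this is harmless here since the paper applies the proposition inside the Banach space $\cC(\Delta_{n-1})$.
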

From the definition of extreme points, it is immediate that if $Z\subset X$, then any extreme point of $X$ contained in $Z$ is an extreme point of $Z$. Hence,  the conclusion of Proposition~\ref{prop_milman} can be strengthened as $X^\ext \subset  \overline{Z}^\ext.$ 

Since the completion corresponds to taking closed convex hull (Corollary~\ref{cor_convex_hull}) and indecomposable preferences correspond to extreme points, Proposition~\ref{prop_milman} implies the following corollary.
\begin{corollary}\label{cor_indecomposable_in_completion}
For any preference domain $\D$, all indecomposable preferences of its completion are contained in indecomposable preferences of its closure, i.e., $(\D^\inv)^\ind\subset \overline{D}^\ind$.
\end{corollary}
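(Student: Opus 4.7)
The plan is to translate the statement into the language of logarithmic expenditure functions, where it becomes an essentially immediate consequence of Milman's proposition. First I would pass to the Banach space $\mathcal{C}(\Delta_{n-1})$ via the isometric compact embedding $\T$ from Lemma~\ref{lemma_embedding}, so that preferences correspond to points in $\T[\L]$, the domain $\D$ corresponds to $Z := \T[\L_\D]$, and the completion $\D^\inv$ corresponds to some convex compact set $X$. By Corollary~\ref{cor_convex_hull}, the completion is exactly the set of preferences whose logarithmic expenditure functions lie in the closed convex hull of $\L_\D$, which translates to $X = \conv[Z]$ inside the locally convex space $\mathcal{C}(\Delta_{n-1})$.

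Next I would apply Proposition~\ref{prop_milman} to $X$ and $Z$: since $X$ is compact, convex, and equal to $\conv[Z]$, Milman's theorem yields $X^\ext \subset \overline{Z}$. The small extra step is the strengthening noted immediately after Proposition~\ref{prop_milman}: whenever $Z' \subset X$ and $x \in X^\ext \cap Z'$, the point $x$ is automatically an extreme point of $Z'$ (any nontrivial representation $x = \alpha x' + (1-\alpha)x''$ within $Z'$ would be such a representation within $X$). Applying this with $Z' = \overline{Z}$ upgrades the inclusion to $X^\ext \subset \overline{Z}^\ext$.

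Finally I would translate back. By Corollary~\ref{cor_indecomposable_as_extreme}, a preference is indecomposable in a domain if and only if its logarithmic expenditure function is extreme in the associated set of logarithmic expenditure functions. Hence $X^\ext$ corresponds precisely to $(\D^\inv)^\ind$, and $\overline{Z}^\ext$ corresponds to $\overline{\D}^\ind$ (using that $\T$ is an isometry, so $\overline{\T[\L_\D]} = \T[\L_{\overline{\D}}]$ because the topology on preferences is aligned with the $\sup$-norm topology on the image under $\T$). The inclusion $X^\ext \subset \overline{Z}^\ext$ then reads $(\D^\inv)^\ind \subset \overline{\D}^\ind$, as claimed.

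No step poses a serious obstacle; the only care required is checking that the topological identifications are consistent, i.e., that $\overline{\D}$ on the preference side corresponds to $\overline{\L_\D}$ on the function side and to $\overline{\T[\L_\D]}$ on the Banach-space side. This is exactly what the isometric embedding in Lemma~\ref{lemma_embedding} guarantees, so the argument reduces to a one-line invocation of Milman plus the extreme-point bookkeeping described above.
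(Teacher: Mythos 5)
Your proposal is correct and follows essentially the same route as the paper: identify the completion with the closed convex hull of logarithmic expenditure functions (Corollary~\ref{cor_convex_hull}), identify indecomposable preferences with extreme points (Corollary~\ref{cor_indecomposable_as_extreme}), and apply Proposition~\ref{prop_milman} together with its strengthening $X^\ext \subset \overline{Z}^\ext$. The only difference is that you spell out the embedding via Lemma~\ref{lemma_embedding} and the topological bookkeeping explicitly, which the paper leaves implicit.
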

With this corollary, Theorem~\ref{th_general_appendix} follows from  Theorem~\ref{th_choquet} almost immediately.
\begin{proof}[Proof of Theorem~\ref{th_general_appendix}]
Apply Theorem~\ref{th_choquet} to
the closed invariant domain $\D'=\D^\inv$. We get that $\succsim$ is in $\D^\inv$ if and only if there is a measure supported $(\D^\inv)^\ind$ such that
$$\ln \E(\vec{p})=\int_{(\D^\inv)^\ind}
\ln \E_{\succsim'}(\vec{p})\,\dd\mu(\succsim').$$
By Corollary~\ref{cor_indecomposable_in_completion}, $(\D^\inv)^\ind\subset \overline{D}^\ind$, which completes the proof.
\end{proof}

\subsection{Proof of Proposition~\ref{prop_linear_invariant_hull}}\label{app_proof_linear_hull}

\fed{TODO: Complete paying attention to closure and the support of $\mu$ and $\nu$, and also prove a version with expenditure shares}

Observe first that, as atomic measures are dense in all measures, taking the closure of the set of preferences described by \eqref{eq_price_index_linear_aggregate_appendix} boils down to allowing arbitrary probability measures. Hence, the completion of linear preferences is the set of all $\succsim$ admitting representation~\eqref{eq_price_index_linear_aggregate_appendix} with some probability \ed{measure $\mu$ on $\R_+^n\setminus \{0\}$.} This representation can be interpreted as the result of continuous aggregation of linear preferences where non-atomic populations are allowed.

Define $\varepsilon_i=\ln v_i$ and denote the distribution of $\boldsymbol{\varepsilon}$ by $\nu$. Setting $w_i = -\ln p_i$, we get: $$-\ln\left(\min_{i=1,\ldots, n}\frac{p_i}{v_{i}} \right)=\max_{i=1,\ldots, n} (w_i+\varepsilon_i).$$ Hence,~\eqref{eq_price_index_linear_aggregate_appendix} is equivalent to
$$-\ln \big(\E(e^{-w_1},\ldots,e^{-w_n}\big)=\int_{\R^n} \left(\max_{i=1,\ldots n} (w_i+\varepsilon_i)\right)\dd \nu(\boldsymbol{\varepsilon}).
$$
As the right-hand side has the form of the expected utility in  \eqref{eq: ARUM}, this proves Proposition~\ref{prop_linear_invariant_hull}. 

\subsection{Proof of \eqref{eq:E3} being an expenditure function}\label{app_exp}

It suffices to check that the function $E(\vec{p})$ defined by \eqref{eq:E3} is homogeneous, concave, and non-negative. Homogeneity and non-negativity are straightforward. The concavity of $E(\vec{p})$ follows from quasi-concavity since $E(\vec{p})$ is homogeneous \citep[Chapter 3, p.104]{boyd2004convex}. To prove the quasi-concavity of $ E(\vec{p})$, we show that $\ln E(\vec{p})$ is concave.  To verify that, we compute the quadratic form of the Hessian of $\ln  E(\vec{p})$ on a vector $\vec{y}\in \R^3\setminus\{0\}$. The Hessian is negative definite:
$$\frac{\alpha}{3}\left(\frac{y_1^2}{p_1^2}+\frac{y_2^2}{p_2^2}+\frac{y_3^2}{p_3^2}\right) + (\alpha-1)\left(\frac{y_1+y_2+y_3}{p_1+p_2+p_3}\right)^2<-\left(1-\frac{2}{3}\alpha\right)\max_{i=1,2,3}\frac{y_i^2}{p_i^2}<0,$$
which implies concavity of $\ln E(\vec{p})$. Thus, \eqref{eq:E3} is an expenditure function.

\subsection{Proof of Proposition~\ref{prop_approx_algorithm}}\label{app_approximate}

The proof relies on the following lemma showing that if preferences in two populations have expenditure shares that are close, then  $\varepsilon$-equilibrium price vectors are close as well.
\begin{lemma}\label{lm_approximate_CE}
Consider two populations of $m$ consumers with the same budgets $b_1,\ldots, b_m$ but different preferences over $n$ goods:  $\succsim_1,\ldots, \succsim_m$ in the first population and $\succsim_1',\ldots, \succsim_m'$ in the second one. Assume that the expenditure shares $s_{k,i}(\vec{p})$ and  $s'_{k,i}(\vec{p})$ differ by at most some $\delta>0$ for any consumer $k$, good $i$,  and price~$\vec{p}$.
Then, any $\varepsilon$-equilibrium price vector for one population is an $(\varepsilon+n\delta)$-equilibrium price vector for the other. 
\end{lemma}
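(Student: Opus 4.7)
The plan is to transport any witnesses of $\varepsilon$-equilibrium for one population into near-witnesses for the other, good by good, using that the expenditure on good $i$ by consumer $k$ equals $b_k s_{k,i}(\vec{p})$.

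Fix the total supply $\vec{X}$ and suppose $\vec{p}$ is an $\varepsilon$-equilibrium for the first population, witnessed by bundles $\vec{x}_k \in D_k(\vec{p}, b_k)$. I want to produce witnesses $\vec{x}'_k \in D'_k(\vec{p}, b_k)$ for the second population and bound the weighted excess. First I would pick arbitrary selections $\vec{x}'_k \in D'_k(\vec{p}, b_k)$; at prices $\vec{p}$ where all demands are singletons the choice is unique, and by \eqref{eq_budget_share} the individual expenditures satisfy $p_i\cdot x_{k,i} = b_k\cdot s_{k,i}(\vec{p})$ and $p_i\cdot x'_{k,i} = b_k\cdot s'_{k,i}(\vec{p})$. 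The hypothesis $|s_{k,i}(\vec{p})-s'_{k,i}(\vec{p})|\le \delta$ then gives the key per-consumer, per-good estimate
$$p_i\cdot \bigl|x_{k,i}-x'_{k,i}\bigr| \le b_k\cdot \delta.$$

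Summing this estimate over the $m$ consumers yields $p_i\sum_k|x_{k,i}-x'_{k,i}|\le B\delta$ for every good $i$. By the triangle inequality applied to $|X_i-\sum_k x'_{k,i}|\le |X_i-\sum_k x_{k,i}|+\sum_k|x_{k,i}-x'_{k,i}|$, multiplying by $p_i$ gives
$$p_i\cdot e'_i \;\le\; p_i\cdot e_i + B\delta,$$
where $e_i$ and $e'_i$ denote the absolute excesses for the two populations. Summing over $i=1,\ldots,n$ and using that $\vec{p}$ is an $\varepsilon$-equilibrium for the first population,
$$\langle \vec{p},\vec{e}'\rangle \;\le\; \langle \vec{p},\vec{e}\rangle + n B\delta \;\le\; (\varepsilon+n\delta)\cdot B,$$
which is exactly the claim.

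The only subtlety I anticipate is handling prices at which the demand correspondences $D_k(\vec{p},b_k)$ or $D_k'(\vec{p},b_k)$ are not singletons, so that the expenditure shares $s_{k,i}(\vec{p})$ and $s'_{k,i}(\vec{p})$ are not pinned down pointwise. I would address this by noting that the hypothesis on expenditure shares must be read as: for any measurable selection of the share correspondences one has $|s_{k,i}-s'_{k,i}|\le\delta$. Given the witness bundles $\vec{x}_k$ for the first population, the numbers $p_i x_{k,i}/b_k$ form one admissible selection of the first population's shares; since the second demand correspondence is non-empty, convex, and budget-exhausting, one can then pick $\vec{x}_k'\in D_k'(\vec{p},b_k)$ whose induced shares lie within $\delta$ of these numbers, and the remainder of the argument proceeds verbatim.
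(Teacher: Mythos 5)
Your proof is correct and follows essentially the same route as the paper's: both express each consumer's demand for good $i$ as $b_k\, s_{k,i}(\vec{p})/p_i$, bound the price-weighted gap between the two market demands by $n\cdot B\cdot\delta$, and conclude via the triangle inequality. Your extra care with non-singleton demand correspondences is a reasonable tightening of a point the paper leaves implicit, but it does not change the argument.
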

\begin{proof}[Proof of Lemma~\ref{lm_approximate_CE}]
The demand of an agent $k$ for a good $i$ can be expressed through expenditure shares as follows:
\begin{equation}\label{eq_demand_through_shares}
D_{k,i}(\vec{p},b_k)=s_{k,i}(\vec{p})\cdot \frac{b_k}{p_i}.
\end{equation}
Let $\vec{x}_1+\ldots+\vec{x}_m$ and $\vec{x'}_1+\ldots+\vec{x'}_m$ be market demands of the two populations from the statement of the lemma at some vector of prices $\vec{p}$. By~\eqref{eq_demand_through_shares} and the assumption that expenditure shares differ by at most $\delta$,
$$\sum_{i=1}^n p_i\cdot \left|\sum_{k=1}^m x_{k,i}-\sum_{k=1}^m x'_{k,i}\right|\leq \sum_{i=1}^n\sum_{k=1}^m b_k \cdot \max_{i,k}\left|s_{k,i}(\vec{p})- s'_{k,i}(\vec{p})\right|\leq n\cdot B\cdot\delta.$$
Hence, if $\vec{p}$ is an $\varepsilon$-equilibrium price vector for $\succsim_1',\ldots, \succsim_m'$, it is an $(\varepsilon+n\delta)$-equilibrium price vector for $\succsim_1,\ldots, \succsim_m$.
\end{proof}

To prove the proposition, it remains to show that any preference $\succsim$ over two substitutes can be approximated by the aggregate preference $\succsim'=\succsim_\agr$ of an auxiliary population with linear preferences so that the expenditure shares differ by at most~$\varepsilon$ at any vector of prices and the number of auxiliary agents is of the order of $1/\varepsilon$.

\begin{proof}[Proof of Proposition~\ref{prop_approx_algorithm}] Since $s_{\succsim,1}+s_{\succsim,2}=1$ and expenditure shares depend on the ratio of prices only, it is enough to ensure that $\big|s_{\succsim,1}(p_1,1)-s_{\agr,1}(p_1,1)\big|\leq \varepsilon$ for any $p_1\in \R_{++}$. As $\succsim$ exhibits substitutability,
$s_{\succsim,1}(\,\cdot\,,1)$ is a non-increasing function  with values in $[0,1]$. For any such function $f$, 
there is a piecewise-constant function $f_\varepsilon$ with at most $1/\varepsilon+1$ jumps such that the two functions differ by at most $\varepsilon$; indeed, one can take $f_\varepsilon=\varepsilon\cdot\left[f/\varepsilon\right]$, where $[t]$ denotes the integer part of a real number $t$. By  Corollary~\ref{cor_MRS_reconstruction}, any piecewise-constant non-decreasing function with values in $[0,1]$ is equal to $s_{\agr,1}(\,\cdot\,,1)$ for a population of linear consumers with the number of consumers equal to the number of jumps, marginal rates of substitution given by positions of the jumps, and budgets determined by jumps' magnitude. We conclude that $1/\varepsilon +1$ linear consumers are enough to approximate expenditure shares of any preference exhibiting substitutability with precision $\varepsilon$.  Combined with Lemma~\ref{lm_approximate_CE}, this observation completes the proof.
\end{proof}

Note that constructing the approximation in a computationally efficient way requires solving the equation $s_{\succsim,1}(p_1,1)=\varepsilon\cdot l$ multiple times for various $\succsim$ and $l$. Provided that there is an oracle computing expenditure shares, one can use binary search for this task.

\subsection{Indecomposability in the full domain and proof of Proposition~\ref{prop_extreme_points_full_domain}}\label{app_proof_indecomposable_full_domain}

We will need the following lemma.
\begin{lemma}\label{lm_Jensen}
	Consider a function $h(\vec{t})=t_1^\alpha\cdot  t_2^{1-\alpha}$ where $\alpha\in (0,1)$ and $\vec{t}\in \R_{++}^2$. If $\vec{t}\ne \const \cdot \vec{t}'$, then 
	$$h\big(\lambda \vec{t}+ (1-\lambda )\vec{t}'\big)>\lambda h(\vec{t})+(1-\lambda)h(\vec{t}')$$
	for any $\lambda\in (0,1)$.
\end{lemma}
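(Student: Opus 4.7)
The claim is the strict concavity of the weighted geometric mean $h(\vec{t})=t_1^\alpha t_2^{1-\alpha}$ along any direction that is not radial. My plan is to reduce it to the weighted AM--GM inequality, applied twice, and then chase the equality conditions.

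Setting $\vec{s}=\lambda \vec{t}+(1-\lambda)\vec{t}'\in\R_{++}^2$, I will divide the desired inequality through by $h(\vec{s})>0$ to rewrite it as
\begin{equation*}
\lambda\left(\frac{t_1}{s_1}\right)^{\alpha}\!\left(\frac{t_2}{s_2}\right)^{1-\alpha}+(1-\lambda)\left(\frac{t_1'}{s_1}\right)^{\alpha}\!\left(\frac{t_2'}{s_2}\right)^{1-\alpha}<1.
\end{equation*}
To each of the two mixed powers on the left I will apply the weighted AM--GM inequality $a^\alpha b^{1-\alpha}\le \alpha a+(1-\alpha)b$, with equality if and only if $a=b$. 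Summing the two bounds with weights $\lambda$ and $1-\lambda$ and using the definition of $\vec{s}$, the right--hand side collapses to $\alpha\cdot 1+(1-\alpha)\cdot 1=1$, which gives the desired non-strict inequality.

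The main (and only nontrivial) step is establishing strictness. Equality in the two AM--GM applications requires simultaneously $t_1/s_1=t_2/s_2$ and $t_1'/s_1=t_2'/s_2$, i.e.\ $t_1/t_2=s_1/s_2=t_1'/t_2'$. Since $\vec{t},\vec{t}'\in\R_{++}^2$, equality of the ratios $t_1/t_2$ and $t_1'/t_2'$ is exactly the condition that $\vec{t}'=c\vec{t}$ for some $c>0$, contradicting the hypothesis. Hence at least one AM--GM step is strict, and the combined inequality is strict.

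I do not anticipate any real obstacle: the argument is entirely elementary and the equality analysis of AM--GM cleanly isolates the non-collinearity hypothesis. The only care needed is in keeping track that $\vec{s}\in\R_{++}^2$ (ensured by $\lambda\in(0,1)$ and $\vec{t},\vec{t}'\in\R_{++}^2$) so that the normalization by $h(\vec{s})$ is legitimate.
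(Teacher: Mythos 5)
Your proof is correct, and it takes a genuinely different route from the paper's. The paper restricts $h$ to the segment, setting $g(\lambda)=h\big(\lambda\vec{t}+(1-\lambda)\vec{t}'\big)$, and shows $g''(\lambda)<0$ by a linear change of variables that reduces the claim to the strict negativity of the second derivative of $\gamma^\alpha(1+\gamma)^{1-\alpha}$ --- a calculus computation it then omits (and where the non-proportionality of $\vec{t},\vec{t}'$ enters implicitly, since it is what makes that change of variables possible). You instead normalize by $h(\vec{s})$ with $\vec{s}=\lambda\vec{t}+(1-\lambda)\vec{t}'$, apply the weighted AM--GM bound $a^\alpha b^{1-\alpha}\le \alpha a+(1-\alpha)b$ to each of the two terms, and observe that the bounds sum exactly to $1$; strictness then follows from the AM--GM equality condition, which forces $t_1/t_2=s_1/s_2=t_1'/t_2'$ and hence proportionality of $\vec{t}$ and $\vec{t}'$. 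Your argument buys three things: it is fully self-contained (no omitted computation), it makes completely explicit where the hypothesis $\vec{t}\ne\const\cdot\vec{t}'$ is used, and it generalizes verbatim to weighted geometric means $\prod_i t_i^{\alpha_i}$ in any number of coordinates, whereas the paper's one-variable reduction is tailored to the two-dimensional case. The paper's approach, in exchange, is shorter to state and fits the convex-analysis style of the surrounding appendix. Your one point of care --- that $\vec{s}\in\R_{++}^2$ so the normalization is legitimate --- is handled correctly.
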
	
\begin{proof}
	The result follows from strict concavity of $g(\lambda)=h\big(\lambda \vec{t}+ (1-\lambda )\vec{t}'\big)$. To demonstrate strict concavity, it is enough to show that the second derivative $g''(\lambda)<0$. After a linear change of the variable, this requirement boils down to the negativity of the second derivative of $\gamma^\alpha (1+\gamma)^{1-\alpha}$ with respect to $\gamma$.  We omit the elementary computation.
\end{proof}	
Using Lemma \ref{lm_Jensen}, we prove Proposition~\ref{prop_extreme_points_full_domain}.
\begin{proof}[Proof of Proposition~\ref{prop_extreme_points_full_domain}] 
A utility function $u$ is of the form~\eqref{eq_extreme_full_domain} if and only if the corresponding expenditure function is also piecewise linear:
\begin{equation}\label{eq_piecewise_price_index_appendix}
\E=\min_{c\in C} \left(\sum_{j=1}^n c_j\cdot p_j\right), 
\end{equation}
where $C\subset \R_+^n$ is finite or countable. 
We need to show that preferences with such expenditure functions are indecomposable.
Towards a contradiction, assume that $\E$ of the form~\eqref{eq_piecewise_price_index_appendix} can  be represented as $$
\ln \E = \alpha \ln \E_1 + (1-\alpha) \ln \E_2, 
$$
where $\E_1$ and $\E_2$ are expenditure functions representing distinct homothetic preferences and $\alpha\in (0,1)$. Hence, $\E_1$ and $\E_2$ are not proportional to each other, i.e., the ratio $\E_1/\E_2\ne \const$. By continuity of expenditure functions, this means that there is a linearity region of $\E$ where $\E_1/\E_2\ne \const$. Therefore, we can find $\vec{p},\vec{p}'\in \R_{++}^n$  from the same linearity region of $\E$ such that 
\begin{equation}\label{eq_decomp_appendix}
\frac{\E_1(\vec{p})}{\E_2(\vec{p})} \neq \frac{\E_1(\vec{p}')}{\E_2(\vec{p}')}.
\end{equation}
By homogeneity of expenditure functions, we can assume that $\vec{p}$ and $\vec{p}'$ are normalized so that $\E(\vec{p})=\E(\vec{p}')=1$. Since $\vec{p}$ and $\vec{p}'$ belong to the same linearity region, the value of $\E$ at the mid-point $\vec{p}''=(\vec{p}+\vec{p}')/2$ is also equal to $1$. Therefore,
$$1= \E\left(\vec{p}''\right)= \E_1(\vec{p}'')^\alpha \E_2(\vec{p}'')^{1-\alpha}\geq \left(\frac{1}{2}\E_1(\vec{p})+\frac{1}{2}\E_1(\vec{p}')\right)^\alpha \left(\frac{1}{2}\E_2(\vec{p})+\frac{1}{2}\E_2(\vec{p}')\right)^{1-\alpha},$$
where we used concavity of $\E_1$ and $\E_2$. By Lemma~\ref{lm_Jensen}, the right-hand side admits the following lower bound:
$$\left(\frac{1}{2}\E_1(\vec{p})+\frac{1}{2}\E_1(\vec{p}')\right)^\alpha \left(\frac{1}{2}\E_2(\vec{p})+\frac{1}{2}\E_2(\vec{p}')\right)^{1-\alpha}> \frac{1}{2}\E_1(\vec{p})^\alpha \E_2(\vec{p})^{1-\alpha}+\frac{1}{2}\E_1(\vec{p}')^\alpha \E_2(\vec{p}')^{1-\alpha}.$$
The right-hand side can be rewritten as
$$\frac{1}{2}\E_1(\vec{p})^\alpha \E_2(\vec{p})^{1-\alpha}+\frac{1}{2}\E_1(\vec{p}')^\alpha \E_2(\vec{p}')^{1-\alpha}=\frac{1}{2}\E(\vec{p})+\frac{1}{2}\E(\vec{p}')=1.$$
We end up with a contradictory inequality $1>1$. Therefore, $\E$ cannot be represented as a convex combination~\eqref{eq_decomp_appendix}, and we conclude that the corresponding preference is indecomposable.
\end{proof}

Let us explore whether there are other indecomposable preferences in the full domain. For simplicity, we focus on the case of $n=2$ goods. As opposed to
preferences with piecewise linear $u$ and $\E$ considered in Proposition~\ref{prop_extreme_points_full_domain}, we examine preferences with expenditure functions $\E$ that are strictly concave in a neighborhood of a certain point.

We say that a function of one variable $h=h(t)$ is strictly concave in the neighborhood of  $t=t_0$ if there is $\varepsilon>0$ and $\delta>0$ such that the second derivative of $h''(t)<-\delta$ for almost all $t$ in the $\varepsilon$-neighborhood $[t_0-\varepsilon, t_0+\varepsilon]$ of $t_0$. We recall that the second derivative exists almost everywhere for any concave function by Alexandrov's theorem. \fed{Reference}
\begin{proposition}\label{prop_strictly_concave_decomposable}
	Consider a preference $\succsim$ over two goods with expenditure function $\E$.
	If there is a point $\vec{p}_0\in \R_{++}^2$ and a direction $\vec{r}\in \R^2\setminus\{0\}$ such that $g(t)=\E(\vec{p}_0+t\cdot \vec{r})$ is strictly concave in the neighborhood of  $t=0$, then $\succsim$ is not indecomposable.
\end{proposition}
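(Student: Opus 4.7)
The plan is to realize $\succsim$ as the aggregate of two distinct preferences obtained by an additive perturbation of $\ln \E$. By Theorem~\ref{th_representative_index}, it suffices to find a non-trivial $0$-homogeneous function $\varphi$ on $\R_{++}^2$ such that both $\E e^{\varphi}$ and $\E e^{-\varphi}$ remain valid expenditure functions, since then $\ln \E = \tfrac{1}{2}\ln(\E e^{\varphi}) + \tfrac{1}{2}\ln(\E e^{-\varphi})$ gives a decomposition of $\succsim$ as an equal-budget aggregation of two different preferences. The strict concavity of $\E$ along $\vec{r}$ at $\vec{p}_0$ is precisely the ``buffer'' needed to support such a perturbation without destroying concavity.

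I would first reduce to a one-variable problem using the $n=2$ structure. Writing $\E(p_1,p_2) = p_2\, h(p_1/p_2)$ with $h\colon\R_+\to\R_+$ concave and non-negative, observe that $\vec{r}$ cannot be parallel to $\vec{p}_0$, since otherwise $g(t) = (1+ct)\E(\vec{p}_0)$ is linear. For any other $\vec{r}$, the ratio $p_1/p_2$ is moved non-trivially along the curve $\vec{p}_0 + t\vec{r}$, and strict concavity of $g$ at $t=0$ is equivalent to strict concavity of $h$ on some interval $I = [s_0 - \varepsilon, s_0 + \varepsilon] \subset \R_{++}$ around $s_0 = p_{0,1}/p_{0,2}$, in the sense that the almost-everywhere second derivative satisfies $h''(s) \le -\delta$ on $I$ for some $\delta > 0$. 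I would record the standard fact that this strong concavity forces $h$ to be $C^{1,1}$ on the interior of $I$, which legitimizes the second-derivative calculation below.

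Next I would pick a $C^{\infty}$ bump $\tilde\varphi$ supported in the interior of $I$, not identically zero, and set $h_{j}(s) = h(s)\, e^{(-1)^{j+1}\tau\tilde\varphi(s)}$ for $j=1,2$ and a small $\tau>0$ to be chosen. Inside $I$, the product rule gives
\[
(h_{j})''(s) = e^{\pm\tau\tilde\varphi(s)}\Bigl[ h''(s) \pm 2\tau\, h'(s)\tilde\varphi'(s) + \tau^{2} h(s)(\tilde\varphi'(s))^{2} \pm \tau\, h(s)\tilde\varphi''(s) \Bigr],
\]
and since $h, h'$ are bounded on $I$ while $h'' \le -\delta$ a.e.\ there, a small enough $\tau$ keeps the bracket below $-\delta/2$; hence $(h_{j})'' \le 0$ a.e.\ on $I$. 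Outside the support of $\tilde\varphi$ we have $h_{j} \equiv h$, which is already concave. Because $\tilde\varphi$ and all its derivatives vanish at the boundary of $\mathrm{supp}(\tilde\varphi)$, which sits strictly inside $I$ where $h$ is $C^{1,1}$, the functions $h_{j}$ agree with $h$ in value and in derivative across that boundary. The monotone-derivative characterization of concavity then yields concavity of $h_{j}$ on all of $\R_{++}$; continuity extends this to $\R_+$ since $\mathrm{supp}(\tilde\varphi)$ is bounded away from $0$ and $+\infty$.

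Finally, $\E_{j}(p_1,p_2) := p_2\, h_{j}(p_1/p_2)$ is a continuous, non-negative, concave, $1$-homogeneous function on $\R_+^2$, and hence the expenditure function of a homothetic preference $\succsim_{j}$; by construction $\ln \E = \tfrac{1}{2}\ln \E_{1} + \tfrac{1}{2}\ln \E_{2}$, so by Theorem~\ref{th_representative_index}, $\succsim$ is the aggregate of the two-agent population with preferences $\succsim_1,\succsim_2$ and equal budgets. Since $\tilde\varphi \not\equiv 0$, the ratio $\E_1/\E_2 = e^{2\tau\tilde\varphi}$ is non-constant, so $\succsim_1 \ne \succsim_2$ and $\succsim$ is decomposable. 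The main technical obstacle is the possible lack of $C^{2}$ regularity of $h$ globally; the argument localizes the perturbation to the region of strong concavity, where strong concavity itself supplies the $C^{1,1}$ smoothness needed for the pointwise computation, and the smooth cutoff of $\tilde\varphi$ handles the gluing to the unperturbed $h$ without introducing any convexity.
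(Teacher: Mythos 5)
Your construction is essentially the paper's: reduce to the one-variable profile $h(z)=\E(z,1)$ via homogeneity, perturb it multiplicatively by a smooth bump supported inside the region of strict concavity (the paper uses the factors $(1+\gamma\varphi)^{\pm1}$, you use $e^{\pm\tau\tilde\varphi}$; the difference is immaterial, since both make the two logarithmic expenditure functions average exactly to $\ln\E$), verify concavity for a small perturbation parameter, and rebuild two distinct homogeneous expenditure functions whose aggregation, by Theorem~\ref{th_representative_index}, is $\succsim$. Your reduction step is in fact more explicit than the paper's, since you note that $\vec{r}$ cannot be parallel to $\vec{p}_0$ and that strict concavity of $g$ transfers to strict concavity of the profile $h$ near $s_0$.

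However, one justification you invoke is false: it is not a standard fact that $h''\le-\delta$ a.e.\ forces $h$ to be $C^{1,1}$ on the interior of $I$. For a concave function, $C^{1,1}$ regularity is equivalent to a \emph{lower} bound on the distributional second derivative, whereas strong concavity is an \emph{upper} bound; the two are independent. For instance, $h(s)=C-s^{2}/2-|s-s_{1}|$ has $h''=-1$ at every $s\ne s_{1}$ yet has a kink at $s_{1}$. Since the proposition's hypothesis (exactly the paper's a.e.\ definition of strict concavity) permits such kinks, and more generally a singular negative part of $h''$ inside $I$, your pointwise product-rule computation together with the inference ``$(h_{j})''\le 0$ a.e.\ $\Rightarrow h_{j}$ concave'' is not valid as written: a.e.\ non-positivity of the second derivative does not imply concavity for functions that are not $C^{1}$ (witness $|x|$, whose a.e.\ second derivative vanishes). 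The repair is routine and leaves your architecture intact: compute $(h\,e^{\pm\tau\tilde\varphi})''$ as a distribution. The singular part of $h''$ is a non-positive measure and is only multiplied by the positive smooth factor $e^{\pm\tau\tilde\varphi}$, so it stays non-positive; every new term created by the bump is absolutely continuous, and on the absolutely continuous part your $-\delta$ buffer dominates these terms once $\tau$ is small. Equivalently, kinks of $h$ are downward jumps of $h'$ and remain downward jumps of $h_{j}'$. The gluing at the boundary of $\mathrm{supp}(\tilde\varphi)$ likewise needs no regularity of $h$: one-sided derivatives of $h_{j}$ and $h$ agree there simply because $\tilde\varphi$ and $\tilde\varphi'$ vanish. (The paper's own proof also reasons through a.e.\ second derivatives and implicitly relies on the same distributional observation, but it does not assert the false regularity claim.)
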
	
\begin{proof}
	Since $\E(\alpha\cdot \vec{p})=\alpha\cdot  \E(\vec{p})$, the values of the expenditure function on the line $p_2=1$ determine its values everywhere by $\E(p_1,p_2)=p_2\cdot \E(p_1/p_2,\, 1)$. Accordingly, the condition from the statement is equivalent to the existence of $t_0$ such that $g(t)=\E(t,1)$ is strictly concave in the neighborhood of $t_0$. \fed{Add details}
	
	Let us show that if $g(t)=\E(t,1)$ is strictly concave in the neighborhood of $t_0$, then the preference $\succsim$ is an aggregation of some distinct $\succsim_1$ and $\succsim_2$. By strict concavity, $g''<-\delta$ on $[t_0-\varepsilon, t_0+\varepsilon]$ for some $\varepsilon,\delta>0$.	Let $\varphi(z)$ be a smooth function on $\R$ not equal to zero identically and vanishing outside of the interval $[-1,1]$ together with all its derivatives. For example, one can take 
	$$\varphi(z)=\exp\left(-\frac{1}{1-z^2}\right)$$
	for $z\in(-1,1)$ and zero outside. Define 
	$$
	g_1(t)= \left(1+\gamma\cdot  \varphi\big(\varepsilon(t-t_0)\big)\right)g(t)\qquad\text{and}\qquad g_2(t)= \frac{1}{1+\gamma\cdot  \varphi\big(\varepsilon(t-t_0)\big)} g(t)
	$$
	for some constant $\gamma>0$. Note that $g_1=g_2=g$ outside the $\varepsilon$-neighborhood of $t_0$. The second derivatives of $g_1$ and $g_2$ continuously depends on  $\gamma$ and, for $\gamma=0$, the derivatives are bounded from above by $-\delta$ in the $\varepsilon$-neighborhood  of $t_0$. Hence, for small enough $\gamma>0$, the second derivative is non-positive, i.e., both $g_1$ and $g_2$ are concave.
	
	Define $\E_1(\vec{p})=p_2\cdot g_1(p_1/p_2)$ and $\E_2(\vec{p})=p_2\cdot g_2(p_1/p_2)$. These are non-negative homogeneous concave functions that are not proportional to each other. Hence, $\E_1$ and $\E_2$ are expenditure functions of some distinct preferences $\succsim_1$ and $\succsim_2$. By the construction,
	$$\ln \E= \frac{1}{2}\ln \E_1+\frac{1}{2}\ln \E_2$$
	and thus $\succsim$ is the aggregate preference for a pair of consumers with preferences  $\succsim_1$ and $\succsim_2$ and equal incomes.
\end{proof}	

It may seem that Propositions~\ref{prop_extreme_points_full_domain} and~\ref{prop_strictly_concave_decomposable} cover all possible preferences in the case of two goods: intuitively, an expenditure function $\E$ is either piecewise linear or there is a point in the neighborhood of which $\E$ is strictly concave. However, there are pathological examples not captured by the two propositions. 

Any concave function $f$ on $\R_+$  can be represented as
$$f(t)=f(0)-\int_0^t \left(\int_0^s \dd\nu(q)\right)\dd s$$
for some positive measure $\nu$ on $\R_+$. This $\nu$  is the uniquely defined distributional second derivative of~$f$. Abusing the notation, we will write $\nu=f''$. Note that the classical pointwise second derivative (where exists) equals the density of the absolutely continuous component of $\nu$. 

Propositions~\ref{prop_extreme_points_full_domain} and~\ref{prop_strictly_concave_decomposable} address the cases where the second derivative of $\E(t,1)$ is either an atomic measure with a nowhere dense set of atoms or has an absolutely continuous component with a strictly negative density on a certain small interval. 

Recall that $\nu$ is called singular if there is a set of zero Lebesgue measure such that its complement has $\nu$-measure zero. For example, atomic measures with discrete sets of atoms are singular, but there are other singular measures such as non-atomic measures supported on a Cantor set or atomic measures with an everywhere dense set of atoms.  
\begin{proposition}\label{prop_pathology}
	If $\succsim$ is a preference over two goods with an expenditure function $\E$ such that the second distributional derivative of $g(t)=\E(t,1)$ is singular, then $\succsim$ is indecomposable in the full domain.
\end{proposition}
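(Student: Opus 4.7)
The plan is to argue by contradiction. Suppose that $\ln \E = \alpha \ln \E_1 + (1-\alpha)\ln \E_2$ for some $\alpha \in (0,1)$ and some $\E_1, \E_2$ not proportional to each other; I aim to derive that $\E_1$ and $\E_2$ must in fact be proportional, yielding a contradiction. Since homogeneity recovers the full expenditure functions from their restrictions to $p_2 = 1$, I reduce to the positive, continuous, concave functions $g(t) = \E(t,1)$ and $g_i(t) = \E_i(t,1)$ on $\R_{++}$ related by $g = g_1^{\alpha} g_2^{1-\alpha}$, with $g''$ a singular non-positive Radon measure by hypothesis.

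The heart of the argument is the distributional identity
\begin{equation*}
-g'' \;=\; \alpha \,\frac{g}{g_1}\,(-g_1'') \;+\; (1-\alpha)\,\frac{g}{g_2}\,(-g_2'') \;+\; \alpha(1-\alpha)\, g\left(\frac{g_1'}{g_1}-\frac{g_2'}{g_2}\right)^{\!2}\lambda,
\end{equation*}
where $\lambda$ denotes the Lebesgue measure. Formally, this follows by differentiating $\ln g = \alpha \ln g_1 + (1-\alpha)\ln g_2$ twice at points of classical second differentiability (a full-measure set by Alexandrov's theorem) and applying the algebraic identity $\alpha X^2+(1-\alpha)Y^2-(\alpha X+(1-\alpha)Y)^2=\alpha(1-\alpha)(X-Y)^2$ with $X = g_1'/g_1$, $Y = g_2'/g_2$. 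To upgrade this to a genuine identity of Radon measures, I plan to mollify each concave factor, setting $g_i^\varepsilon = g_i * \rho_\varepsilon$, write the identity at the smooth level for $g^\varepsilon = (g_1^\varepsilon)^\alpha (g_2^\varepsilon)^{1-\alpha}$, and pass to the weak-$*$ limit using that $g_i^\varepsilon \to g_i$ uniformly on compacts, $(g_i^\varepsilon)'' \to g_i''$ weakly as measures, and the continuous weights $g^\varepsilon / g_i^\varepsilon$ converge uniformly to $g/g_i$.

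With the identity established, I apply Lebesgue decomposition. The left-hand side is purely singular by hypothesis, while on the right every summand is non-negative as a measure: the $g/g_i$ are continuous and positive, $-g_i''$ are non-negative measures, and the last summand is absolutely continuous with non-negative density. Consequently the absolutely continuous part of every term on the right must vanish; in particular $\alpha(1-\alpha)\, g \,(g_1'/g_1 - g_2'/g_2)^2 = 0$ almost everywhere. Since $g > 0$ and $\alpha \in (0,1)$, this forces $(\ln g_1)' = (\ln g_2)'$ almost everywhere, so $\ln(g_1/g_2)$ is constant and $g_1 \propto g_2$. Substituting back into $g = g_1^\alpha g_2^{1-\alpha}$ yields $g \propto g_1 \propto g_2$, so $\E, \E_1, \E_2$ represent a single preference, contradicting the decomposability.

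The main obstacle will be making the distributional identity rigorous. Pointwise the identity is elementary, but transferring it to measures must accommodate possibly nontrivial singular parts of $g_1''$ and $g_2''$ passing through the continuous weights $g/g_i$ and through the cross-term $(g_1'/g_1)(g_2'/g_2)$ hidden inside the square. The mollification scheme above should close the gap, but one has to verify that the nonlinear combinations commute with the weak limits uniformly on compacts bounded away from the boundary of $\R_{++}$; an alternative and possibly cleaner route is to work with the convex potentials $f = -\ln g$ and $f_i = -\ln g_i$, exploit the trivial measure identity $f'' = \alpha f_1'' + (1-\alpha) f_2''$, and reconstruct $g''$ via the signed-measure formula $-g'' = g\bigl(f'' - (f')^2 \lambda\bigr)$.
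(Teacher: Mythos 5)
Your proposal is correct, and its mathematical core is identical to the paper's: the logarithmic second-derivative identity whose key term is $\alpha(1-\alpha)\,g\,\big((\ln g_1)'-(\ln g_2)'\big)^2$ (the paper writes the $\alpha=\half$ case, after a harmless reduction, as $g''=\frac{g_1''g_2+g_2''g_1}{2\sqrt{g_1g_2}}-\frac{(g_1'g_2-g_2'g_1)^2}{4(g_1g_2)^{3/2}}$, which is the same formula), together with the observation that this term is strictly positive on a set of positive Lebesgue measure whenever $g_1/g_2$ is non-constant. Where you diverge is in the rigor mechanism, and here the comparison is instructive. You insist on the identity as an equality of Radon measures, which is exactly what forces the mollification argument you flag as the main obstacle; that argument does close (the weights $g^\varepsilon/g_i^\varepsilon$ converge locally uniformly, $(g_i^\varepsilon)''\to g_i''$ weakly-$*$ with locally bounded masses, and $(g_i^\varepsilon)'\to g_i'$ a.e.\ with local domination), but it is the hardest step of your proof and it is unnecessary. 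The paper sidesteps it: it computes the identity only pointwise, at the (full-measure, by Alexandrov) set where $g_1,g_2$ have classical second derivatives, and then invokes the fact stated just before the proposition that the classical a.e.\ second derivative is the density of the absolutely continuous part of the distributional one. Singularity of $g''$ means that density vanishes a.e., which already contradicts the pointwise identity --- no measure-level identity, no Lebesgue decomposition of the right-hand side, no mollification. So if you keep your structure, you can delete the measure identity and run your final paragraph verbatim at a.e.\ points. One small step to patch either way: from $(\ln g_1)'=(\ln g_2)'$ a.e.\ you conclude $\ln(g_1/g_2)$ is constant; this inference requires $\ln(g_1/g_2)$ to be locally absolutely continuous (true here, since $g_1,g_2$ are concave and locally bounded away from zero, hence locally Lipschitz), and it is precisely the kind of step that fails for general functions --- the Cantor function being the standard counterexample, and indeed the theme of this very proposition.
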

We see that the set of indecomposable preferences is broader than suggested by Proposition~\ref{prop_extreme_points_full_domain}. Note that, in the particular case of two goods, Proposition~\ref{prop_extreme_points_full_domain}  is a direct corollary of Proposition~\ref{prop_pathology}.
\begin{proof}
It is enough to show that if $\succsim$ is an aggregation of two distinct preferences $\succsim_1$ and $\succsim_2$, then the second distributional derivative of $g$ has a non-zero absolutely continuous component. In other words, we need to show that the classical derivative $g''\ne 0$ on a set of positive Lebesgue measure. 

Let  $\E_1$ and $\E_2$ be expenditure functions of $\succsim_1$ and $\succsim_2$. Since the preferences are distinct, $\E_1\ne \const \cdot \E_2$.  By the assumption,  $\E=\E_1^\alpha\cdot \E_2^{1-\alpha}$ with some $\alpha\in(0,1)$. Without loss of generality, we can assume that $\alpha=1/2$. Indeed, if $\alpha\ne\frac{1}{2}$, one can define new expenditure functions $\E_1'= \E_1^{\alpha-\varepsilon}\cdot \E_2^{1-\alpha+\varepsilon}$ and $\E_2'= \E_1^{\alpha+\varepsilon}\cdot \E_2^{1-\alpha-\varepsilon}$ for some $\varepsilon<\min\{\alpha,\,  1-\alpha\}$ so that $\E=\sqrt{\E_1'\cdot \E_2'}$.

Hence, $g=\sqrt{g_1\cdot g_2}$ where $g_1(t)=\E_1(t,1)$ and $g_2(t)=\E_2(t,1)$ are non-negative concave functions not proportional to each other. Computing the classical second derivative of $g$, we obtain
$$g''=\frac{g_1''\cdot g_2+g_2'' \cdot g_1}{2\sqrt{g_1\cdot g_2}}-\frac{\big(g_1'\cdot g_2-g_2'\cdot g_1\big )^2 }{4(g_1\cdot g_2)^{3/2}}.$$
Both terms are non-positive.  The numerator in the second term can be rewritten as follows
$$\big(g_1'\cdot g_2-g_2'\cdot g_1\big )^2=\left((g_2)^2\cdot\left(\frac{g_1}{g_2}\right)' \right)^2$$
Since the ratio $g_1/g_2$ is non-constant, its derivative $(g_1/g_2)'$ is non-zero on a set of positive measure. Thus the distributional
derivative $g''$ contains a non-zero absolutely continuous component. We conclude that preferences such that $g''$ has no absolutely continuous component are indecomposable.
\end{proof}	

The approach from Proposition~\ref{prop_pathology} extends to $n>2$ goods. It can be used to show that if the distributional second derivative of 
$$g(t)=\E(p_1,\ldots,p_{i-1},\, t,\, p_{i+1},\ldots, p_n)$$
 is singular for any $i=1,\ldots,n $ and any fixed $$p_{-i}=(p_1,\ldots,p_{i-1}, p_{i+1},\ldots, p_n)\in \R_{++}^{n-1},$$ then the corresponding preference is indecomposable. This result extends Proposition~\ref{prop_extreme_points_full_domain}.

\subsection{Proof of Proposition~\ref{prop_indecomposable_for_C}}
\label{app_proof_indecomposable_complement}

\begin{proof}
Consider a  Leontief preference $\succsim$ over Cobb-Douglas composite goods. It corresponds to a utility function
$$    
u(\vec{x})=\min_{\vec{a}\in A} \left\{a_0\cdot\prod_{i=1}^n x_i^{a_i}\right\},
$$
where $A$ is finite or countably infinite subset of $\R_{++}\times \Delta_{n-1}$. Assume that $\succsim$ is non-trivial, i.e., there are $\vec{a},\vec{a}'\in A$ such that $(a_1,\ldots, a_n)\ne (a_1',\ldots, a_n')$. 
The intersection of convex sets corresponds to convexification of the maximum of their support functions \cite[Theorem~7.56]{charalambos2013infinite}. Since expenditure functions are support functions of upper contour sets up to a sign, the expenditure function corresponding to $\succsim$ takes the following form
$$\E(\vec{p})=\cav\left[\max_{\vec{a}\in A} \left\{\frac{1}{a_0}\cdot\prod_{i=1}^n p_i^{a_i}\right\}\right],$$
where $\cav$ denotes concavification. 

Let us focus on the case of $n=2$ goods. In this case, $\E$ has a particularly simple structure. The positive orthant $\R_{++}^2$ is partitioned into a finite or countably infinite number of cones of two types: (I) cones where $\E$ is linear (II) cones where $\E$ coincides with the Cobb-Douglas expenditure function ${1}/{a_0}\cdot\prod_{i=1}^n p_i^{a_i}$ for some~$\vec{a}\in A$.
The cones of type (I) and (II)  interlace, and derivatives of $\E$ change continuously. Note that there must be at least one cone of type (I) as, otherwise, $\E$ would be an expenditure function of standard Cobb-Douglas preferences, which is ruled out by the non-triviality assumption.

Let us show that such a preference $\succsim$ over two goods is indecomposable. Towards a contradiction, assume that
\begin{equation}\label{eq_proof_of_extreme_complements_decomposition}
\ln \E=\alpha\cdot  \ln \E_1+(1-\alpha)\ln \E_2,\qquad \alpha\in (0,1),
\end{equation}
where $\E_1$ and $\E_2$ correspond to two distinct preferences $\succsim_1$ and $\succsim_2$ exhibiting complementarity. As in the proof of Proposition~\ref{prop_extreme_points_full_domain}, one shows 
that in each linearity region of $\E$, the expenditure functions  $\E_1$ and $\E_2 $ are proportional to each other. In other words, $\E_1=\const \cdot \E_2$ in each cone of type~(I),  where the constant can depend on the cone.

Recall that, by~\eqref{eq_budget_shares_as_elasticities}, the partial derivative of a logarithmic expenditure function with respect to $\ln p_i$ is the expenditure share of  good~$i$. Denote the expenditure shares for $\succsim$, $\succsim_1$, and $\succsim_2$ by $s_i$, $s_{1,i}$, and $s_{2,i}$, respectively. Since $\E_1$ and $\E_2$ are proportional in cones of type (I), we obtain that $s_i=s_{1,i}=s_{2,i}$ there. 

Consider cones of type (II). In these cones, $s_i$ is constant since expenditure shares are constant for Cobb-Douglas preferences. Taking the partial derivative on  both sides of~\eqref{eq_proof_of_extreme_complements_decomposition}, we
get
$$s_i(p_1,p_2)=\alpha\cdot s_{1,i}(p_1,p_2)+(1-\alpha)s_{2,i}(p_1,p_2).$$
The expenditure shares depend only on the ratio of prices. By complementarity, $s_{1,i}$ and $s_{2,i}$ must be non-increasing functions of the ratio $p_{3-i}/p_i$. 
Note that if $s_i$ is constant and one of $s_{1,i}$ or $s_{2,i}$ increases, the other must decrease, violating the monotonicity requirement. Hence, in all the cones of type (II), $s_i$, $s_{1,i}$ and $s_{2,i}$ are constant. These constants must be all equal. Indeed, suppose that $s_i=c$, $s_{1,i}=c_1$ and $s_{2,i}=c_2$  in some cone of type (II) with $c_1\ne c_2$. Since $s_i$ is continuous and coincides with $s_{1,i}$ and $s_{2,i}$ in the neighboring cone of type (I), $s_{1,i}$ and $s_{2,i}$ are discontinuous on the boundary between the two cones and at least one of these discontinuities necessarily violates the monotonicity requirement. Thus in cones of type (II), $s_i=s_{1,i}=s_{2,i}$.

We conclude that $s_i=s_{1,i}=s_{2,i}$ everywhere, i.e., partial derivatives of $\ln \E$, $\ln \E_1$, and  $\ln \E_2$ coincide. Hence, $\ln \E_1=\ln \E_2+\const$, i.e., $\E_1$ and   $\E_2$ are proportional. Thus $\succsim_1=\succsim_2$, which contradicts the assumption that the two preferences are distinct. We conclude that, for $n=2$, any non-trivial Leontief preference over Cobb-Douglas composite goods is indecomposable.
\end{proof}

\subsection{Characterization of expenditure shares  for two goods}\label{app_feasible shares}

For a preference $\succsim$ over $n=2$ goods, consider the expenditure shares $s_1(\vec{p})$ and $s_2(\vec{p})$ of these goods. We aim to characterize functions that one can get as expenditure shares.
Since $s_1+s_2=1$, we can focus on the expenditure share $s_1$ of the first good. As $s_1(\alpha\cdot \vec{p})=s_1(\vec{p})$, the expenditure share can be seen as a function of one variable $z=p_1/p_2$. Our goal is to characterize functions $h=h(z)$ such that $h=s_1$ for some homothetic preference $\succsim$.

\begin{lemma}\label{lm_budget_share_two_goods}
A function $h:\ \R_{++}\to \R$ is
 the expenditure share of the first good associated with some homothetic preference $\succsim$ over two goods (i.e., $h(p_1/p_2)=s_1(\vec{p})$, $\vec{p}\in \R_{++}^2$) if and only if
\begin{equation}\label{eq_s_and_Q}
h(z)=\frac{z}{z+Q(z)}, 
\end{equation}
where $Q:\ \R_{++}\to \R_+\cup \{+\infty\}$ is a non-negative non-decreasing function. 
\end{lemma}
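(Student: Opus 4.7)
The plan is to parameterize two-good homothetic preferences by a concave non-negative function of one variable, and then reduce the claim to a change of variables. By homogeneity of the expenditure function, write
\[
\E(p_1,p_2)=p_2\cdot g\!\left(\tfrac{p_1}{p_2}\right),\qquad z=\tfrac{p_1}{p_2},
\]
where $g\colon\R_{++}\to\R_+$ is concave, non-negative, and not identically zero; conversely any such $g$ defines a valid expenditure function (the $2\times 2$ Hessian of $\E$ is rank-one with the sign of $g''$, so concavity of $\E$ is equivalent to concavity of $g$, and monotonicity is automatic from the paper's footnote on concave non-negative functions).

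Second, I will compute the expenditure share of good~$1$ in terms of $g$. By~\eqref{eq_budget_shares_as_elasticities},
\[
s_1(\vec p)=\frac{\partial\ln \E}{\partial \ln p_1}=\frac{p_1\,\partial_1 \E(\vec p)}{\E(\vec p)}=\frac{z\,g'(z)}{g(z)},
\]
at every $z$ where $g$ is differentiable (which is a set of full measure by concavity). Set
\[
Q(z):=\frac{g(z)}{g'(z)}-z\qquad (Q(z):=+\infty\text{ when }g'(z)=0).
\]
Then $s_1(\vec p)=\tfrac{z}{z+Q(z)}$, which is the desired form~\eqref{eq_s_and_Q}. The non-negativity $Q\ge 0$ follows from concavity and non-negativity of $g$: the tangent-line inequality at $z$ gives $g(0^+)\le g(z)-z\,g'(z)$, and $g(0^+)\ge 0$, so $g\ge z g'$. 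Monotonicity of $Q$ follows from a direct computation:
\[
Q'(z)= -\,\frac{g(z)\,g''(z)}{\bigl(g'(z)\bigr)^2}\ \ge\ 0,
\]
since $g\ge 0$ and $g''\le 0$ (interpreting derivatives in the distributional sense at non-smooth points, which will be the technical step requiring care).

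For the converse, given any non-decreasing $Q\colon\R_{++}\to\R_+\cup\{+\infty\}$, define
\[
g(z)=\exp\!\left(\int_{1}^{z}\frac{\dd s}{s+Q(s)}\right),
\]
with the convention that the integrand vanishes where $Q=+\infty$. Since $\tfrac{1}{s+Q(s)}\le \tfrac{1}{s}$ is locally integrable on $\R_{++}$, the function $g$ is well defined, positive, and satisfies $g'(z)=g(z)/(z+Q(z))$ almost everywhere; reversing the computation above yields $g''\le 0$, so $g$ is concave. Setting $\E(p_1,p_2)=p_2\,g(p_1/p_2)$ produces a homothetic preference whose expenditure share of good~$1$ equals $h(z)=z/(z+Q(z))$, proving sufficiency.

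Finally, uniqueness of the preference follows because $g$ is determined up to a positive multiplicative constant by the ratio $g'/g=1/(z+Q)$, and such a rescaling leaves $\E$ and hence $\succsim$ unchanged.

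The main obstacle will be the non-smooth case: $g$ need only be concave, so $g'$ may have countably many jump discontinuities, and $Q$ may jump to $+\infty$ on entire intervals (zero expenditure share on good~$1$). I will handle this by working with the right-continuous versions of $g'$ and $1/(z+Q)$, noting that both sides of the identity $s_1=z/(z+Q)$ need only hold almost everywhere since expenditure shares are defined only almost everywhere (Section~\ref{sec_preliminaries}), and by interpreting $g''$ as a non-positive Radon measure so that the inequality $Q'\ge 0$ is a statement about monotonicity of the right-continuous representative rather than a classical derivative.
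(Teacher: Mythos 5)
Your proof is correct and follows essentially the same route as the paper's: you parameterize the preference by $\pi(z)=\E(z,1)$ (your $g$), derive $Q = g/g' - z$ with non-negativity from concavity plus non-negativity of $g$ and monotonicity from $Q' = -g\,g''/(g')^2 \ge 0$, and construct the converse via $g(z)=\exp\bigl(\int_1^z \dd s/(s+Q(s))\bigr)$, exactly as in Appendix~\ref{app_feasible shares}. The only differences are cosmetic: you spell out the tangent-line inequality behind $Q\ge 0$ and the rank-one Hessian equivalence, which the paper leaves implicit, and both treatments defer the non-smooth case to distributional derivatives in the same way.
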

Leontief and linear preferences correspond, respectively, to the extreme cases of $Q(z)=\const$ and an infinite step function $$Q(z)=\begin{cases}
0, & z\leq \alpha,\\
+\infty & z> \alpha.
\end{cases}$$

Note that for any non-increasing function $h$ with values in $[0,1]$, the function $$Q(z)=\frac{1}{h(z)}-1$$ satisfies the requirement of Lemma~\ref{lm_budget_share_two_goods}.
\begin{corollary}\label{cor_any_decreasing_goes}
Any non-increasing function $h$ with values in $[0,1]$ is the expenditure share of the first good for some preference $\succsim$ exhibiting substitutability among the two goods.
\end{corollary}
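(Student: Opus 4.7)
The plan is to apply Lemma~\ref{lm_budget_share_two_goods} directly. Given a non-increasing function $h\colon \R_{++}\to [0,1]$, I want to exhibit a non-negative, non-decreasing function $Q\colon \R_{++}\to \R_+\cup\{+\infty\}$ such that the identity $h(z)=\frac{z}{z+Q(z)}$ holds for all $z\in \R_{++}$. Solving this identity for $Q$ yields
$$
Q(z)=z\cdot\frac{1-h(z)}{h(z)},
$$
with the convention $Q(z)=+\infty$ at points where $h(z)=0$. (Note this differs by a factor of $z$ from the shorter expression $1/h(z)-1$ mentioned in the paragraph preceding the corollary; that simpler function also satisfies the abstract requirements of the lemma, but the product with $z$ is what is needed to actually reproduce $h$ as the expenditure share.)

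Next, I would verify that this $Q$ meets the hypotheses of Lemma~\ref{lm_budget_share_two_goods}. Non-negativity is immediate from $h(z)\in[0,1]$ and $z>0$. For monotonicity, I would write $Q(z)=z\cdot g(z)$ where $g(z)=\tfrac{1}{h(z)}-1$, and observe that since $h$ is non-increasing and non-negative, $1/h$ is non-decreasing, and hence so is $g$; combined with the strict monotonicity of $z$ and the fact that both factors are non-negative, the product $z\cdot g(z)$ is non-decreasing. Edge cases where $h$ vanishes or equals $1$ are handled by the $+\infty$ convention and by the observation that $Q\equiv 0$ corresponds to $h\equiv 1$, both of which are permitted by the lemma.

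By Lemma~\ref{lm_budget_share_two_goods}, this $Q$ is the expenditure-share function of some homothetic preference $\succsim$, and the construction gives $s_1(\vec{p})=h(p_1/p_2)$. Finally, to conclude substitutability, I would observe that since $h$ is non-increasing in $z=p_1/p_2$, the map $p_2\mapsto s_1(p_1,p_2)=h(p_1/p_2)$ is non-decreasing (because raising $p_2$ lowers $z$), and by $s_2=1-s_1$, the same monotonicity holds for $p_1\mapsto s_2(p_1,p_2)$. This is precisely the definition of substitutability.

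There is no real obstacle here---the corollary is essentially a direct rearrangement of the formula in Lemma~\ref{lm_budget_share_two_goods}. The only mildly technical point is confirming that the monotonicity of $Q$ follows from the monotonicity of $h$, and ensuring that the boundary values $h=0$ and $h=1$ are handled consistently with the convention $t/0=+\infty$ adopted in the preliminaries.
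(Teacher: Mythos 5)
Your proof is correct, and it follows the paper's intended route: invert the characterization of Lemma~\ref{lm_budget_share_two_goods} to produce an explicit $Q$, check that it is non-negative and non-decreasing, and read off substitutability from the monotonicity of $h$. In fact, your version is more careful than the paper's own justification. The remark preceding the corollary asserts that $Q(z)=\frac{1}{h(z)}-1$ ``satisfies the requirement'' of the lemma; that is true as far as it goes (it is non-negative and non-decreasing), but substituting it into \eqref{eq_s_and_Q} yields the share $\frac{z\,h(z)}{z\,h(z)+1-h(z)}$ rather than $h(z)$ --- for instance, $h\equiv 1/2$ would produce the Leontief share $z/(z+1)$, which is not even non-increasing, so the paper's displayed $Q$ cannot by itself deliver the corollary as stated. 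The choice that actually reproduces $h$ is exactly yours, $Q(z)=z\bigl(1-h(z)\bigr)/h(z)$, and your monotonicity argument (a product of two non-negative, non-decreasing functions is non-decreasing), together with the $+\infty$ convention at zeros of $h$, closes the gap. The final step --- that $h$ non-increasing in $z=p_1/p_2$ makes $s_1$ non-decreasing in $p_2$ and $s_2=1-s_1$ non-decreasing in $p_1$ --- is also right, so the constructed preference indeed exhibits substitutability.
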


\begin{proof}[Proof of Lemma~\ref{lm_budget_share_two_goods}]
Consider a homothetic preference $\succsim$ for $n=2$ goods. The expenditure share $s_1(\vec{p})$  of the first good satisfies $s_1(\alpha \vec{p})= s(\vec{p})$ for any $\alpha>0$, hence, $s_1(\vec{p})=s_1(z,1)$, where $z=p_1/p_2$. 

Let us show that, for any $\succsim$, the function $h(z)=s_1(z,1)$ admits the representation~\eqref{eq_s_and_Q}. In other words, we need to show that
$$Q(z)=\frac{z}{s_1(z,1)}-z$$
is non-negative and non-decreasing. Expressing the expenditure share through the logarithmic expenditure function by~\eqref{eq_budget_shares_as_elasticities} and denoting $\pi(z)=\E(z,1)$, we get
\begin{equation}\label{eq_Q_and_P}
Q(z)=\frac{\pi(z)}{\pi'(z)}-z.
\end{equation}
Note that $\pi$ is a non-negative non-decreasing concave function of $z$. Hence, $Q$ is non-negative as well. To show that $Q$ is non-decreasing, let
 us differentiate both sides of~\eqref{eq_Q_and_P}. We get
 \begin{equation}\label{eq_Q_prime}
Q'(z)=-\frac{\pi(z)}{(\pi'(z))^2} \cdot \pi''(z).
\end{equation}
We see that $Q'\geq 0$ and so $Q$ is non-decreasing.\footnote{If $\pi'$ is not differentiable, the identity~\eqref{eq_Q_prime} is to be understood in the sense of distributional derivatives: $\pi''$ is a non-negative measure, and the right-hand side is a measure having density $-{\pi(z)}/{(\pi'(z))^2}$ with respect to $\pi''$.} We conclude that $h=s_1(z,1)$ admits the representation~\eqref{eq_s_and_Q}.

To prove the converse, consider $h$ of the form~\eqref{eq_s_and_Q} with non-negative non-decreasing $Q$ and construct the corresponding preference. The identity~\eqref{eq_Q_and_P} suggests how to define the expenditure function. First, we define $\pi$ by
$$\frac{\pi'(z)}{\pi(z)}=\frac{1}{z+Q(z)}.$$
 Integrating this identity, we obtain
$$\pi(z)=\exp\left(\int_1^{z} \frac{1}{w+Q(w)}\,\dd w\right).$$
By the construction, $\pi$ is non-negative. Since the identity~\eqref{eq_Q_prime} is hardwired in the definition of $\pi$ and the function $Q$ is non-decreasing, we conclude that $\pi''$ is non-positive. Hence, $\pi$ is concave. Define $\E$ by
$$\E(p_1,p_2)=p_2\cdot \pi(p_1/p_2).$$
This function is homogeneous, non-negative, and concave. Thus $\E$ is an expenditure function corresponding to some homothetic preference. By the construction,  $s_1(z,1)=h(z)$
completing the proof.
\end{proof}

\section{Robust welfare analysis and Bayesian
 persuasion: a formal connection}\label{app_connection_to_persuasion}

Let us discuss the formal connection between robust welfare analysis as described in Section~\ref{sec_robust} and Bayesian persuasion,  a benchmark model for a situation where an informed party decides what information to reveal to an uninformed one and has an objective depending on induced beliefs \citep*{kamenica2011bayesian}. 
Mathematically, persuasion boils down to solving the following optimization problem. We are given a set of states $\Omega$, a prior belief $\mu \in \Delta(\Omega)$ where $\Delta(\Omega)$ denotes the simplex of probability distributions over $\Omega$, and an objective function $g$ defined on $\Delta(\Omega)$. The goal is to maximize 
$$\sum_k \beta_k \cdot g({\mu_k})$$
over all possible ways to represent the prior ${\mu}$ as a finite convex combination $\mu =\sum_k \beta_k \cdot \mu _k $ with $\mu _k\in \Delta(\Omega)$. 
The persuasion problem has an elegant geometric solution: the optimal value of the persuasion problem is $\cav_{\Delta(\Omega)}[g](\mu)$   \citep*{aumann1995repeated,kamenica2011bayesian}.

\smallskip
The similarity between persuasion and finding the maximal value of a functional $W$  of the form~\eqref{eq_welfare_additive}  compatible with the observed aggregate behavior of a population must be apparent: the set $\L_\D$ of logarithmic expenditure functions plays the role of $\Delta(\Omega)$, the logarithmic expenditure function of the aggregate preference corresponds to the prior~$\mu$, and $w=w(\succsim)$ considered as a function on  $\L_\D$ is an analog of informed party's  objective $g$. 

 The difference is that, in Bayesian persuasion, the concavification takes place over a simplex $\Delta(\Omega)$ while the set $\L_\D$ of logarithmic expenditure functions is not necessarily a simplex. Recall that a convex set is a simplex if each point can be represented as a weighted average of the extreme points in a unique way; we call $\D$ a simplex domain if the corresponding set of logarithmic expenditure functions $\L_\D$ is a simplex; see Section~\ref{sec_S_indecomposable}.
\fed{Discuss if there is a reduction to Bayesian persuasion for arbitrary $\D$}

For simplex domains, finding the maximal value of $W$ compatible with the observed aggregate behavior is equivalent to a persuasion problem. An elementary example where this equivalence holds is the domain $\D$ of Cobb-Douglas preferences; see Example~\ref{ex_direction_welfare_revisited}.

Let us also illustrate the equivalence for the domain $\D_{S}$ of all preferences exhibiting substitutability over $n=2$  goods as in Example~\ref{ex_CES_welfare}. Any preference $\succsim\in \D_{S}$ can be represented as an aggregation of linear preferences, and this representation is unique (Corollary~\ref{cor_MRS_reconstruction}). Linear preferences over two goods form a one-parametric family with the marginal rate of substitution $\mrs=v_1/v_2\in \R_+\cup\{+\infty\}$ as a parameter.
A preference~$\succsim\in \D_{S}$ defines a unique distribution $\mu$ of $\mrs$ by formula~\eqref{eq_mu_and_budget_shares}: the cumulative distribution function is equal to $1-s_1(\,\cdot\,,1)$. The functional $w(\succsim)$ can equivalently be thought of as a function of $\mu$. Thus the welfare maximization problem takes the following form. We are given $\mu_\agr \in\Delta(\R_+\cup\{+\infty\})$ and a functional $w=w(\mu)$. The goal is to maximize  
$$B\cdot \sum_k \beta_k \cdot w({\mu _k})$$
over all possible ways to represent the prior ${\mu}$ as a finite convex combination $\mu =\sum_k \beta_k \cdot \mu _k $ with $\mu _k\in \Delta(\R_+\cup\{+\infty\})$.

We conclude that, for two substitutes, finding the maximal welfare compatible with the observed aggregate behavior is equivalent to persuasion with the set of states $\Omega=\R_+\cup\{+\infty\}$, the cumulative distribution function of the prior $\mu$ equal to $1-s_1(\,\cdot\,,1)$, and the objective $w=w(\mu)$. 
Persuasion problems with a continual state space are not easy to solve analytically unless some further assumptions are made. For example, if $\mu$ is finitely supported, i.e., there is a finite number of preference ``types'' in the population, then the support can be taken as the new set of states reducing the problem to the well-understood case of persuasion with a finite number of states. If $\mu$ 
has infinite support, tractability can be gained by imposing assumptions on the objective $w$. Tractable cases include convex $w$ as in Examples~\ref{ex_CES_welfare} and~\ref{ex_CES_welfare_approx}, or $w$
depending on $\mu$ through the mean value of a given function $\varphi$, i.e., $w=w\left(\int \varphi(z)\dd\mu(z)\right)$  as in \citep*{dworczak2019simple, arieli2019optimal, kleiner2021extreme}, or $w$ depending on $\mu$ through a quantile of $\varphi$ as in \citep*{yang2022distributions}.

\end{document}